\newtheorem{theorem}{Theorem}[section]
\newtheorem{lemma}[theorem]{Lemma}
\newtheorem{corollary}[theorem]{Corollary}
\newtheorem{definition}{Definition}[section]
\newtheorem*{remark}{Remark}
\newtheorem{proposition}[theorem]{Proposition}
\newtheorem{observation}[theorem]{Observation}
\newtheorem{assumption}[theorem]{Assumption}
\newcommand{\defcal}[1]{\expandafter\newcommand\csname c#1\endcsname{{\mathcal{#1}}}}
\newcommand{\defbb}[1]{\expandafter\newcommand\csname b#1\endcsname{{\mathbb{#1}}}}
\newcommand{\defvec}[1]{\expandafter\newcommand\csname v#1\endcsname{{\mathbf{#1}}}}
\newcounter{calBbCounter}
    \edef\letter{\alph{calBbCounter}}
		\edef\Letter{\Alph{calBbCounter}}
\newcommand{\eps}{\varepsilon}
\newcommand{\nnR}{{\bR_{\geq 0}}}
\newcommand{\psum}{\oplus}
\newcommand{\vzero}{{\mathbf{0}}}
\newcommand{\hprod}{\odot}
\newcommand{\inner}[2]{{\left\langle#1,#2\right\rangle}}
\newcommand{\vone}{{\mathbf{1}}}
\newcommand{\FF}[1]{F\left( #1 \right)}
\newcommand{\ff}[1]{f\left( #1 \right)}
\newcommand{\ex}[1]{ \mathbb{E}\left[ #1 \right] }
\newcommand{\opt}[0]{\textnormal{\textrm{OPT}}}
\newcommand{\optvalue}[0]{\textnormal{\textrm{opt}}}
\newcommand{\alg}[0]{\textnormal{\textrm{ALG}}}
\newcommand{\characteristic}{{\mathbf{1}}}
\newcommand{\RSet}{{\texttt{R}}}
\newcommand{\cardBudget}{B}
\newcommand{\marge}[2]{f \left( #1 \mid #2 \right) }
\newcommand{\margen}[3]{#1 ( #2 \mid #3 ) }
\newcommand{\densitygreedy}{\textsc{Density-Based Greedy}}
\newcommand{\diff}[1]{\textnormal{d} #1 }
\newcommand{\email}[1]{{\href{mailto:#1}{#1}}}
\DeclareMathOperator*{\PSum}{\scalerel*{\oplus}{\sum}}
\DeclareMathOperator*{\argmax}{arg\,max}
\DeclareMathOperator*{\rank}{rank}
\providecommand*{\cupdot}{%
  \mathbin{%
    \mathpalette\@cupdot{}%
  }%
}
\newcommand*{\@cupdot}[2]{%
  \ooalign{%
    $\m@th#1\cup$\cr
    \hidewidth$\m@th#1\cdot$\hidewidth
  }%
}
\title{Bicriteria Submodular Maximization}
\author{}}
\author{Moran Feldman\thanks{Department of Computer Science, University of Haifa. E-mail: \email{moranfe@cs.haifa.ac.il}}
			  \and
				Alan Kuhnle\thanks{Department of Computer Science \& Engineering, Texas A\&M University. E-mail: \email{kuhnle@tamu.edu}}}
\begin{document}

\maketitle

\begin{abstract}
  Submodular functions and their optimization have found applications in diverse settings ranging from machine learning and
  data mining to game theory and economics. 
  In this work, we consider the constrained maximization of a submodular function,
  for which we conduct a principled study of \textit{bicriteria approximation algorithms}---algorithms
  which can violate the constraint, but only up to a bounded factor.
  Bicrteria optimization allows constrained submodular maximization to capture additional important settings, such as the well-studied
  submodular cover problem and optimization under soft constraints.
  %Nevertheless, so far, only a handful of previous works have studied this interesting setting. 
  We provide results that span both multiple types of constraints (cardinality, knapsack, matroid and convex set)
  and multiple classes of submodular functions (monotone, symmetric and general).
  For many of the cases considered, we provide optimal results. In other cases, our results improve over the state-of-the-art, sometimes
  even over the state-of-the-art for the special case of single-criterion (standard) optimization.
  Results of the last kind demonstrate that relaxing the feasibility constraint may give a perspective about the problem
  that is useful even if one only desires feasible solutions.
 
\medskip

\noindent \textbf{keywords:} submodular function, bicriteria optimization, cardinality constraint, knapsack constraint, matroid constraint, monotone function, symmetric function

\end{abstract}

%%% Local Variables:
%%% mode: latex
%%% TeX-master: "main.tex"
%%% End:

\thispagestyle{empty}
\pagenumbering{Alph}
\newpage
\pagenumbering{arabic}
\setcounter{page}{1}

\section{Introduction} \label{sec:introduction}

Submodular set functions and their
optimization have found applications in diverse settings,
such as machine learning and data mining~\cite{das2011submodular,elenberg2017streaming,krause2010submodular,lin2011class},
economics~\cite{dobzinski2013communication,feige2010submodular}, game theory~\cite{eden2024combinatorial,feldman2015combinatorial}, networks and graphs~\cite{mossel2010submodularity,norouzifard2018beyond}.
The usefulness of submodular functions stems from the observation that submodularity captures the intuitive notion of diminishing returns.
Formally, let $\cN$ be a finite ground set, and let
$f\colon 2^{\cN} \to \bR$ be a set function.
The function $f$ is \textit{submodular}\footnote{An equivalent alternative definition of submodularity is that $f$ is submodular if the inequality $f(S) + f(T) \geq f(S \cup T) + f(S \cap T)$ holds for every $S, T \subseteq \cN$.}
if for all $S \subseteq T \subseteq \cN$
and $u \in \cN \setminus T$,
\[f(S \cup \{ u \}) - f(S) \ge f(T \cup \{ u \} ) - f(T). \]
Additionally, the function $f$ is \textit{monotone} if
for all $S \subseteq T \subseteq \cN$,
$f(S) \le f(T)$.

In this work, we study the \emph{bicriteria optimization} of submodular set functions
subject to various, standard
constraint classes: cardinality constraint, knapsack constraint, matroid constraint, and convex set constraint.
In this context, a bicriteria algorithm may return an infeasible set, where the infeasibility is bounded.
To make this more formal, let $\cI$ be the collection of the sets that are feasible according to the relevant constraint. Then, we consider the problem
$\max_{S \in \mathcal I} f(S)$.
We denote by $\opt$ an optimal solution (i.e., a set in $\mathcal I$
satisfying $f( \opt ) = \max_{S \in \mathcal I} f(S)$),
%The following formal definition makes this precise. 
%Let us formally define bicriteria optimization in this context.
and let $\cP \subseteq [0,1]^{\cN}$ denote the convex hull of the characteristic
vectors of the feasible sets in $\cI$.\footnote{The characteristic vector of a set $S \subseteq \cN$, denoted by $\characteristic_S$, is a vector in $[0, 1]^\cN$ that takes the value $1$ in the coordinates corresponding to the elements of $S$, and the value $0$ in the other coordinates.}
An algorithm is a bicriteria approximation algorithm
with \textit{bicriteria ratio} $(\alpha, \beta) \in [0,1] \times [0, \infty)$
if it finds a set $S$ such that $f(S) \ge \alpha \cdot f( \opt )$,
and there exist vectors $\vy_1, \vy_2 \in \cP$ such that
$\vy_1 \le \vone_S \le \beta \vy_2$.

\begin{remark}
  Values of $\beta$ that are smaller than $1$ do not always make sense. However, they are useful, for example, when $\cP$ is down-closed (i.e., when $\vy \in \cP$ implies $\vx \in \cP$ for every vector $\vzero \leq \vx \leq \vy$). Notice that when $\cP$ is down-closed, the condition $\vy_1 \le \vone_S \le \beta \vy_2$ reduces to $\vone_S/\beta \in \cP$, and thus, one can view $\beta < 1$ as requiring the algorithm to output a solution that is ``more'' feasible than what is required by the constraint.
\end{remark}

To motivate the study of bicriteria optimization of submodular functions (under the above definition), the reader is invited to consider
the following settings. 
\begin{itemize}
  \item When $f$ is monotone and submodular,
    the goal in the monotone submodular cover problem
    % A modular (or submodular)
    % cost function
    % $c: 2^{\cN} \to \nnR$ is given, which in the simplest
    % case may simply be the cardinality function.
    % Then,
    % given input threshold $\tau$, determine a set $S$ of minimum
    % cost such that $f(S) \ge \tau$; that is,
    % $\text{min}_{S : f(S) \ge \tau} c(S).$ 
    % A related formulation of this problem
    is to find a set $S$ of minimum cost such that
    $f(S) = \max_{T \subseteq \cN}f( T )$. %; since $f$ is monotone,
    %$f( \cN ) = \max_{T \subseteq \cN}f( T )$. 
		This problem generalizes the
    set cover problem, and has been studied extensively
    \citep{Crawforda, Hochbaum1982, Iyer2013, Wan2010a, Wolsey1982}.
		Interestingly, this problem is one of the few examples of submodular optimization problems for which bicriteria optimization has been explicitly studied. In particular, \cite{goyal2013minimizing} has shown that the greedy algorithm can produce a set $S$ of value at least $(1 - \eps) \cdot \max_{T \subseteq \cN}f( T )$ using a cost larger than the cost of the optimal solution by at most a factor of $1 + \ln \eps^{-1}$. %If one drops the monotonicity condition, then even determining
    %$\max_{T \subseteq \cN}f( T )$ is NP-hard \citep{feige2011maximizing}, which makes bicriteria approximation unavoidable.

		\citet{Iyer2013} noted that submodular cover problem is related to maximization of submodular functions subject to knapsack constraints, and proved that bicriteria approximation results for each one of these problems translates into bicriteria approximation results for the other. Using their machinery, the above stated result of \cite{goyal2013minimizing} translates into a $(1 - \eps, 1 + \ln \eps^{-1})$-bicriteria approximation for maximizing a monotone submodular function subject to a cardinality or knapsack constraint. Later, \citet{crawford2023scalable}
    described an algorithm
    guaranteeing $(1/2 - \eps, O(\eps^{-2}))$-bicriteria approximation for maximizing general (not necessarily monotone)
    submodular functions subject to a cardinality or knapsack constraint.
    
  \item Many naturally occurring applications of submodular maximization involve soft constraints. One standard approach models such applications as maximization of $f(S) - g(S)$, where $f$ is the objective function and $g$ is a linear function representing the soft constraint \citep{Bodek2022, feldman2021guess, Harshawb, Jin2021, Kazemib, Lu2024}. However, this modeling approach (provably) does not enjoy constant multiplicative approximation guarantees. Furthermore, getting a bound on $f - g$ is sometimes not useful enough as the same value for $f - g$ can be obtained, for example, both by a set of high value that violates the soft constraint by much, and a set of low value that is feasible. Bicriteria optimization provides an alternative way to model soft constraints that is more precise. Specifically, bicriteria results  bound the quality of the approximation that can be guaranteed for the objective function $f$ for any amount of allowed violation for the soft constraint represented by $g$.
	
    %     	\textcolor{red}{Following is the old version of this bullet: A class of problems, naturally occurring in applications
    % like sensor placement, data subset selection, and other machine
    % learning applications, requires minimization of a submodular function
    % $f$ while simultaneously maximizing another submodular function $g$.
    % A standard approach in the literature \citep{Iyer2013b, Narasimhan} has been to transform
    % these problems into minimizing the difference between submodular functions:
    % $$ \min_{S \subseteq \cN } f(S) - g(S). $$
    % While a number of heuristics exist for solving this problem, in the worst-case
    % it is NP-hard and inapproximable. If the function $g$ is modular,
    % a line of work has developed approximation algorithms \citep{Bodek2022, Harshawb, Jin2021, Kazemib, Lu2024};
    % however, the approximation ratios obtained for this case are not constant, multiplicative
    % ratio in terms of the function $f - g$,
    % in contrast to the bicriteria ratios obtained in this paper.
    % Bicriteria optimization makes sense in this context,
    % as one may consider approximate optimization of one function while
    % the other is softly constrained.} 
  \item Constrained maximization of a submodular function has
    been studied intensively \citep{buchbinder2019constrained,buchbinder2024constrained,chekuri2014submodular,ene2016constrained,feldman2011unified}. Despite this scrutiny,
    even in the simplest case where the constraint is only on cardinality,
    the optimal approximation factor is still unknown, although it is known to lie
    in the range $[0.401, 0.478]$ (see~\citep{buchbinder2024constrained,qi2022maximizing}). On the other hand, if $f$ is
    unconstrained, the optimal approximation of $1/2$ has been obtained \citep{buchbinder2015tight,feige2011maximizing}.
    As was shown by \citet{crawford2023scalable} for knapsack constraints, and we show in the following for more general constraints, relaxing the feasibility allows for an approximation factor of roughly $1/2$ in the bicriteria setting, which is necessarily optimal by the hardness for the unconstrained case.
    These results may shed some
    light, at least indirectly, on the basic cause for the hardness of the classical (non-bicriteria) setting. 
\end{itemize}
Despite the relevance of bicriteria submodular optimization,
prior work exists mostly in disparate settings, such
as the (above-discussed) submodular cover problem.
In this work, we initiate a more principled study of such algorithms. 

%%% Local Variables:
%%% mode: latex
%%% TeX-master: "main.tex"
%%% End:

\subsection{Our Results} \label{ssc:our_results}

In this section, and throughout the paper, when discussing an $(\alpha, \beta)$-bicriteria approximation algorithm, we refer to $\alpha$ and $\beta$ as the approximation ratio and the infeasibility ratio of the algorithm, respectively. The special case of $\beta = 1$ corresponds to standard (feasibly constrained) optimization, and we refer to this case as \emph{single-criterion} optimization. Additionally, we note that our algorithms for maximization subject to a convex set constraint maximize the multilinear extension of the objective set function (see Section~\ref{sec:preliminaries} for the definition of the multilinear extension). Accordingly, our algorithms for convex set constraints output fractional solutions. Nevertheless, the approximation guarantees of these algorithms are often with respect to the optimal integer solution, and for conciseness, we refer to approximation guarantees of this kind as \emph{integer-approximation}.

Below, we discuss our individual results. For convenience, we group these results by the class of submodular objective functions that they apply to: monotone, general or symmetric, and discuss separately each such group of results. Our results are also summarized by Table~\ref{tbl:results}. 

\newcommand{\tableRef}[1]{\hspace{-2mm}{(\ref{#1})}}
\newcommand{\doubleTableRef}[2]{\hspace{-5mm}{(\ref{#1}+\ref{#2})}}
\begin{table}[ht]
\begin{center}
\small
\begin{threeparttable}
\setlength{\tabcolsep}{1mm}
\begin{tabular}{c|lr|lr|lr}
	& \multicolumn{2}{|c|}{\textbf{Monotone Functions}} & \multicolumn{2}{|c|}{\textbf{General Functions}} & \multicolumn{2}{|c}{\textbf{Symmetric Functions}} \\
	\hline
	\multirow{2}{*}{\textbf{Convex Set}} & \multirow{2}{*}{$(1 - \eps, \ln \eps^{-1})$-int\tnote{*}} & \multirow{2}{*}{\tableRef{cor:monotone_general}} & \rule{0pt}{2.2ex}$(\alpha, \beta)$-hard & \tableRef{thm:hardness_non_down-closed} &&\\
	&&& \multicolumn{2}{|l|}{$\forall \alpha \in (0, 1], \beta \geq 1$} &&\\
	\hline
	\multicolumn{1}{p{24mm}|}{\multirow{4}{24mm}{\centering \textbf{Down-Closed Convex Set}}} & \rule{0pt}{2.2ex}$(1 - \eps, \ln \eps^{-1})$-int. & \tableRef{cor:monotone_downclosed} & \multirow{4}{*}{$(\tfrac{1}{2} - \eps, O(\eps^{-1}))$} & \multirow{4}{*}{\tableRef{thm:general_down-closed_convex_body}} & \multirow{4}{*}{$(\tfrac{1}{2} - \eps, \tfrac{1}{2}\ln(\tfrac{1}{2\eps}))$-int} & \multirow{4}{*}{\tableRef{thm:symmetric_polytope}}\\
	& $(1 - \eps + \delta, \ln \eps^{-1})$-hard & \tableRef{obs:monotone_hardnesss} &&&&\\
	& \multicolumn{2}{|l|}{\textit{The hardness is also implied}} &&&&\\
	& \multicolumn{2}{|l|}{\textit{by a construction of \cite{vondrak2013symmetry}.}} &&&&\\
	\hline
	\multirow{5}{*}{\textbf{Cardinality}} & \rule{0pt}{2.2ex}$(1 - \eps, \lceil\ln \eps^{-1}\rceil)$\tnote{$\dagger$} & \tableRef{thm:monotone_knapsack_greedy} & $(\tfrac{1}{2} - \eps, O(\eps^{-1}))$\tnote{$\dagger$} & \tableRef{thm:general_knapsack_matroid} &&\\
	& $(1 - \eps, \rho(c, \eps) + \delta)$\tnote{$\ddagger$} & \tableRef{thm:knapsack_fractional} &&& $(\tfrac{1}{2} - \eps - \delta, \lceil \tfrac{1}{2} \ln (\tfrac{1}{2\eps}) \rceil)$\tnote{$\dagger$} & \tableRef{thm:symmetric_knapsack_greedy}\\
	& $(1 - \eps + \delta, \rho(c, \eps))$-hard & \tableRef{thm:monotone_cardinality_inapproximability} & \multicolumn{2}{|l|}{$(\frac{2(1 - e^{-\beta/2})}{e^{\beta/2}} + \delta, \beta)$-hard} &&\\
	& \multicolumn{2}{|l|}{\textit{The combinatorial result is}}& $\forall \beta \in [0, \ln 2]$ & \tableRef{thm:hardness_general} & $(\tfrac{1}{2} - \eps, \frac{1 - (2\eps)^c}{2c} + \delta)$& \tableRef{thm:symmetric_knapsack_fractional}\\
	& \multicolumn{2}{|l|}{\textit{also implied by \cite{goyal2013minimizing}.}}&&&&\\
	\cline{1-3}\cline{6-7}
	\multirow{5}{*}{\textbf{Knapsack}} & \rule{0pt}{2.2ex}$(1 - \eps, 1 + \ln \eps^{-1})$\tnote{$\dagger$} & \tableRef{thm:monotone_knapsack_greedy} & \multicolumn{2}{|l|}{\textit{$(\tfrac{1}{2} + \delta, \beta)$-hard $\forall \beta \geq 1$}} && \\
	& $(1 - \eps - \delta, \ln \eps^{-1})\tnote{*}$ & \doubleTableRef{cor:monotone_downclosed}{lem:rounding_knapsack} & \multicolumn{2}{|l|}{\textit{follows from \cite{feige2011maximizing}.}} & $(\tfrac{1}{2} - \eps - \delta, 1 + \tfrac{1}{2} \ln (\tfrac{1}{2\eps}))\tnote{$\dagger$} $ & \tableRef{thm:symmetric_knapsack_greedy}\\
	& $(1 - \eps, \rho(c, \eps) + 1)$\tnote{$\ddagger$}  & \tableRef{thm:knapsack_fractional} &&&&\\
	& \multicolumn{2}{|l|}{\textit{The combinatorial result is}}& \multicolumn{2}{|l|}{\textit{For cardinality and}} & $(\tfrac{1}{2} - \eps, \frac{1 - (2\eps)^c}{2c} + 1)$ & \tableRef{thm:symmetric_knapsack_fractional}\\
	& \multicolumn{2}{|l|}{\textit{also implied by \cite{goyal2013minimizing}.}}& \multicolumn{2}{|l|}{\textit{knapsack constraints,}} &&\\
	\cline{1-3}\cline{6-7}
	\multirow{2}{*}{\textbf{Matroid}} & \rule{0pt}{2.2ex}$(1 - \eps, \lceil \log_2 \eps^{-1} \rceil)$\tnote{$\dagger$} & \tableRef{thm:monotone_matroid_greedy} & \multicolumn{2}{|l|}{\textit{\cite{crawford2023scalable}} showed} & \multirow{2}{*}{$(\tfrac{1}{2} - \eps, \lceil \tfrac{1}{2}\ln(\tfrac{1}{2\eps})\rceil)$} & \multirow{2}{*}{\doubleTableRef{thm:symmetric_polytope}{lem:rounding_matroid}}\\
	& $(1 - \eps, \lceil \ln \eps^{-1} \rceil)$ & \doubleTableRef{cor:monotone_downclosed}{lem:rounding_matroid} & \multicolumn{2}{|l|}{$(\nicefrac{1}{2} - \eps, O(\eps^{-2}))$.} &&\\
\end{tabular}
\begin{tablenotes}
	\item[*] Applies only for $\forall \eps \in (0, 1/e]$.
	\item[$\dagger$] Obtained using a simple combinatorial algorithm.
	\item[$\ddagger$] See Section~\ref{ssc:monotone_per_c_analysis} for the definition of $\rho(c, \eps)$.
\end{tablenotes}
\end{threeparttable}
\end{center}
\caption{Summary of our results, and previous results (in italic). The guarantees stated in the table omit, for simplicity, $o(1)$ error terms. Values of $\delta$ in these guarantees represent arbitrarily small constants, and values of $\eps$ are constants that can take any positive value from $0$ until the natural upper limit for the guarantee (unless another upper limit is explicitly stated in the table). Inapproximability results are marked with the word ``hard'', and bicriteria integer-approximation results are marked with the word ``int''. The other results given in the table are bicriteria approximation results. The parenthesis near each result contain the numbers of one or two claims (theorems, corollaries or observations) that either explicitly state the result or imply it.}\label{tbl:results}
\end{table}

\paragraph{Results for Monotone Submodular Functions.} As mentioned above, bicriteria maximization of monotone submodular functions subject to cardinality and knapsack constraints has been implicitly studied in the literature through the lens of the submodular cover problem, and it is known that a simple greedy algorithm can obtain for this problem an approximation guarantee of roughly $(1 - \eps, 1 + \ln \eps^{-1})$ for any $\eps \in (0, 1]$, which is the best possible up to the constant $1$ in the infeasibility ratio. For completeness, we reprove in this paper the guarantee of this greedy algorithm (Theorem~\ref{thm:monotone_knapsack_greedy}), and the hardness result (Observation~\ref{obs:monotone_hardnesss}). Unsurprisingly, we also show that the Measured Continuous Greedy of \cite{feldman2011unified} can be used to get essentially the same result also for maximizing the multilinear extension of a monotone submodular function subject to a general solvable\footnote{A covex set is solvable if one can efficiently optimize linear functions subject to it.} down-closed convex set (Corollary \ref{cor:monotone_downclosed}), and even subject to non-down-closed solvable convex sets when $\eps \leq 1/e$ (Corollary~\ref{cor:monotone_downclosed}). Using an appropriate rounding procedure, this immediately implies $(1 - \eps, \lceil\ln \eps^{-1}\rceil)$-bicriteria approximation for the problem maximizing a monotone submodular function subject to a matroid constraint (for any $\eps \in (0, 1]$). Using similar ideas, we also get a simple greedy algorithm that guarantees $(1 - \eps, \lceil \log_2 \eps^{-1} \rceil)$-bicriteria approximation for the same problem (Theorem~\ref{thm:monotone_matroid_greedy}).

Going beyond the above relatively straightforward results, we study the bicriteria approximability of monotone submodular functions subject to cardinality constraints as a function of the density\footnote{The density of a constraint is intuitively the infeasibility ratio that would have resulted by returning the entire ground set $\cN$ as the output solution.} $c$ of these constraints. This problem was recently studied by \citet{filmus2025separating} in the special case of single-criterion maximization. \citet{filmus2025separating} left a gap between their algorithmic and inapproximability results, which we were able to close. %(see Figure~\ref{fig:cardinality_approx}).
Furthermore, our results hammer down the exact approximability of the problem also for bicriteria approximation. Specifically, we show (Theorem~\ref{thm:knapsack_fractional}) that for density $c \leq 1/2$, the Measured Continuous Greedy of \cite{feldman2011unified} obtains the optimal approximation guarantee, while for $c \geq 1/2$ the optimal guarantee can be obtained by adapting the Continuous Double Greedy algorithm suggested by \cite{buchbinder2014submodular} for the special case of single-criterion maximization. Theorem~\ref{thm:monotone_cardinality_inapproximability} proves that the results of these two algorithms are optimal. For $c \geq 1/2$ this is done by extending a result of \cite{filmus2025separating} to the bicriteria setting. However, getting a tight inapproximability result for $c \leq 1/2$ turned out to be much more involved. As was observed by \cite{filmus2025separating}, the symmetry gap technique of \cite{vondrak2013symmetry} does not seem to be able to do that (except at densities $c$ for which $1/c$ happens to be an integer). Thus, to get an inapproximability result that applies for every values of $c$, we had to come up with a novel class of hard instances. Unlike the instances resulting from the symmetry gap technique, instances in our class do not posses more than a single optimal solution. It should also be noted that the above stated algorithmic results for cardinality constraints extend also to the more general knapsack constraints, up to an additional loss term of $1$ in the infeasibility ratio resulting from the rounding required for such constraints. Our results for monotone submodular functions are formally stated and proved in Section~\ref{sec:monotone}.

%\begin{figure}
%\begin{center}
%\input{DensityApproximation.tikz}
%\end{center}
%\caption{Maximizing Monotone Submodular Functions Subject to a Cardinality Constraint} \label{fig:cardinality_approx}
%\end{figure}

\paragraph{Results for General Submodular Functions.} \citet{crawford2023scalable} described an algorithm guaranteeing $(1/2 - \eps, O(\eps^{-2}))$-bicriteria approximation for maximizing general submodular functions subject to cardinality and knapsack constraints. We improve over this state of the art by obtaining  $(1/2 - \eps, O(\eps^{-1}))$-bicriteria approximation for any down-closed solvable %\footnote{A convex set $\cP$ is meta-solvable if one can efficiently optimize
%linear functions over $\cP \cap \cQ$ for any polytope $\cQ$ defined by a polynomial number of explicit linear
%constraints. This notion was recently introduced by~\citep{buchbinder2024constrained}.}
convex set (Theorem~\ref{thm:general_down-closed_convex_body}). The algorithm that we use to obtain this result uses $O(\eps^{-1})$ executions of a variant of Measured Continuous Greedy to get $O(\eps^{-1})$ almost disjoint solutions,\footnote{To be more specific, the solutions are fractional, and they are ``almost disjoint'' in the sense that for every element, the sum of its coordinates in all the solutions adds up to at most $2$.} which are then post-processed (with the help of an algorithm for unconstrained submodular maximization) to get the final solution. By using discrete greedy algorithms instead of Measured Continuous Greedy, we also get a simple combinatorial algorithm that has the same guarantee for cardinality, knapsack and matroid constraints (Theorem~\ref{thm:general_knapsack_matroid}).
To illustrate the ideas behind these algorithms,
we provide in Section~\ref{sec:warmup}, as a warmup, a simplified version of Theorem~\ref{thm:general_knapsack_matroid}
that applies only to cardinality constraints. 

We complement the above algorithmic results by proving two inapproximability results extending two known inapproximability results for single-criterion optimization (due to \cite{gharan2011submodular,qi2022maximizing,vondrak2013symmetry}). The first inapproximability result (Theorem.~\ref{thm:hardness_non_down-closed}) shows that the restriction of our first algorithm to down-closed convex sets is unavoidable because in general one cannot get any constant bicriteria approximation guarantee for non-down-closed convex sets. The second inapproximability result (Theorem.~\ref{thm:hardness_general}) upper bounds the approximation ratio that can be obtained for any infeasibility ratio $\beta$ even for simple cardinality constraints. Interestingly, we also show that Measured Continuous Greedy is able to secure this approximation guarantee on instances having two disjoint optimal solutions (or two disjoint solutions that mimic optimal solutions is some sense). Since existing techniques for proving inapproximability results for submodular functions usually generate instances having such two disjoint optimal solutions, this observation implies that further pushing the inapproximability beyond the bound that we have proved is likely to require a significant technical innovation. Our results for general submodular functions are formally stated and proved in Section~\ref{sec:general}. 

\paragraph{Results for Symmetric Submodular Functions.} The starting point for our results for bicriteria maximization of symmetric submodular functions is a variant of the Measured Continuous Greedy algorithm due to \cite{feldman2017maximizing}. By running this algorithm for a variable amount of time, it is possible to get roughly $(\nicefrac{1}{2} - \eps, \nicefrac{1}{2} \ln (\eps^{-1}/2))$-bicriteria integer-approximation for maximizing the multilinear extensions of such functions subject to arbitrary down-closed solvable convex sets (Theorem.~\ref{thm:symmetric_polytope}). Using appropriate rounding procedures, this implies similar bicriteria approximation results for maximizing symmetric submodular functions subject to particular kinds of constraints (such as matroids). Furthermore, the idea underlying the above variant of the Measured Continuous Greedy algorithm can be used to get simple greedy algorithms for maximizing symmetric submodular functions subject to cardinality and knapsack constraints (Theorem~\ref{thm:symmetric_knapsack_greedy}).

As for monotone functions, we also have improved results for cardinality and knapsack constraints as function of the density $c$ of these constraints, and these results are the more technically challenging ones. It is possible to derive some such results using the above mentioned variant of the Measured Continuous Greedy algorithm due to \cite{feldman2017maximizing}. However, we are able to get an even better dependence on $c$ (roughly, $(\nicefrac{1}{2} - \eps, \frac{1 - (2\eps)^c}{2c})$-bicriteria approximation) by developing a novel variant of Measured Continuous Greedy. Our novel variant is noticeable for increasing the solution in a reduced rate compared to standard Measured Continuous Greedy. This reduced rate decreases the infeasibility ratio of our variant, and also allows it to avoid the non-smooth decrease step used by \cite{feldman2017maximizing} to get results for symmetric functions. It should also be noted that even in the special case of a single-criterion maximization, our novel variant of Measured Continuous Greedy improves over the state-of-the-art for values of $c$ smaller than roughly $0.282$. Our results for symmetric submodular functions are formally stated and proved in Section~\ref{sec:symmetric}.

We conclude this section with two remarks.
	First, as is standard in the literature, we assume that algorithms have access to the submodular objective function $f$ only through a value oracle that given a subset $S$ of the ground set returns $f(S)$. This assumption is especially essential for the proofs of our inapproximability results.
	Second, many of our results involve constant values (for example: $\eps$, $\delta > 0$).
The time complexities of our algorithms are mostly polynomial in these constant values (the sole exception are algorithms involving rounding based on Lemma~\ref{lem:rounding_knapsack}), and thus, our algorithms can
usually be used also when these values are non-constant.
However, the analyses of the algorithms sometimes depend on the assumption that, as the size of the ground set of the input instance increases, the constant values do not approach ``too fast'' the open ends of their respective allowed ranges.

%%% Local Variables:
%%% mode: latex
%%% TeX-master: "main.tex"
%%% End:

\section{Preliminaries} \label{sec:preliminaries}

This section presents the notation used by the paper, and the existing results that we use. The section is mostly arranged based on the kinds of constraints that we consider.

\subsection{Cardinality and Knapsack Constraints}

Given a ground set $\cN$, a knapsack constraint is formally defined by a non-negative vector $\vp \in \bR^\cN$ and a non-negative budget $B$. A set $S$ is then feasible according to the knapsack constraint if $\sum_{u \in S} p_u = \inner{\characteristic_S}{\vp} \leq B$. A cardinality constraint is a particularly important special case of a knapsack constraint in which $\vp = \characteristic_\cN$ and $B$ is integral.

We are often interested in the problem of maximizing a non-negative submodular function subject to either a cardinality or a knapsack constraint. Notice that the restriction to non-negative submodular functions is unavoidable as we aim for multiplicative approximation guarantees. The approximation ratios that we obtain for this problem sometimes depend on the density of the constraint, which is defined as $c \triangleq B / \|\vp\|_1$. One can assume that the density is a value from the range $(0, 1)$ because for $c = 0$ the problem is trivial ($\varnothing$ is the only fesible solution), and for $c \geq 1$ the problem reduces to the well-understood case of unconstrained submodular maximization.

Except in the context of Theorem~\ref{thm:symmetric_knapsack_equality} (which is discussed separately in Section~\ref{ssc:improved_results_symmetric}), whenever we consider knapsack constraints, we implicitly assume that the cost of every single element does not exceed the budget. This assumption is without loss of generality because if it is violated, one can drop from the ground set the violating elements before executing our algorithms. Dropping such elements does not deteriorate the guarantees of our algorithms for two reasons. First, since the dropped elements cannot be part of any feasible solution, dropping them does not affect the value of the optimal solution. Second, dropping elements can only increase the density of the constraint, and the guarantees of all of our algorithms (except the one of Theorem~\ref{thm:symmetric_knapsack_equality}) do not deteriorate when the density increases.

\subsection{Matroid Constraints}

Given a ground set $\cN$, a matroid $\cM = (\cN, \cI)$ is a pair such that: (i) $\cI$ is a non-empty subset of $2^\cN$, (ii) if $T \in \cI$, then every set $S \subseteq T$ also belongs to $\cI$, and finally, (iii) if $S, T \in \cI$ and $|S| < |T|$, then there exists an element $u \in T \setminus S$ such that $S \cup \{u\} \in \cI$. The sets in $\cI$ are called \emph{independent sets}, and they are the feasible sets when the matroid $\cM$ is used as a constraint. Matroids are useful objects as they capture diverse settings ranging from independent sets in a linear vector space to acyclic sets of edges in a graph. Thus, we are interested in the problem of maximizing a non-negative submodular function subject to a matroid constraint.

A base of a matroid is a set that is maximally independent. One can verify that part (iii) of the definition of a matroid implies that all the bases of a particular matroid have the same size, which is termed the rank of the matroid. The following lemma states another useful property of matroid bases.
\begin{lemma}[Proved by~\cite{brualdi1969comments}, and can also be found as Corollary~39.12a in~\cite{schrijver2003combinatorial}] \label{le:perfect_matching_two_bases}
Let $A$ and $B$ be two bases of a matroid $\cM = (\cN, \cI)$. Then, there exists a bijection $h\colon A \setminus B \rightarrow B \setminus A$ such that for every $u \in A \setminus B$, $(B - h(u)) + u \in \cI$.
\end{lemma}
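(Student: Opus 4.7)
The plan is to prove this via Hall's marriage theorem. I would build a bipartite graph $G$ on vertex parts $A \setminus B$ and $B \setminus A$, placing an edge between $u \in A \setminus B$ and $v \in B \setminus A$ precisely when $(B - v) + u \in \cI$. Since $|A| = |B|$ forces $|A \setminus B| = |B \setminus A|$, any perfect matching in $G$ is the desired bijection $h$. The task therefore reduces to verifying Hall's condition: for every $S \subseteq A \setminus B$, the neighborhood $N(S) \subseteq B \setminus A$ satisfies $|N(S)| \geq |S|$.

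To verify Hall's condition, I would fix $S \subseteq A \setminus B$ and let $T = (B \setminus A) \setminus N(S)$, so that no edge of $G$ goes from $S$ to $T$. The key algebraic fact is the unique-circuit characterization of exchange: because $B$ is a base and $u \notin B$, the set $B + u$ contains a unique circuit $C(u, B) \ni u$, and $(B - v) + u \in \cI$ if and only if $v \in C(u, B)$. The assumption that $T$ is disjoint from $N(S)$ then translates to $C(u, B) \cap T = \varnothing$ for every $u \in S$, which together with $u \notin B$ and $T \subseteq B$ gives $C(u, B) \subseteq (B \setminus T) \cup \{u\}$. Consequently $C(u, B) - u \subseteq B \setminus T$, so $u$ lies in the closure $\mathrm{cl}(B \setminus T)$ for every $u \in S$.

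Next I would note that $A \cap B \subseteq B \setminus T$ (because $T \subseteq B \setminus A$), so $(A \cap B) \cup S \subseteq \mathrm{cl}(B \setminus T)$. Since $(A \cap B) \cup S \subseteq A$ is independent, its cardinality is at most the rank of $\mathrm{cl}(B \setminus T)$, which equals $|B \setminus T| = |B| - |T|$. Rearranging gives $|T| \leq |B| - |A \cap B| - |S| = |B \setminus A| - |S|$, i.e.\ $|N(S)| \geq |S|$. Invoking Hall's theorem then yields the perfect matching in $G$, hence the bijection $h$.

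The main obstacle is the closure/rank step: one must translate the combinatorial statement ``no exchange edges from $S$ into $T$'' into a rank bound on $B \setminus T$, and this is exactly what makes Hall's condition accessible. Routing this through unique fundamental circuits does the trick, but it relies on the standard matroid facts that $B + u$ has a unique circuit and that deleting any one of its elements leaves an independent set; these would be verified (or cited) up front. The remainder of the argument is just careful bookkeeping with the disjoint decompositions $A = (A \cap B) \cupdot (A \setminus B)$ and $B = (A \cap B) \cupdot (B \setminus A)$, together with the equality $|A \setminus B| = |B \setminus A|$ that turns a one-sided matching into a bijection.
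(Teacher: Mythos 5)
Your proposal is correct. Note that the paper does not prove this lemma at all: it is imported as a known result, cited to Brualdi (1969) and to Corollary~39.12a of Schrijver's book, so there is no in-paper argument to compare against. Your argument is essentially the classical proof of this basis-exchange bijection: set up the bipartite exchange graph between $A \setminus B$ and $B \setminus A$, characterize edges via the unique fundamental circuit $C(u,B) \subseteq B + u$ (with $(B - v) + u \in \cI$ iff $v \in C(u,B)$ for $v \in B$), and verify Hall's condition by observing that if no edges go from $S$ into $T = (B \setminus A) \setminus N(S)$, then every $u \in S$ lies in $\mathrm{cl}(B \setminus T)$, so the independent set $(A \cap B) \cup S$ forces $|A \cap B| + |S| \leq |B| - |T|$, i.e.\ $|N(S)| \geq |S|$. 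All the steps check out: the circuit characterization, the closure/rank bound (an independent subset of a set has size at most that set's rank), and the equicardinality of bases that upgrades the Hall matching to a bijection. So your write-up supplies a correct, self-contained proof of a fact the paper only cites.
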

One can extend the domain of the function $h$ from the last lemma to the entire set $A$ by defining $h(u) = u$ for every $u \in A \cap B$. This yields the following corollary.
\begin{corollary} \label{cor:perfect_matching_two_bases}
Let $A$ and $B$ be two bases of a matroid $\cM = (\cN, \cI)$. Then, there exists a bijection $h\colon A \rightarrow B$ such that for every $u \in A$, $(B - h(u)) + u \in \cI$ and $h(u) = u$ for every $u \in A \cap B$.
\end{corollary}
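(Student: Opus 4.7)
The plan is to deduce the corollary directly from Lemma~\ref{le:perfect_matching_two_bases} by taking the bijection guaranteed by the lemma on $A \setminus B$ and gluing it with the identity map on $A \cap B$. Concretely, let $h'\colon A \setminus B \to B \setminus A$ be the bijection produced by the lemma, and define
\[
h(u) = \begin{cases} h'(u) & \text{if } u \in A \setminus B, \\ u & \text{if } u \in A \cap B. \end{cases}
\]
Two things then need to be checked: that $h$ is a bijection from $A$ to $B$, and that $(B - h(u)) + u \in \cI$ for every $u \in A$.

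For the first check, $A$ partitions as $(A \setminus B) \cupdot (A \cap B)$ and $B$ partitions as $(B \setminus A) \cupdot (A \cap B)$. Since $h'$ is a bijection between the first parts and the identity is a bijection between the second parts, their disjoint union $h$ is a bijection $A \to B$. The condition $h(u) = u$ for $u \in A \cap B$ holds by construction.

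For the second check, consider two cases. If $u \in A \setminus B$, the property $(B - h(u)) + u \in \cI$ is exactly the conclusion of Lemma~\ref{le:perfect_matching_two_bases}. If $u \in A \cap B$, then $h(u) = u$ and so $(B - h(u)) + u = (B - u) + u = B$, which lies in $\cI$ because $B$ is a base of $\cM$. This covers all $u \in A$ and completes the argument.

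There is essentially no obstacle here: the corollary is a clean bookkeeping extension of the lemma, and the only thing to be careful about is verifying that the extended map remains well-defined as a bijection and that both cases of the independence condition are handled.
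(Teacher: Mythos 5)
Your proposal is correct and matches the paper's approach: the paper obtains the corollary by the same one-line extension, defining $h(u) = u$ on $A \cap B$ and keeping the bijection of Lemma~\ref{le:perfect_matching_two_bases} on $A \setminus B$. Your two verification steps (bijectivity of the glued map and the case $(B - u) + u = B \in \cI$) are exactly the routine checks the paper leaves implicit.
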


\subsection{Convex Sets}

Many modern submodular maximization algorithms are based on (approximately) optimizing a fractional relaxation of the problem, and then rounding the fractional solution obtained. The objective function used by these fractional relaxations is usually the multilinear extension of the original set function. To be more specific, given a set function $f\colon 2^\cN \to \bR$, its multilinear extension is a function $F\colon [0, 1]^\cN \to \bR$ defined as follows. For every vector $\vx \in [0, 1]^\cN$, $\RSet(\vx)$ is a random set that contains every element $u \in \cN$ with probability $x_u$, independently. Then,
\[
	F(\vx)
	\triangleq
	\bE[f(\RSet(\vx))]
	=
	\sum_{S \subseteq \cN} f(S) \cdot \prod_{u \in S} x_u \cdot \prod_{u \in \cN \setminus S} (1 - x_u)
	\enspace.
\]
One can observe that, as its name suggests, the multilinear extension is a multilinear function of the coordinates of $\vx$. This implies the following observation (that appears, for example, as Observation 2.2 of \citet{feldman2011unified}).
\begin{observation}\label{obs:multilinear-partial}
  Let $F( \vx )$ be the multinear extension of a submodular function
  $f\colon 2^{\cN}\to \bR$. Then, for every $u \in \cN$,
  \[ \frac{\partial F( \vx )}{\partial x_u} = \frac{F(\vx \vee \vone_{\{u\}}) - F(\vx)}{1 - \vx_u } = \frac{F( \vx ) - F( \vx \wedge \vone_{\cN \setminus \{u\}) }}{\vx_u } \enspace. \]
\end{observation}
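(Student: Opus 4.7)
The plan is to exploit multilinearity directly. Since $F$ is a multilinear polynomial in the coordinates of $\vx$, fixing every coordinate except $x_u$ leaves an affine function of $x_u$. I would first observe that any affine function of $x_u$ is determined by its values at $x_u = 0$ and $x_u = 1$, and that these two values are exactly $F( \vx \wedge \vone_{\cN \setminus \{u\}} )$ and $F( \vx \vee \vone_{\{u\}} )$, respectively (as can be seen directly from the sum-over-subsets definition of $F$ by splitting according to whether $u \in S$). Consequently,
\[
F(\vx) \;=\; x_u \cdot F(\vx \vee \vone_{\{u\}}) \;+\; (1 - x_u) \cdot F(\vx \wedge \vone_{\cN \setminus \{u\}}),
\]
and this identity holds pointwise for every $\vx \in [0,1]^\cN$.

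Next, I would differentiate this identity with respect to $x_u$ (noting that neither factor $F(\vx \vee \vone_{\{u\}})$ nor $F(\vx \wedge \vone_{\cN \setminus \{u\}})$ depends on $x_u$), obtaining
\[
\frac{\partial F(\vx)}{\partial x_u} \;=\; F(\vx \vee \vone_{\{u\}}) - F(\vx \wedge \vone_{\cN \setminus \{u\}}).
\]
To recover the two claimed forms, I would substitute $F(\vx \wedge \vone_{\cN \setminus \{u\}}) = \frac{F(\vx) - x_u \cdot F(\vx \vee \vone_{\{u\}})}{1 - x_u}$ (for $x_u < 1$) into the right-hand side to get the first expression, and symmetrically substitute $F(\vx \vee \vone_{\{u\}}) = \frac{F(\vx) - (1 - x_u) \cdot F(\vx \wedge \vone_{\cN \setminus \{u\}})}{x_u}$ (for $x_u > 0$) to get the second.

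There is essentially no obstacle here: the entire content is the multilinearity of $F$, which is manifest from the explicit sum-over-subsets formula. The only minor subtlety worth mentioning is that the two quotient forms require $x_u \ne 1$ and $x_u \ne 0$, respectively, for their denominators to be nonzero; when the corresponding denominator vanishes, the partial derivative still equals $F(\vx \vee \vone_{\{u\}}) - F(\vx \wedge \vone_{\cN \setminus \{u\}})$, and the quotient form should be interpreted as the limit (which matches by continuity of the affine function in $x_u$).
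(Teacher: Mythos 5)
Your proof is correct, and it is essentially the argument the paper has in mind: the paper states this observation as an immediate consequence of the multilinearity of $F$ (citing Observation~2.2 of \citet{feldman2011unified}) without spelling out the details, and your decomposition $F(\vx) = x_u \cdot F(\vx \vee \vone_{\{u\}}) + (1 - x_u) \cdot F(\vx \wedge \vone_{\cN \setminus \{u\}})$ followed by differentiation is exactly the standard way to make that implication precise. Your remark about interpreting the quotient forms at $x_u = 0$ or $x_u = 1$ is a reasonable extra care that does not affect the claim.
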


Solving the above-mentioned relaxations boils down to maximizing the multilinear extension of a non-negative submodular function subject to a constraint defined by a convex set $\cP \subseteq [0, 1]^\cN$. There are two technical issues stemming from the fact that this is a continuous (rather than combinatorial) optimization problem.
\begin{itemize}
	\item We need to adapt the definition of $(\alpha, \beta)$-bicriteria approximation algorithms (given in Section~\ref{sec:introduction}) to continuous problems. The most immediate adaptation is that such an algorithm can output a vector $\vx$ if there exist vectors $\vy_1, \vy_2 \in \cP$ such that $\vy_1 \leq \vx \leq \beta \cdot \vy_2$. However, using this definition is problematic since the multilinear extension is defined only for vectors in $[0, 1]^\cN$. Thus, we must further restrict bicriteria approximation algorithms to outputting only vectors from the set $\{\vx \in [0, 1]^\cN \mid \exists_{\vy_1, \vy_2 \in \cP}\; \vy_1 \leq \vx \leq \beta \cdot \vy_2\}$.
	\item  As mentioned in Section~\ref{ssc:our_results}, our algorithms for maximizing multilinear extensions subject to convex set constraints produce fractional solutions, but have bicriteria approximation guarantees with respect to the best feasible integral solution. To stress this, we refer to these algorithms as \emph{bicriteria integer-approximation algorithms}. %We term such approximation ratios as \emph{integer-approximation ratios}. Accordingly, we also use the term \emph{bicriteria integer-approximation} to refer to a bicriteria approximation in which the approximation ratio is with respect to the best feasible integral solution.
Fortunately, approximation guarantees with respect to the best integral solution are good enough when our goal is to round the fractional solution obtained, and get a bicriteria approximation algorithm for the original combinatorial problem.
\end{itemize}

The Pipage Rounding algorithm of \citet{calinescu2011maximizing} can be used to round fractional solutions for the problem of maximizing a submodular function subject to a matroid constraint at the cost of rounding up the infeasibility ratio. The next lemma formally describes the guarantee that can be obtained in this way (see Appendix~\ref{app:pipage_rounding_matroid} for the proof).
\begin{restatable}{lemma}{lemRoundingMatroid} \label{lem:rounding_matroid}
Consider a matroid $\cM = (\cN, \cI)$, the matroid polytope $\cP$ of $\cM$ (i.e., $\cP$ is the convex hull of the characteristic vectors of all the sets in $\cI$), a non-negative submodular function $f$, and the multilinear extension $F$ of $f$. There exists a polynomial time algorithm that gets as input a vector $\vx \in [0, 1]^\cN$ and a scalar $\beta > 0$ such that $\vx / \beta \in \cP$, and outputs a random set $S$ obeying two properties: $\bE[f(S)] \geq F(\vx)$, and $\characteristic_S / \lceil \beta \rceil \in \cP$.
\end{restatable}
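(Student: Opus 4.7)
The plan is to reduce the lemma to the standard Pipage Rounding result of \cite{calinescu2011maximizing}, but applied to an auxiliary matroid rather than to $\cM$ itself. Set $k = \lceil \beta \rceil$ and let $\cM^{(k)}$ denote the $k$-fold matroid union of $\cM$ with itself, whose independent sets are exactly the sets expressible as $S_1 \cup \cdots \cup S_k$ with each $S_i \in \cI$. By Edmonds' matroid union theorem, $\cM^{(k)}$ is a matroid with rank function $r_k(S) = \min_{T \subseteq S}\bigl(|S \setminus T| + k \cdot r(T)\bigr)$, and I would show that its independence polytope coincides with $\cP_k := (k \cdot \cP) \cap [0, 1]^\cN$. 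Both inclusions are routine given the rank formula: using $r_k(S) \le k \cdot r(S)$ (take $T = S$), any $\vy$ in the union polytope satisfies $\vy(S)/k \le r(S)$, placing $\vy/k$ in $\cP$ and therefore $\vy$ in $\cP_k$; conversely, any $\vy \in \cP_k$ satisfies $\vy(S) = \vy(T) + \vy(S \setminus T) \le k \cdot r(T) + |S \setminus T|$ for every $T \subseteq S$, hence $\vy(S) \le r_k(S)$.

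The first step is then a one-line check that $\vx \in \cP_k$: since $\vx/\beta \in \cP$ and $\beta/k \le 1$, the down-closedness of the independence polytope gives $\vx/k = (\beta/k)(\vx/\beta) \in \cP$, so $\vx \in k \cdot \cP$; combined with $\vx \in [0, 1]^\cN$ this places $\vx \in \cP_k$. The second step is to run Pipage Rounding with respect to $\cM^{(k)}$ on $\vx$; in polynomial time this produces a random set $S$ whose characteristic vector is a vertex of $\cP_k$, so $S$ is independent in $\cM^{(k)}$, and which satisfies $\bE[f(S)] \ge F(\vx)$ because each swap weakly increases $F$ in expectation using only the submodularity of $f$ (the multilinear extension is convex along $\ve_i - \ve_j$ directions, so a mean-preserving random swap does not decrease its value in expectation). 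Finally, independence of $S$ in $\cM^{(k)}$ means $\characteristic_S \in k \cdot \cP$, i.e.\ $\characteristic_S / \lceil \beta \rceil \in \cP$, which is the required infeasibility bound.

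The principal obstacle is the polytope identification $\cP_k = (k \cdot \cP) \cap [0, 1]^\cN$, which relies on the matroid union theorem and (for the efficiency claim) on the matroid partition algorithm, which supplies a polynomial-time independence oracle for $\cM^{(k)}$ from only the oracle for $\cM$. Once these ingredients are in place, the rest of the argument is a direct black-box application of the existing Pipage Rounding guarantee, so no additional analytical work on $f$ is required.
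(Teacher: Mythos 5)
Your proposal is correct and follows essentially the same route as the paper: form the $\lceil \beta \rceil$-fold matroid union $\cM'$, use the matroid union rank formula to place $\vx$ in its independence polytope, apply standard Pipage Rounding there, and translate independence in $\cM'$ back into $\characteristic_S / \lceil \beta \rceil \in \cP$. The only cosmetic differences are that you prove the full polytope identity $(k \cdot \cP) \cap [0,1]^\cN = \cP'$ (the paper only verifies the one inclusion it needs, arguing $\vx \in \cP'$ directly by contradiction) and that the paper obtains the final containment by writing $S$ as a disjoint union of $\lceil \beta \rceil$ independent sets and averaging their characteristic vectors, whereas you read it off from the polytope identity.
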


For the knapsack constraints, it is still possible to use a variant of Pipage Rounding. However, this time the increase in the infeasibility ratio is $1$ (except in the special case of a cardinality constraint, in which this increase is at most $1/B$). See the next lemma for more details (the proof of Lemma~\ref{lem:rounding_cardinality_knapsack} can be found in Appendix~\ref{app:pipage_rounding_knapsack}).

\begin{lemma} \label{lem:rounding_cardinality_knapsack}
Consider the problem of maximizing a non-negative submodular function $f\colon 2^\cN \to \nnR$ subject to a knapsack constraint consisting of costs $\vp$ and budget $B$, and let $F$ be the multilinear extension of $f$. There exists a polynomial time algorithm that given a vector $\vx \in [0, 1]^\cN$ outputs a set $S \subseteq \cN$ such that $\sum_{u \in S} p_u \leq \inner{\vp}{\vx} + B$ and $\bE[f(S)] \geq F(\vx)$. In other words, the algorithm rounds $\vx$ without losing value (in expectation) at the cost of increasing the cost by at most $B$. In the special case of a cardinality constraint, the the guarantee of the above algorithm improves to $|S| = \sum_{u \in S} p_u \leq \lceil \inner{\vp}{\vx}\rceil$.
\end{lemma}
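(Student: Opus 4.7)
The plan is to employ a pipage rounding procedure adapted to a single knapsack constraint, in the spirit of \cite{calinescu2011maximizing} but replacing the matroid swap direction with a direction orthogonal to $\vp$. Starting from $\vx_0 := \vx$, I would maintain two invariants throughout an iterative phase: (i) $\inner{\vp}{\vx} = \inner{\vp}{\vx_0}$ exactly, and (ii) $F(\vx) \ge F(\vx_0)$. Each iteration selects two fractional coordinates $i, j$ with $x_i, x_j \in (0, 1)$ and moves $\vx$ along the direction $p_j \ve_i - p_i \ve_j$, where $\ve_u$ denotes the $u$-th standard basis vector; this direction is orthogonal to $\vp$ in coordinates $\{i,j\}$, so invariant (i) is preserved exactly. (If $p_i = 0$ the direction degenerates to a single-coordinate move in $i$; the argument below still applies.)

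The set of step sizes $t$ for which $\vx + t(p_j \ve_i - p_i \ve_j) \in [0,1]^\cN$ forms a closed interval whose two endpoints each make at least one of $x_i, x_j$ integral. Restricted to this line, $F$ is a univariate quadratic whose second derivative equals $-2 p_i p_j \cdot \partial^2 F/(\partial x_i \partial x_j)$, which is non-negative because submodularity of $f$ implies $\partial^2 F/(\partial x_i \partial x_j) \le 0$ (this is straightforward from the definition of $F$ via $\RSet(\vx)$). Hence the quadratic is convex, its maximum on the interval is attained at an endpoint, and moving $\vx$ to that endpoint preserves invariant (ii) and strictly reduces the number of fractional coordinates. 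After at most $|\cN|$ iterations, the resulting vector $\vx^*$ has at most one remaining fractional coordinate $i^*$.

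To produce $S$, I would round $x^*_{i^*}$ at random: include $i^*$ in $S$ with probability $x^*_{i^*}$ and exclude it otherwise, while taking the other elements of $S$ from the integer support of $\vx^*$. By multilinearity of $F$ in the coordinate $x_{i^*}$,
\[ \bE[f(S)] = F(\vx^*) \ge F(\vx) \enspace. \]
For the cost bound, regardless of the random choice,
\[ \inner{\vp}{\characteristic_S} \le \inner{\vp}{\vx^*} + (1 - x^*_{i^*}) p_{i^*} \le \inner{\vp}{\vx} + B \enspace, \]
using invariant (i) and the assumption $p_{i^*} \le B$. In the cardinality special case $\vp = \characteristic_\cN$, invariant (i) becomes exact preservation of $\sum_u x_u$, and $|S|$ is either $\lfloor \sum_u x_u\rfloor$ or $\lceil \sum_u x_u \rceil$, both of which are at most $\lceil \inner{\vp}{\vx}\rceil$.

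The main subtlety I anticipate lies in the final rounding step: deterministically rounding $i^*$ up would break the guarantee $\bE[f(S)] \ge F(\vx)$ for non-monotone $f$, while deterministically rounding it down is incompatible with the tight ceiling bound in the cardinality case. Randomized rounding resolves both issues simultaneously, since it preserves $F$ in expectation and keeps $|S|$ within one unit of $\sum_u x_u$ almost surely. Apart from this interaction between the two guarantees, the proof is a routine specialization of pipage rounding to a single linear constraint, and the main technical verification needed is the convex-quadratic calculation sketched above.
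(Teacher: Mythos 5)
Your overall route is the same as the paper's: perform pipage-style two-coordinate moves along directions orthogonal to $\vp$ (so that $\inner{\vp}{\vx}$ is preserved exactly), stop when at most one fractional coordinate remains, round that last coordinate randomly so that the value is preserved in expectation while the cost increases by at most $\max_{u \in \cN} p_u \leq B$, and note that in the cardinality case this yields the ceiling bound. The convexity-along-the-line calculation and the final cost accounting are correct, and your identification of the need for randomized (rather than deterministic) rounding of the last coordinate matches the paper.

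The one genuine problem is the deterministic choice inside the main loop: ``move $\vx$ to the endpoint where $F$ is larger'' requires comparing two values of the multilinear extension, and with only value-oracle access to $f$ these values cannot be computed, or even compared, exactly in polynomial time; sampling gives only an approximate comparison, which would degrade the exact guarantee $\bE[f(S)] \geq F(\vx)$ that the lemma claims. The paper's algorithm sidesteps this by making the pipage step itself randomized and oblivious to $f$: it moves to one endpoint with probability $p_v s_v / (p_v s_v + p_u s_u)$ and to the other endpoint otherwise, which is precisely the mean-preserving choice along your line; your own observation that $F$ is convex along the direction $p_j \ve_i - p_i \ve_j$ then gives $\bE[F(\vx')] \geq F(\vx)$ by Jensen's inequality (the paper derives the same fact from first-order submodularity inequalities). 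The same implementability issue recurs in your degenerate zero-cost case, where deciding the sign of $\partial F / \partial x_i$ exactly is again not possible; the paper instead rounds zero-cost coordinates independently at random, preserving $F$ in expectation. With these two replacements your argument becomes essentially the paper's proof; as written, however, the ``polynomial time'' part of the claim does not go through.
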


\citet{kulik2013} described a different method for rounding fractional solutions under a knapsack constraint that does not increase the infeasibility ratio at all. Their method consists of three steps: (i) a pre-processing step that modifies the input instance, (ii) obtaining a fractional solution for the modified instance using an arbitrary algorithm, and (iii) a rounding step. The existence of the pre-processing step has the unfortuante consequence that the method of \cite{kulik2013} does not preserve approximation ratios that depend on properties of the input instance such as the density of the knapsack constraint or the symmetry of the objective function. The following lemma formally states the guarantee that can be obtained using the method of \cite{kulik2013} (technically, \citet{kulik2013} has worked on single-criteria optimization only, but their method extends to bicriteria optimization).

\begin{lemma} \label{lem:rounding_knapsack}
Let $\alg$ be an $(\alpha, \beta)$-bicriteria approximation algorithm for maximizing the multilinear extension of a non-negative submodular function subject to a knapsack constraint. If $\alpha$ and $\beta \geq 1$ are independent of the properties of the input instance (except the size of its ground set and whether the objective function is monotone), then for every constant $\delta > 0$, there exists an $(\alpha - \delta, \beta)$-bicriteria approximation algorithm for maximizing a non-negative submodular function subject to a knapsack constraint.
\end{lemma}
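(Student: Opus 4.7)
The plan is to adapt the three-step rounding framework of Kulik, Shachnai, and Tamir to the bicriteria setting. Fix a constant $\delta' = \Theta(\delta)$, to be tuned at the end.

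First, I would enumerate all subsets $G \subseteq \cN$ of size at most $\lceil 1/\delta' \rceil$, and for each form the residual instance with ground set $\cN' = \cN \setminus G$, budget $B' = \max\{0, B - \sum_{u \in G} p_u\}$, and contracted objective $f_G(S) = f(G \cup S) - f(G)$, discarding any guess in which some element of $\cN'$ has cost above $\delta' B$. The analysis will focus on the ``correct'' guess $G \subseteq \opt$: for monotone $f$, take $G$ to be the costliest $\lceil 1/\delta'\rceil$ elements of $\opt$; for non-monotone $f$, use the augmented guess of \cite{kulik2013} that additionally preserves non-negativity of $f_G$ on the relevant sets. Either way, $f(G) + f_G(\opt \setminus G) = f(\opt)$ and every residual element has cost at most $\delta' B$.

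Next, I would invoke $\alg$ on the residual instance to obtain a fractional $\vy \in [0, 1]^{\cN'}$ with $\inner{\vp}{\vy} \leq \beta B'$ and $F_G(\vy) \geq \alpha \cdot f_G(\opt \setminus G)$, where $F_G$ is the multilinear extension of $f_G$. This step crucially uses the hypothesis that $(\alpha, \beta)$ are instance-independent beyond $|\cN|$ and monotonicity, since the density of the residual knapsack may differ markedly from that of the original instance. I would then round $(1 - \delta')\vy$ by independent coordinate rounding to produce $S' \subseteq \cN'$; since every residual element has cost at most $\delta' B$, a Chernoff bound applied to the independent random variables $p_u \characteristic_{\{u \in S'\}}$ yields $\Pr[\inner{\vp}{\characteristic_{S'}} > \beta B'] = o(1)$. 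If this concentration event fails I return $\varnothing$, otherwise $G \cup S'$, whose cost is at most $p(G) + \beta B' \leq \beta B$ because $\beta \geq 1$. Multilinearity of $F_G$ gives $\bE[f_G(S')] = F_G((1-\delta')\vy) \geq (1-\delta')F_G(\vy)$ in the monotone case (and a similar bound up to an additional $O(\delta')$ loss in the non-monotone case via the standard submodularity argument), so the expected value is at least $(\alpha - O(\delta'))f(\opt)$. Choosing $\delta'$ small enough in $\delta$ then delivers the claimed $(\alpha - \delta, \beta)$ bicriteria guarantee.

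The main obstacle is the balance in the final step: $\delta'$ must be small enough that the $(1-\delta')$-scaling costs only $O(\delta)$ in the approximation ratio, yet large enough that the Chernoff concentration on a sum of independent variables each of magnitude $\delta' B$ clears the gap between the expected cost, bounded by $(1-\delta')\beta B'$, and the target $\beta B'$ with high probability. Taking $\delta' = \Theta(\delta)$ makes both requirements compatible. A secondary obstacle, relevant only for non-monotone $f$, is ensuring that the contraction by $G$ does not introduce large negative marginals on the residual that would invalidate the approximation guarantee of $\alg$; this is resolved by the standard truncation and guess-augmentation arguments of \cite{kulik2013}, which preserve submodularity while controlling negative values.
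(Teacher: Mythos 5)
There is a genuine gap, and it sits exactly where your proof has to do real work: the rounding step. You round $(1-\delta')\vy$ independently and claim, via Chernoff, that $\Pr[\inner{\vp}{\characteristic_{S'}} > \beta B'] = o(1)$ because every residual element has cost at most $\delta' B$. But the slack you have available is only $\delta'\beta B'$, measured against the \emph{residual} budget $B' = B - p(G)$, while the element costs are bounded only relative to the \emph{original} budget $B$; after removing the costliest elements of $\opt$, the residual budget $B'$ can be as small as $\delta' B$ (or even $0$), so a single rounded-up element can exceed the entire slack and no concentration bound applies. Concretely, take $\beta = 1$, $p(G) = (1-\delta')B$, and a residual instance containing two elements of cost $\delta' B = B'$ each on which $\alg$ places fractional value $1/2$ (this respects $\inner{\vp}{\vy} \leq \beta B'$); with probability about $1/4$ both are rounded in and the budget is violated. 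Since your plan is to return $\varnothing$ on that event, you lose a constant fraction of $F_G(\vy)$ in expectation (an adversarial $\alg$ may concentrate its value on precisely these heavy elements), so the approximation degrades by a constant rather than by $\delta$. A secondary symptom of the same problem is your choice of guess: taking $G$ to be the \emph{costliest} elements of $\opt$ controls residual costs but gives no control over residual \emph{values}, which is the quantity the method of \cite{kulik2013} actually needs. Their pre-processing enumerates the elements of $\opt$ of largest (marginal) value, so that every surviving element contributes at most a tiny fraction of $f(\opt)$; then a budget overflow after rounding can be repaired by deleting a few elements (or by conditioning on feasibility) at a value loss of $O(\delta)\cdot f(\opt)$, with no high-probability cost bound needed. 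That value-based enumeration, together with the repair argument, is the missing idea; note that the paper itself does not reprove this machinery but explicitly defers to \cite{kulik2013}, observing only that their three-step method (pre-processing, fractional solve with an arbitrary algorithm, rounding) carries over verbatim to the bicriteria setting because the rounding never increases the infeasibility ratio.
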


\subsection{Additional Notation}

Given a set $S$ and an element $u$, we often use $S + u$ and $S - u$ as shorthands for $S \cup \{u\}$ and $S \setminus \{u\}$, respectively. When given also a set function $f$ and another set $T$, we employ the additional shorthands $f(u \mid S) \triangleq f(S + u) - f(S)$ and $f(T \mid S) \triangleq f(S \cup T) - f(S)$ to denote the marginal contributions to $S$ of $u$ and $T$, respectively. Consider now a ground set $\cN$, and two vectors $\vx, \vy \in [0, 1]^\cN$. We use the following binary operators on such vectors.
\begin{itemize}
	\item $\vx \vee \vy$ denotes the coordinate-wise maximum of $\vx$ and $\vy$. In other words, $\vx \vee \vy$ is a vector in $[0, 1]^\cN$ such that $(\vx \vee \vy)_u = \max\{x_u, y_u\}$.
	\item $\vx \wedge \vy$ denotes the coordinate-wise minimum of $\vx$ and $\vy$.
	\item $\vx \hprod \vy$ denotes the coordinate-wise product (also known as Hadamard product) of $\vx$ and $\vy$.
	\item $\vx \psum \vy$ denotes the coordinate-wise probabilistic sum of $\vx$ and $\vy$. Formally, $\vx \psum \vy = \vone_\cN - (\vone_\cN - \vx) \hprod (\vone_\cN  - \vy) = \vx + \vy - \vx \hprod \vy$.
\end{itemize}
One can verify that all the above binary operators are associative and commutative. We also note that whenever we use a comparison between vectors (such as $\vx \leq \vy$), this comparison holds coordinate-wise.

%%% Local Variables:
%%% mode: latex
%%% TeX-master: "main.tex"
%%% End:
\section{Warmup} \label{sec:warmup}

This section aims to demonstrate the intuition underlying our bicriteria algorithms for general submodular functions. Towards this goal, we present and prove in this section the following simplified version of Theorem~\ref{thm:general_knapsack_matroid} that applies only for cardinality constraints (see Section~\ref{ssc:combinatorial_general} for the formal statement and proof of the original Theorem~\ref{thm:general_knapsack_matroid}). %Readers who plan to read Section~\ref{sec:general} can feel free to skip this section.

%Following is the simplified version of Theorem~\ref{thm:general_knapsack_matroid} that we prove in this section.
\begin{theorem} \label{thm:general_cardinality}
For every constant $\eps \in (0, 1/2)$, there exists a polynomial time $(1/2 - \eps, O(\eps^{-1}))$-bicriteria approximation algorithm for the problem of maximizing a non-negative submodular function $f\colon 2^\cN \to \nnR$ subject to a cardinality constraint.
\end{theorem}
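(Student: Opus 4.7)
\medskip

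The plan is to combine a simple iterated greedy procedure with a post-processing step that invokes an unconstrained submodular maximization subroutine. Set $k = \lceil 1/(2\eps) \rceil = O(\eps^{-1})$. Initialize $T_0 = \varnothing$, and for $i = 1, \dots, k$ run a greedy algorithm on the ground set $\cN \setminus T_{i-1}$ that adds, in $B$ consecutive steps, an element maximizing the marginal contribution to the set accumulated so far in the current iteration (allowing negative marginals if needed); this produces a set $S_i$ of exactly $B$ elements. Let $T_i = T_{i-1} \cup S_i$. After $k$ iterations, apply the deterministic $\tfrac{1}{2}$-approximation for unconstrained submodular maximization to the restriction of $f$ to $T_k$, and output the resulting set $R \subseteq T_k$.

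\medskip

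Feasibility is immediate: since $R \subseteq T_k$ and $|T_k| \leq kB$, the output violates the cardinality constraint by a factor of at most $k = O(\eps^{-1})$.

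\medskip

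For the approximation ratio, the plan is to derive two lower bounds on $f(R)$ and combine them. The first is the direct unconstrained guarantee $f(R) \geq \tfrac{1}{2} f(T_k)$ (taking $U = T_k$). The second is the bound
\[
  f(R) \geq \tfrac{1}{2} f(O \cap T_k) \geq \tfrac{1}{2}\bigl( f(O) - f(T_k)/k \bigr),
\]
whose content lies in the second inequality. Taking the better of the two bounds and optimizing over the unknown value $f(T_k) \in [0, f(O)(1+1/k)]$ yields $f(R) \geq \tfrac{k}{2(k+1)} f(O) \geq (\tfrac{1}{2} - \eps) f(O)$ for the chosen $k$.

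\medskip

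The core step is to prove $f(O \cap T_k) \geq f(O) - f(T_k)/k$. Fix any $u \in O \setminus T_k$. Since $u \in \cN \setminus T_{i-1}$ for every iteration $i$, the greedy selection rule implies that at each step $j = 1, \dots, B$ of iteration $i$ the picked element has marginal contribution (with respect to $T_{i-1}$ together with the previously picked elements of this iteration) at least as large as the corresponding marginal of $u$; by submodularity those marginals of $u$ are themselves lower bounds for $f(u \mid T_i)$. Summing over the $B$ steps of iteration $i$ yields $B \cdot f(u \mid T_i) \leq f(T_i) - f(T_{i-1})$. Another application of submodularity gives $f(u \mid T_k) \leq f(u \mid T_i)$ for every $i$; summing over $i = 1, \dots, k$ telescopes the right-hand side to $f(T_k)/B$ (using $f(T_0) = 0$), which produces the key per-element marginal bound
\[
  f(u \mid T_k) \leq \frac{f(T_k)}{kB} \quad \text{for every } u \in O \setminus T_k.
\]
Combining this with $|O \setminus T_k| \leq |O| \leq B$ and the standard submodular inequality $f(O \cup T_k) - f(T_k) \leq \sum_{u \in O \setminus T_k} f(u \mid T_k)$ gives $f(O \cup T_k) \leq f(T_k)(1 + 1/k)$. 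The submodular inequality $f(O) + f(T_k) \geq f(O \cap T_k) + f(O \cup T_k)$ then yields the desired bound on $f(O \cap T_k)$.

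\medskip

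The main obstacle is obtaining the $1/(kB)$ factor in the per-element marginal bound. A single-iteration analysis only yields $f(u \mid T_i) \leq [f(T_i) - f(T_{i-1})]/B$, which is too weak on its own; the extra factor of $1/k$ is recovered only by pooling greedy's $B$ picks across all $k$ iterations through the submodular monotonicity of marginals along the chain $T_1 \subseteq T_2 \subseteq \dots \subseteq T_k$, while carefully handling the fact that intermediate marginals may be negative because $f$ is non-monotone. Without this aggregation one would be forced to take $k = \Omega(\eps^{-2})$, matching only the weaker Crawford-type guarantee rather than the $O(\eps^{-1})$ infeasibility ratio targeted here.
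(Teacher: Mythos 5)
Your algorithm is well defined and parts of the analysis are sound: the infeasibility ratio is immediate, bound (a) follows from the unconstrained guarantee, and your per-element bound $f(u \mid T_k) \le f(T_k)/(kB)$ for $u \in O \setminus T_k$ is correctly derived (the aggregation across iterations works even with negative marginals, and the telescoping only needs $f(\varnothing)\ge 0$). The proof breaks at the very last step of bound (b): submodularity gives $f(O) + f(T_k) \ge f(O \cap T_k) + f(O \cup T_k)$, which is an \emph{upper} bound on $f(O \cap T_k)$; deducing $f(O \cap T_k) \ge f(O) - f(T_k)/k$ from this together with $f(O \cup T_k) \le (1 + 1/k) f(T_k)$ uses the inequality in the reverse (supermodular) direction. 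So the bound that carries all the weight of the argument is unproven. Worse, the needed statement cannot be recovered from the facts you establish: for non-monotone $f$, small marginals of $O \setminus T_k$ on top of $T_k$ together with a small $f(T_k)$ do \emph{not} imply that $T_k$ contains any valuable subset. For a directed cut function in which each optimal element has a single arc to a private ``sink'' vertex, a pool consisting of the sinks satisfies $f(T_k) = 0$, $f(u \mid T_k) = 0$ for every $u \in O$, and $\max_{S \subseteq T_k} f(S) = 0$, while $f(O) = B$; so your inequality fails as an abstract consequence of the established facts, and any repair must argue from the greedy dynamics that such a pool cannot arise, which your proof does not do. (If $f$ were monotone your argument would close, since then $f(O \cup T_k) \ge f(O)$ combined with your upper bound yields $f(T_k) \ge \tfrac{k}{k+1} f(O)$ and bound (a) suffices; the gap is exactly the non-monotone heart of the theorem.)

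This is precisely the difficulty the paper's algorithm is engineered around, and it differs from yours in three relevant ways: (i) each iteration restarts the greedy, selecting $2B$ elements with marginals measured with respect to the current block $A_i$ alone (only the candidate pool excludes earlier blocks), which yields the per-block guarantees $f(A_{i,B}) \ge f(A_i)/2$ and $f(\hat O \mid A_i) \le f(A_i)/2$; (ii) the unconstrained subroutine is run separately for each $i$ on $g_i(S) = f(A_i \cup S)$ over the ground set $A$, so the returned set always contains $A_i$ and the guarantee $f(A_i \cup D_i) \ge \tfrac14 f(A_i) + \tfrac12 f(A_i \cup \dot O)$ supplies a $\tfrac14 f(A_i)$ term that offsets the $\tfrac12 f(A_i)$ loss coming from $\hat O$; and (iii) an averaging argument over the disjoint blocks selects an index with $f(A_{\hat\imath} \cup \opt) \ge (1 - 1/\ell) f(\opt)$. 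Your single run of unconstrained maximization on the plain restriction $f|_{T_k}$ has no analogue of the $\tfrac14 f(A_i)$ offset and no per-block structure to fall back on, so even granting your per-element bound there is no visible route to lower bounding $f(R)$ by roughly $\tfrac12 f(\opt)$. As written, the proposal establishes only the infeasibility half of the theorem.
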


The algorithm that we use to prove Theorem~\ref{thm:general_cardinality} appears as Algorithm~\ref{alg:comb-bicriteria}. This algorithm starts by greedily constructing $\ell$ sets $A_1, A_2, \dotsc, A_\ell$. The set $A$ is the union of these sets. For every $i \in [\ell]$, the algorithm also constructs a set $D_i$ which approximates the best way to extend $A_i$ using elements of $A$. The output of the algorithm is then the set $A_i \cup D_i$ for the best choice of $i$. Algorithm~\ref{alg:comb-bicriteria} implicitly assumes that the ground set includes $2\ell B$ dummy elements that do not affect the value of $f$.\footnote{In other words, if $Z$ is the set of dummy elements, then the equality $f(S) = f(S \setminus Z)$ holds for every $S \subseteq \cN$.} This assumption is without loss of generality because one can artificially add such elements before executing Algorithm~\ref{alg:comb-bicriteria}, and then remove them from the output of the algorithm, which does not affect the value of this output. Notice that the existence of the dummy elements guarantees that Algorithm~\ref{alg:comb-bicriteria} always has an element $u_{i, j} \in \cN \setminus A$ to choose on Line~\ref{line:greedy_selection}, and furthermore, the element selected always obeys $f(u_{i, j} \mid A_i) \geq 0$ with respect to the set $A_i$ at the time of the selection.

\begin{algorithm}[h]
\caption{\textsc{Combinatorial Algorithm for Cardinality Constraints} $(f, B, \eps)$} \label{alg:comb-bicriteria}
Let $A \gets \varnothing$ and $\ell \gets \lceil \frac{1}{2\eps} \rceil$.\\
\For{$i = 1$ to $\ell$}
{
	Let $A_i \gets \varnothing$.\\
	\For{$j = 1$ \KwTo $2B$}
	{
		Let $u_{i, j}$ be an element in $\cN \setminus A$ maximizing $f(u_{i, j} \mid A_i)$.\label{line:greedy_selection}\\
    Update $A_i \gets A_i + u_{i, j}$ and $A \gets A + u_{i, j}$.
  }
}
\For{$i = 1$ to $\ell$}
{
  Let $D_i$ be a subset of $A$ obeying $f(A_i \cup D_i) \geq \nicefrac{1}{4} \cdot f(A_i) + \nicefrac{1}{2} \cdot \max_{D \subseteq A} f(A_i \cup D)$.\label{D_i_selection}
}
Let $i^* \gets \argmax_{i \in [\ell]} f( D_i \cup A_i )$.\\
\Return $D_{i^*} \cup A_{i^*}$.
\end{algorithm}

We begin the analysis of Algorithm~\ref{alg:comb-bicriteria} with the following two observations.
\begin{observation} \label{obs:polynomial}
Algorithm~\ref{alg:comb-bicriteria} has a polynomial time implementation.
\end{observation}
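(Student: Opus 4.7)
The plan is to argue polynomial runtime by inspecting each stage of Algorithm~\ref{alg:comb-bicriteria} in turn. Since $\eps$ is a constant, $\ell = \lceil 1/(2\eps) \rceil$ is a constant, and $B \le |\cN|$ may be assumed without loss of generality. The first double loop (Lines~2--6) performs $\ell \cdot 2B = O(|\cN|/\eps)$ greedy steps, and Line~\ref{line:greedy_selection} in each step scans $\cN \setminus A$ for a maximizing element at a cost of $O(|\cN|)$ value-oracle queries. This contributes $O(|\cN|^2/\eps)$ queries overall, which is polynomial.

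The only subtle step is Line~\ref{D_i_selection}, where for each $i \in [\ell]$ we must produce $D_i \subseteq A$ satisfying $f(A_i \cup D_i) \ge \frac{1}{4} f(A_i) + \frac{1}{2} \max_{D \subseteq A} f(A_i \cup D)$. I would implement this by running the randomized double greedy algorithm of Buchbinder, Feldman, Naor, and Schwartz on the non-negative submodular function $g\colon 2^A \to \nnR$ defined by $g(D) \triangleq f(A_i \cup D)$. That algorithm terminates after $O(|A|) = O(|\cN|)$ iterations, and its refined analysis yields
\[
\ex{g(D_i)} \;\ge\; \tfrac{1}{2}\max_{D \subseteq A} g(D) + \tfrac{1}{4}\bigl(g(\varnothing) + g(A)\bigr) \;\ge\; \tfrac{1}{4} f(A_i) + \tfrac{1}{2} \max_{D \subseteq A} f(A_i \cup D),
\]
using $g(\varnothing) = f(A_i)$ and $g(A) \ge 0$. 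If a deterministic guarantee is desired, the standard method of conditional expectations applies. The final $\argmax$ over $i \in [\ell]$ adds only $O(\ell)$ oracle calls, so the overall runtime is polynomial in $|\cN|$ and $1/\eps$.

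The only real obstacle is recognizing that the precise bound demanded in Line~\ref{D_i_selection}---with its additive $\frac{1}{4} f(A_i)$ term on top of a $\frac{1}{2}$-approximation of $\max g$---matches exactly what the refined analysis of randomized double greedy provides, and is not merely the more familiar $\frac{1}{2}$-approximation guarantee. Once this identification is made, each component of Algorithm~\ref{alg:comb-bicriteria} is plainly polynomial, and the observation follows.
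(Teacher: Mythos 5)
Your proof follows essentially the same route as the paper: reduce Line~\ref{D_i_selection} to unconstrained maximization of the non-negative submodular function $g_i(D) \triangleq f(A_i \cup D)$ over the ground set $A$, invoke the refined Double Greedy guarantee $\tfrac{1}{2}\max_{D \subseteq A} g_i(D) + \tfrac{1}{4} g_i(\varnothing)$, and note that all other lines are trivially polynomial. One caveat: since Line~\ref{D_i_selection} asks for a set satisfying the inequality with certainty (not merely in expectation), the paper invokes the deterministic algorithm of \citet{buchbinder2018deterministic}, and your remark that ``the standard method of conditional expectations applies'' understates this step---derandomizing Double Greedy is not a routine conditional-expectations argument (the relevant conditional expectations involve the unknown optimum and the algorithm's future randomness) and was resolved only by the LP-based technique of that paper. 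With that citation in place of the conditional-expectations aside, your argument coincides with the paper's.
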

\begin{proof}
To prove the observation, it suffices to show that Line~\ref{D_i_selection} of Algorithm~\ref{alg:comb-bicriteria} has a polynomial time implementation. Let us define $g_i \colon 2^A \to \nnR$ to be the function $g_i(S) \triangleq f( A_i \cup S )$. Then, Line~\ref{D_i_selection} can be rewritten as finding a set $D_i \subseteq A$ such that $g_i(D_i) \geq \nicefrac{1}{4} \cdot g_i(\varnothing) + \nicefrac{1}{2} \cdot \max_{D \subseteq A} g_i(D)$. Since the function $g_i$ is also a non-negative submodular function, one can deterministically find such a set $D_i$ using the algorithm of \citet{buchbinder2018deterministic} for unconstrained submodular maximization.\footnote{In the unconstrained submodular maximization problem, given a non-negative submodular function $g$, the objective is to find an arbitrary subset of the ground set of $g$ that maximizes $g$.} We note that this algorithm is a derandomized version of the well-known Double Greedy algorithm of~\cite{buchbinder2015tight}. 
\end{proof}

\begin{observation} \label{obs:infeasibility}
The infeasibility ratio of Algorithm~\ref{alg:comb-bicriteria} is at most $O(\eps^{-1})$.
\end{observation}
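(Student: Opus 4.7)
The plan is essentially a direct size count on the output set. Since the constraint is a cardinality constraint with budget $B$, the infeasibility ratio is exactly $|D_{i^*} \cup A_{i^*}|/B$, so it suffices to upper bound $|D_{i^*} \cup A_{i^*}|$ by $O(\eps^{-1}) \cdot B$. Every ingredient for this is explicit in the code of Algorithm~\ref{alg:comb-bicriteria}.

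First I would observe that the inner loop adds exactly $2B$ elements to $A_i$ (one per iteration, and by the dummy-element convention there is always a legal choice on Line~\ref{line:greedy_selection}), so $|A_i| = 2B$ for every $i \in [\ell]$. Summing across $i$, and noting that the same element is never added to two different $A_i$'s because the selection on Line~\ref{line:greedy_selection} restricts $u_{i,j}$ to $\cN \setminus A$ where $A$ accumulates all previously chosen elements, I get that the $A_i$'s are pairwise disjoint and $|A| = \sum_{i=1}^\ell |A_i| = 2\ell B$.

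Next I would use the key fact that $D_{i^*}$ is by construction a subset of $A$ (by Line~\ref{D_i_selection}), and of course $A_{i^*} \subseteq A$ as well. Hence the output set satisfies $D_{i^*} \cup A_{i^*} \subseteq A$, giving $|D_{i^*} \cup A_{i^*}| \leq |A| = 2\ell B$. Substituting $\ell = \lceil 1/(2\eps) \rceil \leq 1/(2\eps) + 1$ yields
\[
|D_{i^*} \cup A_{i^*}| \;\leq\; 2\ell B \;\leq\; \left(\eps^{-1} + 2\right) B \;=\; O(\eps^{-1}) \cdot B,
\]
which is the desired bound on the infeasibility ratio.

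There is no real obstacle here; the only thing one has to be careful about is pointing out that the selection of $u_{i,j}$ on Line~\ref{line:greedy_selection} ranges over $\cN \setminus A$ (the current $A$, not just $A_i$), so the $A_i$'s are disjoint and the total count $2\ell B$ is tight rather than a loose union bound. Everything else is routine arithmetic.
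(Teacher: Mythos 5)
Your proof is correct and follows essentially the same route as the paper: the $A_i$'s are pairwise disjoint sets of size $2B$, so $|A| = 2\ell B = O(\eps^{-1})\cdot B$, and the output is a subset of $A$. The only cosmetic difference is the arithmetic bound on $\lceil 1/(2\eps)\rceil$ (the paper uses $2\ell B \le 2B/\eps$, you use $(\eps^{-1}+2)B$), which is immaterial.
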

\begin{proof}
Notice that $A$ is the disjoint union of the $\ell$ sets $A_1, A_2, \dotsc, A_\ell$, and the size of each one of these sets is $2B$. Thus,
\[
	|A|
	=
	\ell \cdot 2B
	=
	\bigg\lceil \frac{1}{2\eps} \bigg\rceil \cdot 2B
	\leq
	\frac{2B}{\eps}
	=
	O(\eps^{-1}) \cdot B
	\enspace.
\]
The observation now follows since the output set of Algorithm~\ref{alg:comb-bicriteria} is a subset of $A$.
\end{proof}

The rest of this section is devoted to bounding the approximation ratio of Algorithm~\ref{alg:comb-bicriteria}. The next lemma shows that the first half of the 
elements selected for every set $A_i$ have at least half of the value of the entire set. This
is a consequence of the submodularity of $f$ and the greedy rule used to fill $A_i$.
To formally state this lemma, it is convenient to define $A_{i, j}$ as the set of the first $j$ elements added to $A$. Formally, $A_{i, j} \triangleq \{a_{i, k} \mid k \in [j]\}$ for every $i \in [\ell]$ and $j \in \{0, 1, \dotsc, 2B\}$.
\begin{lemma} \label{lem:greedy}
For all $i \in [\ell]$, $\ff{ A_{i,B}}\ge \ff{ A_{i,2B} } / 2$.
\end{lemma}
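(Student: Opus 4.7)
The plan is to show that the marginal gains in the greedy construction of $A_i$ form a non-increasing sequence, from which the claimed $1/2$ bound is almost immediate by halving a sorted sum.

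First I would introduce the per-step marginal $\Delta_j \triangleq f(u_{i,j} \mid A_{i,j-1})$ for $j \in [2B]$, so that a telescoping sum gives
\[
	f(A_{i,k}) \;=\; f(\varnothing) + \sum_{j=1}^{k} \Delta_j
	\qquad \text{for every } k \in \{0,1,\dots,2B\}.
\]
The key step is the monotonicity claim $\Delta_{j+1} \le \Delta_j$ for every $j \in [2B-1]$. To prove it, note that the set $A$ only grows during the execution of Algorithm~\ref{alg:comb-bicriteria}, so if $u_{i,j+1} \in \cN \setminus A$ at the moment it is chosen on Line~\ref{line:greedy_selection}, then the element $u_{i,j+1}$ was also available as a candidate when $u_{i,j}$ was chosen one iteration earlier. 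The greedy rule on Line~\ref{line:greedy_selection} therefore gives $\Delta_j = f(u_{i,j}\mid A_{i,j-1}) \ge f(u_{i,j+1}\mid A_{i,j-1})$, and submodularity of $f$ (applied to $A_{i,j-1}\subseteq A_{i,j}$ and the element $u_{i,j+1}$) gives $f(u_{i,j+1}\mid A_{i,j-1}) \ge f(u_{i,j+1}\mid A_{i,j}) = \Delta_{j+1}$.

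Given the monotonicity, the first $B$ terms of $\Delta_1,\dots,\Delta_{2B}$ dominate the last $B$ termwise, so
\[
	\sum_{j=1}^{B} \Delta_j \;\ge\; \sum_{j=B+1}^{2B} \Delta_j
	\;\Longleftrightarrow\;
	\sum_{j=1}^{B}\Delta_j \;\ge\; \tfrac{1}{2}\sum_{j=1}^{2B}\Delta_j.
\]
Combining this with the telescoping identity and the non-negativity of $f$ (so $f(\varnothing)\ge 0$) yields
\[
	f(A_{i,B}) \;=\; f(\varnothing) + \sum_{j=1}^{B}\Delta_j
	\;\ge\; f(\varnothing) + \tfrac{1}{2}\bigl(f(A_{i,2B}) - f(\varnothing)\bigr)
	\;\ge\; \tfrac{1}{2}\, f(A_{i,2B}),
\]
which is the desired inequality.

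I do not expect any real obstacle. The one subtlety worth flagging carefully in the write-up is that the greedy selection set $\cN \setminus A$ refers to the global accumulator $A$ (not $A_i$), and so one must verify that $A$ is monotonically growing within the inner loop in order to conclude that $u_{i,j+1}$ was an admissible candidate at step $j$; once that is observed, the standard submodular greedy argument goes through unchanged.
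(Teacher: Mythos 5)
Your proof is correct and follows essentially the same approach as the paper: both arguments use the growth of the global accumulator $A$ plus the greedy rule and submodularity to dominate the second-half marginals by the first-half marginals, and then invoke $f(\varnothing)\ge 0$. The only cosmetic difference is that you first establish that consecutive marginals are non-increasing and then halve the sorted sum, whereas the paper directly pairs the marginal at step $k+B$ with the one at step $k$.
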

\begin{proof}
Notice that
  \begin{align*}
    \ff{ A_{i,2B} } &= \ff{ A_{i,B} } + \sum_{k=1}^{B} \marge{ a_{i,k + B} }{ A_{i,k + B -1} }
                                       \le \ff{ A_{i,B} } + \sum_{k=1}^{B} \marge{ a_{i,k + B} }{ A_{i,k-1} } \\
																			 &\le \ff{ A_{i,B} } + \sum_{k=1}^{B} \marge{ a_{i,k} }{ A_{i,k-1} }
                                       = 2 \cdot \ff{ A_{i,B} } - \ff{ \varnothing }
																			\le 2 \cdot \ff{ A_{i,B} }
																			\enspace,
  \end{align*}
  where the first inequality follows from the submodularity of $f$,
  the second inequality follows from the greedy rule used by Algorithm~\ref{alg:comb-bicriteria} to select $a_{i,k}$, and
	the last inequality follows from the non-negativity of $f$.
  The lemma now follows by rearranging this inequality.
\end{proof}

Let us now define $\hat O \triangleq \opt \setminus A$ and $\dot O \triangleq \opt \cap A$. Since any element of $\hat O$ could have 
been selected at any iteration of Algorithm~\ref{alg:comb-bicriteria}, but was never selected, we can upper bound the contribution of these elements to $A_i$ by the contribution of the elements selected in the second half of the construction of $A_i$. This leads to the following lemma.
\begin{lemma} \label{lem:ohat}
  For all $i \in [\ell]$,
  $f(\hat O \mid A_i ) \le f(A_i)/2$.
\end{lemma}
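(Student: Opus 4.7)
The plan is to bound $f(\hat O \mid A_i)$ by distributing the elements of $\hat O$ over the second half of the iterations used to build $A_i$, and then invoke Lemma~\ref{lem:greedy}.

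First, by the submodularity of $f$ (applied iteratively to the elements of $\hat O$), one immediately gets
\[
	f(\hat O \mid A_i)
	\leq
	\sum_{u \in \hat O} f(u \mid A_i) \enspace.
\]
Since $\hat O \subseteq \opt$ and $\opt$ obeys the cardinality constraint, $|\hat O| \leq B$, so the sum contains at most $B$ terms.

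Next, I would exploit the greedy selection rule on Line~\ref{line:greedy_selection}. Fix any $u \in \hat O$. Because $u \notin A$, the element $u$ remained a candidate throughout the entire construction of $A_i$ (i.e., $u \in \cN \setminus A$ just before iteration $j$, since $A$ only grows during the algorithm). Hence, for every $j \in [2B]$,
\[
	f(u \mid A_i)
	\leq
	f(u \mid A_{i,j-1})
	\leq
	f(a_{i,j} \mid A_{i,j-1}) \enspace,
\]
where the first inequality follows from submodularity (as $A_{i,j-1} \subseteq A_i$), and the second from the greedy rule. I would now pair the elements of $\hat O$ with distinct iterations $j \in \{B+1, B+2, \dotsc, B+|\hat O|\}$ (which is possible because $|\hat O| \leq B$), and add up the above inequalities across this pairing, yielding
\[
	\sum_{u \in \hat O} f(u \mid A_i)
	\leq
	\sum_{j = B+1}^{B + |\hat O|} f(a_{i,j} \mid A_{i,j-1})
	\leq
	\sum_{j = B+1}^{2B} f(a_{i,j} \mid A_{i,j-1})
	=
	f(A_{i,2B}) - f(A_{i,B}) \enspace,
\]
where the second inequality uses the non-negativity of each marginal $f(a_{i,j} \mid A_{i,j-1})$ (guaranteed by the dummy elements discussed in the paragraph preceding Algorithm~\ref{alg:comb-bicriteria}), and the equality is telescoping.

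Finally, I would apply Lemma~\ref{lem:greedy} to get $f(A_{i,B}) \geq f(A_{i,2B})/2 = f(A_i)/2$, which gives $f(A_{i,2B}) - f(A_{i,B}) \leq f(A_i)/2$; chaining the three displays yields the lemma. The argument is essentially routine, so I do not anticipate a real obstacle; the only subtlety is the pairing, which requires the cardinality bound $|\hat O| \leq B$ together with the fact that the second half of the construction has exactly $B$ iterations — this is precisely why Algorithm~\ref{alg:comb-bicriteria} builds each $A_i$ to size $2B$ rather than $B$.
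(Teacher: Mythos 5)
Your proposal is correct and follows essentially the same route as the paper's proof: submodularity to reduce to single-element marginals, the greedy rule to compare each element of $\hat O$ with a distinct greedily chosen element from the second half of $A_i$'s construction (the paper pairs with the last $|\hat O|$ iterations rather than iterations $B+1,\dotsc,B+|\hat O|$, an immaterial difference), non-negativity of the marginals (via the dummy elements) to extend the sum, telescoping, and finally Lemma~\ref{lem:greedy}. No gaps.
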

\begin{proof}
Let us denote the elements of $\hat O$ by $o_1, o_2, \dotsc, o_{|\hat O|}$. Then,
  \begin{align*}
    f(\hat O \mid A_i ) &\le \mspace{-9mu} \sum_{j = 2B - |\hat O| + 1}^{2B} \mspace{-18mu} f(o_{j + |\hat O| - 2B} \mid A_{i})
																				\le \mspace{-9mu} \sum_{j = 2B - |\hat O| + 1}^{2B} \mspace{-18mu} f(o_{j + |\hat O| - 2B} \mid A_{i, j - 1})
                                        \le \mspace{-9mu} \sum_{j = 2B - |\hat O| + 1}^{2B} \mspace{-18mu} f(a_{i, j} \mid A_{i, j - 1}) \\ \\
                                        &\le \sum_{j = B + 1}^{2B} f(a_{i, j} \mid A_{i, j - 1}) 
                                        = \ff{A_i } - \ff{ A_{i,B} }
                                        \le \ff{ A_{i} } - \ff{A_i} / 2 = \ff{A_i}/2
																				\enspace,
  \end{align*}
  where the first two inequalities hold by the submodularity of $f$,
  the third inequality follows from the greedy selection rule used by Algorithm~\ref{alg:comb-bicriteria} to select $a_{i,j}$,
	the fourth inequality holds since the introduction of the dummy elements guarantees that $f(a_{i, j} \mid A_{i, j - 1}) \ge 0$ for every $i \in [\ell]$ and $j \in [2B]$,
	and the last inequality follows from Lemma~\ref{lem:greedy}.
\end{proof}

We are now ready to prove the approximation ratio of Algorithm~\ref{alg:comb-bicriteria}.
\begin{lemma} \label{lem:approximation-ratio}
The approximation ratio of Algorithm~\ref{alg:comb-bicriteria} is at most $1/2 - \eps$.
\end{lemma}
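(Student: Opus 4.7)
The plan is to show that for some index $i \in [\ell]$, the set $A_i \cup D_i$ already attains value at least $(1/2 - \eps) \ff{\opt}$; since the algorithm returns the best such pair $A_{i^*} \cup D_{i^*}$, the claim follows.

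First, I would relate $\ff{A_i \cup D_i}$ to $\ff{A_i \cup \opt}$ for an arbitrary fixed $i$. Writing $\dot O = \opt \cap A$ and $\hat O = \opt \setminus A$ as in Lemma~\ref{lem:ohat}, the selection rule of Line~\ref{D_i_selection} applied to the specific candidate $D = \dot O \subseteq A$ yields $\ff{A_i \cup D_i} \geq \tfrac{1}{4} \ff{A_i} + \tfrac{1}{2} \ff{A_i \cup \dot O}$. To introduce $\opt$ on the right-hand side, I would use submodularity along the chain $A_i \subseteq A_i \cup \dot O$ to get $\marge{\hat O}{A_i \cup \dot O} \leq \marge{\hat O}{A_i}$, and then bound the latter by $\ff{A_i}/2$ via Lemma~\ref{lem:ohat}. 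Rearranging gives $\ff{A_i \cup \dot O} \geq \ff{A_i \cup \opt} - \ff{A_i}/2$, and substituting this back causes the $\ff{A_i}$ terms to cancel, leaving $\ff{A_i \cup D_i} \geq \tfrac{1}{2} \ff{A_i \cup \opt}$.

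The main obstacle is the second step, where I must produce some $i$ with $\ff{A_i \cup \opt} \geq (1 - 2\eps) \ff{\opt}$. For monotone $f$ this would be trivial, since $\ff{A_i \cup \opt} \geq \ff{\opt}$, but here $f$ is only non-negative. Instead, I would exploit the pairwise disjointness of $A_1, \dotsc, A_\ell$ together with an averaging argument. Setting $B_i \triangleq \bigcup_{j < i} A_j$, submodularity gives $\marge{A_i}{\opt} \geq \marge{A_i}{\opt \cup B_i}$, and summing the right-hand sides telescopes to $\ff{A \cup \opt} - \ff{\opt}$, which is at least $-\ff{\opt}$ by non-negativity of $f$. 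Hence $\sum_{i=1}^{\ell} \ff{A_i \cup \opt} \geq \ell\ff{\opt} - \ff{\opt} = (\ell - 1) \ff{\opt}$, so at least one $i$ achieves $\ff{A_i \cup \opt} \geq (1 - 1/\ell) \ff{\opt} \geq (1 - 2\eps) \ff{\opt}$, where the final inequality uses $\ell = \lceil 1/(2\eps) \rceil \geq 1/(2\eps)$.

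Combining the two steps, this particular $i$ satisfies $\ff{A_i \cup D_i} \geq \tfrac{1}{2}(1 - 2\eps) \ff{\opt} = (1/2 - \eps) \ff{\opt}$, and hence so does $\ff{A_{i^*} \cup D_{i^*}}$, completing the argument.
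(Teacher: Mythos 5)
Your proposal is correct and follows essentially the same route as the paper's proof: the averaging argument over the disjoint sets $A_1,\dotsc,A_\ell$ giving some $i$ with $f(A_i \cup \opt) \ge (1-1/\ell)\cdot f(\opt)$, combined with the guarantee of Line~\ref{D_i_selection} and Lemma~\ref{lem:ohat}. The only cosmetic difference is that you rearrange the submodularity step as $f(\hat O \mid A_i \cup \dot O) \le f(\hat O \mid A_i)$ to lower bound $f(A_i \cup \dot O)$, whereas the paper phrases the same inequality as an upper bound on $f(A_i \cup \opt)$; these are equivalent.
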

\begin{proof}
Using repeated applications of the submodularity of $f$, one can get
\[
	\sum_{i \in [\ell]} \ff{ A_i \cup \opt } \ge \sum_{i = 2}^\ell f\Big(\Big(A_i \cap \bigcup\nolimits_{j = 1}^{i - 1} A_j\Big) \cup \opt\Big) + f( A \cup \opt )
	\ge
	( \ell - 1 ) \cdot f(\opt)
	\enspace,
\]
where the last inequality follows from the non-negativity of $f$ and the fact the $A_i$'s are pairwise disjoint. Thus, by an averaging argument, there must exist an index $\hat{\imath} \in [\ell]$ for which $f(A_{\hat{\imath}} \cup \opt) \ge ( 1 - \frac{1}{\ell } ) \cdot f(\opt)$. In the following, we show that $\ff{ A_{\hat{\imath}} \cup D_{\hat{\imath}} } \ge ( \frac{1}{2} - \eps) \cdot f( \opt )$, which implies the lemma since Algorithm~\ref{alg:comb-bicriteria} outputs a set of value $\max_{i \in [\ell]} f(A_i \cup D_i)$.

Recall that, by the way $D_{\hat{\imath}}$ is chosen by Algorithm~\ref{alg:comb-bicriteria},
  \begin{equation} \label{eq:dg_guarantee}
    \ff{ D_{\hat{\imath}} \cup A_{\hat{\imath}} } \ge \frac{1}{4} \cdot f(A_{\hat{\imath}}) +  \frac{1}{2} \cdot \max_{D \subseteq A} f(A_{\hat{\imath}} \cup D) \ge \frac{1}{4} \cdot f(A_{\hat{\imath}}) + \frac{1}{2} \cdot f( A_{\hat{\imath}} \cup \dot O) \enspace,
  \end{equation}
  where the second inequality holds since $\dot O$ is a subset of $A$ by definition. Thus,
  \begin{align*}
    (1 - 2 \eps) \cdot f(\opt) &\le \Big( 1 - \frac{1}{\ell } \Big) \cdot f(\opt)
    \le f(A_{\hat{\imath}} \cup \opt) 
                        \le f( A_{\hat{\imath}} \cup \hat O  )  - \ff{ A_{\hat{\imath}} } + f( A_{\hat{\imath}} \cup \dot O ) \\
												&= f(\hat{O} \mid A_{\hat{\imath}}) + f( A_{\hat{\imath}} \cup \dot O )
                        \le \frac{f(A_{\hat{\imath}})}{2} + \bigg[2f( D_{\hat{\imath}} \cup A_{\hat{\imath}} ) - \frac{f(A_{\hat{\imath}})}{2}\bigg] \\
                        &= 2f( D_{\hat{\imath}} \cup A_{\hat{\imath}} ),
  \end{align*}
  where the first inequality follows from the definition of $\ell$,
	the third inequality follows from the submodularity of $f$,
  and the last inequality follows from Lemma~\ref{lem:ohat} and Inequality~\eqref{eq:dg_guarantee}.
\end{proof}

Theorem~\ref{thm:general_cardinality} now follows from Observations~\ref{obs:polynomial} and~\ref{obs:infeasibility} and Lemma~\ref{lem:approximation-ratio}.

%%% Local Variables:
%%% mode: latex
%%% TeX-master: "main"
%%% End:

\section{Monotone Submodular Functions} \label{sec:monotone}

This section includes our results for bicriteria maximization of monotone submodular functions. In Section~\ref{ssc:monotone_general}, we present optimal results for general and down-closed convex set constraints. For cardinality and knapsack constraints, we have improved results depending on the density of these constraints. These results are presented in Section~\ref{ssc:monotone_per_c_analysis}, and are optimal for cardinality constraints. Finally, in Section~\ref{ssc:simple_greedy_monotone}, we present simple greedy algorithms for bicriteria maximization of monotone submodular functions subject to cardinality, knapsack and matroid constraints.

\subsection{Algorithms for Down-Closed and General Convex Sets} \label{ssc:monotone_general}

In this section, study the maximization of the multilinear extension of a monotone submodular function subject to an arbitrary solvable convex set $\cP$. We begin by proving an inapproximability result for this problem.

\begin{observation} \label{obs:monotone_hardnesss}
For every two constants $\eps \in (0, 1)$ and $\delta \in (0, \eps]$, no polynomial time algorithm can guarantee $(1 - \eps + \delta, \ln \eps^{-1})$-bicrtieria integer-approximation for the problem of maximizing the multilinear extension of a non-negative monotone submodular function $f\colon 2^\cN \to \nnR$ subject to a solvable convex set $\cP \subseteq [0, 1]^\cN$. Furthermore, this inapproximability result holds even when $\cP$ is restricted to be the continuous relaxation of a cardinality constraint (i.e., $\cP = \{\vx \in [0, 1]^\cN \mid \|\vx\|_1 \leq \cardBudget\}$ for some positive integer $\cardBudget$).
\end{observation}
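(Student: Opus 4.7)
The plan is to apply a bicriteria variant of Vondr\'ak's symmetry-gap technique (Vondr\'ak 2013) to a symmetric block-coverage instance whose symmetry gap under the relaxed constraint $\beta\cP$ equals $1 - e^{-\beta} = 1 - \eps$ in the limit. Setting $\beta \triangleq \ln \eps^{-1}$ and $k = \cardBudget$, the base instance will be the following: let $n = km$ for a large parameter $m$, partition $\cN = [n]$ into $k$ equal blocks $B_1, \ldots, B_k$ of size $m$, and let $f(S) \triangleq (1/k) \cdot |\{i \in [k] : S \cap B_i \ne \varnothing\}|$. Both the cardinality polytope $\cP = \{\vx \in [0,1]^\cN : \|\vx\|_1 \le k\}$ and its bicriteria relaxation $\beta \cP$ are invariant under the wreath-product group $G = S_m \wr S_k$ of block-preserving permutations of $\cN$, which also preserves $f$. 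Two quantitative facts drive the argument: the integer optimum of $f$ over $\cP$ is $1$ (achieved by any transversal of the blocks), while the unique $G$-invariant vector in $\beta\cP$ is the uniform vector $\vx = (\beta k/n)\vone$, on which $F(\vx) = 1 - (1 - \beta k/n)^{n/k} \to 1 - e^{-\beta}$ as $m \to \infty$.

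Invoking the symmetry-gap theorem then yields a family of perturbed instances $\tilde f$, each obtained by applying a random element of $G$ to a small local perturbation of $f$, such that: (i) the integer optimum of $\tilde f$ over $\cP$ remains at least $1 - o_m(1)$, since the transversal structure survives any sufficiently local perturbation; (ii) $\tilde f$ is, with high probability over the random relabeling, indistinguishable from $f$ under any polynomial number of value-oracle queries; and (iii) any polynomial-time algorithm that outputs a vector $\vx \in \beta \cP$ therefore achieves $\tilde F(\vx) \le 1 - \eps + o_m(1)$ on $\tilde f$. Choosing $m$ large enough that $o_m(1) < \delta$ then yields the claimed $(1 - \eps + \delta, \beta)$-inapproximability, and since this construction uses only the continuous relaxation of a cardinality constraint, it also establishes the stronger ``furthermore'' part of the statement.

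The main technical obstacle is checking that Vondr\'ak's theorem, which is phrased for single-criterion optimization with a single constraint polytope, carries over cleanly to the bicriteria setting where the algorithm's output polytope $\beta \cP$ differs from the polytope $\cP$ over which the benchmark integer OPT is measured. Because $\beta \cP$ is also $G$-invariant, the ``random relabeling plus local perturbation'' construction applies verbatim, and the symmetric-output cap $F((\beta k/n) \vone) = 1 - e^{-\beta}$ still controls every polynomial-time algorithm's output; only the conclusion needs restating to compare the algorithm's value to the integer optimum over $\cP$ rather than over $\beta \cP$. A secondary issue is handling the regime $\beta < 1$ (i.e., $\eps > 1/e$), but the calculation $F((\beta k/n) \vone) \to 1 - e^{-\beta}$ and the argument above remain valid without modification in that range.
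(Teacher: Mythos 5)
Your route is correct in outline but genuinely different from the paper's. The paper proves Observation~\ref{obs:monotone_hardnesss} indirectly: it assumes a fractional $(1-\eps+\delta,\ln\eps^{-1})$-algorithm exists, rounds its output with Pipage Rounding (Lemma~\ref{lem:rounding_cardinality_knapsack}) into a set of size at most $\cardBudget(\ln\eps^{-1}+1/\cardBudget)$, and derives a contradiction with the density-parameterized hardness of Theorem~\ref{thm:monotone_cardinality_inapproximability} in the limit $c\to 0$ (via L'H\^{o}pital), i.e., it piggybacks on a much stronger, density-tight hardness result proved by a new non-symmetric construction. You instead argue directly via a bicriteria symmetry-gap argument on a block-coverage instance; your quantitative facts are right (the instance is strongly symmetric under the block-preserving group, $\optvalue=1$, the symmetrized cap is $1-(1-\beta/m)^m\to 1-e^{-\beta}=1-\eps$, refinements of a cardinality constraint are again cardinality constraints, and $\beta<1$ causes no trouble for a down-closed $\cP$), and the bicriteria extension of Vondr\'ak's theorem that you assume exists is exactly Theorem~\ref{thm:symmetry_gap} of the paper; indeed, Table~\ref{tbl:results} explicitly notes this hardness ``is also implied by a construction of \cite{vondrak2013symmetry}''. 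What each route buys: the paper's derivation reuses its tight per-density result (which the symmetry-gap technique provably cannot deliver for all densities), while yours is self-contained and classical and suffices for this particular observation.

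There is one step you assert but do not justify, and it is exactly the step the paper handles explicitly. The observation is about \emph{integer-approximation for maximizing the multilinear extension}, i.e., algorithms that output fractional vectors, whereas Theorem~\ref{thm:symmetry_gap} (like Vondr\'ak's Theorem~1.8) bounds algorithms that output \emph{sets} measured by $\tilde f$. Your item (iii), ``any polynomial-time algorithm that outputs a vector $\vx$ in the relaxed region achieves $\tilde F(\vx)\le 1-\eps+o_m(1)$'', does not follow verbatim from the theorem statement: the indistinguishability argument caps the value of queried/returned sets, not automatically the multilinear value of an arbitrary fractional output. You need either (a) to rerun the symmetry-gap argument for fractional outputs, showing that with high probability the output vector's value under $\tilde F$ coincides with its value under the symmetrized surrogate, whose maximum over the relaxed region is the gap bound (standard, but an argument, not a citation); or (b) to insert a rounding bridge as the paper does: Lemma~\ref{lem:rounding_cardinality_knapsack} converts a fractional algorithm into a set-output one at the cost of raising the infeasibility ratio from $\ln\eps^{-1}$ to $\ln\eps^{-1}+1/\cardBudget$, and you must then absorb this slack, e.g., via $1-e^{-\beta-1/\cardBudget}\le 1-\eps+\eps/\cardBudget<1-\eps+\delta$ for the large budgets that the refinements provide. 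Your ``main technical obstacle'' paragraph addresses only the mismatch between the output region and the benchmark region (which is indeed harmless, since the benchmark only enters through $\optvalue$), not this set-versus-fractional mismatch; closing it is routine but necessary.
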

\begin{proof}
Assume towards a contraction that the observation is false. Thus, there exists a polynomial time algorithm $\alg$ that given a non-negative monotone submodular function $f\colon 2^\cN \to \nnR$ and an integer $\cardBudget$ outputs a vector $\vx \in [0, 1]^\cN$ such that $\|\vx\| \leq \cardBudget \cdot \ln \eps^{-1}$ and $\bE[F(\vx)] \geq (1 - \eps + \delta) \cdot f(\opt)$, where $\opt$ is the set of size $\cardBudget$ maximizing $f$. By Lemma~\ref{lem:rounding_cardinality_knapsack}, there is a rounding procedure that given such a vector $\vx$ outputs a set $S \subseteq \cN$ such that $|S| \leq \cardBudget \cdot (\ln \eps^{-1} + 1/\cardBudget)$ and $\bE[f(S)] \geq (1 - \eps + \delta) \cdot f(\opt)$. Combining $\alg$ with this rounding procedure yield a polynomial time algorithm for the problem of maximizing a non-negative monotone submodular function subject to a cardinality constraint that guarantees $(1 - \eps + \delta, \ln \eps^{-1})$-bicriteria approximation regardless of the density of the cardinality constraint.

Let us explain why the existence of such an algorithm contradicts Theorem~\ref{thm:monotone_cardinality_inapproximability} from Section~\ref{ssc:monotone_per_c_analysis} below. This theorem shows that every sub-exponential time algorithm that obtains a constant approximation ratio better than $1 - \eps + \delta/2$ for maximizing a non-negative monotone submodular function subject to a cardinality constraint of constant density $c \in (0, 1/2)$ must have an infeasibility ratio of at least $\frac{1 - (\eps - \delta/2)^c}{c}$. Furthermore, since the problem is polynomially solvable when $\cardBudget$ is constant, this result must apply even when $\cardBudget$ is assumed to be arbitrarily large. However, for small enough values of $c$ and large enough values of $\cardBudget$ the infeasibility ratio $\frac{1 - (\eps - \delta/2)^c}{c}$ is larger than $\ln \eps^{-1} + 1/\cardBudget$ because, by L'H\^{o}pital's rule,
\[
	\lim_{c \to 0} \frac{1 - (\eps - \delta/2)^c}{c}
	=
	\lim_{c \to 0} \frac{-(\eps - \delta/2)^c\ln(\eps - \delta/2)}{1}
	=
	-\ln(\eps - \delta/2)
	>
	-\ln \eps
	=
	\lim_{B \to \infty} (\ln \eps^{-1} + 1/B)
	\enspace.
	\qedhere
\]
\end{proof}

Given the last observation, our objective in this section is to show that there exists an algorithm guaranteeing roughly $(1 - \eps, \ln \eps^{-1})$-bicriteria approximation for the problem we consider. In the special case in which $\cP$ is a down-closed convex set, such an algorithm follows from the following known result about an algorithm named Measured Continuous Greedy.
\begin{theorem}[Part of Theorem~I.2 of \cite{feldman2011unified}] \label{thm:measured_continuous_greedy_monotone}
For any non-negative monotone submodular function $f\colon 2^\cN \to \nnR$, down-closed solvable convex set $\cP \in [0, 1]^N$ and a constant stopping time $T \geq 0$, the Measured Continuous Greedy algorithm is a polynomial time algorithm that finds a point $\vx \in [0, 1]^\cN$ such that $\vx / T \in \cP$ and $\bE[F(\vx)] \geq (1 - e^{-T} - o(1)) \cdot f(\opt)$, where $\opt$ is a set maximizing $f(\opt)$ among all the sets $\{S \subseteq \cN \mid \characteristic_S \in P\}$.\footnote{Technically, as stated in~\citep{feldman2011unified}, the result is restricted to normalized monotone submodular functions and down-closed solvable polytopes rather than general non-negative monotone submodular functions and down-closed solvable convex sets. However, the proof of~\cite{feldman2011unified} does not use these restrictions in any way.}
\end{theorem}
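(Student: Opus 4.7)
The plan is to analyze Measured Continuous Greedy as the continuous-time process governed by the ODE
\[
\dot{\vy}(t) = \vz(t) \hprod (\vone_\cN - \vy(t)), \qquad \vy(0) = \vzero,
\]
run over $t \in [0, T]$, where at each instant $\vz(t) \in \cP$ is chosen to maximize the linear functional $\sum_{u \in \cN} z_u \cdot (1 - y_u(t)) \cdot \tfrac{\partial F}{\partial y_u}(\vy(t))$. Solvability of $\cP$ makes each such choice a single polynomial-time linear optimization. I would then discretize time into steps of size $\delta = 1/\mathrm{poly}(|\cN|)$ and replace the partial derivatives by sample-average estimates obtained from polynomially many independent draws of $\RSet(\vy(t))$, yielding an honestly polynomial-time algorithm.

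For the feasibility claim, observe that $\dot{y}_u(t) \le z_u(t)$ coordinate-wise, so integrating gives $\vy(T) \le \int_0^T \vz(t)\, dt$, and the right-hand side lies in $T \cdot \cP$ by convexity of $\cP$. The down-closedness of $\cP$ then yields $\vy(T)/T \in \cP$, which is exactly the required feasibility bound.

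For the value guarantee, I would first use Observation~\ref{obs:multilinear-partial} together with the submodularity of $f$ to establish that, for every $S \subseteq \cN$,
\[
\sum_{u \in S} (1 - y_u) \cdot \tfrac{\partial F}{\partial y_u}(\vy) \;=\; \sum_{u \in S} \bigl[F(\vy \vee \vone_{\{u\}}) - F(\vy)\bigr] \;\geq\; F(\vy \vee \vone_S) - F(\vy),
\]
the inequality being the standard bound that a sum of singleton marginals to a common base dominates the joint marginal of the set. Since $\vone_\opt \in \cP$, the greedy choice of $\vz(t)$ gives
\[
\dot{F}(\vy(t)) \;=\; \sum_{u \in \cN} \dot{y}_u(t) \cdot \tfrac{\partial F}{\partial y_u}(\vy(t)) \;\geq\; F(\vy(t) \vee \vone_\opt) - F(\vy(t)) \;\geq\; f(\opt) - F(\vy(t)),
\]
where the last step invokes monotonicity of $F$ together with $F(\vone_\opt) = f(\opt)$. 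Writing $g(t) = F(\vy(t))$, the resulting differential inequality $g'(t) + g(t) \ge f(\opt)$ with initial condition $g(0) = F(\vzero) \ge 0$ integrates, via the integrating factor $e^t$, to $g(T) \ge (1 - e^{-T}) \cdot f(\opt)$.

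The main obstacle is the polynomial-time implementation rather than the continuous analysis above. I must show that discretizing the ODE into $O(1/\delta)$ steps, combined with sample-average estimates of each $\tfrac{\partial F}{\partial y_u}$, contributes only an additive $o(1)$ loss in both the approximation ratio and the convex-set membership. This relies on Hoeffding-type concentration to control the per-coordinate sampling error with failure probability $1/\mathrm{poly}(|\cN|)$, on Lipschitz continuity of $F$ and of its gradient (with constants bounded by $\max_S f(S)$) so that per-step linearization errors are small, and on the hypothesis that $T$ is constant so that errors accumulated across the $O(1/\delta)$ steps remain $o(1)$ rather than blowing up.
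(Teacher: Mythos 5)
The paper does not prove this statement at all: it is imported verbatim as Part of Theorem~I.2 of \citet{feldman2011unified} (with a footnote noting the proof there carries over to non-negative functions and convex sets). Your continuous-time argument is exactly the standard analysis underlying that cited result, and it is correct: the coordinate-wise bound $\dot{y}_u \le z_u$ plus convexity and down-closedness gives $\vy(T)/T \in \cP$; the identity $(1-y_u)\,\partial F/\partial y_u = F(\vy \vee \vone_{\{u\}}) - F(\vy)$ together with submodularity of $S \mapsto F(\vy \vee \vone_S)$ and monotonicity yields $\dot g(t) \ge f(\opt) - g(t)$, which integrates to $(1-e^{-T})\cdot f(\opt)$. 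The only part you leave as a sketch is the discretization/sampling step responsible for the $o(1)$ loss and the ``polynomial time'' claim; this is precisely where the technical work of \citet{feldman2011unified} (and the estimation machinery of \citet{calinescu2011maximizing}) lies, and one detail worth flagging there is that bounding the per-step error relative to $f(\opt)$ uses that $f(\{u\}) \le f(\opt)$ for elements appearing in some integral feasible solution (a point the paper itself highlights when discussing non-down-closed $\cP$), not merely generic Lipschitz bounds in terms of $\max_S f(S)$. As a blind reconstruction of the cited proof, your proposal is sound.
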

\begin{corollary} \label{cor:monotone_downclosed}
For every constant $\eps \in (0, 1]$, there exists a polynomial time $(1 - \eps - o(1), \ln \eps^{-1})$-bicriteria integer-approx\-imation algorithm for the problem of maximizing the multilinear extension of a non-negative monotone submodular function $f\colon 2^\cN \to \nnR$ subject to a down-closed solvable polytope $\cP \subseteq [0, 1]^\cN$.%\footnote{The error term $o(1)$ can be removed by guessing the most valuable element of $\opt$. We refer the reader to \cite{calinescu2011maximizing} for an example of how this can be done in the context of the highly related Continuous Greedy algorithm.}
\end{corollary}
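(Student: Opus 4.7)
The plan is to invoke Theorem~\ref{thm:measured_continuous_greedy_monotone} with the stopping time $T$ tuned to produce exactly the desired approximation ratio. The approximation factor promised by that theorem is $1 - e^{-T} - o(1)$; solving $1 - e^{-T} = 1 - \eps$ forces the natural choice $T = \ln \eps^{-1}$, which is non-negative for every $\eps \in (0, 1]$, so the theorem is applicable. Note that this is also the smallest $T$ that yields approximation $1-\eps$, which is important because the infeasibility ratio we obtain will turn out to be exactly $T$, so we do not want $T$ to be any larger than necessary.

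With this choice, Measured Continuous Greedy runs in polynomial time and produces a vector $\vx \in [0,1]^\cN$ with $\bE[F(\vx)] \geq (1 - \eps - o(1)) \cdot f(\opt)$, where $\opt$ maximizes $f$ over $\{S \subseteq \cN \mid \characteristic_S \in \cP\}$. This handles the approximation side of the bicriteria guarantee. For the infeasibility side, we must exhibit $\vy_1, \vy_2 \in \cP$ with $\vy_1 \leq \vx \leq \beta \vy_2$ for $\beta = \ln \eps^{-1}$. Theorem~\ref{thm:measured_continuous_greedy_monotone} directly supplies $\vx / T \in \cP$, so taking $\vy_2 = \vx/T$ yields $\vx = T\vy_2 = \beta \vy_2$. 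For $\vy_1$ we use the down-closedness of $\cP$: since $\cP$ is a non-empty down-closed subset of $[0,1]^\cN$, the zero vector lies in $\cP$, so $\vy_1 = \vzero$ works. Combining these, the algorithm qualifies as a $(1 - \eps - o(1), \ln \eps^{-1})$-bicriteria integer-approximation algorithm, as claimed.

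There is essentially no obstacle here: all the substantive technical work is hidden inside Theorem~\ref{thm:measured_continuous_greedy_monotone}, and the corollary is just a renaming of parameters together with a certificate of bicriteria feasibility. The only pedantic point worth checking is that $\vx$ does lie in the set $\{\vx \in [0,1]^\cN \mid \exists\, \vy_1, \vy_2 \in \cP,\ \vy_1 \leq \vx \leq \beta \vy_2\}$ over which bicriteria algorithms for continuous problems are allowed to output; this is immediate from the fact that $\vx \in [0,1]^\cN$ by construction and that the required $\vy_1, \vy_2 \in \cP$ were exhibited above.
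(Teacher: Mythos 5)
Your proposal is correct and matches the paper's proof exactly: the paper also just plugs $T = \ln \eps^{-1}$ into Theorem~\ref{thm:measured_continuous_greedy_monotone}, and your additional remarks (taking $\vy_2 = \vx/T$ and $\vy_1 = \vzero$, using down-closedness) merely spell out the feasibility certificate that the paper leaves implicit.
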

\begin{proof}
The corollary follows by plugging $T = \ln \eps^{-1}$ into Theorem~\ref{thm:measured_continuous_greedy_monotone}.
\end{proof}

Next, we would like to remove the requirement that $\cP$ is down-closed. The part of the proof of Theorem~\ref{thm:measured_continuous_greedy_monotone} that lower bounds $\bE[F(\vx)]$ uses the down-closedness of $\cP$ only to show that $f(\opt)$ upper bounds the value of $f(\{u\})$ for any element $u \in \cN$ that appears in some integral solution. When $f$ is monotone, this property holds even when $\cP$ is not down-closed, and thus, the lower bound on $\bE[F(\vx)]$ given by Theorem~\ref{thm:measured_continuous_greedy_monotone} applies also to such $\cP$'s. Preserving the property that $\vx/T \in \cP$ when $\cP$ is not down-closed is more involved. More specifically, the output of Measured Continuous Greedy is a vector
\[
	\vx
	=
	\PSum_{i = 1}^{T/\delta} (\delta \cdot \vx^{(i)})
\]
for some value $\delta \in (0, 1)$ such that $T/\delta$ is integral, and vectors $\vx^{(1)}, \vx^{(2)}, \dotsc, \vx^{(T/\delta)} \in \cP$. When $\cP$ is down-closed, such a vector is guaranteed to be included in $T \cdot \cP$. However, this is not the case in general. Thus, we suggest modifying Measured Continuous Greedy so that it outputs the vector $\vx \vee \vx'$, where
\[
	\vx'
	=
	\delta \cdot \bigg(\sum_{i = 1}^{\lceil \delta^{-1} - 1\rceil} \vx^{(i)} + (\delta^{-1} - \lceil \delta^{-1} - 1\rceil) \cdot \vx^{(\lceil \delta^{-1} \rceil)}\bigg)
	\enspace.
\]
Notice that such a modification can only be done when $T \geq 1$ because otherwise some of the vectors used in the last expressions might not be defined.

\begin{lemma} \label{lem:modified_measured_continuous_greedy_monotone}
For any normalized monotone submodular function $f\colon 2^\cN \to \nnR$, solvable convex set $\cP \in [0, 1]^N$ and constant stopping time $T \geq 1$, our modified version of
Measured Continuous Greedy algorithm is a polynomial time algorithm that finds a point $\vx \vee \vx' \in [0, 1]^\cN$ such that $\bE[F(\vx \vee \vx')] \geq (1 - e^{-T} - o(1)) \cdot f(\opt)$, where $\opt$ is a set maximizing $f(\opt)$ among all the sets $\{S \subseteq \cN \mid \characteristic_S \in P\}$, and there are vectors $\vy^{(1)} \in \cP$ and $\vy^{(2)} \in T \cdot \cP$ such that $\vy^{(1)} \leq \vx \vee \vx' \leq \vy^{(2)}$. 
\end{lemma}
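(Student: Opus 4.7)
The plan is to reduce everything to the original analysis of Measured Continuous Greedy (Theorem~\ref{thm:measured_continuous_greedy_monotone}) and the observation quoted right above the lemma (that the monotonicity of $f$ is enough to carry through the lower bound on $\bE[F(\vx)]$ when $\cP$ is not assumed down-closed). What remains is to verify the three assertions: membership of $\vx \vee \vx'$ in $[0,1]^\cN$, the value bound, and the sandwich by feasible and $T$-feasible vectors.

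First I would check that $\vx' \in \cP$. The total coefficient of the $\vx^{(i)}$'s in the definition of $\vx'$ is $\delta \cdot \lceil \delta^{-1} - 1\rceil + \delta(\delta^{-1} - \lceil \delta^{-1} - 1\rceil) = \delta \cdot \delta^{-1} = 1$, and each coefficient is nonnegative (this is where $T \geq 1$ is used implicitly, to guarantee that the index $\lceil \delta^{-1}\rceil$ does not exceed $T/\delta$, which exists once $\delta$ is small enough). Therefore $\vx'$ is a genuine convex combination of points of $\cP$, hence $\vx' \in \cP \subseteq [0,1]^\cN$. Since $\vx \in [0,1]^\cN$ as well (it is an iterated probabilistic sum of points in $[0,1]^\cN$), we get $\vx \vee \vx' \in [0,1]^\cN$. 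Setting $\vy^{(1)} \triangleq \vx' \in \cP$ immediately gives $\vy^{(1)} \leq \vx \vee \vx'$, handling the left inequality of the sandwich.

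Next I would produce the dominating vector. Define
\[
\vz \triangleq \sum_{i=1}^{T/\delta} \delta \cdot \vx^{(i)} = T \cdot \bigg(\frac{\delta}{T}\sum_{i=1}^{T/\delta} \vx^{(i)}\bigg).
\]
The parenthesized expression is a convex combination of $T/\delta$ vectors of $\cP$, so it lies in $\cP$, showing $\vz \in T \cdot \cP$. Coordinate-wise, $\vz \geq \vx$ because for any $a_i \in [0,1]$ the probabilistic sum $\PSum_i a_i$ is at most the ordinary sum $\sum_i a_i$; and $\vz \geq \vx'$ because the coefficient of each $\vx^{(i)}$ in $\vz$ is at least the corresponding (nonnegative) coefficient in $\vx'$, and $\vz$ contains additional nonnegative terms. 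Hence $\vz \geq \vx \vee \vx'$, and taking $\vy^{(2)} \triangleq \vz$ completes the sandwich.

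Finally I would obtain the value bound. By the remark preceding the lemma, the proof of Theorem~\ref{thm:measured_continuous_greedy_monotone} that lower-bounds $\bE[F(\vx)]$ by $(1 - e^{-T} - o(1)) \cdot f(\opt)$ goes through verbatim once $f$ is monotone, with no use of down-closedness. Since $F$ inherits monotonicity from $f$ and $\vx \vee \vx' \geq \vx$ coordinate-wise, we get
\[
\bE[F(\vx \vee \vx')] \geq \bE[F(\vx)] \geq (1 - e^{-T} - o(1)) \cdot f(\opt).
\]
The polynomial running time is inherited from Measured Continuous Greedy, since forming $\vx'$ from the iterates $\vx^{(1)}, \ldots, \vx^{(\lceil\delta^{-1}\rceil)}$ is an $O(|\cN| \cdot \delta^{-1})$ post-processing step. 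The main (and only mildly delicate) obstacle is the coordinate-wise verification that $\vz$ dominates $\vx'$; everything else is bookkeeping around convex combinations and monotonicity.
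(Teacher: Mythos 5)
Your proposal is correct and follows essentially the same route as the paper's proof: take $\vy^{(1)} = \vx' \in \cP$ (a convex combination of the $\vx^{(i)}$'s), dominate both $\vx$ and $\vx'$ by $\sum_{i=1}^{T/\delta} \delta \cdot \vx^{(i)} \in T \cdot \cP$, and use monotonicity of $F$ together with the fact that the value bound of Measured Continuous Greedy does not require down-closedness. Your explicit verification that the coefficients of $\vx'$ sum to $1$ and that $T \geq 1$ ensures the needed iterates exist is just a more detailed spelling-out of what the paper states in passing.
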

\begin{proof}
Note that $\vx'$ is a convex combination of vectors in $\cP$, and therefore, also belongs to $\cP \subseteq [0, 1]^\cN$. Therefore, $\vx \vee \vx' \geq \vx' \in \cP$. Furthermore, both $\vx$ and $\vx'$ can be individually upper bounded by $\sum_{i = 1}^{T/\delta} \delta \cdot \vx^{(i)}$, and therefore,
\[
	\vx \vee \vx'
	\leq
	\sum_{i = 1}^{T/\delta} \delta \cdot \vx^{(i)}
	\in
	T \cdot \cP
	\enspace.
\]
It remains to note that since $f$ is monotone, $\bE[F(\vx \vee \vx')] \geq \bE[F(\vx)] \geq (1 - e^{-T} - o(1)) \cdot \max_{\vy \in P} F(\vy)$, where the second inequality uses the fact that the guarantee $\bE[F(\vx)] \geq (1 - e^{-T} - o(1)) \cdot \max_{\vy \in P} F(\vy)$ of Measured Continuous Greedy does not depend on the down-closedness of $\cP$, as explained above.
\end{proof}

We now get a version of Corollary~\ref{cor:monotone_downclosed} that works also when $\cP$ is a general convex set (but is restricted to $\eps \leq 1/e$).

\begin{corollary} \label{cor:monotone_general}
For every constant $\eps \in (0, 1/e]$, there exists a polynomial-time $(1 - \eps - o(1), \ln \eps^{-1})$-bicriteria integer-approximation algorithm for the problem of maximizing the multilinear extension of a non-negative monotone submodular function $f\colon 2^\cN \to \nnR$ subject to a solvable convex set $\cP \subseteq [0, 1]^\cN$.
\end{corollary}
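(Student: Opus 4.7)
The plan is to derive the corollary as a direct consequence of Lemma~\ref{lem:modified_measured_continuous_greedy_monotone}, applied with stopping time $T = \ln \eps^{-1}$. The hypothesis $\eps \in (0, 1/e]$ precisely guarantees $T = \ln \eps^{-1} \geq 1$, which is exactly the regime in which the modified Measured Continuous Greedy is well-defined (since the auxiliary vector $\vx'$ requires $T \geq 1$ in order for the indices used in its definition to be meaningful). With this choice of $T$, the approximation ratio provided by the lemma becomes $1 - e^{-T} - o(1) = 1 - \eps - o(1)$, and the infeasibility witness $\vy^{(2)} \in T \cdot \cP$ gives infeasibility ratio $T = \ln \eps^{-1}$.

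More concretely, I would run the modified algorithm of Lemma~\ref{lem:modified_measured_continuous_greedy_monotone} on the given convex set $\cP$ with $T := \ln \eps^{-1}$, and return the output $\vx \vee \vx'$. The lemma produces vectors $\vy^{(1)} \in \cP$ and $\vy^{(2)} \in T \cdot \cP$ satisfying $\vy^{(1)} \leq \vx \vee \vx' \leq \vy^{(2)}$; setting $\vy_1 := \vy^{(1)}$ and $\vy_2 := \vy^{(2)}/T$ (which lies in $\cP$) then gives the pair required by the fractional bicriteria definition from Section~\ref{ssc:our_results}, with infeasibility factor $\beta = \ln \eps^{-1}$. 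Simultaneously, the value bound $\bE[F(\vx \vee \vx')] \geq (1 - e^{-T} - o(1)) \cdot f(\opt)$ yields the approximation ratio $1 - \eps - o(1)$ with respect to the best integer solution, which is the integer-approximation statement claimed.

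One small technicality to handle is that Lemma~\ref{lem:modified_measured_continuous_greedy_monotone} is stated for \emph{normalized} monotone submodular functions, whereas the corollary allows any non-negative monotone submodular $f$. I would resolve this in the standard way by applying the lemma to the normalized function $f'(S) := f(S) - f(\varnothing)$, whose multilinear extension is $F'(\vx) = F(\vx) - f(\varnothing)$, and then adding $f(\varnothing)$ back. Since $f(\varnothing) \geq 0$ by non-negativity, one obtains
\[
    \bE[F(\vx \vee \vx')] \;=\; \bE[F'(\vx \vee \vx')] + f(\varnothing) \;\geq\; (1 - \eps - o(1)) \cdot (f(\opt) - f(\varnothing)) + f(\varnothing) \;\geq\; (1 - \eps - o(1)) \cdot f(\opt),
\]
so the desired value guarantee is preserved.

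There is no real obstacle here, as the heavy lifting has been done in Lemma~\ref{lem:modified_measured_continuous_greedy_monotone}; the proof is essentially a parameter substitution together with the verification that the output of the modified algorithm satisfies the fractional bicriteria definition. The only point meriting explicit mention is \emph{why} the restriction $\eps \leq 1/e$ appears, and the answer—that it is exactly the threshold ensuring $T \geq 1$ so that $\vx'$ is defined—should be stated to clarify why this corollary is strictly weaker than Corollary~\ref{cor:monotone_downclosed} in terms of allowed $\eps$.
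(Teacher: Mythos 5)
Your proposal is correct and follows exactly the paper's route: the paper proves this corollary by plugging $T = \ln \eps^{-1}$ into Lemma~\ref{lem:modified_measured_continuous_greedy_monotone}, with the restriction $\eps \leq 1/e$ ensuring $T \geq 1$ as you note. Your extra remark about normalization is a harmless (and valid) addition; the paper treats it as a non-issue since the underlying analysis of Measured Continuous Greedy does not actually use normalization.
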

\begin{proof}
The corollary follows by plugging $T = \ln \eps^{-1}$ into Lemma~\ref{lem:modified_measured_continuous_greedy_monotone}.
\end{proof}

\subsection{Improved Results for Knapsack and Cardinality Constraints} \label{ssc:monotone_per_c_analysis}

In this section, we show that one can improve over the guarantee of Section~\ref{ssc:monotone_general} when the convex set $\cP$ defining the constraint represents either a cardinality constraint or a knapsack constraint. Specifically, we prove Theorem~\ref{thm:knapsack_fractional}. Recall that a cardinality constraint is a special case of a knapsack constraint, and that the density of a knapsack constraint is $c \triangleq B / \|\vp\|_1$, where $B$ is the budget of the constraint, and $\vp$ is the vector of element prices. We also define
\[
	\rho(c, \eps)
	\triangleq
	\begin{cases}
		\frac{1 - \eps^c}{c} & \text{if $c \leq 1/2$} \enspace,\\
		\frac{1 - 2(1 - c)\sqrt{\eps} - \eps(2c - 1)}{c} & \text{if $c \geq 1/2$} \enspace.
	\end{cases}
\]

\begin{theorem} \label{thm:knapsack_fractional}
For every constant $\eps \in (0, 1]$, there exists a polynomial time $(1 - \eps - o(1), \rho(c, \eps))$-bicriteria integer-approximation algorithm for the problem of maximizing the multilinear extension of a non-negative monotone submodular function $f\colon 2^\cN \to \nnR$ subject to the polytope $\{\vx \in [0, 1]^\cN \mid \inner{\vp}{\vx} \leq B\}$ of a knapsack constraint with density $c$.
\end{theorem}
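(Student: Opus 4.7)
The proof splits along the two branches of $\rho(c,\eps)$, using a different continuous-greedy algorithm in each case and outputting a fractional solution directly. In both cases the knapsack polytope $\cP$ is down-closed, so certifying bicriteria feasibility will reduce to bounding $\inner{\vp}{\vx}$ by $\rho(c,\eps)\cdot B$ (then taking $\vy_1=\vzero$ and $\vy_2=\vx/\rho(c,\eps)$ in the bicriteria definition).

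For $c\le 1/2$, I would run the Measured Continuous Greedy of \cite{feldman2011unified} with stopping time $T=\ln\eps^{-1}$. Theorem~\ref{thm:measured_continuous_greedy_monotone} gives the approximation guarantee $\bE[F(\vx)]\ge(1-\eps-o(1))f(\opt)$ at once. The new ingredient is sharpening the off-the-shelf infeasibility bound of $T$ to $(1-\eps^c)/c$, using the explicit coordinate form of the output, $x_u=1-\exp(-z_u)$, where $\vz=\int_0^T\vy(s)\,ds$ satisfies $\vz/T\in\cP$ and therefore $\inner{\vp}{\vz}\le TB$. Since $z\mapsto 1-e^{-z}$ is concave, linearizing at the reference point $\bar z=\inner{\vp}{\vz}/\|\vp\|_1\le Tc$ yields
\[
\inner{\vp}{\vx}=\sum_{u\in\cN}p_u\bigl(1-e^{-z_u}\bigr)\le\|\vp\|_1\bigl(1-e^{-\bar z}\bigr)\le\|\vp\|_1\bigl(1-e^{-Tc}\bigr)=\frac{1-\eps^c}{c}\cdot B,
\]
which is exactly $\rho(c,\eps)\cdot B$.

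For $c\ge 1/2$ the above Measured Continuous Greedy bound ceases to be tight, and I would instead adapt the Continuous Double Greedy algorithm of \cite{buchbinder2014submodular} (designed there for single-criterion monotone submodular maximization under a cardinality constraint of density $c\ge 1/2$) to the bicriteria knapsack setting. The double greedy evolves a single continuous trajectory in which each coordinate $x_u$ is pushed simultaneously toward $0$ and toward $1$ at rates determined by its marginal, modulated by a density-dependent factor that controls the total knapsack mass accumulated. I expect the analysis to go through a potential of the form $F(\vx(t))-\phi(t)f(\opt)$ whose derivative is controlled using submodularity and monotonicity, and whose integration yields the $(1-\eps-o(1))$ approximation. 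The somewhat unusual $\sqrt\eps$ terms in $\rho(c,\eps)$ should emerge from balancing the rate of approximation loss against the rate of budget consumption, via an AM--GM style optimization in two parameters.

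The main obstacle is Case $c\ge 1/2$: Case~1 is a short Lagrangian calculation layered on Theorem~\ref{thm:measured_continuous_greedy_monotone}, whereas Case~2 requires both designing the right bicriteria variant of Continuous Double Greedy and carrying out a potential-function analysis with essentially no slack, since the matching inapproximability result (Theorem~\ref{thm:monotone_cardinality_inapproximability}) shows that the constants in $\rho(c,\eps)$ are tight. A convenient sanity check along the way is that the two branches of $\rho(c,\eps)$ agree at $c=1/2$ (where both give $2(1-\sqrt\eps)$) and that $\rho(1,\eps)=1-\eps$ degenerates correctly to the essentially-unconstrained regime.
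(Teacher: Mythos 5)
Your treatment of the case $c \leq 1/2$ is essentially the paper's approach and is sound in spirit: the paper also runs Measured Continuous Greedy with stopping time (essentially) $\ln \eps^{-1}$, takes the approximation guarantee from Theorem~\ref{thm:measured_continuous_greedy_monotone}, and sharpens the infeasibility bound to $\frac{B}{c}(1-\eps^c)$ via exactly the convexity/Jensen argument you describe. Two small corrections: your $\vz$ should be $\int_0^T \vx(s)\,\diff{s}$ (the integral of the direction vectors, which lie in $\cP$), not $\int_0^T \vy(s)\,\diff{s}$; and the closed form $x_u = 1 - e^{-z_u}$ holds only for the idealized continuous process. The implementable algorithm is discretized, so the bound picks up an error term, and the paper has to improve the analysis of Lemma~III.19 of \cite{feldman2011unified} (replacing an $O(\delta)T$ term by $O(\delta)cT$) precisely so that this error does not swamp the bound when $c$ is small. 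Your sanity checks (agreement of the two branches at $c=1/2$ and $\rho(1,\eps)=1-\eps$) are correct.

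The case $c \geq 1/2$ is a genuine gap, and it is the bulk of the theorem. You name the right starting point (the Continuous Double Greedy of \cite{buchbinder2014submodular}) but supply neither the bicriteria variant nor its analysis, and the sketch you do give points in the wrong direction. The paper's algorithm requires two nontrivial design choices: the dual threshold $\ell$ is scaled per element by the price $p_u$ (to handle general knapsacks rather than cardinality), and $\ell$ is kept at $0$ unless the resulting drift would make $\inner{\vp}{\vx(t)}$ grow faster than $B\cdot\rho(c,\eps)$, with an extra parameter $r$ on a degenerate set $Z$ to pin the rate exactly at $B\cdot\rho(c,\eps)$; this capping is what converts the allowed infeasibility into approximation. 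The analysis then does not use a potential of the form $F(\vx(t)) - \phi(t)f(\opt)$; it tracks the hybrid point $\opt(\vx,\vy) = (\characteristic_{\opt} \vee \vx)\wedge\vy$ and proves the differential inequality
\[
\frac{dF(\opt(\vx(t),\vy(t)))}{dt} \;\geq\; -\frac{\sqrt{\eps}}{2(1-\sqrt{\eps})}\cdot\frac{dF(\vx(t))}{dt} \;-\; \frac{1-\sqrt{\eps}}{2\sqrt{\eps}}\cdot\frac{dF(\vy(t))}{dt}\enspace,
\]
where the coefficients arise from a free parameter $s$ chosen so that the $\ell$-dependent terms on both sides cancel (using $\inner{\vp}{\vd} = B\cdot\rho(c,\eps)$ whenever $\ell>0$); solving that cancellation equation with the explicit formula for $\rho(c,\eps)$ is exactly where $s^2 = \sqrt{\eps}/(1-\sqrt{\eps})$ and hence the $\sqrt{\eps}$ terms come from. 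Integrating from $0$ to $1$ and using $F(\vy(0))=f(\cN)\geq f(\opt)$ (monotonicity) and $F(\vx(0))=f(\varnothing)\geq 0$ then gives $F(\vx(1)) \geq (1-\eps)\cdot f(\opt)$. None of this three-quantity bookkeeping, nor the well-definedness of $\ell$ and $r$, is implied by your AM--GM heuristic, so as written the proposal proves only half the theorem.
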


By Lemma~\ref{lem:rounding_cardinality_knapsack}, Theorem~\ref{thm:knapsack_fractional} immediately yields $(1 - \eps - o(1), \rho(c, \eps) + 1/\cardBudget)$-bicriteria approximation for maximizing a non-negative monotone submodular function subject to a cardinality constraint.\footnote{To be more accurate, the guarantee obtained is $(1 - \eps - o(1), \cardBudget^{-1}\lceil \cardBudget\rho(c, \eps)\rceil)$-bicriteria approximation, which is sometimes slightly stronger than $(1 - \eps - o(1), \rho(c, \eps) + 1/\cardBudget)$-bicriteria approximation.} Here one should think of $1/\cardBudget$ as an arbitrarily small constant because the problem is polynomially solvable for any constant $\cardBudget$. The following theorem proves that Theorem~\ref{thm:knapsack_fractional} is optimal up to arbitrarily small additive constants even in the special case of a cardinality constraint.%, and therefore, Theorem~\ref{thm:knapsack_fractional} is essentially the best possible.

\begin{theorem} \label{thm:monotone_cardinality_inapproximability}
For every rational constant $c \in (0, 1)$ and two constants $\eps \in (0, 1)$ and $\delta \in (0, \eps]$, no algorithm using a sub-exponential number of value oracle queries can guarantee $(1 - \eps + \delta, \rho(c, \eps))$-bicriteria approximation for the problem of maximizing a non-negative monotone submodular function subject to a cardinality constraint.
\end{theorem}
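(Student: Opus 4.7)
The plan is to split the argument into two regimes according to the density $c$: for $c \geq 1/2$ I will adapt the single-criterion inapproximability of \citet{filmus2025separating} to the bicriteria setting using the symmetry-gap framework of \citet{vondrak2013symmetry}, while for $c \leq 1/2$ I will construct a novel family of hard instances whose defining feature is that each instance admits a unique optimal solution.

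For $c \geq 1/2$, I would take a hard instance in the spirit of Filmus-Streeter, obtained by applying a random permutation of the ground set to a symmetric ``core'' function. By the symmetry-gap framework, any algorithm using a sub-exponential number of value queries is forced to compete against the symmetrization $\bar f$. Maximizing $\bar f$'s multilinear extension over $\{\vx \in [0, 1]^{\cN}: \|\vx\|_1 \leq \rho(c,\eps) \cdot \cardBudget\}$ reduces by the symmetric structure to a low-dimensional concave optimization. Carrying out this optimization in closed form and substituting the Filmus-Streeter core parameters yields the value $1 - \eps$ precisely at infeasibility ratio $(1 - 2(1-c)\sqrt{\eps} - \eps(2c-1))/c$, which coincides with $\rho(c, \eps)$ for $c \geq 1/2$.

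For $c \leq 1/2$, the symmetry-gap approach fails to give the sharp bound at densities for which $1/c$ is non-integral, as observed by \citet{filmus2025separating}, because the natural symmetric partition must consist of integrally many blocks. I would therefore design a new family of instances parametrized by a hidden planted set $\opt \subseteq \cN$ of size $\cardBudget = c n$. Each instance $f_{\opt}$ is defined as $f_{\opt}(S) = \phi(|S \cap \opt|, |S \setminus \opt|)$ for a two-variable function $\phi$ chosen to be concave and nondecreasing in each coordinate (so that $f_{\opt}$ is monotone submodular as a set function), uniquely maximized at $\opt$ with value $1$, and satisfying $\bE_{\opt}[f_{\opt}(S)] \leq 1 - \eps + \delta/2$ for every $S$ with $|S| \leq \rho(c, \eps) \cdot \cardBudget$, where $\opt$ is uniform among $\cardBudget$-subsets. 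Using concavity of $\phi$ to argue that the worst-case $S$ has size exactly $\rho(c, \eps) \cdot \cardBudget$, and hypergeometric concentration to replace $|S \cap \opt|$ by its mean $c |S|$, I would pin down $\phi$ essentially in the form $\phi(x, y) \approx 1 - (1 - x/\cardBudget)^{1/c}$ (with rational approximations and lower-order corrections to maintain set-submodularity when $1/c$ is not an integer), which forces $\rho(c, \eps) = (1 - \eps^c)/c$.

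The indistinguishability step in both regimes is by now fairly standard: hypergeometric concentration together with a union bound over a sub-exponential sequence of queries shows that any deterministic query algorithm sees responses consistent with $\bar f$ on all but a sub-exponentially small fraction of choices of $\opt$ (or permutations, in the symmetric case), and a Yao-style argument then transfers average-case hardness to worst-case hardness against randomized algorithms. The principal obstacle is the $c < 1/2$ case: exhibiting a two-variable $\phi$ that is simultaneously a valid monotone submodular \emph{set} function for all rational $c \in (0, 1/2)$, whose symmetrization attains exactly the bound $1 - \eps$ at infeasibility $\rho(c, \eps)$, requires verifying submodularity directly for fractional exponents and is what forces us to move beyond the symmetry-gap template.
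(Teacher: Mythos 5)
Your high-level plan for $c \leq 1/2$ matches the paper's (a planted set $\opt$, a profile function of $(|S \cap \opt|, |S \setminus \opt|)$ built around $1 - (1-\cdot)^{1/c}$, hypergeometric concentration, a union bound over sub-exponentially many queries, and Yao's principle), but the sketch has a gap exactly at the technical heart of that argument. Your candidate $\phi(x,y) \approx 1 - (1 - x/\cardBudget)^{1/c}$ makes the oracle answer on a \emph{typical} query only approximately, not exactly, independent of $\opt$: fluctuations of $|S \cap \opt|$ around $c|S|$ change the value, so the clean argument ``a deterministic algorithm follows the same execution path on $f_{\opt}$ and on the symmetric surrogate'' does not go through, and an approximate-indistinguishability version is not supplied. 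The paper resolves this by defining the hidden function so that it is \emph{exactly equal} to an $\opt$-independent function $G(x,y) = 1 - (1-x-y)^{1/c}$ (a function of $|S|$ alone) whenever $x \leq \tfrac{c}{1-c}y + \delta'$, i.e., unless the query's overlap with $\opt$ exceeds its typical value by a constant margin $\delta'$, while still having value $\geq 1 - 2\delta'/c$ at $\opt$; verifying that this buffered, piecewise-defined $F(x,y)$ still induces a monotone submodular set function is the nontrivial content (Lemmata~\ref{lem:F_non-negative}--\ref{lem:F_derivative_y}). Your stated ``principal obstacle'' (submodularity for fractional exponents $1/c$) is not actually the issue---$1-(1-x-y)^{1/c}$ has non-negative, non-increasing partials for any $c \leq 1/2$---and note also that coordinate-wise concavity of $\phi$ alone does not imply submodularity of the induced set function; you need the partial derivatives to be non-increasing in \emph{both} arguments.

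For $c > 1/2$ you propose a genuinely different route (a direct symmetry-gap computation on Filmus--Streeter-type instances), whereas the paper instead pads the $c = 1/2$ planted instance with $(2h-\ell)i$ extra elements each contributing a small linear value $r/((\ell-h)i)$, and tunes $r = \sqrt{\eps}$ so that the algorithm's best achievable value $1 + r[r - 2 + \tfrac{c}{1-c}\rho(c,\eps)]$ and the (now larger) optimum $1 - 4\delta' + \tfrac{r(2c-1)}{1-c}$ stand in ratio $1 - \eps + O(\delta')$. Your version asserts the closed-form optimization ``yields $1-\eps$ precisely at $\rho(c,\eps)$'' without exhibiting it; the missing pieces are precisely the free parameter controlling the padding elements' value (the analogue of $r$), its optimal setting, and the fact that the comparison is a ratio against an optimum whose value itself grows with that parameter, not a bound of $1-\eps$ on the absolute value returned. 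A symmetry-gap formulation may well be workable here (the paper itself notes such $c>1/2$ constructions are symmetric cores enriched with free elements), but as written this case is an unsupported claim rather than a proof.
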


Previously, \cite{filmus2025separating} studied the optimal (single-criteria) approximation ratio that can be obtained for maximizing a non-negative monotone submodular function subject to a cardnality constraint as a function of the density $c$ of the constraint. They managed to find this ratio for values of $c$ that are equal to $1$ over a positive integer, but had a small gap between their algorithmic and inapproximability results for other values of $c$, and left the determiniation of the the optimal ratio for such values of $c$ as an open problem. Theorems~\ref{thm:knapsack_fractional} and~\ref{thm:monotone_cardinality_inapproximability} solve this problem for all values of $c$ (only rational values of $c$ makes sense for a cardinality constraint) even in the more general bicriteria setting.

For the more general problem of maximizing a non-negative monotone submodular function subject to a knapsack constraint, Theorem~\ref{thm:knapsack_fractional} still gives roughly optimal fractional solution, but we can only get from it $(1 - \eps - o(1), \rho(c, \eps) + 1)$-bicriteria approximation via Lemma~\ref{lem:rounding_cardinality_knapsack}. Notice that this bicriteria approximation guarantee is incomparable with the $(1 - \eps - \delta, \ln \eps^{-1})$-bicriteria approximation for every $\eps \in (0, 1/e]$ that can be obtained for this problem by combining Corollary~\ref{cor:monotone_downclosed} with Lemma~\ref{lem:rounding_knapsack}.

The proof of Theorem~\ref{thm:knapsack_fractional} can be found in Section~\ref{sssc:knapsack_fractional}, and the proof of Theorem~\ref{thm:monotone_cardinality_inapproximability} appears in Section~\ref{sssc:cardinality_inapproximability_small_c} (for $c \leq 1/2$) and Section~\ref{sssc:cardinality_inapproximability_large_c} (for $c > 1/2$).

\subsubsection{Proof of Theorem~\texorpdfstring{\ref{thm:knapsack_fractional}}{\ref*{thm:knapsack_fractional}}} \label{sssc:knapsack_fractional}

In this section, we prove Theorem~\ref{thm:knapsack_fractional}. The proof is based on two algorithms: the Measured Continuous Greedy algorithm of~\cite{feldman2011unified}, and a variant of the Continuous Double Greedy of~\cite{buchbinder2014submodular}. We show that these algorithms imply the guarantee of the Theorem~\ref{thm:knapsack_fractional} for $c \leq 1/2$ and $c \geq 1/2$, respectively. Notice that Theorem~\ref{thm:knapsack_fractional} is trivial when $|\cN|$ is a constant (an algorithm that always outputs $\vzero$ obeys its requirements in this case). Thus, we only prove it below for a large enough ground set $\cN$.

We begin with Measured Continuous Greedy. This algorithm has a parameter named $T$ that determines the number of iterations performed by the algorithm.
\begin{proposition} \label{prop:measured_continuous_greedy_density}
When $\cN$ is large enough and $T$ is set to $\ln \eps^{-1} - 1/|\cN|$, Measured Continuous Greedy becomes the algorithm whose existence is guaranteed by Theorem~\ref{thm:knapsack_fractional} for $c \leq 1/2$.
\end{proposition}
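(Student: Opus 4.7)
The plan is to re-analyze the output of Measured Continuous Greedy more finely than Theorem~\ref{thm:measured_continuous_greedy_monotone} does, replacing the coarse conclusion $\vx/T \in \cP$ with a much tighter bound on $\inner{\vp}{\vx}$. The approximation guarantee $\bE[F(\vx)] \geq (1 - \eps - o(1)) f(\opt)$ will follow directly from Theorem~\ref{thm:measured_continuous_greedy_monotone} for $T = \ln \eps^{-1} - 1/|\cN|$, since $e^{-T} = \eps \cdot e^{1/|\cN|} = \eps + o(1)$ as $|\cN| \to \infty$. Hence the only substantive work is the infeasibility bound.

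Recall that Measured Continuous Greedy with discretization step $\delta$ outputs a vector of the form $\vx = \PSum_{i=1}^{T/\delta}(\delta \vx^{(i)})$ with $\vx^{(i)} \in \cP$ for each $i$. My first step is to bound each coordinate individually: writing $1 - x_u = \prod_{i=1}^{T/\delta}(1 - \delta x^{(i)}_u)$ and applying the elementary inequality $\ln(1-t) \geq -t/(1-t)$ to each factor, I obtain
\[ x_u \leq 1 - e^{-z_u/(1-\delta)}, \quad \text{where } z_u \triangleq \sum_{i=1}^{T/\delta} \delta \cdot x^{(i)}_u. \]
A price-weighted summation also gives $\sum_u p_u z_u = \sum_{i=1}^{T/\delta} \delta \cdot \inner{\vp}{\vx^{(i)}} \leq TB$.

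The second step is to aggregate these coordinate-wise bounds. Letting $q_u \triangleq p_u/\|\vp\|_1$, so that $\{q_u\}_{u \in \cN}$ is a probability distribution on $\cN$, Jensen's inequality applied to the concave function $z \mapsto 1 - e^{-z}$ yields
\[ \frac{\inner{\vp}{\vx}}{\|\vp\|_1} \leq \sum_u q_u\bigl(1 - e^{-z_u/(1-\delta)}\bigr) \leq 1 - e^{-\bigl[\sum_u q_u z_u\bigr]/(1-\delta)} \leq 1 - e^{-Tc/(1-\delta)}, \]
where the last inequality uses $\sum_u q_u z_u \leq TB/\|\vp\|_1 = Tc$. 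Multiplying through by $\|\vp\|_1 = B/c$ gives $\inner{\vp}{\vx} \leq (B/c)\bigl(1 - e^{-Tc/(1-\delta)}\bigr)$.

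Finally, I would choose $\delta$ polynomially small in $|\cN|$, specifically $\delta \leq 1/(|\cN| \ln \eps^{-1})$, which still yields a polynomial-time algorithm. With $T = \ln \eps^{-1} - 1/|\cN|$ this choice guarantees $T/(1-\delta) \leq \ln \eps^{-1}$, so $e^{-Tc/(1-\delta)} \geq \eps^c$ and consequently $\inner{\vp}{\vx} \leq (B/c)(1 - \eps^c) = B \cdot \rho(c, \eps)$ whenever $c \leq 1/2$ (since $\rho(c, \eps) = (1 - \eps^c)/c$ in that range). Combined with $\vzero \leq \vx$ and $\vzero \in \cP$, this certifies that $\vx$ is $(1 - \eps - o(1), \rho(c, \eps))$-bicriteria feasible. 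The one subtle point is calibrating the $1/|\cN|$ slack in $T$ against the $1/(1-\delta)$ discretization loss so that the infeasibility ratio equals $\rho(c, \eps)$ exactly rather than $\rho(c, \eps) + o(1)$; it suffices that the slack $1/|\cN|$ upper bound the loss $\delta \ln \eps^{-1}$, which is precisely the condition my choice of $\delta$ enforces.
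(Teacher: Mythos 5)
Your proposal is correct, and it takes a genuinely different route from the paper. The paper does not reprove the infeasibility bound from scratch: it invokes Lemma~III.19 of \cite{feldman2011unified}, which gives $\inner{\vp}{\vy} \leq \frac{B}{c}[1 - e^{-cT} + O(\delta)T]$ for the standard discretization $\delta = T(\lceil |\cN|^5 T\rceil)^{-1}$, and then absorbs the additive $O(\delta)T$ term into the $1/|\cN|$ slack in $T$. Because that additive error does not scale with $c$, the paper's calculation only works when $c$ is bounded away from $0$, and an appendix is needed to sharpen the cited lemma (replacing $O(\delta)T$ by $O(\delta)cT$ via an improved per-coordinate bound) to cover small densities. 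Your argument instead works directly with the product structure $1 - x_u = \prod_i (1 - \delta x^{(i)}_u)$ of the Measured Continuous Greedy output, gets the per-coordinate bound $x_u \leq 1 - e^{-z_u/(1-\delta)}$, and aggregates with Jensen's inequality under the weights $p_u/\|\vp\|_1$ using $\sum_u p_u z_u \leq TB$. Since your discretization loss enters multiplicatively inside the exponent (the factor $1/(1-\delta)$) rather than additively, the $1/|\cN|$ slack in $T$ absorbs it uniformly in $c$, so you get $\inner{\vp}{\vx} \leq \frac{B}{c}(1 - \eps^c) = B\rho(c,\eps)$ exactly, with no case distinction on $c$ and no need to revisit the external lemma's proof. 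The one point to make explicit is that $\delta$ is constrained from above by two requirements: your condition $\delta \leq 1/(|\cN|\ln \eps^{-1})$ for the infeasibility bound, and the (much smaller, e.g.\ polynomially small in $|\cN|$) value needed so that the $o(1)$ loss in the approximation guarantee of Theorem~\ref{thm:measured_continuous_greedy_monotone} holds; since $\eps$ is a constant, the standard choice of $\delta$ already satisfies your threshold for large $|\cN|$, so the two requirements are compatible, but your write-up should say this rather than leave the impression that $\delta$ can be chosen as large as $1/(|\cN|\ln\eps^{-1})$.
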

\begin{proof}
By Theorem~\ref{thm:measured_continuous_greedy_monotone}, Measured Continuous Greedy has an approximation ratio of
\begin{align*}
	1 - e^{-T} - o(1)
	={} &
	1 - e^{\ln \eps + 1/|\cN|} - o(1)
	=
	1 - \eps \cdot e^{1/|\cN|} - o(1)\\
	={} &
	1 - \eps(1 + o(1)) - o(1)
	=
	1 - \eps - o(1)
	\enspace.
\end{align*}
Technically, Theorem~\ref{thm:measured_continuous_greedy_monotone} assumes that $T$ is a constant. However, as explained in Section~\ref{sec:preliminaries}, this only means that $T$ is required not to grow ``too fast'' with $|\cN|$ (in fact, the proof of \cite{feldman2011unified} for this theorem works as long as $T = o(\cN)$), and thus, assigning $\ln \eps^{-1} - 1/|\cN|$ to $T$ is fine.

Next, we need to prove that the infeasibility ratio of Measured Continuous Greedy is at most $\rho(c, \eps)$ when $T$ is set as specified in this proposition. We first prove it under the assumption that $c$ is bounded away from $0$. By Lemma~III.19 of~\cite{feldman2011unified}, if we denote $\delta \triangleq T(\lceil |\cN|^5 T \rceil)^{-1}$, then the output vector $\vy$ of Measured Continuous Greedy obeys
\begin{align*}
	\inner{\vp}{\vy}
	\leq{} &
	\frac{B}{c} \cdot [1 - e^{-cT} + O(\delta)T]
	\leq
	\frac{B}{c} \cdot [1 - \eps^c \cdot e^{-c/|\cN|} + O(\delta)\ln \eps^{-1}]\\
	\leq{} &
	\frac{B}{c} \cdot \Big[1 - \eps^c - \frac{\eps c}{|\cN|} + O(\delta)\ln \eps^{-1}\Big]
	\leq
	\frac{B}{c} \cdot (1 - \eps^c)
	=
	B \cdot \rho(c, \eps)
	\enspace,
\end{align*}
where the last inequality holds for a large enough $\cN$ since $\eps$ is constant, $c$ is bounded a way from $0$ and
\[
	\delta
	=
	\frac{T}{\lceil |\cN|^5 T \rceil}
	\leq
	\frac{T}{|\cN|^5 T}
	=
	|\cN|^{-5}
	\enspace.
\]

If $c$ is not bounded away from $0$, then the above calculation is not strong enough to prove the proposition. However, it turns out that the proof of Lemma III.19 of \cite{feldman2011unified} can be improved so that the term $O(\delta)T$ is replaced with $O(\delta)cT$ (we refer the reader to Appendix~\ref{app:improved_analysis_unified} for the details). Once this replacement is done, the above calculation works even when $c$ is not bounded away from $0$.
\end{proof}

We now present as Algorithm~\ref{alg:continuous_double_greedy} the variant of the Continuous Double Greedy algorithm of~\cite{buchbinder2014submodular} that we use. There are two main differences between this variant and the original algorithm of~\cite{buchbinder2014submodular}. The first difference is that in the variant the effect of $\ell$ on the treatment of each element $u \in \cN$ is scaled by the cost $p_u$ of $u$. This modification is necessary for supporting general knapsack constraints, rather than only cardinality constraints (as in~\cite{buchbinder2014submodular}). The second difference is that the variant only chooses a non-zero value for $\ell$ when a value of $0$ would result in $\inner{\vp}{\vx(t)}$ growing at a rate larger than $B \cdot \rho(c, \eps)$. This modification allows the algorithm to take advantage of the allowed infeasibility ratio.

As written, Algorithm~\ref{alg:continuous_double_greedy} is a continuous algorithms that runs from time $t = 0$ to time $t = 1$, and determines at every time $t$ within this interval the rate in which the vectors $\vx(t)$ and $\vy(t)$ should change. Of course, it is not possible to directly implement continuous algorithms. Another obstacle for the implementation of Algorithm~\ref{alg:continuous_double_greedy} is that it assumes direct access to the derivatives of the multilinear extension $F$ of the objective function $f$. Fortunately, the (by now standard) techniques of~\cite{calinescu2011maximizing} can be used to solve these problems at the cost of deteriorating the approximation ratio of the algorithm by a factor of $1 - o(1)$ and making it hold only with high probability, which does not affect the guarantee of Theorem~\ref{thm:knapsack_fractional}. For the sake of clarity, we ignore the need to apply these techniques, and directly analyze Algorithm~\ref{alg:continuous_double_greedy}.

\begin{algorithm}
\DontPrintSemicolon
\caption{\textsc{Variant of Continuous Double Greedy}$(f \colon 2^\cN \to \nnR, \vp, B, \eps)$} \label{alg:continuous_double_greedy}
Let $\vx(0) \gets \vzero$ and $\vx(1) \gets \characteristic_\cN$.\\
\For{every time $t \in [0, 1]$}
{
	Define $\va = \nabla F(\vz)|_{\vz = {\vx(t)}}$ and $\vb = - \nabla F(\vz)|_{\vz = {\vy(t)}}$, where $F$ is the multilinear extension of $f$.\\
	Define $\va'(\ell') = (\va - \ell' \cdot \vp) \vee \vzero$ and $\vb' = (\vb + \ell' \cdot \vp) \vee \vzero$.\\
	Let $\ell \geq 0$ be the minimum value such that
	\[
		\inner{\vp}{\frac{\va'(\ell)}{\va'(\ell) + \vb'(\ell)}}
		\leq
		B \cdot \rho(c, \eps)
		\enspace.
	\]\label{line:finding_ell}\\[-1mm]
	\lIf{$\ell = 0$}
	{
		Let $\vd \gets \frac{\va'(\ell)}{\va'(\ell) + \vb'(\ell)}$.\label{line:d_ell_zero}
	}
	\Else
	{
		Let $Z \gets \{u \in \cN \mid a'_u(\ell) + b'_u(\ell) = 0\}$.\\
		Let $r \geq 0$ be the minimum value such that $\big\langle\vp, \frac{\va'(\ell)}{\va'(\ell) + \vb'(\ell)} + r \cdot \characteristic_Z\big\rangle = B \cdot \rho(c, \eps)$.\label{line:r_select}\\
		Let $\vd \gets \frac{\va'(\ell)}{\va'(\ell) + \vb'(\ell)} + r \cdot \characteristic_Z$.\label{line:d_ell_non-zero}
	}
	Set $\frac{d\vx(t)}{dt} = \vd$, and $\frac{d\vy(t)}{dt} = \vd - \characteristic_\cN$.
}
\Return $\vx(1)$.
\end{algorithm}

Algorithm~\ref{alg:continuous_double_greedy} involves divisions between vectors. These divisions should be understood as coordinate-wise. Furthermore, when the denominator is $0$ in such a division for some coordinate, we define the output of the division to be zero as well for this coordinate.
To see that Algorithm~\ref{alg:continuous_double_greedy} is well-defined, we need the following two lemmata.

\begin{lemma} 
There is a value for $\ell$ that obeys the definition on Line~\ref{line:finding_ell} of Algorithm~\ref{alg:continuous_double_greedy}.
\end{lemma}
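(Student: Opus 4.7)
The plan is to establish three analytic properties of $g(\ell) \triangleq \inner{\vp}{\va'(\ell)/(\va'(\ell) + \vb'(\ell))}$ (division coordinate-wise, with the convention $0/0 = 0$): $(i)$ $g$ is non-increasing in $\ell$; $(ii)$ $g(\ell) = 0$ for all sufficiently large $\ell$; and $(iii)$ $g$ is right-continuous on $[0, \infty)$. Together with the inequality $\rho(c, \eps) \geq 0$, these facts will imply that the sublevel set $S \triangleq \{\ell \geq 0 : g(\ell) \leq B \cdot \rho(c, \eps)\}$ is non-empty and closed, so that $\ell^* \triangleq \inf S$ is attained and supplies the required minimum. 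Non-negativity of $\rho(c, \eps)$ is immediate when $c \leq 1/2$ and, when $c \geq 1/2$, follows from the factorization $1 - 2(1 - c)\sqrt{\eps} - (2c - 1)\eps = (1 - \sqrt{\eps})(1 + (2c - 1)\sqrt{\eps}) \geq 0$.

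Properties $(i)$ and $(ii)$ are routine. Monotonicity holds coordinate-wise because $a'_u(\ell) = \max\{a_u - \ell p_u, 0\}$ is non-increasing and $b'_u(\ell) = \max\{b_u + \ell p_u, 0\}$ is non-decreasing in $\ell$, so each ratio is non-increasing. For $(ii)$, set $L \triangleq \max\{a_u/p_u : u \in \cN,\ p_u > 0\}$; this quantity is finite since $\cN$ is finite and, because $f$ is monotone, each partial derivative $a_u$ is bounded by $\max_S f(S)$. For $\ell \geq L$, every $u$ with $p_u > 0$ satisfies $a'_u(\ell) = 0$, while elements with $p_u = 0$ contribute zero to the inner product automatically; hence $g(\ell) = 0$.

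The main obstacle is property $(iii)$, which must be handled carefully at the degenerate points where $a'_u(\ell)$ and $b'_u(\ell)$ vanish simultaneously. One checks that this happens only at $\ell = a_u/p_u$ when $a_u > 0$ and $a_u + b_u \leq 0$; at such a point the coordinate ratio has left limit $1$ and right limit $0$, while the convention $0/0 = 0$ forces the value at the point itself to equal $0$, matching the right limit. Thus $g$ is right-continuous (though not continuous); the delicacy here is that a careless choice of convention would instead produce a left-continuous ratio and an open sublevel set, in which case the infimum need not be attained. With $(i)$--$(iii)$ in hand, the conclusion follows by a standard argument: $\ell^* \leq L$ is finite, and if we had $g(\ell^*) > B \cdot \rho(c, \eps)$, then right-continuity would force $g(\ell) > B \cdot \rho(c, \eps)$ on some interval $[\ell^*, \ell^* + \delta)$, contradicting $\ell^* = \inf S$. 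Hence $\ell^* \in S$ is the desired minimum.
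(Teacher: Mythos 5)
Your proposal is correct and takes essentially the same route as the paper's proof: show that $\inner{\vp}{\va'(\ell)/(\va'(\ell)+\vb'(\ell))}$ is non-increasing in $\ell$, right-continuous at every point, and equal to $0$ for all large enough $\ell$, so the (non-empty) sublevel set contains its infimum. One minor imprecision: when $a_u > 0$ and $a_u + b_u \leq 0$, the quantities $a'_u(\ell)$ and $b'_u(\ell)$ vanish simultaneously on the whole interval $[a_u/p_u, -b_u/p_u]$, not only at $\ell = a_u/p_u$; this is harmless, since the ratio is identically $0$ there and the only jump (whose value matches the right limit, as you observe) occurs at $\ell = a_u/p_u$.
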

\begin{proof}
Notice that for a large enough values of $\ell$, the ratio $\frac{a'_u(\ell)}{a'_u(\ell) + b'_u(\ell)}$ is equal to $0$ for every element $u \in \cN$ whose cost $p_u$ is positive. Therefore, there are non-negative values of $\ell$ for which the inequality $\big\langle\vp' \frac{\va'(\ell)}{\va'(\ell) + \vb'(\ell)}\big\rangle = 0 \leq B \cdot \rho(c, \eps)$ holds. Thus, to prove the lemma, we only need to show that the set of $\ell \geq 0$ values for which the inequality on Line~\ref{line:finding_ell} of Algorithm~\ref{alg:continuous_double_greedy} holds has a minimum. One can observe that this will follow if we prove that the inner product $\big\langle\vp, \frac{\va'(\ell)}{\va'(\ell) + \vb'(\ell)}\big\rangle$ is a non-increasing function of $\ell$ that is continuous to the right at all points.

To prove that this is indeed the case, we need to analyze, for every $u \in \cN$, the properties of the ratio $\frac{a'_u(\ell)}{a'_u(\ell) + b'_u(\ell)}$ as a function of $\ell$.
\begin{itemize}
	\item First, we show that when $p_u > 0$ the ratio $\frac{a'_u(\ell)}{a'_u(\ell) + b'_u(\ell)}$ is continuous to the right for all values of $\ell$. For $\ell < a_u/p_u$, the ratio $\frac{a'_u(\ell)}{a'_u(\ell) + b'_u(\ell)}$ is continuous (to the left and the right) because the denominator $a'_u(\ell) + b'_u(\ell)$ is positive for all such values of $\ell$. For $\ell \geq a_u/p_u$, the ratio $\frac{a'_u(\ell)}{a'_u(\ell) + b'_u(\ell)}$ is equal to $0$ because $a'_u(\ell) = 0$ for such values of $0$, and therefore, this ratio is continues for all $\ell > a_u$, and continuous to the right at $\ell = a_\ell$.
	\item Second, we show that when $p_u > 0$ the ratio $\frac{a'_u(\ell)}{a'_u(\ell) + b'_u(\ell)}$ is a non-increasing function. %Once, again this is immediate when $p_u = 0$ because the ratio $\frac{a'_u(\ell)}{a'_u(\ell) + b'_u(\ell)}$ is constant in this case. Otherwise,
	Like in the previous bullet, we note that for $\ell \geq a_u/p_u$, this ratio is $0$. For smaller values of $\ell$,
	\[
		\frac{a'_u(\ell)}{a'_u(\ell) + b'_u(\ell)}
		=
		\frac{a_u - \ell p_u}{a_u - \ell p_u + \max\{b_u + \ell p_u, 0\}}
		=
		\begin{cases}
			1 & \text{if $\ell \leq -b_u/p_u$} \enspace,\\
			\frac{a_u - \ell p_u}{a_u + b_u} & \text{if $\ell \in (-b_u/p_u, a_u/p_u)$} \enspace;
		\end{cases}
	\]
	which indeed proves that $\frac{a'_u(\ell)}{a'_u(\ell) + b'_u(\ell)}$ is a non-increasing function of $\ell$ because when the range $(-b_u/p_u, a_u/p_u)$ is non-empty (i.e., $a_u > -b_u$), the ratio $\frac{a_u - \ell p_u}{a_u + b_u}$ is a decreasing function of $\ell$ that decreases from $1$ for $\ell = -b_u/p_u$ to $0$ for $\ell = a_u/p_u$.
\end{itemize}
Since the vector $\vp$ is non-negative, the inner product $\big\langle\vp, \frac{\va'(\ell)}{\va'(\ell) + \vb'(\ell)}\big\rangle$ is also a non-increasing function of $\ell$ that is continuous to the right at all points, which completes the proof of the lemma as explained above.
\end{proof}

\begin{lemma} \label{lem:r_value}
Line~\ref{line:r_select} of Algorithm~\ref{alg:continuous_double_greedy} chooses a value $r \in [0, 1]$ whenever it is executed.
\end{lemma}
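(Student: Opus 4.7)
The plan is to show that the right-hand side of the defining equation, viewed as a function of $r$, crosses $B \cdot \rho(c,\eps)$ somewhere in $[0,1]$. Set
\[
    g(r) \triangleq \Big\langle \vp,\; \frac{\va'(\ell)}{\va'(\ell) + \vb'(\ell)} + r \cdot \characteristic_Z\Big\rangle.
\]
Since $g$ is affine and non-decreasing in $r$, it suffices to verify $g(0) \leq B \cdot \rho(c,\eps) \leq g(1)$; the minimum $r \geq 0$ with $g(r) = B \cdot \rho(c,\eps)$ then exists and automatically lies in $[0,1]$.

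The bound $g(0) \leq B \cdot \rho(c,\eps)$ is immediate: it is exactly the inequality defining $\ell$ on Line~\ref{line:finding_ell}.

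For the other inequality $g(1) \geq B \cdot \rho(c,\eps)$, I would reinterpret $g(1)$ as the left-limit at $\ell$ of the function $h(\ell') \triangleq \langle \vp, \va'(\ell')/(\va'(\ell') + \vb'(\ell'))\rangle$, and then invoke the minimality of $\ell$ together with the non-increasing property of $h$ from the previous lemma. The key structural input is the invariant $a_u + b_u \geq 0$ for every $u \in \cN$, which follows from $\vx(t) \leq \vy(t)$ (a consequence of $\frac{d(\vy(t) - \vx(t))}{dt} = -\characteristic_\cN$ together with the initial gap $\characteristic_\cN$) and the submodularity of $F$, which makes $\partial F/\partial x_u$ non-increasing in every other coordinate. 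Any $u \in Z$ satisfies $a'_u(\ell) = b'_u(\ell) = 0$, forcing $a_u \leq \ell p_u$ and $b_u \leq -\ell p_u$, and combined with $a_u + b_u \geq 0$ this pins down $a_u + b_u = 0$ and, whenever $p_u > 0$, also $\ell = a_u/p_u = -b_u/p_u$. For such an element, at any $\ell' < \ell$ just below $\ell$ we have $a'_u(\ell') = a_u - \ell' p_u > 0$ and $b'_u(\ell') = 0$, so the ratio equals $1$ slightly below $\ell$ while equaling $0$ at $\ell$ itself. Meanwhile, for every $u \notin Z$ the ratio $a'_u(\cdot)/(a'_u(\cdot) + b'_u(\cdot))$ is continuous at $\ell$ by the case analysis used in the previous lemma. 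Assembling these observations yields
\[
    \lim_{\ell' \to \ell^-} h(\ell') = h(\ell) + \langle \vp, \characteristic_Z\rangle = g(1),
\]
and since $h$ is non-increasing and $\ell$ is the minimum value at which $h(\ell) \leq B \cdot \rho(c,\eps)$, this left-limit is at least $B \cdot \rho(c,\eps)$, giving $g(1) \geq B \cdot \rho(c,\eps)$.

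The main obstacle is pinning down the structure of $Z$: showing that $Z$ is precisely the set of elements whose ratio function jumps (from $1$ down to $0$) at $\ell$, and therefore that the jump size of $h$ at $\ell$ equals $\langle \vp, \characteristic_Z\rangle$. This step depends crucially on the submodularity-driven invariant $a_u + b_u \geq 0$; once it is established, the remaining argument is a short linear-function calculation in $r$.
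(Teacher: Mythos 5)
Your proposal is correct in substance and follows the same skeleton as the paper's proof: both reduce the claim to showing $g(0) \leq B \cdot \rho(c, \eps) \leq g(1)$ and then invoke continuity in $r$, with $g(0) \leq B \cdot \rho(c, \eps)$ coming from the definition of $\ell$ on Line~\ref{line:finding_ell} and the bound on $g(1)$ coming from a left-limit-at-$\ell$ argument that exploits the minimality of $\ell$ (note you implicitly need $\ell > 0$ for a left neighborhood to exist, which holds precisely because Line~\ref{line:r_select} is reached only in that case). Where you diverge is in how the left limit is related to $g(1)$: you prove the exact identity $\lim_{\ell' \to \ell^-} \inner{\vp}{\frac{\va'(\ell')}{\va'(\ell') + \vb'(\ell')}} = g(1)$ by showing that every $u \in Z$ with $p_u > 0$ has ratio exactly $1$ just below $\ell$, and this step genuinely needs the invariant $a_u + b_u \geq 0$. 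The paper needs only the inequality direction: since each ratio is at most $1$, replacing the $Z$-coordinates by $1$ can only increase the inner product at every $\hat{\ell} < \ell$, and that ``capped'' quantity converges to $g(1)$, giving $g(1) \geq B \cdot \rho(c, \eps)$ with no structural assumption whatsoever on $\va$ and $\vb$. This difference matters for two reasons. First, your invariant rests on $\vx(t) \leq \vy(t)$ and on the iterates lying in $[0, 1]^\cN$ (so that $\nabla F$ is defined and DR-submodularity applies); in the paper's logical order these facts appear in Corollary~\ref{cor:gap_and_feasibility}, which is proved \emph{using} the present lemma, so your route requires either the ODE-difference observation you mention plus an induction over time for well-definedness of the dynamics, or — more simply — weakening your jump identity to the one-sided bound $\lim_{\ell' \to \ell^-} \inner{\vp}{\frac{\va'(\ell')}{\va'(\ell') + \vb'(\ell')}} \leq g(1)$ (each ratio is at most $1$), which removes the dependence on the invariant entirely and is exactly the paper's trick. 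Second, elements $u \in Z$ with $p_u = 0$ need not have their ratio jump to $1$ below $\ell$, but since they carry zero weight in the inner product your identity still holds as a statement about the weighted sums; this should be said explicitly. With these small repairs your argument is a valid, if slightly heavier, alternative to the paper's proof.
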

\begin{proof}
By the definition of $\ell$, $\big\langle\vp, \frac{\va'(\ell)}{\va'(\ell) + \vb'(\ell)}\big\rangle \leq B \cdot \rho(c, \eps)$. Below we prove that $\big\langle\vp, \frac{\va'(\ell)}{\va'(\ell) + \vb'(\ell)} + \characteristic_Z\big\rangle \geq B \cdot \rho(c, \eps)$. Since $\big\langle\vp, \frac{\va'(\ell)}{\va'(\ell) + \vb'(\ell)} + r \cdot \characteristic_Z\big\rangle$ is a continuous function of $r$, this will imply that the value chosen for $r$ by Line~\ref{line:r_select} is indeed within the range $[0, 1]$.

Since Algorithm~\ref{alg:continuous_double_greedy} has reached Line~\ref{line:r_select}, we are guaranteed that $\ell > 0$. Therefore, by the choice of $\ell$, for every $\hat{\ell} \in (0, \ell)$, it holds that
\begin{equation} \label{eq:ell_guarantee}
	\inner{\vp}{\frac{\va'(\hat{\ell}) \wedge \characteristic_{\cN \setminus Z}}{\va'(\hat{\ell}) + \vb'(\hat{\ell})} + \characteristic_Z}
	\geq
	\inner{\vp}{\frac{\va'(\hat{\ell})}{\va'(\hat{\ell}) + \vb'(\hat{\ell})}}
	>
	B \cdot \rho(c, \eps)
	\enspace.
\end{equation}
where the first inequality holds since $\frac{\va'(\hat{\ell})}{\va'(\hat{\ell}) + \vb'(\hat{\ell})} \leq \characteristic_\cN$ and $\vp$ is non-negative. Notice now that for every $u \in \cN \setminus Z$, the ratio $\frac{a'_u(\hat{\ell})}{a'_u(\hat{\ell}) + b'_u(\hat{\ell})}$ is a continuous function of $\hat{\ell}$ at the point $\ell$. Therefore,
\[
	\lim_{\hat{\ell} \to \ell} \frac{a'_u(\hat{\ell})}{a'_u(\hat{\ell}) + b'_u(\hat{\ell})}
	=
	\frac{a'_u(\ell)}{a'_u(\ell) + b'_u(\ell)}
	\enspace,
\]
which implies that we also have
\[
	\lim_{\hat{\ell} \to \ell} \inner{\vp}{\frac{\va'(\hat{\ell}) \wedge \characteristic_{\cN \setminus Z}}{\va'(\hat{\ell}) + \vb'(\hat{\ell})} + \characteristic_Z}
	=
	\inner{\vp}{\frac{\va'(\ell) \wedge \characteristic_{\cN \setminus Z}}{\va'(\ell) + \vb'(\ell)} + \characteristic_Z}
	=
	\inner{\vp}{\frac{\va'(\ell)}{\va'(\ell) + \vb'(\ell)} + \characteristic_Z}
	\enspace,
\]
where the second equality holds since $a'_u(\ell) = 0$ for every $u \in Z$. Since the inner product whose limit we have calculated is always larger than $B \cdot \rho(c, \eps)$ by Inequality~\eqref{eq:ell_guarantee}, the limit must also be at least $B \cdot \rho(c, \eps)$. Thus,
\[
	\inner{\vp}{\frac{\va'(\ell)}{\va'(\ell) + \vb'(\ell)} + \characteristic_Z}
	=
	\lim_{\hat{\ell} \to \ell} \inner{\vp}{\frac{\va'(\hat{\ell}) \wedge \characteristic_{\cN \setminus Z}}{\va'(\hat{\ell}) + \vb'(\hat{\ell})} + \characteristic_Z}
	\geq
	B \cdot \rho(c, \eps)
	\enspace,
\]
which completes the proof of the lemma.
\end{proof}

\begin{corollary} \label{cor:gap_and_feasibility}
At every time $t \in [0, 1]$, the vector $\vd$ belongs to $[0, 1]^\cN$ and obeys $\inner{\vp}{\vd} \leq B \cdot \rho(c, \eps)$,  and therefore:
\begin{itemize}
	\item at every time $t$, $\vzero \leq \vx(t) \leq \vx(t) + (1 - t) \cdot \characteristic_\cN = \vy(t) \leq \characteristic_\cN$.
	\item the output vector $\vx(1) = \vy(1)$ of Algorithm~\ref{alg:continuous_double_greedy} obeys $\inner{\vp}{\vx(1)} \leq B \cdot \rho(c, \eps)$, and thus, the infeasibility ratio of Algorithm~\ref{alg:continuous_double_greedy} is at most $\rho(c, \eps)$.
\end{itemize}
\end{corollary}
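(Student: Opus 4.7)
The plan is to first establish the two properties of $\vd$ stated at the start of the corollary—namely $\vd \in [0,1]^\cN$ and $\inner{\vp}{\vd} \leq B \cdot \rho(c, \eps)$ at every $t$—and then derive both itemized consequences by integrating the ODE $\frac{d\vx}{dt} = \vd$, $\frac{d\vy}{dt} = \vd - \characteristic_\cN$ from the given initial conditions.

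For the properties of $\vd$, I would split on whether Line~\ref{line:d_ell_zero} or Line~\ref{line:d_ell_non-zero} is executed. When $\ell = 0$, the vectors $\va'(0) = \va \vee \vzero$ and $\vb'(0) = \vb \vee \vzero$ are coordinate-wise non-negative, and each coordinate of $\va'(0)$ is at most the corresponding coordinate of $\va'(0) + \vb'(0)$; hence (using the $0/0 = 0$ convention) each coordinate of $\vd = \frac{\va'(0)}{\va'(0)+\vb'(0)}$ lies in $[0,1]$, and the bound on $\inner{\vp}{\vd}$ is exactly the defining inequality of Line~\ref{line:finding_ell}. When $\ell > 0$, the first summand $\frac{\va'(\ell)}{\va'(\ell)+\vb'(\ell)}$ is in $[0,1]^\cN$ and vanishes identically on $Z$ by the definitions of $Z$ and the convention on $0/0$, while $r \in [0,1]$ by Lemma~\ref{lem:r_value}, so the corrected vector $\vd = \frac{\va'(\ell)}{\va'(\ell)+\vb'(\ell)} + r \characteristic_Z$ still lies in $[0,1]^\cN$; Line~\ref{line:r_select} then pins $\inner{\vp}{\vd}$ to exactly $B \cdot \rho(c,\eps)$.

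For the first bullet, I would integrate componentwise. Since $\vd(t) \in [0,1]^\cN$ at every $t$ and $\vx(0) = \vzero$, we get $\vzero \leq \vx(t) \leq t \cdot \characteristic_\cN \leq \characteristic_\cN$. Subtracting the two ODEs yields $\frac{d(\vy - \vx)}{dt} = -\characteristic_\cN$, which combined with $\vy(0) - \vx(0) = \characteristic_\cN$ forces the equality $\vy(t) = \vx(t) + (1-t)\characteristic_\cN$; the upper bound $\vy(t) \leq \characteristic_\cN$ then follows from $\vx(t) \leq t \characteristic_\cN$. In particular, evaluating at $t = 1$ gives $\vx(1) = \vy(1)$.

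For the second bullet, the pointwise bound $\inner{\vp}{\vd(t)} \leq B \cdot \rho(c,\eps)$ integrates to $\inner{\vp}{\vx(1)} = \int_0^1 \inner{\vp}{\vd(t)}\, dt \leq B \cdot \rho(c,\eps)$; combined with $\vx(1) \in [0,1]^\cN$, this places $\vx(1)$ in $\rho(c,\eps)$ times the knapsack polytope (in the sense required by the fractional definition of bicriteria feasibility), yielding the stated infeasibility ratio. The argument is essentially mechanical once the two properties of $\vd$ are in hand; the only point that requires care is the $\ell > 0$ case, where one must verify that the coordinates on which the denominator $\va'(\ell)+\vb'(\ell)$ vanishes (and on which the first summand is $0$) are precisely those indexed by $Z$ receiving the additive correction $r\characteristic_Z$, so that no coordinate is double-counted when checking that $\vd$ stays in $[0,1]^\cN$ and that $\inner{\vp}{\vd}$ attains the target value.
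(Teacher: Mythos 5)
Your proposal is correct and follows essentially the same route as the paper: a case split on whether Line~\ref{line:d_ell_zero} or Line~\ref{line:d_ell_non-zero} sets $\vd$, using the non-negativity of $\va'(\ell)$ and $\vb'(\ell)$, the $0/0$ convention on the coordinates of $Z$, Lemma~\ref{lem:r_value} for $r \in [0,1]$, and the definitions of $\ell$ and $r$ for the bound $\inner{\vp}{\vd} \leq B \cdot \rho(c, \eps)$. The only difference is that you spell out the integration of the ODEs for the two bulleted consequences, which the paper treats as immediate; this is harmless extra detail, not a different argument.
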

\begin{proof}
We need to consider two cases. If $\vd$ is set by Line~\ref{line:d_ell_zero} of Algorithm~\ref{alg:continuous_double_greedy}, then it clearly obeys $\vd \in [0, 1]^\cN$. Furthermore, in this case we also have $\inner{\vp}{\vd} \leq B \cdot \rho(c, \eps)$ by the definition of $\ell$.

Next, we consider the case in which $\vd$ is set by Line~\ref{line:d_ell_non-zero} of Algorithm~\ref{alg:continuous_double_greedy}. In this case $\inner{\vp}{\vd} = B \cdot \rho(c, \eps)$ by the definition of $r$, and $\vd$ is set to
\[
	\frac{\va'(\ell)}{\va'(\ell) + \vb'(\ell)} + r \cdot \characteristic_Z
	\enspace.
\]
This expression is a vector in $[0, 1]^\cN$ because (i) Lemma~\ref{lem:r_value} guarantees that $r \in [0, 1]$, and (ii) $\frac{\va'_u(\ell)}{\va'_u(\ell) + \vb'_u(\ell)} = 0$ for every $u \in Z$ by our definition of the value resulting from a division by $0$.
\end{proof}

It remains to analyze the approximation ratio of Algorithm~\ref{alg:continuous_double_greedy}. Towards this goal, we need to define $\opt(\vx, \vy) \triangleq (\characteristic_{\opt} \vee \vx) \wedge \vy$. One can observe that $\opt(\vx(0), \vy(0)) = \characteristic_{\opt}$, and $\opt(\vx(1), \vy(1)) = \vx(1)$ because $\vx(1) = \vy(1)$. Thus, one way to bound the approximation ratio of Algorithm~\ref{alg:continuous_double_greedy} is by bounding the rate by which $F(\opt(\vx(t), \vy(t)))$ decreases as a function of $t$. Lemma~\ref{lem:opt_derivative} does that. However, before getting to this lemma, we first need to prove the following observation. This observation is useful since $\vx(t) < \vy(t)$ for every $t \in [0, 1)$ by Corollary~\ref{cor:gap_and_feasibility}.

\begin{observation} \label{obs:opt_bounds}
For every two vectors $\vx, \vy \in [0, 1]^\cN$ that obey $\vx < \vy$ and element $u \in \cN$, if $u \not \in \opt$, then
\[
	\frac{\partial F(\opt(\vx, \vy))}{\partial x_u}
	\geq
	\frac{\partial F(\vy)}{\partial y_u}
	\qquad\text{and}\qquad
	\frac{\partial F(\opt(\vx, \vy))}{\partial y_u}
	=
	0
	\enspace,
\]
and if $u \in \opt$, then
\[
	\frac{\partial F(\opt(\vx, \vy))}{\partial x_u}
	=
	0
	\qquad\text{and}\qquad	
	\frac{\partial F(\opt(\vx, \vy))}{\partial y_u}
	\leq
	\frac{\partial F(\vx)}{\partial x_u}
	\enspace.
\]
\end{observation}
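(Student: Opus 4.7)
The plan is to first unpack the definition of $\opt(\vx,\vy)$ coordinate by coordinate, and then apply the chain rule together with two standard consequences of submodularity.

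First I would observe that under the hypothesis $\vx < \vy$, the $v$-th coordinate of $\opt(\vx,\vy) = (\characteristic_{\opt} \vee \vx) \wedge \vy$ simplifies to $y_v$ when $v \in \opt$ (since $(\characteristic_{\opt} \vee \vx)_v = 1 \geq y_v$) and to $x_v$ when $v \notin \opt$ (since $(\characteristic_{\opt} \vee \vx)_v = x_v \leq y_v$). The two ``$= 0$'' identities in the observation are then immediate: when $u \notin \opt$, $\opt(\vx,\vy)$ does not depend on $y_u$, and when $u \in \opt$, it does not depend on $x_u$.

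For the two inequality statements, the chain rule reduces the remaining non-trivial derivative of $F(\opt(\vx,\vy))$ in each case to $\partial F(\vz)/\partial z_u$ evaluated at $\vz = \opt(\vx,\vy)$. Here I would invoke two standard facts about the multilinear extension of a submodular function: (i) $F$ is multilinear, so that $\partial F/\partial z_u$ is independent of $z_u$ itself (this is the content of Observation~\ref{obs:multilinear-partial}); and (ii) submodularity of $f$ forces $\partial^{2} F/(\partial z_u\, \partial z_v) \leq 0$ for every $v \neq u$, so that $\partial F/\partial z_u$ is a non-increasing function of each other coordinate. Now when $u \notin \opt$, the coordinate-wise formula above gives $\opt(\vx,\vy)_v \leq y_v$ at every $v \neq u$; combining this with (i) and (ii) yields $\partial F(\opt(\vx,\vy))/\partial z_u \geq \partial F(\vy)/\partial z_u$, which is the first inequality. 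Symmetrically, when $u \in \opt$, we have $\opt(\vx,\vy)_v \geq x_v$ at every $v \neq u$, which yields $\partial F(\opt(\vx,\vy))/\partial z_u \leq \partial F(\vx)/\partial z_u$.

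I do not anticipate any serious obstacle: the proof is essentially a bookkeeping exercise once the coordinate-wise formula for $\opt(\vx,\vy)$ is in hand. The one place where care is needed is to keep track of which of $\vx$ or $\vy$ is the correct reference point in each of the two cases, and in which direction the submodular monotonicity of $\partial F/\partial z_u$ should be applied; fact~(i) is what legitimizes ignoring the $u$-th coordinate in that comparison, since $\opt(\vx,\vy)_u$ need not agree with $x_u$ (when $u \in \opt$) or with $y_u$ (when $u \notin \opt$).
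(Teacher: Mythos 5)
Your proposal is correct and follows essentially the same route as the paper's proof: identify $\opt(\vx,\vy)$ coordinate-wise (so the derivative with respect to the irrelevant variable vanishes), reduce the remaining derivative via the chain rule to $\partial F(\vz)/\partial z_u$ at $\vz = \opt(\vx,\vy)$, and compare it to the derivative at $\vy$ (resp.\ $\vx$) using the fact that submodularity makes this partial derivative non-increasing in the other coordinates, with multilinearity handling the $u$-th coordinate. The only cosmetic difference is that the paper proves the $u \notin \opt$ case and declares the other analogous, whereas you spell out both.
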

\begin{proof}
We prove the inequality only for $u \not \in \opt$. The proof for the other case is analogous. Since $u \not \in \opt$ and $x_u < y_u$, the value of $\opt_u(\vx, \vy)$ is equal to $x_u$, and it remains $x_u$ even if the value of $y_u$ changes by a small amount (as long as this amount is less than $y_u - x_u > 0$). Thus, $\frac{\partial F(\opt(\vx, \vy))}{\partial y_u} = 0$. On the other hand, changing $x_u$ by $\Delta$ will result in a change of $\opt_u(\vx, \vy)$ by $\Delta$ as long as $|\Delta| \leq y_u - x_u$. Therefore, by the submodularity of $f$,
\[
	\frac{\partial F(\opt(\vx, \vy))}{\partial x_u}
	=
	\left.\frac{\partial F(\vz)}{\partial z_u}\right|_{\vz = \opt(\vx, \vy)}
	\geq
	\frac{\partial F(\vy)}{\partial y_u}
	\enspace.
	\qedhere
\]
\end{proof}

\begin{lemma} \label{lem:opt_derivative}
If $c \geq 1/2$, then at every time $t \in [0, 1)$,
\[
	\frac{dF(\opt(\vx(t), \vy(t)))}{dt}
	\geq
	-\frac{\sqrt{\eps}}{2(1 - \sqrt{\eps})} \cdot \frac{dF(\vx(t))}{dt} - \frac{1 - \sqrt{\eps}}{2\sqrt{\eps}} \cdot \frac{dF(\vy(t))}{dt}
	\enspace.
\]
\end{lemma}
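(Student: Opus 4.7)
My plan starts with the chain rule applied to $F(\opt(\vx(t), \vy(t)))$. Using $\va = \nabla F(\vx(t))$ and $\vb = -\nabla F(\vy(t))$, the identities $\frac{dF(\vx(t))}{dt} = \inner{\va}{\vd}$ and $\frac{dF(\vy(t))}{dt} = \inner{\vb}{\characteristic_\cN - \vd}$ follow immediately from the dynamics $\frac{d\vx(t)}{dt} = \vd$ and $\frac{d\vy(t)}{dt} = \vd - \characteristic_\cN$. For the $\opt$ term, I would split the chain-rule sum by whether $u \in \opt$. Since Corollary~\ref{cor:gap_and_feasibility} gives $\vx(t) \le \vy(t)$, Observation~\ref{obs:opt_bounds} applies: for $u \notin \opt$ only the $x_u$-partial of $F(\opt(\vx, \vy))$ is nonzero and it is at least $-b_u$, while for $u \in \opt$ only the $y_u$-partial is nonzero and it is at most $a_u$. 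Combining these with the signs $d_u \ge 0$ and $d_u - 1 \le 0$ yields the lower bound
\[
\frac{dF(\opt(\vx(t), \vy(t)))}{dt} \ge -\sum_{u \notin \opt} b_u d_u - \sum_{u \in \opt} a_u(1 - d_u).
\]

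Write $\alpha := \frac{\sqrt{\eps}}{2(1-\sqrt{\eps})}$ and $\beta := \frac{1-\sqrt{\eps}}{2\sqrt{\eps}}$; two identities I would use repeatedly are $4\alpha\beta = 1$ and $1 + \alpha + \beta = \frac{1}{2\sqrt{\eps}(1-\sqrt{\eps})}$. The lemma reduces to showing $Q \ge 0$, where $Q := \alpha \inner{\va}{\vd} + \beta \inner{\vb}{\characteristic_\cN - \vd} - \sum_{u \notin \opt} b_u d_u - \sum_{u \in \opt} a_u(1 - d_u)$. Setting $\tilde a_u := a_u - \ell p_u$ and $\tilde b_u := b_u + \ell p_u$, I would decompose each per-coordinate term of $Q$ into a ``tilde'' part (the same expression but with $a_u, b_u$ replaced by $\tilde a_u, \tilde b_u$) and a residual proportional to $\ell p_u$. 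The tilde part is pointwise non-negative via a perfect-square identity: in the main sub-case $\tilde a_u, \tilde b_u > 0$, the algorithm sets $d_u = \tilde a_u/(\tilde a_u + \tilde b_u)$, and a direct calculation using $4\alpha\beta = 1$ gives $\alpha \tilde a_u d_u + \beta \tilde b_u(1 - d_u) - \mathrm{cost}(u) = (\sqrt{\alpha}\,\tilde a_u - \sqrt{\beta}\,\tilde b_u)^2/(\tilde a_u + \tilde b_u)$, where $\mathrm{cost}(u)$ is $\tilde b_u d_u$ if $u \notin \opt$ and $\tilde a_u(1-d_u)$ if $u \in \opt$. The degenerate sign sub-cases (one of $\tilde a_u, \tilde b_u$ non-positive, or $u \in Z$) are checked directly, using $\tilde a_u + \tilde b_u = a_u + b_u \ge 0$, which follows from submodularity via Observation~\ref{obs:multilinear-partial} at $\vx(t) \le \vy(t)$; for $u \in Z$ this forces $\tilde a_u = \tilde b_u = 0$ and the tilde part vanishes.

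The main obstacle is to handle the sum of residuals, which a short calculation simplifies to
\[
R = \ell \bigl[(1+\alpha+\beta)\inner{\vp}{\vd} - \beta \|\vp\|_1 - \inner{\vp}{\characteristic_\opt}\bigr].
\]
When $\ell = 0$, $R = 0$ trivially. When $\ell > 0$, the algorithm's choice of $\ell$ forces $\inner{\vp}{\vd} = B \cdot \rho(c, \eps)$ (by Line~\ref{line:finding_ell} being tight in this regime, as established by Lemma~\ref{lem:r_value} and Corollary~\ref{cor:gap_and_feasibility}), and we have $\|\vp\|_1 = B/c$. The crucial algebraic identity to verify is
\[
(1+\alpha+\beta)\,\rho(c,\eps) = 1 + \beta/c \qquad \text{for } c \ge 1/2,
\]
which, after substituting $1 + \alpha + \beta = \frac{1}{2\sqrt{\eps}(1-\sqrt{\eps})}$, reduces to the identity $\rho(c, \eps) = (1 - \sqrt{\eps})(1 + \sqrt{\eps}(2c-1))/c$. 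Expanding this right-hand side exactly matches the case $c \ge 1/2$ of the definition $\rho(c, \eps) = (1 - 2(1-c)\sqrt{\eps} - \eps(2c-1))/c$, which is precisely where the hypothesis $c \ge 1/2$ enters. The identity collapses $R$ to $\ell(B - \inner{\vp}{\characteristic_\opt}) \ge 0$, where the inequality is feasibility of $\opt$. Combining the pointwise tilde bound with $R \ge 0$ gives $Q \ge 0$ and hence the lemma.
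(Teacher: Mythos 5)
Your proposal is correct and follows essentially the same route as the paper's proof: the same shift $a_u \mapsto a_u - \ell p_u$, $b_u \mapsto b_u + \ell p_u$, the same per-coordinate weighted AM--GM/perfect-square bound (your fixed weights $\alpha,\beta$ are exactly the paper's $s^2/2$ and $s^{-2}/2$ at its final choice $s^2 = \sqrt{\eps}/(1-\sqrt{\eps})$), and the same cancellation of the $\ell$-terms using $\langle \vp, \vd\rangle = B\cdot\rho(c,\eps)$ when $\ell > 0$, $\|\vp\|_1 = B/c$, feasibility of $\opt$, and the formula for $\rho(c,\eps)$ at $c \geq 1/2$. The only difference is presentational: you fix the weights upfront and verify the cancellation identity directly, whereas the paper carries a free parameter $s$ and solves for it at the end.
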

\begin{proof}
By the chain rule,
\begin{align*}
	\frac{dF(\opt(\vx(t), \vy(t)))}{dt}
	={} &
	\sum_{u \in \cN} \Big[\frac{d x_u(t)}{dt} \cdot \frac{\partial F(\opt(\vx(t), \vy(t)))}{\partial x_u(t)} + \frac{d y_u(t)}{dt} \cdot \frac{\partial F(\opt(\vx(t), \vy(t)))}{\partial y_u(t)}\Big]\\
	\geq{} &
	\sum_{u \not \in \opt} \frac{d x_u(t)}{dt} \cdot \frac{\partial F(\vy(t))}{\partial y_u(t)} + \sum_{u \in \opt} \frac{d y_u(t)}{dt} \cdot \frac{\partial F(\vx(t))}{\partial x_u(t)}\\
	={} &
	-\sum_{u \not \in \opt} \frac{d x_u(t)}{dt} \cdot b_u + \sum_{u \in \opt} \frac{d y_u(t)}{dt} \cdot a_u
	\enspace,
\end{align*}
where the inequality follows from Observation~\ref{obs:opt_bounds} because $\frac{d x_u(t)}{dt} \geq 0$ and $\frac{d y_u(t)}{dt} \leq 0$ for every $u \in \cN$. We now observe that
\begin{align*}
	\ell \cdot \Big[\sum_{u \not \in \opt} \frac{d x_u(t)}{dt} \cdot p_u + \sum_{u \in \opt} \frac{d y_u(t)}{dt} \cdot p_u\Big]
	={} &
	\ell \cdot \Big[\sum_{u \not \in \opt} d_u \cdot p_u + \sum_{u \in \opt} (d_u - 1) \cdot p_u\Big]\\
	={} &
	\ell \cdot [\inner{\vp}{\vd} - \inner{\vp}{\characteristic_{\opt}}]
	\geq
	\ell B \cdot (\rho(c, \eps) - 1)
	\enspace,
\end{align*}
where the inequality is trivial when $\ell = 0$, and holds when $\ell > 0$ because in this case $\inner{\vp}{\vd} = B \cdot \rho(c, \eps)$. Combining this inequality and the previous one now yields
\begin{multline} \label{eq:right_side_for_development}
	\ell B \cdot (\rho(c, \eps) - 1) - \frac{dF(\opt(\vx(t), \vy(t))}{dt}\\
	\leq
	\sum_{u \not \in \opt} \frac{d x_u(t)}{dt} \cdot (b_u + p_u \ell) - \sum_{u \in \opt} \frac{d y_u(t)}{dt} \cdot (a_u - p_u \ell)
	\enspace.
\end{multline}

Our next goal is to upper bound the right hand side of Inequality~\eqref{eq:right_side_for_development}. We bound the contribution of each element to this side separately. Specifically, for every element $u \in \cN$,
\begin{itemize}
	\item if $p_u \ell \geq a_u$ and $p_u \ell > -b_u$, then $\frac{d x_u(t)}{dt} \cdot (b_u + p_u \ell)$ = $0 \cdot (b_u + p_u \ell)$ and $-\frac{d y_u(t)}{dt} \cdot (a_u - p_u \ell) = a_u - p_u \ell \leq 0$.
	\item if $p_u \ell < a_u$ and $p_u \ell \leq -b_u$, then $\frac{d x_u(t)}{dt} \cdot (b_u + p_u \ell)$ = $b_u + p_u \ell \leq 0$ and $-\frac{d y_u(t)}{dt} \cdot (a_u - p_u \ell) = 0 \cdot (a_u - p_u \ell) = 0$.
	\item if $p_u \ell \geq a_u$ and $p_u \ell \leq -b_u$, then $\frac{d x_u(t)}{dt} \cdot (b_u + p_u \ell) \leq 0$ because $\frac{d x_u(t)}{dt} \geq 0$ and $b_u + p_u \ell \leq 0$. Similarly, $-\frac{d y_u(t)}{dt} \cdot (a_u - p_u \ell) \leq 0$ because $-\frac{d y_u(t)}{dt} \geq 0$ and $a_u - p_u \ell \leq 0$.
	\item if $p_u \ell < a_u$ and $p_u \ell > -b_u$, then $\frac{d x_u(t)}{dt} \cdot (b_u + p_u \ell) = \frac{a_u - p_u \ell}{a_u + b_u} \cdot (b_u + p_u \ell) = \frac{(a_u - p_u \ell)(b_u + p_u \ell)}{a_u + b_u}$ and $-\frac{d y_u(t)}{dt} \cdot (b_u + p_u \ell) = \big(1 - \frac{a_u - p_u \ell}{a_u + b_u}\big) \cdot (a_u - p_u \ell) = \frac{b_u + p_u \ell}{a_u + b_u} \cdot (a_u - p_u \ell) = \frac{(a_u - p_u \ell)(b_u + p_u \ell)}{a_u + b_u}$.
\end{itemize}
Plugging all these bounds into Inequality~\eqref{eq:right_side_for_development}, we get
\begin{align} \label{eq:right_side_for_development_2}
	\ell B \cdot (\rho(c, \eps) - 1) - \frac{dF(\opt(\vx(t), \vy(t))}{dt}
	\leq{} &
	\sum_{u \in \cN} \max\left\{0, \frac{(a_u - p_u \ell)(b_u + p_u\ell)}{a_u + b_u}\right\}\\\nonumber
	={} &
	\sum_{u \in \cN} \max\left\{0, \frac{s(a_u - p_u \ell)}{\sqrt{a_u + b_u}} \cdot \frac{s^{-1}(b_u + p_u\ell)}{\sqrt{a_u + b_u}}\right\}
	\enspace,
\end{align}
where $s$ is some positive value to be determined later. Notice that in this equality we assume, as usual, that the value of a division by $0$ is $0$.

Once again we need to upper bound the right hand side of the last inequality, and we do so by individually upper bounding the term in it corresponding to every element $u \in \cN$. Specifically, we claim that
\begin{equation} \label{eq:element_upper_bound}
	2\max\left\{0, \frac{s(a_u - p_u \ell)}{\sqrt{a_u + b_u}} \cdot \frac{s^{-1}(b_u + p_u\ell)}{\sqrt{a_u + b_u}}\right\}
	\leq
	s^2(a_u - p_u\ell) \cdot \frac{d x_u(t)}{dt} - s^{-2}(b_u + p_u\ell) \cdot \frac{d y_u(t)}{dt}
	\enspace.
\end{equation}
The right hand side of this inequality is always non-negative since $\frac{\partial x_u(t)}{dt} = 0$ whenever $p_u \ell > a_u$ and $\frac{\partial y_u(t)}{dt} = 0$ whenever $p_u \ell < -b_u$. Thus, Inequality~\eqref{eq:element_upper_bound} holds when its left hand side is $0$, and, it only remains to prove it in the case that
\[
	\frac{s(a_u - p_u \ell)}{\sqrt{a_u + b_u}} \cdot \frac{s^{-1}(b_u + p_u\ell)}{\sqrt{a_u + b_u}}
	>
	0
	\enspace.
\]
By the submodularity of $f$,
\[
	(a_u - p_u \ell) + (b_u + p_u \ell)
	=
	a_u + b_u
	=
	\frac{\partial F(\vx(t))}{\partial x_u(t)} - \frac{\partial F(\vy(t))}{\partial y_u(t)}
	\geq
	0
\]
because $\vx(t) \leq \vy(t)$. When combined with the previous inequality, this inequality implies that $a_u - p_u \ell$ and $b_u + p_u \ell$ are both positive in the case we consider, and thus, equal to $a'_u(\ell)$ and $b'_u(\ell)$, respectively. Hence, we get
\begin{align*}
	2\frac{s(a_u - p_u \ell)}{\sqrt{a_u + b_u}} \cdot \frac{s^{-1}(b_u + p_u\ell)}{\sqrt{a_u + b_u}}
	\leq{} &
	\frac{s^2(a_u - p_u \ell)^2}{a_u + b_u} + \frac{s^{-2}(b_u + p_u \ell)^2}{a_u + b_u}\\
	={} &
	s^2(a_u - p_u \ell) \cdot \frac{a'_u(\ell)}{a'_u(\ell) + b'_u(\ell)} + s^{-2}(b_u + p_u \ell) \cdot \Big[1 - \frac{a'_u(\ell)}{a'_u(\ell) + b'_u(\ell)}\Big]\\
	={} &
	s^2(a_u - p_u \ell) \cdot \frac{d x_u(t)}{dt} - s^{-2}(b_u + p_u \ell) \cdot \frac{dy_u(t)}{dt}
	\enspace,
\end{align*}
where the inequality holds since $2ab \leq a^2 + b^2$ for every two real numbers $a$ and $b$. This completes the proof of Inequality~\eqref{eq:element_upper_bound}.

Plugging Inequality~\eqref{eq:element_upper_bound} into Inequality~\eqref{eq:right_side_for_development_2}, we now get
\begin{align} \label{eq:final_with_ell}
	\ell B \cdot (\rho(c, \eps) - 1) -{}& \frac{dF(\opt(\vx(t), \vy(t))}{dt}
	\leq
	\frac{1}{2}\sum_{u \in \cN} \Big[s^2(a_u - p_u \ell) \cdot \frac{d x_u(t)}{dt} - s^{-2}(b_u + p_u \ell) \cdot \frac{dy_u(t)}{dt}\Big]\\\nonumber
	={} &
	\frac{s^2}{2} \cdot \frac{dF(\vx(t))}{dt} + \frac{s^{-2}}{2} \cdot \frac{dF(\vy(t))}{dt} - \frac{\ell}{2} \cdot [s^2 \cdot \inner{\vp}{\vd} + s^{-2} \cdot \inner{\vp}{\vd - \characteristic_\cN}]\\\nonumber
	={} &
	\frac{s^2}{2} \cdot \frac{dF(\vx(t))}{dt} + \frac{s^{-2}}{2} \cdot \frac{dF(\vy(t))}{dt} - \frac{\ell}{2} \cdot [(s^2 + s^{-2}) \cdot B \cdot \rho(c, \eps) - s^{-2} \cdot B/c]
	\enspace,
\end{align}
where the first equality holds by the chain rule, and the second equality holds since $\inner{\vp}{\vd} = B \cdot \rho(c, \eps)$ whenever $\ell > 0$.

Recall that $s$ is an arbitrary positive value. To make the last inequality useful, we would like to pick a value for $s$ that will make the terms involving $\ell$ on both sides of this inequality cancel each other out. In other words, $s$ should be a solution to the equation
\[
	\rho(c, \eps) - 1
	=
	-\frac{(s^2 + s^{-2}) \cdot \rho(c, \eps) - s^{-2}/c}{2}
	\Leftrightarrow
	s^4 \cdot \rho(c, \eps) + s^2 \cdot 2(\rho(c, \eps) - 1) + \rho(c, \eps) - 1/c
	=
	0
	\enspace.
\]
The solution for this inequality is
\begin{align*}
	s^2
	={} &
	\frac{2(1 - \rho(c, \eps)) + \sqrt{4(\rho(c, \eps) - 1)^2 - 4\rho(c, \eps) \cdot (\rho(c, \eps) - 1/c)}}{2\rho(c, \eps)}\\
	={} &
	\frac{1 - \rho(c, \eps) + \sqrt{1 + \rho(c, \eps) \cdot (1/c - 2)}}{\rho(c, \eps)}
	=
	\frac{1 - \rho(c, \eps) + \sqrt{1 + \frac{1 - 2(1 - c)\sqrt{\eps} - \eps(2c - 1)}{c} \cdot (1/c - 2)}}{\rho(c, \eps)}\\
	={} &
	\frac{c(1 - \rho(c, \eps)) + \sqrt{(1 - c)^2 + 2(1 - c)(2c - 1)\sqrt{\eps} + \eps(2c - 1)^2}}{c \cdot \rho(c, \eps)}\\
	={} &
	\frac{c(1 - \rho(c, \eps)) + (1 - c) + (2c - 1)\sqrt{\eps}}{c \cdot \rho(c, \eps)}
	=
	\frac{\sqrt{\eps} + \eps(2c - 1)}{1 - 2(1 - c)\sqrt{\eps} - \eps(2c - 1)}
	=
	\frac{\sqrt{\eps}}{1 - \sqrt{\eps}}
	\enspace,
\end{align*}
where the third equality holds by plugging in the value of $\rho(c, \eps)$ (for $c \geq 1/2$) into the single appearance of $\rho(c, \eps)$ under the square root, and the penultimate equality holds by plugging in this value again into the remaining appearances of $\rho(c, \eps)$. The lemma now follows by plugging this value of $s^2$ into Inequality~\eqref{eq:final_with_ell}.
\end{proof}

\begin{corollary}
For $c \geq 1/2$, the approximation ratio of Algorithm~\ref{alg:continuous_double_greedy} is at least $1 - \eps$.
\end{corollary}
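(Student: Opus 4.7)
The plan is to integrate Lemma~\ref{lem:opt_derivative} over $t \in [0,1]$, evaluate the boundary terms, and then exploit monotonicity to absorb the remaining $F(\cN)$ term into $f(\opt)$.

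First, I would compute the boundary values of $\opt(\vx(t), \vy(t))$. At $t = 0$ we have $\vx(0) = \vzero$ and $\vy(0) = \characteristic_\cN$, so $\opt(\vx(0), \vy(0)) = (\characteristic_{\opt} \vee \vzero) \wedge \characteristic_\cN = \characteristic_{\opt}$, giving $F(\opt(\vx(0), \vy(0))) = f(\opt)$. At $t = 1$, since $\vy(t) = \vx(t) + (1 - t)\characteristic_\cN$ by construction, we get $\vy(1) = \vx(1)$, and therefore $\opt(\vx(1), \vy(1)) = (\characteristic_{\opt} \vee \vx(1)) \wedge \vx(1) = \vx(1)$, which yields $F(\opt(\vx(1), \vy(1))) = F(\vx(1))$.

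Next, I would integrate the inequality of Lemma~\ref{lem:opt_derivative} term-by-term. Writing $\alpha \triangleq \frac{\sqrt{\eps}}{2(1-\sqrt{\eps})}$ and $\beta \triangleq \frac{1-\sqrt{\eps}}{2\sqrt{\eps}}$, the integration yields
\[
F(\vx(1)) - f(\opt) \geq -\alpha\bigl(F(\vx(1)) - f(\varnothing)\bigr) - \beta\bigl(F(\vy(1)) - f(\cN)\bigr).
\]
Using $\vx(1) = \vy(1)$, $f(\varnothing) \geq 0$, and the crucial monotonicity fact $f(\cN) \geq f(\opt)$, this rearranges to
\[
(1 + \alpha + \beta) \cdot F(\vx(1)) \geq (1 + \beta) \cdot f(\opt).
\]

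Finally, I would verify the arithmetic identity $\frac{1 + \beta}{1 + \alpha + \beta} = 1 - \eps$. A short computation gives $1 + \beta = \frac{1 + \sqrt{\eps}}{2\sqrt{\eps}}$ and $\alpha + \beta = \frac{\eps + (1 - \sqrt{\eps})^2}{2\sqrt{\eps}(1 - \sqrt{\eps})} = \frac{1 - 2\sqrt{\eps} + 2\eps}{2\sqrt{\eps}(1 - \sqrt{\eps})}$, so that $1 + \alpha + \beta = \frac{1}{2\sqrt{\eps}(1 - \sqrt{\eps})}$, whence $\frac{1+\beta}{1 + \alpha + \beta} = (1 + \sqrt{\eps})(1 - \sqrt{\eps}) = 1 - \eps$. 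This gives $F(\vx(1)) \geq (1 - \eps) f(\opt)$, as required.

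The only subtle point is recognizing that we must use the monotonicity inequality $f(\cN) \geq f(\opt)$ to handle the $+\beta f(\cN)$ boundary term (a simple non-negativity bound would waste this term and only yield the $2\sqrt{\eps}(1-\sqrt{\eps})$ ratio typical of the unconstrained Double Greedy); the rest is verification of the algebraic cancellation, which is straightforward once the quadratic defining $s^2$ in the proof of Lemma~\ref{lem:opt_derivative} is recalled.
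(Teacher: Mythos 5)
Your proposal is correct and follows essentially the same route as the paper: integrate Lemma~\ref{lem:opt_derivative} over $[0,1]$, plug in the boundary values $F(\opt(\vx(0),\vy(0))) = f(\opt)$, $F(\vx(0)) = f(\varnothing)$, $F(\vy(0)) = f(\cN)$ together with $\opt(\vx(1),\vy(1)) = \vx(1) = \vy(1)$, and then use $f(\varnothing) \geq 0$ and the monotonicity bound $f(\cN) \geq f(\opt)$ to conclude. Your algebraic verification that $\frac{1+\beta}{1+\alpha+\beta} = (1+\sqrt{\eps})(1-\sqrt{\eps}) = 1-\eps$ matches the paper's computation (which organizes the same weights as a convex combination), so there is no gap.
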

\begin{proof}
Integrating the guarantee of Lemma~\ref{lem:opt_derivative} from $t = 0$ until $t = 1$ yields
\begin{multline*}
	F(\opt(\vx(1), \vy(1)) - F(\opt(\vx(0), \vy(0))
	\geq
	-\frac{\sqrt{\eps}}{2(1 - \sqrt{\eps})} \cdot [F(\vx(1)) - F(\vx(0))] \\- \frac{1 - \sqrt{\eps}}{2\sqrt{\eps}} \cdot [F(\vy(1)) - F(\vy(0))]
	\enspace.
\end{multline*}
Recall now that $\opt(\vx(1), \vy(1)) = \vx(1) = \vy(1)$, $F(\opt(\vx(0), \vy(0)) = F(\characteristic_{\opt}) = f(\opt)$, $F(\vx(0)) = F(\vzero) = f(\varnothing)$ and $F(\vy(0)) = F(\characteristic_\cN) = f(\cN)$. Plugging these observations into the above inequality, we get
\[
	F(\vx(1)) - f(\opt)
	\geq
	-\frac{\sqrt{\eps}}{2(1 - \sqrt{\eps})} \cdot [F(\vx(1)) - f(\varnothing)] - \frac{1 - \sqrt{\eps}}{2\sqrt{\eps}} \cdot [F(\vx(1)) - f(\cN)]
	\enspace,
\]
and rearranging this inequality gives
\begin{align*}
	F(x(1))
	\geq{} &
	\frac{f(\opt) + \frac{\sqrt{\eps}}{2(1 - \sqrt{\eps})} \cdot f(\varnothing) + \frac{1 - \sqrt{\eps}}{2\sqrt{\eps}} \cdot f(\cN)}{1 + \frac{\sqrt{\eps}}{2(1 - \sqrt{\eps})} + \frac{1 - \sqrt{\eps}}{2\sqrt{\eps}}}\\
	={} &
	\frac{2(\sqrt{\eps} - \eps) \cdot f(\opt) + \eps \cdot f(\varnothing) + (1 - 2\sqrt{\eps} + \eps) \cdot f(\cN)}{2(\sqrt{\eps} - \eps) + \eps + (1 - 2\sqrt{\eps} + \eps)}\\
	={} &
	2(\sqrt{\eps} - \eps) \cdot f(\opt) + \eps \cdot f(\varnothing) + (1 - 2\sqrt{\eps} + \eps) \cdot f(\cN)
	\enspace.
\end{align*}
We now observe that the non-negativity and monotonicity of $f$ imply that $f(\varnothing) \geq 0$ and $f(\cN) \geq f(\opt)$, respectively. Plugging these inequalities into the previous one complete the proof of the corollary.
\end{proof}

\subsubsection{Proof of Theorem~\texorpdfstring{\ref{thm:monotone_cardinality_inapproximability}}{\ref*{thm:monotone_cardinality_inapproximability}} for \texorpdfstring{$c \leq 1/2$}{c < 1/2}} \label{sssc:cardinality_inapproximability_small_c}

In this section, we prove Theorem~\ref{thm:monotone_cardinality_inapproximability} for $c \leq 1/2$. The proof of Theorem~\ref{thm:monotone_cardinality_inapproximability} is based on two functions $F(x, y)$ and $G(x, y)$ defined over the domain $\cD \triangleq [0, c] \times [0, 1 - c]$. The function $G(x, y)$ is defined by
\[
	G(x, y)
	\triangleq
	1 - (1 - x - y)^{1/c}
	\enspace.
\]
The function $F(x, y)$ has a more involved definition. Let $\delta' \in (0, c/4)$ be a constant to be determined later. Then, for $y < (1 - \delta')(1 - c)$, $F(x, y)$ is given by
\[
	F(x, y)
	\triangleq
	1 - \bigg(1 - \max\bigg\{0, \frac{x - \frac{c}{1 - c}y - \delta'}{c\big(1 - \frac{1}{1 - c}y - \delta'\big)}\bigg\}\bigg)\bigg(1 - \min\Big\{x + y, \frac{1}{1 - c}y + \delta'\Big\}\bigg)^{1/c}
	\enspace,
\]
and for larger values of $y$, it is defined as equal to $G(x, y)$.

\begin{observation} \label{obs:function_equality}
The function $F(x, y)$ %is continuous, and it
is equal to $G(x, y)$ whenever $x \leq \frac{c}{1 - c}y + \delta'$.
\end{observation}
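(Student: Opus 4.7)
The plan is to split into two cases according to which branch of the piecewise definition of $F(x,y)$ is active, and then verify that in the non-trivial branch the hypothesis $x \leq \tfrac{c}{1-c}y + \delta'$ collapses both the $\max$ and the $\min$ in the formula for $F$, leaving exactly the expression for $G$.

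The first case, $y \geq (1-\delta')(1-c)$, is immediate: by definition $F(x,y) = G(x,y)$ on this region, so nothing needs to be checked.

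In the second case, $y < (1-\delta')(1-c)$, I would verify two claims. Claim one: the argument of the $\max$ is non-positive. Its numerator is $x - \tfrac{c}{1-c}y - \delta'$, which is $\le 0$ by hypothesis; its denominator is $c\bigl(1 - \tfrac{y}{1-c} - \delta'\bigr)$, which is strictly positive because $y < (1-\delta')(1-c)$ gives $\tfrac{y}{1-c} < 1-\delta'$. Hence the $\max$ equals $0$ and the first factor in $F$ becomes $1$. Claim two: the $\min$ equals $x+y$. Adding $y$ to both sides of $x \leq \tfrac{c}{1-c}y + \delta'$ gives
\[
  x + y \;\leq\; \tfrac{c}{1-c}y + y + \delta' \;=\; \tfrac{1}{1-c}y + \delta',
\]
so $\min\bigl\{x+y,\ \tfrac{1}{1-c}y+\delta'\bigr\} = x+y$. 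Substituting these two simplifications into the definition of $F$ yields
\[
  F(x,y) \;=\; 1 - (1-x-y)^{1/c} \;=\; G(x,y),
\]
as required.

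There is no real obstacle here; the observation is essentially a sanity check that the piecewise formula for $F$ has been set up to agree with $G$ on the relevant ``lower'' region of the domain. The only care needed is in checking that the denominator in the $\max$ is strictly positive (so that the division is legitimate) under the assumption $y < (1-\delta')(1-c)$, which is precisely why this regime of $y$ is separated from the other branch of the definition.
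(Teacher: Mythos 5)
Your proposal is correct and follows essentially the same route as the paper: split on the two branches of the definition of $F$, and in the branch $y < (1-\delta')(1-c)$ observe that the hypothesis makes the $\max$ equal to $0$ and the $\min$ equal to $x+y$, reducing $F$ to $G$. Your extra check that the denominator $c\bigl(1 - \tfrac{y}{1-c} - \delta'\bigr)$ is strictly positive is a harmless (and welcome) addition that the paper leaves implicit.
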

\begin{proof}
When $y \geq (1 - \delta')(1 - c)$, $F(x, y) = G(x, y)$ by definition; and when $y < (1 - \delta')(1 - c)$ and $x \leq \frac{c}{1 - c}y + \delta'$,
\begin{align*}
	F(x, y)
	={} &
	1 - \bigg(1 - \max\bigg\{0, \frac{x - \frac{c}{1 - c}y - \delta'}{c\big(1 - \frac{1}{1 - c}y - \delta'\big)}\bigg\}\bigg)\bigg(1 - \min\Big\{x + y, \frac{1}{1 - c}y + \delta'\Big\}\bigg)^{1/c}\\
	={} &
	1 - (1 - 0) \cdot (1 - (x + y))^{1/c}
	=
	G(x, y)
	\enspace.
	\qedhere
\end{align*}
\end{proof}

We now prove some properties of $G$. Notice that since $G$ is defined over the bounded domain $\cD$, it cannot have partial derivatives everywhere. Nevertheless, the following observation shows that it has partial derivatives wherever this is possible, and also proves some properties of these partial derivatives.
\begin{observation} \label{obs:G_properties}
The function $G$ is non-negative and upper bounded by $1$. It has a partial derivative with respect to $x$ at every point of $(x, y) \in \cD$ for which $x \in (0, c)$, and it has a partial derivative with respect to $y$ at every point of $(x, y) \in \cD$ for which $y \in (0, 1 - c)$. Moreover, whenever these derivatives are defined, they are non-negative and non-increasing in both $x$ and $y$ .
\end{observation}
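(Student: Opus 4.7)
The plan is to verify each assertion by direct calculation; there is essentially no obstacle beyond bookkeeping, and the observation amounts to a routine check on the single-variable function $t \mapsto 1 - t^{1/c}$ after a change of variables.

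For the boundedness claim, for any $(x,y)\in\cD$ we have $0 \le x+y \le c+(1-c)=1$, hence $1-x-y \in [0,1]$. Since $c>0$, raising a number in $[0,1]$ to the power $1/c$ keeps it in $[0,1]$, so $(1-x-y)^{1/c} \in [0,1]$ and therefore $G(x,y) = 1 - (1-x-y)^{1/c} \in [0,1]$.

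For existence of the partial derivatives, I would apply the chain rule to $G = 1 - u^{1/c}$ with $u(x,y) = 1 - x - y$, obtaining
\[
  \frac{\partial G}{\partial x}(x,y) \;=\; \frac{\partial G}{\partial y}(x,y) \;=\; \frac{1}{c}\,(1-x-y)^{1/c-1}.
\]
Since $c \in (0,1)$, the exponent $1/c - 1$ is strictly positive, so this expression is well-defined whenever $1 - x - y \ge 0$. Within $\cD$ the equation $x+y=1$ is satisfied only at the corner $(c,1-c)$: indeed, from $x \le c$ and $y \le 1-c$ the sum attains $1$ only when both inequalities are tight. Consequently, for any $(x,y) \in \cD$ with $x \in (0,c)$ we have $x+y < 1$, and the partial with respect to $x$ exists at $(x,y)$; symmetrically for $y \in (0,1-c)$ and the partial with respect to $y$.

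For the final claim, I would use that $1/c - 1 > 0$ implies that $t \mapsto t^{1/c-1}$ is non-negative and non-decreasing on $[0,\infty)$. Hence $\tfrac{1}{c}(1-x-y)^{1/c-1} \ge 0$, establishing non-negativity. Moreover, because $1-x-y$ is non-increasing in each of $x$ and $y$, composing with the non-decreasing power map shows that $\partial G/\partial x$ and $\partial G/\partial y$ are non-increasing in both $x$ and $y$. The only remotely delicate point in the proof is the observation that the corner $(c,1-c)$ is the unique place in $\cD$ where $1-x-y = 0$; this is what lets us exclude it cleanly via the open conditions $x \in (0,c)$ and $y \in (0,1-c)$ and so guarantee that the partial derivatives are well-defined throughout the claimed regions.
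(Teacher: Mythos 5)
Your proof is correct and follows essentially the same route as the paper's: compute $\partial G/\partial x = \partial G/\partial y = \tfrac{1}{c}(1-x-y)^{1/c-1}$ directly and read off non-negativity and monotonicity from the fact that the exponent $1/c - 1$ is positive. Your extra care in noting that $x+y=1$ occurs in $\cD$ only at the corner $(c,1-c)$, which the open conditions $x \in (0,c)$ and $y \in (0,1-c)$ exclude, is a harmless refinement of the same argument.
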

\begin{proof}
It is not difficult to verify that $G$ is non-negative and upper bounded by $1$. For every point $(x, y)$ in $\cD$, the entire interval $\{(x', y) \mid x' \in (0, c)\}$ is included in $\cD$. If $x \in (0, c)$, the point $(x, y)$ is included inside this interval, and therefore, $G$ has a derivative with respect to $x$ at this point. Similarly, for every point $(x, y)$ in $\cD$, the entire interval $\{(x, y') \mid y' \in (0, 1 - c)\}$ is included in $\cD$. If $y \in (0, 1 - c)$, the point $(x, y)$ is included inside this interval, and therefore, $G$ has a derivative with respect to $y$ at this point.

To complete the proof of the observation, it remains to note that whenever the partial derivatives of $G$ with respect to $x$ and/or $y$ exist at a point $(x, y)$, they are equal to
\[
	\frac{(1 - x - y)^{1/c-1}}{c}
	\enspace.
\]
Since $c \in (0, 1/2]$, this expression is non-negative and non-increasing in $x$ and $y$.
\end{proof}

The following three lemmata prove together that $F$ also has the properties stated by Observation~\ref{obs:G_properties} for $G$.

\begin{lemma} \label{lem:F_non-negative}
The function $F$ is non-negative and upper bounded by $1$.
\end{lemma}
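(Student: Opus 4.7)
The plan is to split on the two pieces of the piecewise definition of $F$. For $y \geq (1 - \delta')(1 - c)$ we have $F(x,y) = G(x,y)$ by definition, so Observation~\ref{obs:G_properties} immediately gives $F(x,y) \in [0,1]$ on this portion of $\cD$. The substantive work is therefore confined to the region where $y < (1 - \delta')(1 - c)$.

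In this region, write $F(x,y) = 1 - A(x,y) \cdot B(x,y)$, where
\[
A(x,y) \triangleq 1 - \max\left\{0,\; \frac{x - \tfrac{c}{1-c}y - \delta'}{c\bigl(1 - \tfrac{1}{1-c}y - \delta'\bigr)}\right\},
\qquad
B(x,y) \triangleq \left(1 - \min\Bigl\{x+y,\; \tfrac{1}{1-c}y + \delta'\Bigr\}\right)^{1/c}.
\]
It suffices to show that $A, B \in [0,1]$, since then their product also lies in $[0,1]$, and consequently $F(x,y) = 1 - AB \in [0,1]$.

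For the factor $A$: the inner max is non-negative, so $A \leq 1$ is immediate. To show $A \geq 0$, we must bound the max by $1$. In the current range of $y$, we have $\tfrac{y}{1-c} < 1 - \delta'$, so the denominator $c(1 - \tfrac{1}{1-c}y - \delta')$ is strictly positive. Comparing the numerator to the denominator, the inequality $\tfrac{x - \frac{c}{1-c}y - \delta'}{c(1 - \frac{1}{1-c}y - \delta')} \leq 1$ reduces (after clearing the positive denominator and cancelling the common $-\tfrac{c}{1-c}y$ term) to $x \leq c + \delta'(1-c)$, which holds because $x \leq c$ on $\cD$. Hence $A \in [0,1]$.

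For the factor $B$: both $x+y$ and $\tfrac{1}{1-c}y + \delta'$ are non-negative on $\cD$, so their minimum is non-negative, giving $B \leq 1$. For the other direction, $x + y \leq c + (1-c) = 1$, and, in the current range of $y$, $\tfrac{1}{1-c}y + \delta' < (1 - \delta') + \delta' = 1$; thus the minimum is strictly less than $1$, so the base $1 - \min\{\cdot,\cdot\}$ is positive and $B \geq 0$. Nothing here is subtle; the only mild bookkeeping is to confirm that the endpoints of the piecewise definition are compatible, but that is not needed for the present bound and will be taken up when continuity or differentiability of $F$ is addressed in subsequent lemmata.
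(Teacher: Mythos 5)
Your proof is correct and follows essentially the same route as the paper's: handle the $y \geq (1-\delta')(1-c)$ region via Observation~\ref{obs:G_properties}, and otherwise show that each of the two factors in the product lies in $[0,1]$ (your clearing of the positive denominator to get $x \leq c + \delta'(1-c)$ is the same comparison the paper makes via a chained inequality using $x \leq c$). No gaps.
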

\begin{proof}
For $y \geq (1 - \delta')(1 - c)$, the non-negativity of $F(x, y)$ and the fact that $F(x, y) \leq 1$ both follow from Observation~\ref{obs:G_properties}. For smaller values of $y$,
\[
	F(x, y)
	=
	1 - \bigg(1 - \max\bigg\{0, \frac{x - \frac{c}{1 - c}y - \delta'}{c\big(1 - \frac{1}{1 - c}y - \delta'\big)}\bigg\}\bigg)\bigg(1 - \min\Big\{x + y, \frac{1}{1 - c}y + \delta'\Big\}\bigg)^{1/c}
	\enspace.
\]
This expression is non-negative and upper bounded by $1$ since the two terms in the product always take values between $0$ and $1$. To see that this is indeed the case, notice that
\[
	x - \frac{c}{1 - c}y - \delta'
	\leq
	c - \frac{c}{1 - c}y - \delta'
	\leq
	c\Big(1 - \frac{1}{1 - c}y - \delta'\Big)
\]
because $x \leq c$ in the domain $\cD$ of $F(x, y)$, and
\[
	0
	\leq
	\min\Big\{x + y, \frac{1}{1 - c}y + \delta'\Big\}
	\leq
	x + y
	\leq
	1
\]
because in the domain $\cD$ we are guaranteed that $x + y \in [0, 1]$ and $y$ is non-negative.
\end{proof}

\begin{lemma} \label{lem:F_derivative_x}
The function $F(x, y)$ has a partial derivative with respect to $x$ at every point $(x, y) \in \cD$ for which $x \in (0, c)$. Furthermore, when it is defined, this partial derivative is non-negative and non-increasing in $x$ and $y$.
\end{lemma}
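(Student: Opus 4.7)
The plan is to split the domain into regions according to which branch of the $\max$ and $\min$ in the definition of $F$ is active, compute the partial derivative with respect to $x$ on each region, and then verify consistency at the region boundaries.

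First, observe that the $\max$ inside the definition of $F$ becomes positive exactly when $x > \tfrac{c}{1-c}y + \delta'$, and a short calculation shows that $x + y \leq \tfrac{1}{1-c}y + \delta'$ is equivalent to the same inequality $x \leq \tfrac{c}{1-c}y + \delta'$. Thus, the $\max$ and $\min$ switch branches simultaneously. For $y \geq (1-\delta')(1-c)$, $F = G$ by definition, and the claim follows from Observation~\ref{obs:G_properties}. For $y < (1-\delta')(1-c)$ and $x \leq \tfrac{c}{1-c}y + \delta'$, Observation~\ref{obs:function_equality} gives $F = G$, and again Observation~\ref{obs:G_properties} applies, so $\tfrac{\partial F}{\partial x} = (1-x-y)^{1/c-1}/c$ in this region.

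The main case is $y < (1-\delta')(1-c)$ and $x > \tfrac{c}{1-c}y + \delta'$. Introducing the abbreviation $\alpha \triangleq 1 - \tfrac{1}{1-c}y - \delta'$ (which depends only on $y$ and is strictly positive in this region), the definition of $F$ simplifies, since $1 - \min\{x+y, \tfrac{1}{1-c}y + \delta'\} = \alpha$, to
\[
F(x,y) = 1 - \alpha^{1/c} + \frac{x - \tfrac{c}{1-c}y - \delta'}{c} \cdot \alpha^{1/c - 1}.
\]
Since this is linear in $x$, its partial derivative exists and equals $\tfrac{\partial F}{\partial x} = \alpha^{1/c - 1}/c$, which is non-negative, constant in $x$, and non-increasing in $y$ (because $\alpha$ is decreasing in $y$ and $1/c - 1 \geq 1 > 0$).

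It remains to check that $\tfrac{\partial F}{\partial x}$ is well-defined and monotone across the internal boundary $x = \tfrac{c}{1-c}y + \delta'$. Here $1 - x - y = \alpha$, so the two formulas $(1-x-y)^{1/c-1}/c$ and $\alpha^{1/c-1}/c$ agree, and in fact $F$ is continuously differentiable across this curve. On each side, the derivative is non-increasing in $x$ (on the left side $(1-x-y)^{1/c-1}/c$ is decreasing in $x$; on the right side it is constant), and non-increasing in $y$ (again straightforwardly on each side). Agreement at the boundary then propagates monotonicity across the boundary, completing the proof. The boundary in $y$ at $y = (1-\delta')(1-c)$ is harmless since there $\tfrac{c}{1-c}y + \delta' = c + \delta'(1-c) > c$, so for every $x \in (0, c)$ a neighborhood of $(x, (1-\delta')(1-c))$ lies in the region where $F = G$, and no additional verification is needed.
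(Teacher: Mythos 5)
Your proposal is correct and follows essentially the same route as the paper: split by which branch of the $\max$/$\min$ is active (noting both switch at $x = \tfrac{c}{1-c}y + \delta'$), compute the derivative $(1-x-y)^{1/c-1}/c$ resp. $(1 - \tfrac{1}{1-c}y - \delta')^{1/c-1}/c$ in the two regions, check they agree on the boundary, and conclude non-negativity and monotonicity from each expression separately plus the boundary agreement. The only (cosmetic) difference is that you rewrite $F$ in the max-active region as an affine function of $x$, which makes the derivative computation slightly more transparent.
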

\begin{proof}
Consider an arbitrary point $(x, y) \in \cD$ for which $x \in (0, c)$. If $y \geq (1 - \delta')(1 - c)$, then $F(x', y) = G(x', y)$ in every point $(x', y) \in \cD$. Thus, $F$ and $G$ have the same partial derivative with respect to $x$ at the point $(x, y)$. According to the proof of Observation~\ref{obs:G_properties}, this partial derivative exists, and is given by
\[
	\frac{\partial F(x, y)}{\partial x}
	=
	\frac{(1 - x - y)^{1/c-1}}{c}
	\enspace.
\]

Consider now the case that $y < (1 - \delta')(1 - c)$. Notice that the entire interval $\{(x', y) \mid x \in (0, c)\}$ is included inside $\cD$, and $(x, y)$ is a point insider this interval. In this case, the mathematical formula defining $F(x, y)$ for points in this interval involves $\min$ and $\max$ operators. Since both of these operators switch from outputting one argument to the next when $x = \frac{b}{1 - b}y + \delta'$, we get
\[
	\frac{\partial F(x, y)}{\partial x}
	=
	\begin{cases}
		\frac{(1 - x - y)^{1/c - 1}}{c} & \text{for $x < \frac{c}{1 - c}y + \delta'$} \enspace,\\
		\frac{\left(1 - \frac{1}{1 - c}y - \delta'\right)^{1/c - 1}}{c}& \text{for $x > \frac{c}{1 - c}y + \delta'$} \enspace.
	\end{cases}
\]
Notice that the expressions given above for the two cases are equal when $x = \frac{c}{1 - c}y + \delta'$. Thus, the partial derivative of $F(x, y)$ with respect to $x$ exists also at the point $(\frac{c}{1 - c}y, y)$, and its value can be calculated using the expressions of either of the above cases.

To summarize, the above discussion shows that $F(x, y)$ has a partial derivative with respect to $x$ at every point $(x, y) \in \cD$ for which $x \in (0, c)$. Furthermore, since $x \leq \frac{c}{1 - c}y + \delta'$ whenever $y \geq (1 - \delta')(1 - c)$, the value of this partial derivative is always given by
\[
	\frac{\partial F(x, y)}{\partial x}
	=
	\begin{cases}
		\frac{(1 - x - y)^{1/c - 1}}{c} & \text{for $x \leq \frac{c}{1 - c}y + \delta'$} \enspace,\\
		\frac{\left(1 - \frac{1}{1 - c}y - \delta'\right)^{1/c - 1}}{c}& \text{for $x \geq \frac{c}{1 - c}y + \delta'$} \enspace.
	\end{cases}
\]
The two expressions given for the above two cases are individually non-negative and non-increasing functions of $x$ and $y$ since $c \in (0, 1/2)$ and $x \leq c$. This immediately implies that the partial derivative of $F$ with respect to $x$ is non-negative whenever it is defined. Since the two expressions of the two cases are equal in the borderline between them (when $x = \frac{c}{1 - c}y + \delta'$), we also get that the partial derivative of $F$ with respect to $x$ is a non-increasing function of $x$ and $y$.
\end{proof}

\begin{lemma} \label{lem:F_derivative_y}
The function $F(x, y)$ has a partial derivative with respect to $y$ at every point $(x, y) \in \cD$ for which $y \in (0, 1-c)$. Furthermore, when it is defined, this partial derivative is non-negative and non-increasing in $x$ and $y$.
\end{lemma}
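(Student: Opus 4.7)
The plan is to parallel the proof of Lemma~\ref{lem:F_derivative_x}, but now handling three regions in $\cD$ because the formula for $F$ changes at both $x = \frac{c}{1-c}y + \delta'$ and $y = (1-\delta')(1-c)$. First I would split the domain according to which case of the $\min$ and $\max$ is active:
(a) $y \geq (1-\delta')(1-c)$: by definition, $F(x, y) = G(x, y) = 1 - (1 - x - y)^{1/c}$;
(b) $y < (1-\delta')(1-c)$ and $x \leq \frac{c}{1-c}y + \delta'$: here the $\max$ equals $0$ and the $\min$ equals $x + y$, so again $F(x, y) = 1 - (1 - x - y)^{1/c}$ (this is also the content of Observation~\ref{obs:function_equality});
(c) $y < (1-\delta')(1-c)$ and $x \geq \frac{c}{1-c}y + \delta'$: the $\max$ equals the first argument and the $\min$ equals $\frac{1}{1-c}y + \delta'$. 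Introducing the shorthand $u \triangleq 1 - \frac{y}{1 - c} - \delta'$, a short algebraic manipulation yields $F(x, y) = 1 - \frac{(c(1-\delta') + \delta' - x)\,u^{1/c - 1}}{c}$.

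Next, I would compute the partial derivative with respect to $y$ in each region by direct differentiation. In regions (a) and (b) this gives $\frac{\partial F}{\partial y} = \frac{(1 - x - y)^{1/c - 1}}{c}$. In region (c), the chain rule applied to $u$ (with $du/dy = -1/(1-c)$), together with the identity $(1/c - 1)/(1-c) = 1/c$, yields $\frac{\partial F}{\partial y} = \frac{(c(1-\delta') + \delta' - x)\,u^{1/c - 2}}{c^2}$.

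The main subtlety will be verifying that the partial derivative is well defined across the boundaries separating these regions. For the boundary $y = (1-\delta')(1-c)$, the bound $x \leq c < c(1-\delta') + \delta' = \frac{c}{1-c}y + \delta'$ ensures that in a neighborhood of any such boundary point the function is described by the same formula $1 - (1-x-y)^{1/c}$ as in region (b), matching region (a), so no issue arises. For the boundary $x = \frac{c}{1-c}y + \delta'$ inside the strip $y < (1-\delta')(1-c)$, I would substitute this equality to check that $u = 1 - x - y$ and $c(1-\delta') + \delta' - x = cu$; both expressions for $\partial F/\partial y$ then collapse to $u^{1/c - 1}/c$, establishing continuity of $F$ and its derivative across this boundary as well.

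Finally, non-negativity follows because $1 - x - y \geq 0$ on $\cD$ and, in region (c), $c(1-\delta') + \delta' - x \geq c(1-\delta') + \delta' - c = \delta'(1-c) \geq 0$ and $u \geq 0$. Monotonicity in $x$ and $y$ holds inside each region: in (a)/(b) because $1 - x - y$ is non-increasing in both variables and is raised to the non-negative power $1/c - 1 \geq 1$; and in (c) because the linear factor $c(1-\delta') + \delta' - x$ is non-increasing in $x$ while $u^{1/c - 2}$ is non-increasing in $y$ (using $1/c - 2 \geq 0$, which holds since $c \leq 1/2$). Since the two formulas agree at the boundary $x = \frac{c}{1-c}y + \delta'$ and each is individually non-increasing in $x$ and $y$, the partial derivative is non-increasing in both variables throughout its domain of definition, completing the proof.
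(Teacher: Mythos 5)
Your proposal is correct and follows essentially the same route as the paper's proof: split the domain according to which branch of the $\min$/$\max$ is active, differentiate the explicit formula in each region (your region-(c) expression and derivative match the paper's after noting $c(1-\delta')+\delta'-x = c - x + \delta'(1-c)$), check that the two derivative formulas coincide on the boundary $x = \frac{c}{1-c}y + \delta'$, and deduce non-negativity and monotonicity from $x \leq c$ and $c \leq 1/2$. The only cosmetic difference is that you treat $y \geq (1-\delta')(1-c)$ as a separate third region, while the paper folds it into the branch $y \geq \frac{1-c}{c}(x-\delta')$ via Observation~\ref{obs:function_equality}.
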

\begin{proof}
Consider an arbitrary point $(x, y) \in \cD$ for which $y \in (0, 1 - c)$. If $y < (1 - \delta')(1 - c)$, then $(x, y)$ is part of the interval $\{(x, y') \mid y' \in (0, (1 - \delta')(1 - c))\}$, which is completely included in $\cD$. The mathematical formula defining $F(x, y)$ for points in this interval involves $\min$ and $\max$ operators, and since both of these operators switch from outputting one argument to the next when $y = \frac{1 - c}{c}(x - \delta')$, we get
\begin{equation} \label{eq:cases_y_derivative}
	\frac{\partial F(x, y)}{\partial y}
	=
	\begin{cases}
		\frac{(c - x + \delta'(1 - c))\left(1 - \frac{1}{1 - c}y - \delta'\right)^{1/c - 2}}{c^2} & \text{for $y < \frac{1 - c}{c}(x - \delta')$} \enspace,\\
		\frac{(1 - x - y)^{1/c - 1}}{c} & \text{for $y > \frac{1 - c}{c}(x - \delta')$} \enspace.
	\end{cases}
\end{equation}
To see why the partial derivative given for the case of $y < \frac{1 - c}{c}(x - \delta')$ is correct, we notice that in this case
\begin{multline*}
	F(x, y)
	=
	1 - \bigg(1 - \frac{x - \frac{c}{1 - c}y - \delta'}{c\big(1 - \frac{1}{1 - c}y - \delta'\big)}\bigg)\left(1 - \frac{1}{1 - c}y - \delta'\right)^{1/c}\\
	=
	1 - \frac{c - x + \delta'(1 - c)}{c\big(1 - \frac{1}{1 - c}y - \delta'\big)}\left(1 - \frac{1}{1 - c}y - \delta'\right)^{1/c}
	=
	1 - \frac{c - x + \delta'(1 - c)}{c}\left(1 - \frac{1}{1 - c}y - \delta'\right)^{1/c - 1}
	\enspace,
\end{multline*}
and therefore, its partial derivative with respect to $y$ is
\begin{align*}
	&
	\frac{\frac{c - x + \delta'(1 - c)}{c}(1/c - 1)\left(1 - \frac{1}{1 - c}y - \delta'\right)^{1/c - 2}}{1 - c}
	=
	\frac{(c - x + \delta'(1 - c))\left(1 - \frac{1}{1 - c}y - \delta'\right)^{1/c - 2}}{c^2}
	\enspace.
\end{align*}
Notice that the expressions given for the the two cases of~\eqref{eq:cases_y_derivative} are equal when $y = \frac{1 - c}{c}(x - \delta')$ because for this value of $y$
\[
	\frac{1}{1 - c}y + \delta'
	=
	\frac{1}{c}(x - \delta') + \delta'
	=
	x + \frac{1 - c}{c}(x - \delta')
	=
	x + y
	\enspace,
\]
and
\[
	1 - \frac{1}{1 - c}y - \delta'
	=
	1 - \frac{1}{c}(x - \delta') - \delta'
	=
	\frac{c - x + \delta'(1 - c)}{c}
	\enspace.
\]
Thus, the partial derivative of $F$ with respect to $y$ exists also at the point $(x, \frac{1 - c}{c}(x - \delta'))$ when $\frac{1 - c}{c}(x - \delta') > 0$, and the value of this partial derivative can be calculated using the expressions of either of the two cases of~\eqref{eq:cases_y_derivative}.

Consider now the case of $(1 - \delta')(1 - c) \leq y < 1 - c$. In this case, because $x \leq c$, the point $(x, y)$ is included in the interval $\{(x, y') \mid y' \in (\max\{0, \frac{1 - c}{c}(x - \delta')\}, 1 - c)\}$. Furthermore, this interval is entirely included in the domain $\cD$, and by Observation~\ref{obs:function_equality}, $F(x, y') = G(x, y')$ throughout it. Thus, $F$ and $G$ share the same partial derivative with respect to $y$ at $(x, y)$, and by the proof of Observation~\ref{obs:G_properties}, this partial derivative exists and is given by
\[
	\frac{\partial F(x, y)}{\partial y}
	=
	\frac{(1 - x - y)^{1/c-1}}{c}
	\enspace.
\]

To summarize, the above discussion shows that $F(x, y)$ has a partial derivative with respect to $y$ at every point $(x, y) \in \cD$ for which $y \in (0, 1 - c)$. Furthermore, since $y \geq \frac{1 - c}{c}(x - \delta')$ whenever $y \geq (1 - \delta')(1 - c)$, the value of this partial derivative is always given by
\[
	\frac{\partial F(x, y)}{\partial y}
	=
	\begin{cases}
		\frac{(c - x + \delta'(1 - c))\left(1 - \frac{1}{1 - c}y - \delta'\right)^{1/c - 2}}{c^2} & \text{for $y \leq \frac{1 - c}{c}(x - \delta')$} \enspace,\\
		\frac{(1 - x - y)^{1/c - 1}}{c} & \text{for $y \geq \frac{1 - c}{c}(x - \delta')$} \enspace.
	\end{cases}
\]
The two expressions given for the above two cases are individually non-negative and non-increasing functions of $x$ and $y$ since $c \in (0, 1/2]$ and $x \leq c$. This immediately implies that the partial derivative of $F$ with respect to $y$ is non-negative whenever it is defined. Since the two expressions of the two cases are equal in the borderline between them (when $y = \frac{1 - c}{c}(x - \delta')$), we also get that the partial derivative of $F$ with respect to $y$ is a non-increasing function of $x$ and $y$.
\end{proof}

Recall now that $c$ is a rational constant, and therefore, can be presented as the ratio of two integer constants $h$ and $\ell$ (i.e., $c = h/\ell$). We need to construct a hard distribution $\cH_i$ over instances of the problem of maximizing a non-negative monotone submodular function over a cardinality constraint such that all the instances in the support of the distribution have a density of $c$. Our construction is parametrized by an integer $i \geq 1$ controlling the size of the instances in the support of the distribution $\cH_i$.

The ground set of all the instances in the support of the distribution $\cH_i$ we construct is an arbitrary set $\cN_i$ of size $i\ell$, and the cardinality constraint in all these instances allows the selection of up to $ih$ elements (and thus, the instances indeed have a density of $c$). To draw an instance from the distribution $\cH_i$, we choose a uniformly random subset $O_i \subseteq \cN_i$ of size $ih$, and then define the objective function of the drawn instance as
\[
	f_{O_i}(S)
	\triangleq
	F\bigg(\frac{|S \cap O_i|}{i\ell}, \frac{|S \setminus O_i|}{i\ell}\bigg)
	\quad
	\forall\; S \subseteq \cN_i
	\enspace.
\]
Notice that $|S \cap O_i| / (i\ell) \leq |O_i| / (i\ell) = (ih)/(i\ell) = h/\ell = c$, and $|S \setminus O_i| / (i\ell) \leq (|\cN_i| - |O_i|) / (i\ell) = (i\ell - ih) / (i\ell) = 1 - c$. Thus, the arguments passed to $F$ by the above definition are within the domain on which $F$ is defined. For analysis purposes, it is useful to also define
\[
	g_{O_i}(S)
	\triangleq
	G\bigg(\frac{|S \cap O_i|}{i\ell}, \frac{|S \setminus O_i|}{i\ell}\bigg)
	\quad
	\forall\; S \subseteq \cN_i
	\enspace.
\]
The following observation shows that, interestingly, the function $g_{O_i}(S)$ is independent of $O_i$. When we want to stress this fact below,
we use $g$ (without any subscript) to denote this function.
\begin{observation}
The function $g_{O_i}(S)$ is independent of $O_i$. 
\end{observation}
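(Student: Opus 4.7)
The plan is essentially a one-line calculation, so there is no real obstacle. The key observation is that the function $G(x,y) = 1 - (1-x-y)^{1/c}$ depends on $(x,y)$ only through the sum $x+y$. Therefore, I would substitute the definition of $g_{O_i}$ and use the partition identity $|S \cap O_i| + |S \setminus O_i| = |S|$ to eliminate the dependence on $O_i$.

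More explicitly, I would write
\[
g_{O_i}(S) = G\left(\frac{|S \cap O_i|}{i\ell}, \frac{|S \setminus O_i|}{i\ell}\right) = 1 - \left(1 - \frac{|S \cap O_i|}{i\ell} - \frac{|S \setminus O_i|}{i\ell}\right)^{1/c} = 1 - \left(1 - \frac{|S|}{i\ell}\right)^{1/c},
\]
where the last equality uses that $S \cap O_i$ and $S \setminus O_i$ form a partition of $S$. The resulting expression depends only on $|S|$ and on the fixed parameters $i$, $\ell$, $c$, hence is independent of the random set $O_i$. This justifies defining $g(S) := 1 - (1 - |S|/(i\ell))^{1/c}$ without reference to $O_i$, as the paper proposes to do in the sentence following the observation.
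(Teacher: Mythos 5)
Your proof is correct and follows essentially the same one-line calculation as the paper: substitute the definition of $g_{O_i}$ into $G$ and use that $|S \cap O_i| + |S \setminus O_i| = |S|$, so the value depends only on $|S|$. (If anything, your version is slightly more faithful to the definition of $G$, since the paper's displayed computation drops the leading ``$1-$'' in an apparent typo.)
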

\begin{proof}
Observe that
\[
	g_{O_i}(S)
	=
	G\bigg(\frac{|S \cap O_i|}{i\ell}, \frac{|S \setminus O_i|}{i\ell}\bigg)
	=
	\bigg(1 - \frac{|S \cap O_i|}{i\ell} - \frac{|S \setminus O_i|}{i\ell}\bigg)^{1/c}
	=
	\bigg(1 - \frac{|S|}{i\ell}\bigg)^{1/c}
	\enspace,
\]
and the rightmost hand side is independent of $O_i$.
\end{proof}

The following lemma proves that the objective functions we have defined for the instances in the support of the distribution $\cH_i$ are valid objective functions for the problem we consider, and thus, $\cH_i$ is a distribution over valid instances of our problem. An analogous lemma can also be proved for the function $g$ by using Observation~\ref{obs:G_properties} instead of Lemmata~\ref{lem:F_non-negative}, \ref{lem:F_derivative_x} and~\ref{lem:F_derivative_y}.
\begin{lemma}
For every set $O_i \subseteq \cN_i$ of size $ih$, $f_{O_i}$ is a non-negative monotone submodular function.
\end{lemma}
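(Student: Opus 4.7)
The plan is to reduce the three properties of $f_{O_i}$ to the corresponding properties of $F$ that were already established in Lemmas \ref{lem:F_non-negative}, \ref{lem:F_derivative_x}, and \ref{lem:F_derivative_y}. Throughout, I will write $x(S) = |S \cap O_i|/(i\ell)$ and $y(S) = |S \setminus O_i|/(i\ell)$, so that $f_{O_i}(S) = F(x(S), y(S))$ and $(x(S), y(S)) \in \cD$ for every $S \subseteq \cN_i$. Observe that adding a single element $u$ to $S$ either raises $x(S)$ by $1/(i\ell)$ (if $u \in O_i$) or raises $y(S)$ by $1/(i\ell)$ (if $u \notin O_i$), while leaving the other coordinate unchanged.

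\textbf{Non-negativity and monotonicity.} Non-negativity is immediate from Lemma~\ref{lem:F_non-negative}. For monotonicity, fix $S \subseteq T \subseteq \cN_i$. Since $x(S) \leq x(T)$ and $y(S) \leq y(T)$, it suffices to show that $F$ is non-decreasing in each coordinate. This follows by writing $F(x(T), y(S)) - F(x(S), y(S)) = \int_{x(S)}^{x(T)} \partial_x F(\xi, y(S))\, d\xi$ and similarly for the $y$ coordinate, invoking Lemmas~\ref{lem:F_derivative_x} and~\ref{lem:F_derivative_y} to deduce non-negativity of the integrands wherever they are defined (the partial derivative exists except on a Lebesgue-null set of boundary/piecewise seams, which is harmless since $F$ is continuous and piecewise smooth, hence absolutely continuous along any axis-parallel line contained in $\cD$).

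\textbf{Submodularity.} Fix $S \subseteq T$ and $u \notin T$. I will handle the case $u \in O_i$; the case $u \notin O_i$ is symmetric. Let $h = 1/(i\ell)$. Using the integral representation again,
\begin{align*}
 f_{O_i}(S + u) - f_{O_i}(S) &= \int_0^h \partial_x F(x(S) + s,\, y(S))\, ds, \\
 f_{O_i}(T + u) - f_{O_i}(T) &= \int_0^h \partial_x F(x(T) + s,\, y(T))\, ds.
\end{align*}
Since $x(S) \leq x(T)$ and $y(S) \leq y(T)$, and Lemma~\ref{lem:F_derivative_x} states that $\partial_x F$ is non-increasing in both $x$ and $y$ (wherever defined), the integrand in the first line dominates the integrand in the second line pointwise in $s$, giving $f_{O_i}(u \mid S) \geq f_{O_i}(u \mid T)$ as required. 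The case $u \notin O_i$ is identical after swapping the roles of the two coordinates and invoking Lemma~\ref{lem:F_derivative_y} in place of Lemma~\ref{lem:F_derivative_x}.

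\textbf{Expected obstacle.} The only mildly delicate point is that the partial derivatives of $F$ are guaranteed by Lemmas~\ref{lem:F_derivative_x} and~\ref{lem:F_derivative_y} only in the open intervals $x \in (0, c)$ and $y \in (0, 1-c)$, whereas the grid points $x(S), y(S)$ can land on the boundary. However, the omitted values form a Lebesgue-null subset of each integration segment, and $F$ is continuous on all of $\cD$ and piecewise smooth, so absolute continuity along the segments of integration justifies the fundamental-theorem-of-calculus step. One also needs $h \leq c$ and $h \leq 1 - c$ so that the integration segments lie in $\cD$; both are immediate from $c = h/\ell$ with $h, \ell$ positive integers and $i \geq 1$, since $1/(i\ell) \leq 1/\ell \leq h/\ell = c$ and $1/(i\ell) \leq 1/\ell \leq (\ell-h)/\ell = 1-c$ (using that both $h$ and $\ell - h$ are positive integers, hence at least $1$).
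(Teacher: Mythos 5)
Your proposal is correct and follows essentially the same route as the paper: reduce non-negativity to Lemma~\ref{lem:F_non-negative}, and establish monotonicity and submodularity by writing marginal differences of $f_{O_i}$ as integrals of the partial derivatives of $F$ along axis-parallel segments, then invoking the non-negativity and coordinate-wise monotonicity of those derivatives from Lemmas~\ref{lem:F_derivative_x} and~\ref{lem:F_derivative_y}. The only cosmetic issue is that you overload the symbol $h$ (step size $1/(i\ell)$ versus the integer numerator of $c$); your extra care about boundary points and segment containment is fine but not needed beyond what the paper already does implicitly.
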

\begin{proof}
The non-negativity of $f_{O_i}$ follows from the fact that $F$ is non-negative by Lemma~\ref{lem:F_non-negative}. Consider now any two sets $S \subseteq T \subseteq \cN_i$. Then,
\begin{align*}
	f_{O_i}(T) - f_{O_i}(S)
	={} &
	F\bigg(\frac{|T \cap O_i|}{i\ell}, \frac{|T \setminus O_i|}{i\ell}\bigg) - F\bigg(\frac{|S \cap O_i|}{i\ell}, \frac{|S \setminus O_i|}{i\ell}\bigg)\\
	={} &
	\bigg[F\bigg(\frac{|T \cap O_i|}{i\ell}, \frac{|T \setminus O_i|}{i\ell}\bigg) - F\bigg(\frac{|S \cap O_i|}{i\ell}, \frac{|T \setminus O_i|}{i\ell}\bigg)\bigg] \\&\qquad+ \bigg[F\bigg(\frac{|S \cap O_i|}{i\ell}, \frac{|T \setminus O_i|}{i\ell}\bigg) - F\bigg(\frac{|S \cap O_i|}{i\ell}, \frac{|S \setminus O_i|}{i\ell}\bigg)\bigg]\\
	={} &
	\int_{|S \cap O_i| / (i\ell)}^{|T \cap O_i| / (i\ell)} \frac{\partial F(x, |T \setminus O_i| / (i\ell))}{\partial x} \text{d}x
	+
	\int_{|S \setminus O_i| / (i\ell)}^{|T \setminus O_i| / (i\ell)} \frac{\partial F(|S \cap O_i| / (i\ell), y)}{\partial y} \text{d}y
	\geq
	0
	\enspace,
\end{align*}
where the last equality holds since Lemma~\ref{lem:F_derivative_x} shows that $F(x, y)$ has a partial derivative with respect to $x$ whenever $x \in (0, c)$ and Lemma~\ref{lem:F_derivative_y} shows that $F(x, y)$ has a partial derivative with respect to $y$ whenever $y \in (0, 1 - c)$. The inequality holds since Lemma~\ref{lem:F_derivative_x} and~\ref{lem:F_derivative_y} also show that these partial derivatives are non-negative. This proves that $f_{O_i}$ is monotone.

Let $u$ be an arbitrary element of $\cN_i \setminus T$. If $u \in O_i$, then
\begin{align*}
	f_{O_i}(u \mid S)
	={} &
	F\bigg(\frac{|S \cap O_i| + 1}{i\ell}, \frac{|S \setminus O_i|}{i\ell}\bigg) - F\bigg(\frac{|S \cap O_i|}{i\ell}, \frac{|S \setminus O_i|}{i\ell}\bigg)\\
	={} &
	\int_0^{(i\ell)^{-1}} \frac{\partial F(x, |S \setminus O_i| / (i\ell))}{\partial x}\bigg|_{x = |S \cap O_i|/(i\ell) + \tau} \text{d}\tau\\
	\geq{} &
	\int_0^{(i\ell)^{-1}} \frac{\partial F(x, |T \setminus O_i| / (i\ell))}{\partial x}\bigg|_{x = |T \cap O_i|/(i\ell) + \tau} \text{d}\tau\\
	={} &
	F\bigg(\frac{|T \cap O_i| + 1}{i\ell}, \frac{|T \setminus O_i|}{i\ell}\bigg) - F\bigg(\frac{|T \cap O_i|}{i\ell}, \frac{|T \setminus O_i|}{i\ell}\bigg)
	=
	f_{O_i}(u \mid T)
	\enspace,
\end{align*}
where the inequality holds since Lemma~\ref{lem:F_derivative_x} shows that the partial derivative of $F$ with respect to $x$ is a non-increasing function of $x$ and $y$ whenever $x \in (0, c)$. Similarly, if $u \not \in O_i$, then
\begin{align*}
	f_{O_i}(u \mid S)
	={} &
	F\bigg(\frac{|S \cap O_i|}{i\ell}, \frac{|S \setminus O_i| + 1}{i\ell}\bigg) - F\bigg(\frac{|S \cap O_i|}{i\ell}, \frac{|S \setminus O_i|}{i\ell}\bigg)\\
	={} &
	\int_0^{(i\ell)^{-1}} \frac{\partial F(|S \cap O_i| / (i\ell), y)}{\partial y}\bigg|_{y = |S \setminus O_i|/(i\ell) + \tau} \text{d}\tau\\
	\geq{} &
	\int_0^{(i\ell)^{-1}} \frac{\partial F(|T \cap O_i| / (i\ell), y)}{\partial y}\bigg|_{y = |T \setminus O_i|/(i\ell) + \tau} \text{d}\tau\\
	={} &
	F\bigg(\frac{|T \cap O_i|}{i\ell}, \frac{|T \setminus O_i| + 1}{i\ell}\bigg) - F\bigg(\frac{|T \cap O_i|}{i\ell}, \frac{|T \setminus O_i|}{i\ell}\bigg)
	=
	f_{O_i}(u \mid T)
	\enspace,
\end{align*}
where the inequality holds since Lemma~\ref{lem:F_derivative_y} shows that the partial derivative of $F$ with respect to $y$ is a non-increasing function of $x$ and $y$ whenever $y \in (0, 1 - c)$. This proves that $f_{O_i}$ is submodular.
\end{proof}

Our next goal is to show that $\cH_i$ is a hard distribution. The first step towards this goal is to lower bound and upper bound the value of the optimal solution under this distribution.
\begin{observation} \label{obs:f_optimal_bound}
The value of the optimal solution for every instance in the support of the distribution $\cH_i$ is at least $1 - 2\delta'/c$, and no more than $1$.
\end{observation}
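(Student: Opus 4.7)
The proof has a trivial side and a computational side, and I expect no real obstacle in either.

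\textbf{Upper bound.} For every $S\subseteq\cN_i$, Lemma~\ref{lem:F_non-negative} gives $f_{O_i}(S) = F(|S\cap O_i|/(i\ell),\, |S\setminus O_i|/(i\ell)) \leq 1$, since both arguments stay in $\cD$. So the optimum is at most $1$, with no further work required.

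\textbf{Lower bound.} The natural candidate is $S = O_i$ itself: it is feasible because $|O_i| = ih$ matches the cardinality budget, and it plugs cleanly into the definition of $f_{O_i}$ since $|O_i \cap O_i|/(i\ell) = c$ and $|O_i\setminus O_i|/(i\ell) = 0$. So $f_{O_i}(O_i) = F(c, 0)$, and the plan is to evaluate $F(c,0)$ using the first branch of the definition of $F$ (which applies because $0 < (1-\delta')(1-c)$). Since $\delta' < c/4 < c$, the $\max$ evaluates to $(c-\delta')/(c(1-\delta'))$ and the $\min$ evaluates to $\delta'$, giving
\[
F(c,0) \;=\; 1 - \Bigl(1 - \tfrac{c-\delta'}{c(1-\delta')}\Bigr)(1-\delta')^{1/c}
       \;=\; 1 - \tfrac{\delta'(1-c)}{c}(1-\delta')^{\,1/c-1}.
\]

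\textbf{Finishing the lower bound.} Since $c \leq 1/2$ implies $1/c - 1 \geq 1 > 0$ and $\delta' \in (0,1)$, one has $(1-\delta')^{1/c-1} \leq 1$, and trivially $1-c \leq 1$, so $\tfrac{\delta'(1-c)}{c}(1-\delta')^{1/c-1} \leq \delta'/c \leq 2\delta'/c$. Combined with the displayed equality, this yields $f_{O_i}(O_i) \geq 1 - 2\delta'/c$, proving the lower bound.

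The only step that requires any care is confirming that the constraints on $\delta'$ (namely $\delta' \in (0, c/4)$) put us on the first branch of the definition of $F$ and make both the $\max$ and $\min$ simplify as above; everything else is algebra and monotonicity of $(1-\delta')^{1/c-1}$. I do not anticipate a genuine obstacle here.
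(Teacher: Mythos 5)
Your proposal is correct and follows essentially the same route as the paper: take the feasible solution $O_i$, evaluate $f_{O_i}(O_i)=F(c,0)$ via the first branch of $F$'s definition, and bound the resulting expression using $\delta' < c/4$; the upper bound is Lemma~\ref{lem:F_non-negative}, exactly as in the paper. The only cosmetic difference is that the paper lower-bounds $F(c,0)$ by replacing $\min\{c,\delta'\}$ with $0$ (getting $1-\tfrac{\delta'(1-c)}{c(1-\delta')}$), whereas you evaluate the $\min$ as $\delta'$ exactly and then bound $(1-\delta')^{1/c-1}\le 1$; both yield $1-2\delta'/c$.
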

\begin{proof}
Notice that $O_i$ is a feasible solution since its size is exactly $ih$. The value of this feasible solution according to the objective function $f_{O_i}$ is
\begin{align*}
	f_{O_i}(O_i)
	={} &
	F\bigg(\frac{|O_i \cap O_i|}{i\ell}, \frac{|O_i \setminus O_i|}{i\ell}\bigg)
	=
	F(c, 0)\\
	={} &
	1 - \bigg(1 - \max\bigg\{0, \frac{c - \frac{c}{1 - c} \cdot 0 - \delta'}{c\big(1 - \frac{1}{1 - c} \cdot 0 - \delta'\big)}\bigg\}\bigg)\bigg(1 - \min\Big\{c + 0, \frac{1}{1 - c} \cdot 0 + \delta'\Big\}\bigg)^{1/c}\\
	={} &
	1 - \bigg(1 - \max\bigg\{0, \frac{c - \delta'}{c(1 - \delta')}\bigg\}\bigg)(1 - \min\{c, \delta'\})^{1/c}\\
	\geq{} &
	1 - \bigg(1 - \frac{c - \delta'}{c(1 - \delta')}\bigg)(1 - 0)^{1/c}
	=
	\frac{c - \delta'}{c(1 - \delta')}
	=
	1 - \frac{\delta'(1 - c)}{c(1 - \delta')}
	\geq
	1 - \frac{2\delta'}{c}
	\enspace,
\end{align*}
where the last inequality holds since $\delta' < c / 4 < 1/2$. This completes the proof of the first part of the observation. The second part of the observation follows immediately from Lemma~\ref{lem:F_non-negative}.
\end{proof}

Consider now an arbitrary bicriteria deterministic algorithm $\alg$ for the problem we consider that uses a number of value oracle queries that is sub-exponential in the size of the ground set and has an infeasibility ratio of $\rho(c, \eps)$. We assume without loss of generality that $\alg$ returns one of the sets it makes a value oracle query on (if this is not the case, we can add such a query at the very end of $\alg$). Since $\alg$ is deterministic, when it is applied to an instance defined by the ground set $\cN_i$ and the objective function $g$, it makes queries on a fixed series of sets $Q_1, Q_2, \dotsc, Q_{m(i)}$. Notice that $m(i)$ must be a sub-exponential function of $i$ since $\alg$ makes a number of queries that is sub-exponential in the size $i\ell$ of the ground set $\cN_i$, and $\ell$ is a constant.

Let $\cE$ be the event that $f_{O_i}(Q_j) = g(Q_j)$ for every $j \in [m(i)]$.
\begin{lemma} \label{lem:follow_g_probability}
$\Pr[\cE] = 1 - o(1)$.
\end{lemma}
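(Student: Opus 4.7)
The plan is to exploit Observation~\ref{obs:function_equality}, which states that $F(x, y)$ agrees with $G(x, y)$ on the region $x \leq \frac{c}{1-c} y + \delta'$. First I would translate this region into a condition on a set $S \subseteq \cN_i$: plugging $x = |S \cap O_i|/(i\ell)$ and $y = |S \setminus O_i|/(i\ell) = (|S| - |S \cap O_i|)/(i\ell)$ into the inequality and clearing denominators, the condition $f_{O_i}(S) = g(S)$ is implied by
\[
	|S \cap O_i| \leq c \cdot |S| + (1 - c) i\ell \delta'
	\enspace.
\]
Since $Q_1, \dotsc, Q_{m(i)}$ is a fixed deterministic sequence (determined by running $\alg$ on the instance with objective $g$), it suffices to bound, for each fixed $Q_j$, the probability of the bad event $|Q_j \cap O_i| > c \cdot |Q_j| + (1 - c) i\ell \delta'$, and then take a union bound.

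Next I would apply a concentration inequality for the hypergeometric distribution. Because $O_i$ is a uniformly random subset of $\cN_i$ of size $ih = c \cdot i\ell$, the random variable $X_j \triangleq |Q_j \cap O_i|$ is hypergeometric with mean $c \cdot |Q_j|$. By a standard hypergeometric tail bound (e.g., Hoeffding/Chv\'{a}tal),
\[
	\Pr\bigl[X_j \geq c \cdot |Q_j| + t\bigr]
	\leq
	\exp\bigl(-2t^2/|Q_j|\bigr)
	\enspace.
\]
Taking $t = (1 - c) i\ell \delta'$ and using $|Q_j| \leq i\ell$ gives an upper bound of $\exp(-2(1 - c)^2 \delta'^2 \ell \cdot i)$, which is exponentially small in $i$ since $c$, $\ell$, and $\delta'$ are positive constants.

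Finally, I would close the argument with a union bound. The number of queries $m(i)$ is sub-exponential in $|\cN_i| = i\ell$, hence sub-exponential in $i$, so
\[
	\Pr[\neg \cE]
	\leq
	\sum_{j=1}^{m(i)} \Pr\bigl[X_j > c |Q_j| + (1 - c) i\ell \delta'\bigr]
	\leq
	m(i) \cdot e^{-\Omega(i)}
	=
	o(1)
	\enspace,
\]
which yields $\Pr[\cE] = 1 - o(1)$. The main technical point is simply translating the analytic condition ``$F = G$'' into the combinatorial tail event on a hypergeometric variable, and then observing that the gap $(1 - c) i\ell \delta'$ is linear in $i$ while the queries are only sub-exponentially many; there is no real obstacle, as Observation~\ref{obs:function_equality} does the heavy lifting and the rest is standard concentration plus a union bound.
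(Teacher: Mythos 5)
Your proposal is correct and follows essentially the same route as the paper: translate the failure of $f_{O_i}(Q_j) = g(Q_j)$ via Observation~\ref{obs:function_equality} into the tail event $|Q_j \cap O_i| > c|Q_j| + \delta' i\ell(1-c)$ for the hypergeometric variable $|Q_j \cap O_i|$, apply a Hoeffding/Chv\'{a}tal-type bound to get a probability exponentially small in $i$, and finish with a union bound over the sub-exponentially many fixed queries. The only (immaterial) difference is the normalization in the concentration bound ($|Q_j|$ versus $|O_i|$ in the exponent), which does not affect the conclusion.
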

\begin{proof}
For every $j \in [m(i)]$, let $\bar{\cE}_j$ be the event that $f_{O_i}(Q_j) \neq g_{O_i}(O_j)$. Since $g$ and $g_{O_i}$ are different names for the same function, we get by the union bound that
\[
	\Pr[\cE]
	\geq
	1 - \sum_{j = 1}^{m(i)} \Pr[\bar{\cE}_j]
	\geq
	1 - m(i) \cdot \max_{j = 1}^{m(i)} \Pr[\bar{\cE}_j]
	\enspace.
\]
Below we prove that the probability of $\bar{\cE}_j$ is exponentially small in $i$ (for every $j \in [m(i)]$), which implies the lemma since $m(i)$ is a sub-exponential function.

Let $X = |O_i \cap Q_j|$. By Observation~\ref{obs:function_equality}, the event $\bar{\cE}_j$ can only happen when
\[
	\frac{X}{i\ell}
	=
	\frac{|O_i \cap Q_j|}{i\ell}
	>
	\frac{c}{1 - c}\cdot\frac{|Q_j \setminus O_i|}{i\ell} + \delta'
	=
	\frac{c}{1 - c}\cdot\frac{|Q_j| - X}{i\ell} + \delta'
	\enspace.
\]
Rearranging the last inequality, we get that it is equivalent to $X > c|Q_j| + \delta' i\ell(1 - c)$. Notice now that $X$ has a hypergeometric distribution, and $\mathbb{E}[X] = |O_i| \cdot |Q_j| / |N_i| = (ih) \cdot |Q_j| / (i\ell) = c|Q_j|$. Thus, using a bound given by \cite{skala2013hypergeometric} (based on results due to \cite{chvatal1979tail} and \cite{hoeffding1963probability}), we get
\begin{align*}
	\Pr[\bar{\cE}_j]
	\leq{} &
	\Pr[X > c|Q_j| + \delta' i\ell(1 - c)]\\
	={} &
	\Pr\bigg[X > c|Q_j| + \frac{\delta' |O_i|(1 - c)}{c}\bigg]
	\leq
	e^{-2\big(\frac{\delta'(1 - c)}{c}\big)^2|O_i|}
	=
	e^{-\frac{2\delta'^2(1 - c)ih}{c^2}}
	\enspace.
\end{align*}
Notice that the expression on the rightmost side of the last inequality is indeed exponentially small in $i$ since $c$, $\ell$ and $\delta'$ are constants and $c < 1$.
\end{proof}
\begin{corollary} \label{cor:deterministic_fail_probability}
When given an instance drawn from $\cH_i$, $\alg$ returns with probability $1 - o(1)$ a solution set $S$ such that $f_{O_i}(S) = g(S)$.
\end{corollary}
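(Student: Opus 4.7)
The plan is to use the determinism of $\alg$ to couple its execution on the random $f_{O_i}$-instance with its (fixed) execution on the $g$-instance, and then to invoke Lemma~\ref{lem:follow_g_probability}. Condition on the event $\cE$, which, by that lemma, has probability $1 - o(1)$. On $\cE$, we have $f_{O_i}(Q_j) = g(Q_j)$ for every $j \in [m(i)]$, i.e., along the fixed query sequence generated by running $\alg$ on the $g$-instance, the value-oracle answers from the two instances coincide.

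Next I would argue by induction on the query index that the two executions actually produce the same query sequence. Since $\alg$ is deterministic, the first query it issues is the same regardless of the instance, namely $Q_1$; on $\cE$, the oracle for $f_{O_i}$ returns $g(Q_1)$, which is exactly the answer $\alg$ saw in the $g$-execution. Inductively, if the first $j-1$ queries and their answers agree in the two executions, then the $j$-th query issued by $\alg$ is a deterministic function of this common history and must be $Q_j$, whose answer again agrees by $\cE$. Hence on $\cE$ the $f_{O_i}$-execution of $\alg$ makes precisely the queries $Q_1, \ldots, Q_{m(i)}$ and returns the same output set $S$ as the $g$-execution.

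Finally, I would invoke the assumption (made immediately before Lemma~\ref{lem:follow_g_probability} without loss of generality) that $\alg$ always returns a set it has queried, so $S = Q_{j^*}$ for some $j^* \in [m(i)]$. On $\cE$ we then have $f_{O_i}(S) = f_{O_i}(Q_{j^*}) = g(Q_{j^*}) = g(S)$, which is the desired conclusion. Since $\Pr[\cE] = 1 - o(1)$, the corollary follows.

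I do not anticipate a real obstacle: once the coupling observation is in place, the argument is a direct application of Lemma~\ref{lem:follow_g_probability} together with the determinism of $\alg$ and the ``returns a queried set'' convention.
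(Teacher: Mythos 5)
Your proposal is correct and follows essentially the same argument as the paper: condition on the event $\cE$ from Lemma~\ref{lem:follow_g_probability}, use determinism of $\alg$ to couple the two executions (you spell out the induction on the query index, which the paper states more briefly as "follows the same execution path"), and conclude via the convention that the output is one of the queried sets. No gaps.
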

\begin{proof}
Given Lemma~\ref{lem:follow_g_probability}, to prove the corollary it suffices to show that whenever the event $\cE$ happens, $\alg$ returns a solution set $S$ such that $f_{O_i}(S) = g(S)$. Thus, let us assume in the rest of the proof that the event $\cE$ happens.

Since $\alg$ is a deterministic algorithm, when getting instance with either $f_{O_i}$ or $g$ as the objective function, it follows the same execution path as long as the queries to the value oracle that $\alg$ makes return the same answer for both objective functions. The event $\cE$ implies that this is the case for all the queries that $\alg$ makes when given $g$ as the objective, and thus, $\alg$ follows the same execution path when given either of the above objective functions (until this execution path terminates). This implies, in particular, that $\alg$ returns the same solution set when given either $f_{O_i}$ or $g$ as the objective function. Let us denote this set by $S$.

Recall that we assume that $\alg$ outputs one of the sets on which it queries the objective function. Since $S$ is the set that $\alg$ returns when given $g$ as the objective function, the set $S$ must be one of the sets $Q_1, Q_2, \dotsc, Q_{m(i)}$. However, the event $\cE$ guarantees that the equality $f_{O_i}(S) = g(S)$ holds for all these sets, which complete the proof of the corollary.
\end{proof}

We are now ready to upper bound the expected value of the solution $\alg$ returns given an instance drawn from the hard distribution we have defined.
\begin{lemma} \label{lem:ALG_expected_value}
When given an instance drawn from $\cH_i$, the expected value of the set returned by $\alg$ is at most $1 - \eps + o(1)$.
\end{lemma}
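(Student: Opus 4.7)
My plan is to combine the high-probability equality $f_{O_i}(S) = g(S)$ from Corollary~\ref{cor:deterministic_fail_probability} with the infeasibility guarantee on the cardinality of the output of $\alg$. Because $g$ is a simple function of $|S|$ alone, the infeasibility bound translates directly into a quantitative upper bound on $g(S)$, and this is precisely where the value $\eps$ enters.

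First, I would note that since $\alg$ is a bicriteria approximation algorithm with infeasibility ratio $\rho(c, \eps)$ and the cardinality budget of every instance in the support of $\cH_i$ is $ih$, the set $S$ returned by $\alg$ satisfies $|S| \leq \rho(c, \eps) \cdot ih$ deterministically. Using $h/\ell = c$ and the explicit formula $\rho(c, \eps) = (1 - \eps^c)/c$ valid for $c \leq 1/2$, this yields
\[
  \frac{|S|}{i\ell} \leq \rho(c, \eps) \cdot \frac{h}{\ell} = \rho(c, \eps) \cdot c = 1 - \eps^c,
\]
and consequently $1 - |S|/(i\ell) \geq \eps^c$.

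Second, I would use the fact, observed earlier, that $g(S)$ depends on $S$ only through $|S|$, and indeed $g(S) = 1 - (1 - |S|/(i\ell))^{1/c}$. Since $1/c \geq 1$, the function $t \mapsto 1 - (1 - t)^{1/c}$ is non-decreasing in $t \in [0, 1]$, so the bound from the previous step gives
\[
  g(S) \leq 1 - (\eps^c)^{1/c} = 1 - \eps.
\]

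Finally, I would combine these pieces. By Corollary~\ref{cor:deterministic_fail_probability}, with probability $1 - o(1)$ the output $S$ satisfies $f_{O_i}(S) = g(S)$, and so $f_{O_i}(S) \leq 1 - \eps$ in that event. In the complementary event, which has probability $o(1)$, Lemma~\ref{lem:F_non-negative} guarantees $f_{O_i}(S) \leq 1$. Taking expectations,
\[
  \mathbb{E}[f_{O_i}(S)] \leq (1 - o(1))(1 - \eps) + o(1) \cdot 1 = 1 - \eps + o(1),
\]
which is exactly the claim. No step here is genuinely hard; the only subtlety is making sure the monotonicity direction used to turn the cardinality bound into a bound on $g(S)$ is the right one, which is immediate from $1/c \geq 1$ for $c \leq 1/2$.
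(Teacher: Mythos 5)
Your proposal is correct and follows essentially the same route as the paper's proof: bound $|S| \leq \rho(c,\eps) \cdot ih = (1-\eps^c)\cdot i\ell$ via the infeasibility ratio, deduce $g(S) \leq 1-\eps$ from the explicit formula $g(S) = 1 - (1 - |S|/(i\ell))^{1/c}$, and combine with Corollary~\ref{cor:deterministic_fail_probability} and the bound $f_{O_i}(S) \leq 1$ through the law of total expectation. The only cosmetic difference is that you cite Lemma~\ref{lem:F_non-negative} for $f_{O_i}(S) \leq 1$ while the paper cites Observation~\ref{obs:f_optimal_bound}, which itself rests on that lemma, so this is immaterial.
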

\begin{proof}
Let $\cE'$ be the event that $f_{O_i}(S) = g(S)$, where $S$ is the output set of $\alg$. By Corollary~\ref{cor:deterministic_fail_probability}, $\Pr[\cE'] \geq 1 - o(1)$. Thus, by the law of total expectation,
\begin{align*}
	\bE[f_{O_i}(S)]
	={} &
	\Pr[\cE'] \cdot \bE[f_{O_i}(S) \mid \cE'] + \Pr[\bar{\cE}'] \cdot \bE[f_{O_i}(S) \mid \bar{\cE}']\\
	\leq{} &
	\bE[g(S) \mid \cE'] + o(1)
	\leq
	1 - \eps + o(1)
	\enspace,
\end{align*}
where the first inequality follows from the inequality $f_{O_i}(S) \leq 1$, which holds by Observation~\ref{obs:f_optimal_bound}, and the second inequality is true since the fact that $\alg$ has an infeasibility ratio of $\rho(c, \eps)$ implies that $|S| \leq \rho(c, \eps) \cdot ih = (1 - \eps^c) \cdot i\ell$, which implies
\[
	g(S)
	=
	g_{O_i}(S)
	=
	1 - \bigg(1 - \frac{|O_i \cap S|}{i\ell} - \frac{|S \setminus O_i|}{i\ell}\bigg)^{1/c}
	=
	1 - \bigg(1 - \frac{|S|}{i\ell}\bigg)^{1/c}
	\leq
	1 - \eps
	\enspace.
	\qedhere
\]
\end{proof}
\begin{corollary} \label{cor:approximation_ratio}
The approximation ratio of any algorithm $\alg_R$ for maximizing a non-negative monotone submodular function subject to a cardinality constraints that makes a sub-exponential number of value oracle queries and has an infeasibility ratio of $\rho(c, \eps)$ is at most $1 - \eps + 4\delta'/c + o(1)$.
\end{corollary}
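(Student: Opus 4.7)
The plan is to derive this corollary from Lemma~\ref{lem:ALG_expected_value} and Observation~\ref{obs:f_optimal_bound} in two short moves: first, to lift the deterministic guarantee of Lemma~\ref{lem:ALG_expected_value} to randomized algorithms by averaging over the internal coins, and second, to convert the resulting bound on the expected output value into a bound on the multiplicative approximation ratio using the lower bound on $f(\opt)$.

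For the first step, I would view the internal randomness of $\alg_R$ as selecting a deterministic algorithm from some distribution. Each such deterministic algorithm uses a sub-exponential number of value oracle queries on every input and has infeasibility ratio $\rho(c, \eps)$, so Lemma~\ref{lem:ALG_expected_value} yields $\bE_{O_i}[f_{O_i}(\alg_R)] \leq 1 - \eps + o(1)$ conditional on each fixed setting of the coins. Taking expectation also over the coins (Fubini) preserves this bound, and a standard averaging argument then produces a specific $O_i$ in the support of $\cH_i$ for which $\bE[f_{O_i}(\alg_R)] \leq 1 - \eps + o(1)$, where the expectation is now over the internal randomness of $\alg_R$ alone.

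For the second step, Observation~\ref{obs:f_optimal_bound} guarantees that the optimal value of every instance in the support of $\cH_i$ is at least $1 - 2\delta'/c$, which is positive since $\delta' < c/4$. Dividing the bound from the previous paragraph by this deterministic lower bound on $f_{O_i}(\opt)$ and using the elementary inequality $1/(1 - x) \leq 1 + 2x$ valid for $x \in [0, 1/2]$ with $x = 2\delta'/c$, I would obtain
\[
	\frac{\bE[f_{O_i}(\alg_R)]}{f_{O_i}(\opt)}
	\leq
	\frac{1 - \eps + o(1)}{1 - 2\delta'/c}
	\leq
	(1 - \eps + o(1))\bigg(1 + \frac{4\delta'}{c}\bigg)
	\leq
	1 - \eps + \frac{4\delta'}{c} + o(1)
	\enspace,
\]
which is exactly the claimed upper bound on the approximation ratio of $\alg_R$.

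I do not expect any serious obstacle here: the only point requiring some care is that $i$ is driven to infinity while $c$, $\eps$, and $\delta'$ remain fixed constants, so that the $o(1)$ terms are genuinely negligible and the infeasibility-ratio requirement on $\alg_R$ (which depends only on the constants $c$ and $\eps$) continues to hold uniformly across the entire family $\{\cH_i\}_{i \geq 1}$. Also, the sub-exponential query bound on $\alg_R$ should be interpreted as holding worst-case over its coins, so that the $o(1)$ error in Lemma~\ref{lem:ALG_expected_value} is uniform across the deterministic realizations produced by fixing the coins.
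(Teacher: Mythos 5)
Your proposal is correct and follows essentially the same route as the paper: the paper also invokes Yao's principle (which you spell out as Fubini plus averaging over the coins) together with Lemma~\ref{lem:ALG_expected_value} to find a single instance of $\cH_i$ on which $\alg_R$'s expected value is at most $1 - \eps + o(1)$, and then divides by the lower bound $1 - 2\delta'/c$ from Observation~\ref{obs:f_optimal_bound}, using $2\delta'/c \leq 1/2$ to absorb the ratio into $1 - \eps + 4\delta'/c + o(1)$. Your closing remarks about uniformity of the $o(1)$ term and the worst-case query bound over the coins match the paper's implicit assumptions, so there is nothing to fix.
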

\begin{proof}
Since $\alg_R$ can be viewed as a distribution over deterministic algorithm that make a sub-exponential number of value oracle queries, by Yao's principle, Lemma~\ref{lem:ALG_expected_value} implies that there exists an instance $I$ in the support of $\cH_i$ such that, when given $i$, $\alg_R$ outputs a solution of expected value at most $1 - \eps + o(1)$. In contrast, Observation~\ref{obs:f_optimal_bound} shows that the instance $I$ has a solution of value at least $1 - 2\delta'/c$. Thus, the approximation ratio of $\alg_R$ is no better than
\[
	\frac{1 - \eps + o(1)}{1 - 2\delta'/c}
	\leq
	1 - \eps + 4\delta'/c + o(1)
	\enspace.
\]
where the inequality holds since $2\delta'/c \leq 1/2$.
\end{proof}

The part of Theorem~\ref{thm:monotone_cardinality_inapproximability} regarding $c \in (0, 1/2]$ now follows from Corollary~\ref{cor:approximation_ratio} by choosing a large enough value for $i$ so that $o(1) < \delta/2$, and setting $\delta'$ in our construction to $c\delta/8$ (which implies that $4\delta'/c = \delta/2$). Notice that since $\delta$ and $c$ are constants, so is $\delta'$.

\subsubsection{Proof of Theorem~\texorpdfstring{\ref{thm:monotone_cardinality_inapproximability}}{\ref*{thm:monotone_cardinality_inapproximability}} for \texorpdfstring{$c > 1/2$}{c > 1/2}} \label{sssc:cardinality_inapproximability_large_c}

In this section, we prove Theorem~\ref{thm:monotone_cardinality_inapproximability} for $c > 1/2$. The proof can be viewed as a generalization of the proof used by \cite{filmus2025separating} for the special case of single-criteria algorithms. Formally, the proof is based on the construction done in Section~\ref{sssc:cardinality_inapproximability_small_c}. For readability, we begin this section by repeating the necessary properties of this construction (or more accurately, the properties of the special case of this construction obtained by plugging $c = 1/2$). By Section~\ref{sssc:cardinality_inapproximability_small_c}, for every integer $i \geq 1$ and constant $\delta' \in (0, 1/8)$, there exists a distribution $\cH_i$ over instances of the problem of maximizing a non-negative monotone submodular function subject to a cardinality constraint such that
\begin{itemize}
	\item the ground set $\cN_i$ of all these instances has a size $2i$, and a feasible set is allowed to contain up to $i$ elements of this ground set.
	\item the optimal solution for all the instances in $\cN_i$ has a value between $1 - 4\delta'$ and $1$, and $1$ is also an upper bound on the value of any (even infeasible) set.
	\item for every deterministic algorithm $\alg$ that uses a sub-exponential number of value oracle queries, given an instance drawn from $\cH_i$, $\alg$ is guaranteed to output a set $S$ whose value (according to the objective function of the drawn instance) is $1 - \big(1 - \frac{|S|}{2i}\big)^2$ with probability at least $1 - o(1)$, where the $o(1)$ terms diminishes with $i$.
\end{itemize}

Fix now an arbitrary rational value $c \in (1/2, 1)$. Since $c$ is rational, it can be represented as the ratio $h / \ell$ for two positive integers $h$ and $\ell$ (notice that $h < \ell < 2h$). Algorithm~\ref{alg:drawing_c} describes a way to randomly draw an instance of the problem of maximizing a monotnone submodular function subject to a cardinality constraint of density $c$. Let $\cH'_i$ denote the distribution over instances defined by this algorithm. The value $r$ used in Algorithm~\ref{alg:drawing_c} is real number in the range $[0, 1]$ to be determined later.

\begin{algorithm} 
\caption{Drawing from the distribution $\cH'_i$.} \label{alg:drawing_c}
Draw an instance out of $\cH_{(\ell - h) i}$ consisting of the ground set $\cN_{(\ell - h)i}$ and an objective function $f$.\\
Define a new ground set $\cN'_i$ of size $\ell i$ by adding $(2h - \ell) i$ new elements to $\cN_{(\ell - h) i}$ (recall that $\cN_{(\ell - h) i}$ has a size of $2(\ell - h) i$).\\
Define a function $g\colon 2^{\cN'_i} \to \nnR$ as follows. For every $S \subseteq \cN'_i$,
\[
	g(S)
	\triangleq
	f(S \cap \cN_{(\ell - h)i}) + \frac{r}{(\ell - h)i} \cdot |S \setminus \cN_{(\ell - h)i}|
	\enspace.
\]\\
\Return an instance with the ground set $\cN'_i$ and the objective function $g$ in which a feasible set can contain up to $h i$ elements.
\end{algorithm}

\begin{observation}
In every instance drawn from $\cH'_i$, the cardinality constraint has a density of $c$ and the objective function $g$ is non-negative, mononotone and submodular.
\end{observation}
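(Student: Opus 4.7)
The plan is straightforward verification, so I would just write out the three properties in order after decomposing $g$ into its two natural summands.

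First, for the density, I would simply compute $|\cN'_i|/i = \ell$ by construction (the $2(\ell-h)i$ elements of $\cN_{(\ell-h)i}$ plus the $(2h-\ell)i$ added ones), and the budget is $hi$, so the density equals $hi / (\ell i) = h/\ell = c$, exactly as desired.

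Next, I would split $g = g_1 + g_2$, where $g_1(S) \triangleq f(S \cap \cN_{(\ell - h)i})$ and $g_2(S) \triangleq \frac{r}{(\ell - h)i} \cdot |S \setminus \cN_{(\ell - h)i}|$, and verify the three properties for each piece. Since non-negativity, monotonicity and submodularity are each preserved under addition, this suffices. For $g_2$, the coefficient $r/((\ell - h)i)$ is non-negative because $r \in [0,1]$ and $\ell > h$, so $g_2$ is a non-negative linear (hence modular, hence submodular) set function, and it is clearly monotone. For $g_1$, non-negativity is inherited directly from $f \geq 0$ proven in Section~\ref{sssc:cardinality_inapproximability_small_c}. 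Monotonicity follows because $S \subseteq T$ implies $S \cap \cN_{(\ell-h)i} \subseteq T \cap \cN_{(\ell-h)i}$, and $f$ is monotone. For submodularity, given $S \subseteq T \subseteq \cN'_i$ and $u \in \cN'_i \setminus T$, I would split into two cases: if $u \notin \cN_{(\ell-h)i}$ then both marginals $g_1(u \mid S)$ and $g_1(u \mid T)$ vanish; if $u \in \cN_{(\ell-h)i}$ then the marginals of $g_1$ at $S$ and $T$ are exactly the marginals of $f$ at $S \cap \cN_{(\ell-h)i}$ and $T \cap \cN_{(\ell-h)i}$, so submodularity of $g_1$ follows from submodularity of $f$ applied to these two nested subsets of $\cN_{(\ell-h)i}$.

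There is no real obstacle here — the statement is essentially a bookkeeping check that (i) the new elements contribute a non-negative modular term, and (ii) restricting a monotone submodular function to a subset of the ground set (by intersecting inputs with that subset) preserves all the required properties. I would not belabor the calculation, just cite the decomposition and the one-line arguments for each of the two summands.
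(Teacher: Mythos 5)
Your proposal is correct and follows essentially the same route as the paper: decompose $g$ into the linear term on the new elements and the term $f(S \cap \cN_{(\ell - h)i})$, verify each piece separately, and conclude by closure of the three properties under addition (the paper merely asserts that the second summand inherits the properties of $f$, whereas you spell out the short case analysis). The density computation is likewise identical.
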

\begin{proof}
Since the size of the ground set $\cN'_i$ is $\ell i$, and every instance drawn from $\cH'_i$ allows the selection of up to $hi$ elements, the density of the cardinality constraint of this instance is $(hi) / (\ell i) = h / \ell = c$. The objective function $g$ of the instance is the sum of two functions. The first function is $r \cdot |S \setminus \cN_{(\ell - h)i}|$, which is a non-negative linear function, and thus, also monotone and submodular. The other function$f(S \cap \cN_{(\ell - h)i})$ is non-negative, monotone and submodular because $f$ has these properties. This completes the proof of the observation because non-negativity, monotonicity and submodularity are all close under addition.
\end{proof}

Let $\alg'$ be an arbitrary determinist bicriteria algorithm for maximizing a non-negative monotone submodular function subject to a cardinality constraint of density $c$ that uses a sub-exponential number of value oracle queries and has an infeasibility ratio of $\rho(c, \eps)$ for some value $\eps \in (0, 1)$. We would like to study the performance of $\alg'$ on an instance $\cI$ drawn from $\cH'_i$. Let $\cE$ be event that the output set $S$ returned by $\alg'$ when given $\cI$ obeys $f(S \cap \cN_{(\ell - h)i}) = 1 - \big(1 - \frac{|S|}{2i}\big)^2$. Note that the calculation of $S \cap \cN_{(\ell - h)i}$ can be viewed as a deterministic algorithm that gets as input an instance drawn out of $\cH_{(\ell - h)i}$ and outputs a subset of $\cN_{(\ell - h)i}$. This calculation involves only a sub-exponential number of value oracle queries to $f$ (because each value oracle query to $g$ used by $\alg'$ can be evaluated using a single value oracle query to $f$), and thus, by the properties of $\cH_{(\ell - h)i}$, it must hold that $\Pr[\cE] = 1 - o(1)$.

\begin{lemma} \label{lem:conditioned_bound}
Conditioned on $\cE$, the output set $S$ of $\alg'$ obeys $g(S) \leq 1 + r\big[r - 2 + \frac{c}{1 - c} \cdot \rho(c, \eps)\big]$.
\end{lemma}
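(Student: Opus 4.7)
The plan is to reduce the desired inequality to a completing-the-square argument after substituting the structural information we have about $g(S)$.

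First, I would unpack $g(S)$ using the definition from Algorithm~\ref{alg:drawing_c}, namely
\[
  g(S) = f(S \cap \cN_{(\ell-h)i}) + \tfrac{r}{(\ell-h)i}\cdot|S \setminus \cN_{(\ell-h)i}|,
\]
and introduce the two normalized quantities $u \triangleq |S \cap \cN_{(\ell-h)i}|/[2(\ell-h)i] \in [0,1]$ and $v \triangleq |S \setminus \cN_{(\ell-h)i}|/[(\ell-h)i] \ge 0$. Conditioning on the event $\cE$ then yields the clean expression $g(S) = 1 - (1-u)^2 + rv$.

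Next, I would translate the infeasibility guarantee $|S| \le \rho(c,\eps)\cdot hi$ into a constraint on $u$ and $v$. Since $|S \cap \cN_{(\ell-h)i}| = 2(\ell-h)i\cdot u$ and $|S \setminus \cN_{(\ell-h)i}| = (\ell-h)i\cdot v$, and $c = h/\ell$ gives $h/(\ell-h) = c/(1-c)$, the constraint becomes
\[
  2u + v \;\le\; \rho(c,\eps)\cdot\tfrac{c}{1-c} \;=:\; \mu.
\]

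Finally, the heart of the proof is to show $1 + r(r - 2 + \mu) - g(S) \ge 0$. I would do this by rewriting the left-hand side as
\[
  \bigl((1-u) - r\bigr)^{2} + r\,(\mu - 2u - v),
\]
which is a straightforward algebraic identity (expand $((1-u)-r)^2 = (1-u)^2 - 2r(1-u) + r^2$ and verify the $2ru$ terms cancel against the $-2ru$ produced by $r(\mu - 2u - v)$). Both summands are non-negative: the first is a square, and the second is non-negative because $r \ge 0$ and $\mu - 2u - v \ge 0$ by the infeasibility constraint. This yields the claimed bound.

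There is no serious obstacle here; the only thing to watch is the bookkeeping between the sizes of $\cN_{(\ell-h)i}$ and $\cN'_i$ so that the normalization constants line up, and the verification that $r \ge 0$ together with the slack $\mu - 2u - v \ge 0$ is what licenses the ``completing the square'' step (so one does not need any further case analysis on whether the constraint $2u+v \le \mu$ is tight or slack).
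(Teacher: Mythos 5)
Your proof is correct and is essentially the paper's own argument in different packaging: the paper substitutes the infeasibility bound $|S| \leq \rho(c, \eps) \cdot hi$ into $g(S)$ and then maximizes the resulting quadratic in $|S \cap \cN_{(\ell - h)i}|$ (with maximizer $2i(\ell - h)(1 - r)$), which is exactly the completed square $\bigl((1-u)-r\bigr)^2 + r(\mu - 2u - v)$ that you write out explicitly. Your identity just replaces the derivative/maximization step with direct algebra, using the same two facts ($r \geq 0$ and the infeasibility constraint), so no gap.
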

\begin{proof}
The condition $\cE$ implies that
\begin{align*}
	g(S)
	={} &
	f(S \cap \cN_{(\ell - h)i}) + \frac{r}{(\ell - h)i} \cdot |S \setminus \cN_{(\ell - h)i}|\\
	={} &
	1 - \bigg(1 - \frac{|S \cap \cN_{(\ell - h)i}|}{2(\ell - h)i}\bigg)^2 + \frac{r}{(\ell - h)i} \cdot [|S| - |S \cap \cN_{(\ell - h)i}|]\\
	\leq{} &
	1 - \bigg(1 - \frac{|S \cap \cN_{(\ell - h)i}|}{2(\ell - h)i}\bigg)^2 + \frac{r}{(\ell - h)i} \cdot [\rho(c, \eps) \cdot h i - |S \cap \cN_{(\ell - h)i}|]
	\enspace,
\end{align*}
where the inequality holds since the infeasibility ratio of $\alg'$ is $\rho(c, \eps)$ by definition. Therefore, if we denote
\[
	h(x)
	\triangleq
	1 - \bigg(1 - \frac{x}{2(\ell - h)i}\bigg)^2 + \frac{r}{(\ell - h)i}[\rho(c, \eps) \cdot h i - x]
	\enspace,
\]
then
\begin{align*}
	g(S)
	\leq{} &
	h(|S \cap \cN_{(\ell - h)i}|)
	\leq
	\max_{x \geq 0} h(x)
	=
	h(2i(\ell - h)(1 - r))\\
	={} &
	1 - r^2 + \frac{r}{(\ell - h)i} \cdot [\rho(c, \eps) \cdot h i - 2i(\ell - h)(1 - r)]
	=
	1 + r\bigg[r - 2 + \frac{c}{1 - c} \cdot \rho(c, \eps)\bigg]
	\enspace,
\end{align*}
where the first equality holds since the derivative of $h$ is $\frac{1 - x/[2(\ell - h)i] - r}{(\ell - h)i}$, which is a decreasing function of $x$ taking the value $0$ at $x = 2i(\ell - h)(1 - r)$.
\end{proof}
\begin{corollary} \label{cor:ALG_prime_upper_bound}
The output set $S$ of $\alg'$ obeys $\bE[g(S)] \leq 1 + r\big[r - 2 + \frac{c}{1 - c} \cdot \rho(c, \eps)\big] + o(1)$. Notice that since $\alg'$ is a deterministic algorithm, the expectation is over the distribution $\cH'_{i}$ from which the input instance for $\alg'$ is drawn.
\end{corollary}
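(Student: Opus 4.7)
The plan is to apply the law of total expectation, conditioning on the event $\cE$ that $f(S \cap \cN_{(\ell-h)i}) = 1 - (1 - |S|/(2i))^2$. Recall that we have already established $\Pr[\cE] = 1 - o(1)$ via the properties of the distribution $\cH_{(\ell - h)i}$, and Lemma~\ref{lem:conditioned_bound} already provides the required upper bound on $g(S)$ conditioned on $\cE$. So the only remaining task is to show that $g(S)$ is bounded by a constant (independent of $i$) even when $\cE$ fails, so that the contribution of the bad event is absorbed into the $o(1)$ term.

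For the unconditional bound, I would argue as follows. By the properties of $\cH_{(\ell - h)i}$ quoted at the start of Section~\ref{sssc:cardinality_inapproximability_large_c}, every set has $f$-value at most $1$, so $f(S \cap \cN_{(\ell - h)i}) \leq 1$ always. For the linear part of $g$, the number of ``new'' elements is at most $|\cN'_i \setminus \cN_{(\ell - h)i}| = (2h - \ell)i$, so
\[
	\frac{r}{(\ell - h)i} \cdot |S \setminus \cN_{(\ell - h)i}|
	\;\leq\;
	\frac{r(2h - \ell)i}{(\ell - h)i}
	\;=\;
	r \cdot \frac{2c - 1}{1 - c}
	\enspace,
\]
using $c = h/\ell$. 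Combining, $g(S) \leq 1 + r \cdot \frac{2c-1}{1-c}$, which is a constant since $r$ and $c$ are constants.

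Putting these pieces together,
\[
	\bE[g(S)]
	=
	\Pr[\cE] \cdot \bE[g(S) \mid \cE] + \Pr[\bar{\cE}] \cdot \bE[g(S) \mid \bar{\cE}]
	\leq
	\bE[g(S) \mid \cE] + o(1) \cdot \Big(1 + r \cdot \tfrac{2c - 1}{1 - c}\Big)
	\enspace,
\]
and plugging in the bound from Lemma~\ref{lem:conditioned_bound} on $\bE[g(S) \mid \cE]$ and absorbing the second term into $o(1)$ yields the claimed inequality.

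There is no genuine obstacle here; the step that requires the slightest bit of care is verifying that the worst-case value of $g$ is bounded by a constant independent of $i$, so that the low-probability event $\bar{\cE}$ contributes only $o(1)$ to the expectation. Once this is in hand, the corollary follows immediately from Lemma~\ref{lem:conditioned_bound} and the law of total expectation.
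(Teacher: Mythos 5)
Your proposal is correct and matches the paper's proof essentially verbatim: both use the law of total expectation over $\cE$, bound the good case via Lemma~\ref{lem:conditioned_bound}, and absorb the bad case using the unconditional bound $g(S) \leq 1 + \frac{r(2h-\ell)}{\ell - h}$ (your expression $1 + r\cdot\frac{2c-1}{1-c}$ is the same quantity written via $c = h/\ell$), which is a constant independent of $i$.
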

\begin{proof}
Recall that there are only $(2h - \ell)i$ elements in $\cN'_i$ that do not belong to $\cN_{(\ell - h)i}$. Therefore, it always holds that
\[
	g(S)
	=
	f(S \cap \cN_{(\ell - h)i}) + \frac{r}{(\ell - h)i} \cdot |S \setminus \cN_{(\ell - h)i}|
	\leq
	1 + \frac{r(2h - \ell)}{\ell - h}
	\leq
	1 + \frac{2h - \ell}{\ell - h}
	\enspace.
\]
Hence, by the law of total expectation and Lemma~\ref{lem:conditioned_bound},
\begin{align*}
	\bE[g(S)]
	={} &
	\Pr[\cE] \cdot \bE[g(S) \mid \cE] + \Pr[\bar{\cE}] \cdot \bE[g(S) \mid \bar{\cE}]\\
	\leq{} &
	1 + r\bigg[r - 2 + \frac{c}{1 - c} \cdot \rho(c, \eps)\bigg] + o(1) \cdot \bigg[1 + \frac{2h - \ell}{\ell - h}\bigg]\\
	={} &
	1 + r\bigg[r - 2 + \frac{c}{1 - c} \cdot \rho(c, \eps)\bigg] + o(1)
	\enspace,
\end{align*}
where the last equality holds since $o(1)$ diminishes as $i$ grows, while $h$ and $\ell$ are constants independent of $i$.
\end{proof}

We now denote by $\alg'_R$ an arbitrary (possibly randomized) bicriteria algorithm for maximizing a non-negative monotone submodular function subject to a cardinality constraint of density $c$ that uses a sub-exponential number of value oracle queries and has an infeasibility ratio of $\rho(c, \eps)$. By Yao's principle, the last corollary implies that there exists an instance $\cI$ in the support of the distribution of $\cH'_i$ on which $\alg'_R$ outputs a set $S$ such that $\bE[g(S)] \leq 1 + r\big[r - 2 + \frac{c}{1 - c} \cdot \rho(c, \eps)\big] + o(1)$. To get from this inequality a bound on the approximation ratio of $\alg'_R$, we need to lower bound the value of the optimal solution of $\cI$.
\begin{observation} \label{obs:prime_opt_bound}
Every instance in the support of $\cH'_i$ has a feasible solution whose value is at least $1 - 4\delta' + \frac{r(2c - 1)}{1 - c}$, and thus, the approximation ratio of $\alg_R$ is at most $\frac{1 + r[r - 2 + \frac{c}{1 - c} \cdot \rho(c, \eps)] + o(1)}{1 - 4\delta' + r(2c - 1) / (1 - c)}$.
\end{observation}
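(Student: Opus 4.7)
The plan is to exhibit an explicit feasible set achieving the claimed value by combining a near-optimal solution of the underlying $\cH_{(\ell-h)i}$-instance with every one of the newly added elements; the approximation ratio bound will then follow directly by dividing the expected output value from Corollary~\ref{cor:ALG_prime_upper_bound} by this lower bound, invoking Yao's principle.

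First, I would fix any instance in the support of $\cH'_i$ and let $f$ be the objective function of the underlying $\cH_{(\ell-h)i}$-instance used to build it. The second recalled property of $\cH_{(\ell-h)i}$ guarantees the existence of a set $T \subseteq \cN_{(\ell-h)i}$ with $|T| \leq (\ell-h)i$ and $f(T) \geq 1 - 4\delta'$; because $f$ is monotone, we may extend $T$ if needed so that $|T| = (\ell-h)i$ without decreasing its value.

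Next, I would define $S \triangleq T \cup (\cN'_i \setminus \cN_{(\ell-h)i})$. Since $|\cN'_i| = \ell i$ and $|\cN_{(\ell-h)i}| = 2(\ell-h)i$, the set of new elements $\cN'_i \setminus \cN_{(\ell-h)i}$ has exactly $(2h-\ell)i$ members, so $|S| = (\ell-h)i + (2h-\ell)i = hi$, which is feasible. Using $S \cap \cN_{(\ell-h)i} = T$ and the definition of $g$,
\[
g(S) \;=\; f(T) + \frac{r}{(\ell-h)i} \cdot (2h-\ell)i \;\geq\; 1 - 4\delta' + \frac{r(2h - \ell)}{\ell - h} \;=\; 1 - 4\delta' + \frac{r(2c-1)}{1 - c},
\]
where the last equality uses $c = h/\ell$, which gives $\frac{2h - \ell}{\ell - h} = \frac{2c - 1}{1 - c}$. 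This establishes the first half of the observation.

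For the second half, by Yao's principle applied to Corollary~\ref{cor:ALG_prime_upper_bound}, there exists an instance in the support of $\cH'_i$ on which $\alg'_R$ outputs a set of expected value at most $1 + r[r - 2 + \tfrac{c}{1-c}\rho(c,\eps)] + o(1)$; but on that very instance the construction above produces a feasible solution of value at least $1 - 4\delta' + r(2c-1)/(1-c)$, and dividing the two quantities yields the claimed bound on the approximation ratio. There is no real obstacle in this proof; the only combinatorial point worth noting is that the count $(2h-\ell)i$ of brand-new elements added by Algorithm~\ref{alg:drawing_c} precisely matches the slack $hi - (\ell-h)i$ between the budget of the constructed instance and the budget of the underlying hard instance, which is what makes the straightforward extension of $T$ simultaneously feasible and maximal.
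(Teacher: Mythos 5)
Your proof is correct and follows essentially the same route as the paper: take the near-optimal set of the underlying $\cH_{(\ell-h)i}$ instance, add all $(2h-\ell)i$ new elements to obtain a feasible set of size $hi$, evaluate $g$ on it, and combine with Yao's principle and Corollary~\ref{cor:ALG_prime_upper_bound} for the ratio bound. The only cosmetic difference is that you pad $T$ to size $(\ell-h)i$ via monotonicity, whereas the paper takes a set of that size directly; this changes nothing.
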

\begin{proof}
By the definition of $\cH_{(\ell - h)i}$, there is a set $S \subseteq \cN_{(\ell - h)i}$ of size $(\ell - h)i$ such that $f(S) \geq 1 - 4\delta'$. Adding to $S$ the $(2h - \ell)i$ elements of $\cN'_i \setminus \cN_{(\ell - h)i}$, we get a set $S'$ of size $hi$. Notice that $S'$ is a feasible solution for any instance of $\cH'_i$. Furthermore, the value of $S'$ is
\[
	g(S')
	=
	f(S' \cap \cN_{(\ell - h)i}) + \frac{r}{i(\ell - h)} \cdot |S' \setminus \cN_{(\ell - h)i}|
	=
	f(S) + \frac{r}{i(\ell - h)} \cdot (2h - \ell)i
	\geq
	1 - 4\delta' + \frac{r(2c - 1)}{1 - c}
	\enspace.
	\qedhere
\]
\end{proof}

%At this point we define
%\begin{align*}
	%s(c)
	%\triangleq{}&
	%\sqrt{c^2 - c(2c - 1)\rho(c, \eps)}
	%=
	%\sqrt{c^2 - (2c - 1)[1 - 2(1 - c)\sqrt{\eps} - \eps(2c - 1)]}\\
	%={}&
	%\sqrt{(1 - c)^2 + 2(1 - c) (2c - 1)\sqrt{\eps} + \eps(2c - 1)^2}
	%=
	%1 - c + (2c - 1)\sqrt{\eps}
	%\enspace.
%\end{align*}

At this point, we set $r = \sqrt{\eps}$. % to be $\frac{s(c) + c - 1}{2c - 1}$, where $s(c) \triangleq \sqrt{c^2 - c(2c - 1)\rho(c, \eps)}$.
Notice that $r$ indeed belongs to the range $[0, 1]$, as was promised above.

\begin{lemma} \label{lem:competitive_delta}
The competitive ratio of $\alg'_R$ is no better than $1 - \eps + 8\delta' + o(1)$.
\end{lemma}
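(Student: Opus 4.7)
The plan is to prove this lemma by direct algebraic manipulation of the bound given in Observation~\ref{obs:prime_opt_bound} after substituting $r = \sqrt{\eps}$ and the explicit formula $\rho(c,\eps) = \frac{1 - 2(1-c)\sqrt{\eps} - \eps(2c-1)}{c}$ (for $c \geq 1/2$). The numerator and denominator of the ratio $\frac{1 + r[r - 2 + \frac{c}{1-c}\rho(c,\eps)] + o(1)}{1 - 4\delta' + r(2c-1)/(1-c)}$ should simplify so that a common factor of $1 + \frac{(2c-1)\sqrt{\eps}}{1-c}$ appears, which will cancel almost entirely when we form the ratio.

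Concretely, I would first compute the bracketed expression inside the numerator. Setting $a \triangleq \frac{(2c-1)\sqrt{\eps}}{1-c} \geq 0$, the calculation should yield
\[
r\Bigl[r - 2 + \tfrac{c}{1-c}\rho(c,\eps)\Bigr] = -\eps + a(1 - \eps),
\]
so the numerator becomes $(1 - \eps)(1 + a) + o(1)$, while the denominator equals $(1 + a) - 4\delta'$. The ratio therefore becomes
\[
\frac{(1-\eps)(1 + a) + o(1)}{(1 + a) - 4\delta'} = (1 - \eps)\Bigl[1 + \frac{4\delta'}{1 + a - 4\delta'}\Bigr] + o(1).
\]

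Since $a \geq 0$ and $\delta' < 1/8$, we have $1 + a - 4\delta' > 1/2$, which gives $\frac{4\delta'}{1 + a - 4\delta'} < 8\delta'$. Hence the displayed expression is strictly less than $(1-\eps)(1 + 8\delta') + o(1) \leq 1 - \eps + 8\delta' + o(1)$, which is the desired bound on the competitive ratio.

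The only mild obstacle is to verify the identity $r[r - 2 + \frac{c}{1-c}\rho(c,\eps)] = -\eps + a(1-\eps)$; this is routine but requires care, since it amounts to checking that
\[
\sqrt{\eps} - 2 + \tfrac{1}{1-c}\bigl(1 - 2(1-c)\sqrt{\eps} - \eps(2c-1)\bigr) = -\sqrt{\eps} + \tfrac{(2c-1)(1-\eps)}{1-c},
\]
which follows by combining the constant terms $-2 + \frac{1}{1-c} = \frac{2c-1}{1-c}$ and collecting the $\eps$ contribution as $-\frac{\eps(2c-1)}{1-c}$, so that $\frac{2c-1}{1-c}(1-\eps) = \frac{(2c-1)(1-\eps)}{1-c}$ emerges. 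Once this identity is verified, the rest of the argument is a two-line bound using $1 + a - 4\delta' > 1/2$ and absorbing the term $-8\delta'\eps$ into $8\delta'$.
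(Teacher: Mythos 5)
Your proposal is correct and follows essentially the same route as the paper: substitute $r=\sqrt{\eps}$ and the $c\geq 1/2$ formula for $\rho(c,\eps)$ into the bound of Observation~\ref{obs:prime_opt_bound}, simplify the numerator to $(1-\eps)\bigl[1-c+\sqrt{\eps}(2c-1)\bigr]+o(1)$ (your $(1-\eps)(1+a)+o(1)$ is just this divided by $1-c$), and use $\delta'<1/8$ to lower bound the denominator and extract the $8\delta'$ slack. Your algebraic identity checks out, so no changes are needed.
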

\begin{proof}
%By Corollary~\ref{cor:ALG_prime_upper_bound} and Observation~\ref{obs:prime_opt_bound}, the competitive ratio of $ALG'$ on a random instance drawn out of $\cH_i$  is no better than
%\[
	%\frac{1 + r\big[r - 2 + \frac{c}{1 - c} \cdot \rho(c, \eps)\big] + o(1)}{1 - 4\delta + r(2c - 1) / (1 - c)}
	%\enspace.
%\]
Plugging the values of $r$ and $\rho(c, \eps)$ into the bound given by Observation~\ref{obs:prime_opt_bound}, this bound becomes
\begin{align*}
	&
	\frac{1 + \sqrt{\eps}\big[\sqrt{\eps} - 2 + \frac{1 - 2(1 - c)\sqrt{\eps} - \eps(2c - 1)}{1 - c}\big] + o(1)}{1 - 4\delta' + \sqrt{\eps}(2c - 1) / (1 - c)}\\
	={} &
	\frac{1 - c + \sqrt{\eps}\big[(1 - c)(\sqrt{\eps} - 2) + 1 - 2(1 - c)\sqrt{\eps} - \eps(2c - 1)\big] + o(1)}{1 - c - 4\delta'(1 - c) + \sqrt{\eps}(2c - 1)}\\
	={} &
	\frac{(1 - \eps)[1 - c + \sqrt{\eps}(2c - 1)] + o(1)}{1 - c + \sqrt{\eps}(2c - 1) - 4\delta'(1 - c)}\\
	\leq{} &
	\frac{(1 - \eps)[1 - c + \sqrt{\eps}(2c - 1)]}{1 - c + \sqrt{\eps}(2c - 1) - 4\delta'(1 - c)} + o(1)
	\leq
	1 - \eps + 8\delta' + o(1)
	\enspace,
\end{align*}
where the two inequalities hold since the fact that $\delta' \leq 1/8$ implies that the denominator is at least $(1 - c)/2$, which is positive and independent of $i$.
%\begin{align*}
	%&
	%\frac{1 - c + (1 - c) \cdot \frac{s(c) + c - 1}{2c - 1}\big[\frac{s(c) - 3c + 1}{2c - 1} + \frac{c}{1 - c} \cdot \rho(c, \eps)\big] + o(1)}{1 - c - 4\delta'(1 - c) + [s(c) + c - 1]}\\
	%={} &
	%\frac{1 - c + \frac{1 - c}{(2c - 1)^2} \cdot [s^2(c) - 2c \cdot s(c) - 3c^2 + 4c - 1] + \frac{c(s(c) + c - 1)}{2c - 1} \cdot \rho(c, \eps) + o(1)}{s(c) - 4\delta'(1 - c)}\\
	%={} &
	%%\frac{1 - c + \frac{1 - c}{(2c - 1)^2} \cdot [s^2(c) - 2c \cdot s(c) - 3c^2 + 4c - 1] + \frac{c(s(c) + c - 1)}{2c - 1} \cdot \frac{c^2 - s^2(c)}{c(2c - 1)} + o(1)}{s(c) - 4\delta'(1 - c)}\\
	%%={} &
	%%\frac{1 - c + \frac{1 - c}{(2c - 1)^2} \cdot [s^2(c) - 2c \cdot s(c) - 3c^2 + 4c - 1] + \frac{s(c) + c - 1}{2c - 1} \cdot \frac{c^2 - s^2(c)}{2c - 1} + o(1)}{s(c) - 4\delta'(1 - c)}	
	%\frac{1 - c + \frac{1 - c}{(2c - 1)^2} \cdot [- c(2c - 1)\rho(c, \eps) - 2c \cdot s(c) - 2c^2 + 4c - 1] + \frac{c(s(c) + c - 1)}{2c - 1} \cdot \rho(c, \eps) + o(1)}{s(c) - 4\delta'(1 - c)}\\
	%={} &
	%\frac{\frac{2(1 - c)[c^2 - c(2c - 1) \cdot \rho(c, \eps)]}{(2c - 1)^2} + \frac{c(2c - 1) \cdot \rho(c, \eps) - 2c(1 - c)}{(2c - 1)^2} \cdot s(c) + o(1)}{s(c) - 4\delta'(1 - c)}\\
	%={} &
	%\frac{s(c)}{{s(c) - 4\delta'(1 - c)}} \cdot \frac{2(1 - c) \cdot s(c) + c(2c - 1) \cdot \rho(c, \eps) - 2c(1 - c)}{(2c - 1)^2} + \frac{o(1)}{s(c) - 4\delta'(1 - c)}
%\end{align*}
\end{proof}

The part of Theorem~\ref{thm:monotone_cardinality_inapproximability} regarding $c \in (1/2, 1)$ follows from Lemma~\ref{lem:competitive_delta} by choosing a large enough value for $i$ so that $o(1) < \delta/2$, and setting $\delta'$ to $\delta/16$ (which implies that $8\delta' = \delta/2$).

\subsection{Simple Greedy Algorithms} \label{ssc:simple_greedy_monotone}

In this section, we present simple greedy algorithms for bicriteria maximization of monotone submodular functions subject to cardinality, knapsack and matroid constraints. The first such algorithm that we study is a density-based greedy algorithm for maximization subject to a knapsack (and thus, also cardinality) constraint. This algorithm appears as Algorithm~\ref{alg:greedy_density}.

\begin{algorithm}[ht] 
\caption{\textsc{Density-Based Greedy}$(f \colon 2^\cN \to \nnR, c, B, \eps)$} \label{alg:greedy_density}
\DontPrintSemicolon
\lIf{$c(\cN) \leq B \cdot \ln \eps^{-1}$}{\Return $\cN$.\label{line:early_termination}}
Let $S_0 \gets \{u \in \cN \mid c(u) = 0\}$ and $i \gets 0$.\\
\While{$c(S_i) < B \cdot \ln \eps^{-1}$\label{line:greedy_density_loop}}
{
	Update $i \gets i + 1$.\\
	Let $u_i$ be an element of $\cN \setminus S_{i - 1}$ maximizing $\frac{f(u_i \mid S_{i - 1})}{c(u_i)}$.\\
	Let $S_i \gets S_{i - 1} + u_i$.
}
\Return $S_i$.
\end{algorithm}

The properties of Algorithm~\ref{alg:greedy_density} are summarized by the following theorem.
\begin{theorem} \label{thm:monotone_knapsack_greedy}
For every $\eps \in (0, 1)$, Algorithm~\ref{alg:greedy_density} is a polynomial time $(1 - \eps, 1 + \ln \eps^{-1})$-bicriteria approximation algorithm for the problem of maximizing a non-negative monotone submodular function subject to a knapsack constraint. In the special case of a cardinality constraint, the infeasibility ratio of Algorithm~\ref{alg:greedy_density} improves to $\lceil \ln \eps^{-1} \rceil$.
\end{theorem}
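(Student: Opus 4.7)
} I will verify the three aspects of the theorem separately: polynomial running time, infeasibility ratio, and approximation ratio. Polynomial time is immediate, since every pass through the main loop strictly enlarges $S_i$ by one element and hence the algorithm terminates after at most $|\cN|$ iterations, each doing a linear amount of marginal-value computation.

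For the infeasibility ratio I will split on whether Line~\ref{line:early_termination} triggers. If it does, then $c(\cN) \leq B \ln \eps^{-1} \leq (1 + \ln \eps^{-1}) \cdot B$, so returning $\cN$ is fine. Otherwise, let $S_k$ be the output. The loop exited because $c(S_{k-1}) < B \ln \eps^{-1}$ and $c(S_k) \geq B \ln \eps^{-1}$, so $c(S_k) < B \ln \eps^{-1} + c(u_k) \leq B(1 + \ln \eps^{-1})$, using the standing assumption (from Section~\ref{sec:preliminaries}) that no single element exceeds the budget $B$. In the cardinality case, $c(u) = 1$ for every element, $S_0 = \varnothing$, and $|S_i|$ increases by exactly one per iteration, so $|S_k| = \lceil B \ln \eps^{-1} \rceil \leq B \lceil \ln \eps^{-1} \rceil$; this gives the claimed $\lceil \ln \eps^{-1}\rceil$ bound.

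For the approximation ratio, I will run the standard density-greedy induction adapted to our stopping rule. Let $O \triangleq \opt$. At iteration $i$, every $o \in O \setminus S_{i-1}$ has $c(o) > 0$ (because all zero-cost elements already sit in $S_0 \subseteq S_{i-1}$), so the greedy rule gives $f(o \mid S_{i-1})/c(o) \leq f(u_i \mid S_{i-1})/c(u_i)$. Combining with submodularity and monotonicity, and with $\sum_{o \in O \setminus S_{i-1}} c(o) \leq c(O) \leq B$, yields
\[
  f(O) - f(S_{i-1}) \;\leq\; f(O \mid S_{i-1}) \;\leq\; \frac{B}{c(u_i)} \cdot f(u_i \mid S_{i-1}) \enspace.
\]
Rearranging gives the recursion $f(O) - f(S_i) \leq (1 - c(u_i)/B) \cdot (f(O) - f(S_{i-1}))$. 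Unrolling and using $1 - x \leq e^{-x}$,
\[
  f(O) - f(S_k) \;\leq\; f(O) \cdot \exp\!\left(-\tfrac{1}{B}\sum\nolimits_{i=1}^k c(u_i)\right) = f(O)\cdot\exp(-c(S_k)/B) \enspace,
\]
since $c(S_0) = 0$. The loop exit condition $c(S_k) \geq B \ln \eps^{-1}$ makes the exponent at most $-\ln \eps^{-1}$, so $f(S_k) \geq (1 - \eps) f(O)$. In the early-termination branch, monotonicity gives $f(\cN) \geq f(O)$, so the ratio is trivially at least $1 - \eps$.

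No single step is truly hard; the only subtlety worth flagging is ensuring that the treatment of zero-cost elements is clean, namely that placing all of them in $S_0$ is what lets the density-comparison argument avoid a division by zero and what makes the $c(S_k) = c(S_k \setminus S_0)$ identity work. Everything else is routine bookkeeping, so I would present the three pieces (runtime, infeasibility, approximation) as short standalone paragraphs and conclude.
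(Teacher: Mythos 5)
Your proposal is correct and follows essentially the same route as the paper: the infeasibility bound is the identical "last element costs at most $B$ (resp.\ $1$)" argument, and your gap recursion $f(\opt) - f(S_i) \le (1 - c(u_i)/B)(f(\opt) - f(S_{i-1}))$ unrolled with $1 - x \le e^{-x}$ is just a repackaging of the paper's induction showing $f(S_i) \ge (1 - e^{-c(S_i)/B}) \cdot f(\opt)$. Your handling of the zero-cost elements and the early-termination branch also matches the paper's treatment, so there is nothing to fix.
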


To prove Theorem~\ref{thm:monotone_knapsack_greedy}, we need to analyze the approximation and infeasibility ratios of Algorithm~\ref{alg:greedy_density}. Each one of the following two lemmata analyzes one of these ratios.
\begin{lemma} \label{lem:monotone_knapsack_approximation}
The approximation ratio of Algorithm~\ref{alg:greedy_density} is at least $1 - \eps$.
\end{lemma}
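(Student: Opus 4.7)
My plan is to follow the standard density-based greedy analysis, adapted to the fact that we are running the algorithm until the total cost reaches $B \cdot \ln \eps^{-1}$ rather than until we exceed $B$. First I would dispose of the trivial case: if Algorithm~\ref{alg:greedy_density} terminates on Line~\ref{line:early_termination} and returns $\cN$, then by monotonicity $f(\cN) \geq f(\opt)$, so the approximation guarantee holds. In the remainder, I assume the main loop executes, and I let $k$ be the value of $i$ after the loop terminates, so $c(S_k) \geq B \cdot \ln \eps^{-1}$ while $c(S_{k-1}) < B \cdot \ln \eps^{-1}$. Also note that any element with $c(u) = 0$ is already in $S_0$, so every element $u_i$ added in the loop has $c(u_i) > 0$.

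The core step is to establish, for every iteration $i \in [k]$, the inequality
\[
	\frac{f(u_i \mid S_{i-1})}{c(u_i)} \;\geq\; \frac{f(\opt) - f(S_{i-1})}{B}.
\]
To prove it, by monotonicity and submodularity,
\[
	f(\opt) - f(S_{i-1}) \;\leq\; f(\opt \cup S_{i-1}) - f(S_{i-1}) \;\leq\; \sum_{o \in \opt \setminus S_{i-1}} f(o \mid S_{i-1}).
\]
Any $o \in \opt \setminus S_{i-1}$ must satisfy $c(o) > 0$ (otherwise $o \in S_0 \subseteq S_{i-1}$), and the greedy rule guarantees $f(o \mid S_{i-1})/c(o) \leq f(u_i \mid S_{i-1})/c(u_i)$ since $o$ was available to be picked. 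Combining these, the right-hand side above is at most $\frac{f(u_i \mid S_{i-1})}{c(u_i)} \cdot \sum_{o \in \opt \setminus S_{i-1}} c(o) \leq B \cdot \frac{f(u_i \mid S_{i-1})}{c(u_i)}$, where the last inequality uses feasibility of $\opt$.

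Rearranging the displayed inequality yields the standard contraction
\[
	f(\opt) - f(S_i) \;\leq\; \left(1 - \frac{c(u_i)}{B}\right) \bigl(f(\opt) - f(S_{i-1})\bigr),
\]
and iterating this from $i = 1$ to $k$, together with $1 - x \leq e^{-x}$ and the non-negativity of $f(S_0)$, gives
\[
	f(\opt) - f(S_k) \;\leq\; \prod_{i=1}^{k} \left(1 - \frac{c(u_i)}{B}\right) \cdot f(\opt) \;\leq\; e^{-c(S_k \setminus S_0)/B} \cdot f(\opt) \;=\; e^{-c(S_k)/B} \cdot f(\opt),
\]
using that $c(S_0) = 0$. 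Finally, the loop termination condition $c(S_k) \geq B \cdot \ln \eps^{-1}$ turns this bound into $f(\opt) - f(S_k) \leq \eps \cdot f(\opt)$, which is exactly the claimed approximation guarantee. There is no real obstacle here; the only subtle point is ensuring that $c(o) > 0$ for every $o \in \opt \setminus S_{i-1}$ (so that dividing by $c(o)$ in the greedy comparison is legitimate), which is handled by the construction of $S_0$.
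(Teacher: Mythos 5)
Your proposal is correct and follows essentially the same route as the paper's proof: the same per-iteration inequality $f(u_i \mid S_{i-1})/c(u_i) \geq (f(\opt) - f(S_{i-1}))/B$ derived from monotonicity, submodularity, the greedy rule, and feasibility of $\opt$, followed by the same $1 - x \leq e^{-x}$ bound (the paper phrases the bookkeeping as an induction on $f(S_i) \geq (1 - e^{-c(S_i)/B}) \cdot f(\opt)$, while you iterate the contraction of the gap $f(\opt) - f(S_i)$, which is the same argument). The handling of the early-termination case and of zero-cost elements matches the paper as well.
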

\begin{proof}
If $c(\cN) \leq B \cdot \ln \eps^{-1}$, then Algorithm~\ref{alg:greedy_density} return $\cN$, and by the monotonicity of $f$ we have $f(\cN) \geq f(OPT) \geq (1 - \eps) \cdot f(\opt)$. Thus, in the rest of the proof, we may assume that $c(\cN) > B \cdot \ln \eps^{-1}$. Notice that this assumption implies that the loop on Line~\ref{line:greedy_density_loop} of Algorithm~\ref{alg:greedy_density} terminates because each iteration of this loops adds another element of $\cN$ to $S_i$ (and thus, if the loop makes $n$ iterations, then after the $n$-th iteration $c(S_n) = c(\cN) > B \cdot \ln \eps^{-1}$, which prevents the loop from starting another iteration). Let $\ell$ be the number of iterations performed by the loop on Line~\ref{line:greedy_density_loop}. By the condition of this loop, $c(S_\ell) \geq B \cdot \ln \eps^{-1}$.

We now prove by induction that for every integer $0 \leq i \leq \ell$, $f(S_i) \geq (1 - e^{-c(S_i) / B}) \cdot f(\opt)$. Notice that this will complete the proof the lemma since it implies that
\[
	f(S_\ell)
	\geq
	(1 - e^{-c(S_\ell)/B}) \cdot f(\opt)
	\geq
	(1 - e^{-\ln \eps^{-1}}) \cdot f(\opt)
	=
	(1 - \eps) \cdot f(\opt) \enspace.
\]
For $i = 0$, the inequality $f(S_i) \geq (1 - e^{-c(S_i) / B}) \cdot f(\opt)$ holds since the non-negativity of $f$ guarantees that
\[
	f(S_0)
	\geq
	0
	=
	(1 - e^{-0 / B}) \cdot f(\opt)
	=
	(1 - e^{-c(S_0) / B}) \cdot f(\opt)
	\enspace.
\]
Next, we need to prove the inequality $f(S_i) \geq (1 - e^{-c(S_i) / B}) \cdot f(\opt)$ for $i \in [\ell]$ given the induction hypothesis that this inequality holds for $i - 1$. Since every element of $\opt \setminus S_{i - 1}$ is a possible candidate to be chosen as $u_i$, we have
\begin{align*}
	f(S_i)
	={} &
	f(S_{i - 1}) + f(u_i \mid S_i)
	\geq
	f(S_{i - 1}) + c(u_i) \cdot \max_{u \in \opt \setminus S_{i - 1}} \mspace{-9mu} \frac{f(u \mid S_{i - 1})}{c(u)}\\
	\geq{} &
	f(S_{i - 1}) + c(u_i) \cdot \frac{\sum_{u \in \opt \setminus S_{i - 1}} f(u \mid S_{i - 1})}{\sum_{u \in \opt \setminus S_{i - 1}} c(u)}
	\geq
	f(S_{i - 1}) + c(u_i) \cdot \frac{f(\opt \mid S_{i - 1})}{B}\\
	\geq{} &
	f(S_{i - 1}) + c(u_i) \cdot \frac{f(\opt) - f(S_{i - 1})}{B}
	=
	\Big(1 - \frac{c(u_i)}{B}\Big) \cdot f(S_{i - 1}) + \frac{c(u_i)}{B} \cdot f(\opt)
	\enspace,
\end{align*}
where the third inequality follows from $f$'s submodularity, and the last inequality holds by $f$'s monotonicity. Using the induction hypothesis, the previous inequality implies
\begin{align*}
	f(S_i)
	\geq{} &
	\Big(1 - \frac{c(u_i)}{B}\Big) (1 - e^{-c(S_{i - 1}) / B}) \cdot f(\opt) + \frac{c(u_i)}{B} \cdot f(\opt)\\
	={} &
	\Big(1 - e^{-c(S_{i - 1}) / B}\Big(1 - \frac{c(u_i)}{B}\Big)\Big) \cdot f(\opt)
	\geq
	(1 - e^{-c(S_{i - 1}) / B}\cdot e^{- c(u_i) / B}) \cdot f(\opt)\\
	={} &
	(1 - e^{-c(S_{i - 1} + u_i) / B}) \cdot f(\opt)
	=
	(1 - e^{-c(S_i) / B}) \cdot f(\opt)
	\enspace,
\end{align*}
which completes the proof by induction.
\end{proof}

\begin{lemma} \label{lem:greedy_monotone_infeasibility}
The infeasibility ratio of Algorithm~\ref{alg:greedy_density} is at most $1 + \ln \eps^{-1}$. In the special case of a cardinality constraint, this infeasibility ratio improves to $\lceil \ln \eps^{-1} \rceil$.
\end{lemma}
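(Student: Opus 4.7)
The plan is to handle the two termination paths of Algorithm~\ref{alg:greedy_density} separately and, for each, extract a bound on $c(S)/B$ from the exit condition of the while loop. If the algorithm returns on Line~\ref{line:early_termination}, then by the guard $c(\cN) \le B \cdot \ln \eps^{-1}$, so the infeasibility ratio is at most $\ln \eps^{-1}$, which is dominated by both $1+\ln\eps^{-1}$ and $\lceil \ln \eps^{-1}\rceil$. So the interesting case is when the while loop terminates having completed $\ell$ iterations with output $S_\ell$.

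For the generic knapsack case, I would use the key observation that whenever iteration $\ell$ executes, the loop guard held just before it, giving $c(S_{\ell-1}) < B \cdot \ln \eps^{-1}$. Adding the cost of the single element $u_\ell$ picked in that iteration, and using the standing assumption (stated in Section~\ref{sec:preliminaries}) that every single element has cost at most $B$, I get
\[
  c(S_\ell) \;=\; c(S_{\ell-1}) + c(u_\ell) \;<\; B \cdot \ln \eps^{-1} + B \;=\; B\,(1 + \ln \eps^{-1}),
\]
so the infeasibility ratio is at most $1 + \ln \eps^{-1}$, as required.

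For the cardinality case, where $c\equiv 1$ and $B$ is a positive integer, the same reasoning applies but with the benefit of integrality. The loop guard $|S_{\ell-1}| < B \cdot \ln \eps^{-1}$ combined with $|S_{\ell-1}|\in\bN$ gives $|S_{\ell-1}| \le \lceil B\cdot \ln \eps^{-1}\rceil - 1$, and hence $|S_\ell| \le \lceil B\cdot \ln \eps^{-1}\rceil$. The final step is the elementary inequality $\lceil B\cdot \ln \eps^{-1}\rceil \le B \cdot \lceil \ln \eps^{-1}\rceil$, which holds because $B\cdot \lceil \ln \eps^{-1}\rceil$ is an integer that upper bounds $B\cdot \ln \eps^{-1}$. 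Dividing through by $B$ yields the improved infeasibility ratio $|S_\ell|/B \le \lceil \ln \eps^{-1}\rceil$.

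There is no real obstacle here; the argument is essentially a one-line consequence of the loop-exit condition plus the per-element budget assumption. The only thing to be careful about is not to forget the early-return branch on Line~\ref{line:early_termination} and to use integrality properly in the cardinality case (so as to shave off the extra additive $1$ and get the rounded bound rather than $1+\ln \eps^{-1}$).
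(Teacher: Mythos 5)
Your proof is correct and follows essentially the same route as the paper's: both handle the early-return branch separately, then use the loop guard $c(S_{\ell-1}) < B\ln\eps^{-1}$ plus the per-element cost bound $c(u_\ell)\leq B$ to get $c(S_\ell) < B(1+\ln\eps^{-1})$, and both exploit integrality in the cardinality case via $c(S_\ell) \leq \lceil B\ln\eps^{-1}\rceil \leq B\lceil\ln\eps^{-1}\rceil$. No gaps.
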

\begin{proof}
If $c(\cN) \leq B \cdot \ln \eps^{-1}$, then any solution that Algorithm~\ref{alg:greedy_density} outputs clearly has an infeasibility ratio that is upper bounded by $\ln \eps^{-1}$. Thus, in the rest of the proof, we may assume that $c(\cN) > B \cdot \ln \eps^{-1}$. As explained in the proof of Lemma~\ref{lem:monotone_knapsack_approximation}, given this assumption the loop on Line~\ref{line:greedy_density_loop} of Algorithm~\ref{alg:greedy_density} terminates after a finite number $\ell$ of iterations, and then Algorithm~\ref{alg:greedy_density} outputs $S_\ell$.

Since Algorithm~\ref{alg:greedy_density} did not terminate after $\ell - 1$ iterations, we must have $c(S_{\ell - 1}) < B \cdot \ln \eps^{-1}$ (notice that $\ell \geq 1$ because $c(S_0) = 0 < B \cdot \ln \eps^{-1}$). Thus, the output set $S_\ell$ of Algorithm~\ref{alg:greedy_density} obeys
\[
	c(S_\ell)
	=
	c(S_{\ell - 1}) + c(u_i)
	<
	B \cdot \ln \eps^{-1} + B
	=
	B \cdot (1 + \ln \eps^{-1})
	\enspace.
\]
In the special case of a cardinality constraint, $c(S_\ell)$ and $B$ are integers and $c(u_i) = 1$, and thus, the last inequality improves to
\[
	c(S_\ell)
	=
	c(S_{\ell - 1}) + c(u_i)
	\leq
	[\lceil B \cdot \ln \eps^{-1} \rceil - 1] + 1
	=
	\lceil B \cdot \ln \eps^{-1} \rceil
	\leq
	B \cdot \lceil \ln \eps^{-1} \rceil
	\enspace.
	\qedhere
\]
\end{proof}

This completes the proof of Theorem~\ref{thm:monotone_knapsack_greedy}. The next simple greedy algorithm is designed for matroid constraints, and is given as Algorithm~\ref{alg:greedy_matroid}. Intuitively, each iteration of this algorithm augments the solution $S_i$ using a set $T_{i, j}$ constructed by a single execution of the standard greedy algorithm for single-criteria maximization of a monotone submodular function subject to a matroid constraint (due to \cite{nemhauser1978analysis}).

\begin{algorithm}[ht] 
\caption{\textsc{Iterative Matroid Greedy}$(f \colon 2^\cN \to \nnR, \cM = (\cN, \cI), \eps)$} \label{alg:greedy_matroid}
\DontPrintSemicolon
Let $S_0 \gets \varnothing$.\\
\For{$i = 1$ \KwTo $\lceil \log_2 \eps^{-1} \rceil$}
{
	Let $T_{i, 0} \gets \varnothing$ and $j \gets 0$.\\
	\While{there exists an element $u \in \cN \setminus (S_{i - 1} \cup T_{i, j})$ for which $T_{i, j} + u \in \cI$\label{line:greedy_matroid_loop}}
	{
		Update $j \gets j + 1$.\\
		Let $u_{i, j}$ be an element maximizing $f(u_{i, j} \mid S_{i - 1} \cup T_{i, j - 1})$ among the elements $u$ that obeyed the condition on Line~\ref{line:greedy_matroid_loop}.\\
		Let $T_{i, j} \gets T_{i, j - 1} + u_{i, j}$.
	}
	Let $S_i \gets S_{i - 1} \cup T_{i, j}$.
}
\Return $S_{\lceil \log_2 \eps^{-1} \rceil}$.
\end{algorithm}

The properties of Algorithm~\ref{alg:greedy_matroid} are summarized by the following theorem.
\begin{theorem} \label{thm:monotone_matroid_greedy}
For every $\eps \in (0, 1)$, Algorithm~\ref{alg:greedy_matroid} is a polynomial time $(1 - \eps, \lceil \log_2 \eps^{-1} \rceil)$-bicriteria approximation algorithm for the problem of maximizing a non-negative monotone submodular function subject to a matroid constraint.
\end{theorem}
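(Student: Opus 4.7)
The plan splits into three steps: a trivial polynomial-time argument, a short counting argument for the infeasibility ratio, and a geometric halving argument (driven by the classical greedy-on-matroid guarantee) for the approximation ratio.

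For the infeasibility ratio, set $k \triangleq \lceil \log_2 \eps^{-1}\rceil$. Each set $T_{i,\cdot}$ constructed by the inner loop is independent in $\cM$ by construction, and the sets $T_{1,\cdot}, \ldots, T_{k,\cdot}$ are pairwise disjoint because the loop only adds elements from $\cN \setminus (S_{i-1} \cup T_{i,j})$ and $S_{i-1}$ already contains everything selected in previous iterations. Hence $\vone_{S_k} = \sum_{i=1}^{k} \vone_{T_{i,\cdot}}$, so $\vone_{S_k}/k$ is a convex combination of characteristic vectors of independent sets and therefore lies in the matroid polytope $\cP$. Taking $\vy_1 = \vzero \in \cP$ (since $\varnothing \in \cI$) and $\vy_2 = \vone_{S_k}/k \in \cP$ certifies the infeasibility ratio $k = \lceil \log_2 \eps^{-1}\rceil$.

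For the approximation ratio, the key observation is that the inner while loop is precisely the classical Fisher--Nemhauser--Wolsey greedy applied to the shifted function $g_i(T) \triangleq f(S_{i-1} \cup T) - f(S_{i-1})$ over the matroid $\cM'$ obtained by restricting $\cM$ to the ground set $\cN \setminus S_{i-1}$. Since $g_i$ is non-negative, monotone and submodular and $\cM'$ is again a matroid, the standard $1/2$-approximation yields $g_i(T_{i,\cdot}) \geq \tfrac{1}{2} g_i(T)$ for every independent set $T$ of $\cM'$. Benchmarking against $\opt \setminus S_{i-1}$ (which is independent in $\cM'$ since $\opt \in \cI$ and $\cI$ is closed under subsets) and invoking the monotonicity of $f$ gives
\[
f(S_i) - f(S_{i-1}) \;\geq\; \tfrac{1}{2}\bigl[f(S_{i-1} \cup \opt) - f(S_{i-1})\bigr] \;\geq\; \tfrac{1}{2}\bigl[f(\opt) - f(S_{i-1})\bigr],
\]
so $f(\opt) - f(S_i) \leq \tfrac{1}{2}\bigl[f(\opt) - f(S_{i-1})\bigr]$ at every iteration.

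Iterating this halving $k$ times starting from $S_0 = \varnothing$, and using $f(\varnothing) \geq 0$, produces $f(\opt) - f(S_k) \leq 2^{-k} f(\opt) \leq \eps \cdot f(\opt)$, which is the claimed approximation ratio. Polynomial running time is immediate since each inner loop terminates after at most $\rank(\cM)$ greedy steps and uses only standard value and independence oracle queries. The only mildly non-routine ingredient is invoking the classical $1/2$-approximation for greedy on each subproblem; everything else reduces to verifying that $g_i$ and $\cM'$ inherit the required structural properties from $f$ and $\cM$, which is direct.
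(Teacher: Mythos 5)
Your proof is correct, and its skeleton (a per-round halving of the gap to $f(\opt)$, plus the observation that the output is a disjoint union of $\lceil \log_2 \eps^{-1} \rceil$ independent sets, so dividing its characteristic vector by that number lands in the matroid polytope) matches the paper's. Where you genuinely differ is in how the per-iteration inequality $f(S_i) \geq \frac{1}{2}\left(f(S_{i-1}) + f(\opt)\right)$ is established. You observe that the inner loop of Algorithm~\ref{alg:greedy_matroid} is exactly the classical greedy of \cite{nemhauser1978analysis} run on the shifted function $g_i(T) = f(S_{i-1} \cup T) - f(S_{i-1})$ over the restriction of $\cM$ to $\cN \setminus S_{i-1}$, and then invoke the classical $1/2$-approximation for greedy under a single matroid as a black box, benchmarking against the independent set $\opt \setminus S_{i-1}$; this is valid because $g_i$ inherits non-negativity, monotonicity and submodularity, and a restriction of a matroid is a matroid. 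The paper instead proves the same inequality from scratch: it extends $\opt \setminus S_i$ and $T_{i,j(i)}$ to bases, uses the exchange bijection of Corollary~\ref{cor:perfect_matching_two_bases} to charge each $v \in \opt \setminus S_i$ to a distinct greedy step $u_{i,j_v}$, and sums the resulting inequalities. Your route is shorter and modular, at the cost of importing an external theorem whose standard proof is essentially the same exchange argument; the paper's route is self-contained and reuses machinery (Corollary~\ref{cor:perfect_matching_two_bases}) that it needs elsewhere anyway. Your infeasibility and running-time arguments coincide with the paper's; the explicit certificate $\vy_1 = \vzero$, $\vy_2 = \vone_{S}/\lceil \log_2 \eps^{-1} \rceil$ is just a more careful spelling-out of what the paper leaves implicit.
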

\begin{proof}
Let us denote by $j(i)$ the final value of $j$ in the $i$-th iteration of the loop on Line~\ref{line:greedy_matroid_loop} of Algorithm~\ref{alg:greedy_matroid}, and let $\ell$ denote $\lceil \log_2 \eps^{-1} \rceil$. Given this notation, the output of Algorithm~\ref{alg:greedy_matroid} is equal to $\cup_{i = 1}^{\ell} T_{i, j(i)}$. Thus, to prove that the infeasibility ratio of Algorithm~\ref{alg:greedy_matroid} is bounded by $\ell = \lceil \log_2 \eps^{-1} \rceil$, it suffices to show that the set $T_{i, j(i)}$ is independent in the matroid $\cM$ for every $i \in [\ell]$. This is indeed the case because each iteration of the loop on Line~\ref{line:greedy_matroid_loop} of Algorithm~\ref{alg:greedy_matroid} chooses an element $u_{i, j}$ whose addition to $T_{i, j - 1}$ keeps this set independent.

It remains to prove the approximation ratio of Algorithm~\ref{alg:greedy_matroid}. Below, we prove by induction that for every integer $0 \leq i \leq \ell$, $f(S_i) \geq (1 - 2^{-i}) \cdot f(\opt)$. Notice that this will complete the proof of the lemma since this inequality implies, in particular, that the output set $S_\ell$ of Algorithm~\ref{alg:greedy_matroid} obeys
\[
	f(S_\ell)
	\geq
	(1 - 2^{-\ell}) \cdot f(\opt)
	\geq
	(1 - 2^{-\log_2 \eps^{-1}}) \cdot f(\opt)
	=
	(1 - \eps) \cdot f(\opt)
	\enspace.
\]
The base case of the induction holds since the non-negaitivity of $f$ proves that $f(S_0) \geq 0 = (1 - 2^{-0}) \cdot f(\opt)$. Next, we need to prove the inequality $f(S_i) \geq (1 - 2^{-i}) \cdot f(\opt)$ for $i \in [\ell]$ given the induction hypothesis that this inequality holds for $i - 1$. 

Let $\opt' = \opt \setminus S_{i}$, and let $B_1$ and $B_2$ be arbitrary extensions of $\opt'$ and $T_{i, j(i)}$ to bases of $\cM$. By Corollary~\ref{cor:perfect_matching_two_bases}, there exists a bijection function $h\colon B_1 \to B_2$ such that $(B_2 - h(v)) + v \in \cI$ for every $v \in B_1$. Consider now an arbitrary element $v \in \opt'$. If $h(v) \not \in T_{i, j(i)}$, then by the down-closedness property of matroids, $T_{i, j(i)} + v \in \cI$. However, this contradicts the fact that Algorithm~\ref{alg:greedy_matroid} exited the loop on Line~\ref{line:greedy_matroid_loop} after $j(i)$ iterations because $v \in \cN \setminus S_i = \cN \setminus (S_{i - 1} \cup T_{i, j(i)})$. Thus, we are guaranteed that $h(v) \in T_{i, j(i)}$, which implies that there exists some value $j_v \in [j(i)]$ such that $h(v) = u_{i, j_v}$. Since $T_{i, j_v - 1} \subseteq B_2 - u_{i, j_v}$, by the down-closedness property of matroids, we also get $T_{i, j_v - 1} + v \in \cI$. Together with the fact that $v \not \in S_i \supseteq T_{i, j(i)}$, the inclusion of $T_{i, j_v - 1} + u$ in $\cI$ means that Algorithm~\ref{alg:greedy_matroid} could have chosen $v$ when it chose $u_{i, j_v}$, and therefore, 
\[
	f(u_{i, j_v} \mid S_{i - 1} \cup T_{i, j_v - 1})
	\geq
	f(v \mid S_{i - 1} \cup T_{i, j_v - 1})
	\geq
	f(v \mid S_{i - 1} \cup T_{i, j(i)})
	=
	f(v \mid S_i)
	\enspace,
\]
where the second inequality follows from the submodularity of $f$.

Summing up the last inequality over all elements $v \in \opt'$, we get
\begin{multline*}
	f(\opt) - f(S_{i})
	\leq
	f(\opt \cup S_{i}) - f(S_{i})
	\leq
	\sum_{v \in \opt'} \mspace{-9mu} f(v \mid S_i)\\
	\leq
	\sum_{v \in \opt'} \mspace{-9mu} f(u_{i, j_v} \mid S_{i - 1} \cup T_{i, j_v - 1})
	\leq
	\sum_{j = 1}^{j(i)} f(u_{i, j} \mid S_{i - 1} \cup T_{i, j - 1})
	=
	f(S_i) - f(S_{i - 1})
	\enspace,
\end{multline*}
where the first and second inequalities hold by the monotonicity and submodularity of $f$, respectively, and the last inequality follows also from the monotonicity of $f$ since the fact that $h$ is a bijection implies that $j_v$ takes a different value for every $v \in \opt'$. Rearranging the last inequality now yields
\[
	f(S_i)
	\geq
	\frac{f(S_{i - 1}) + f(\opt)}{2}
	\geq
	\frac{(1 - 2^{1-i}) \cdot f(\opt) + f(\opt)}{2}
	=
	(1 - 2^{-i}) \cdot f(\opt)
	\enspace,
\]
where the second inequality holds by the induction hypothesis.
\end{proof}

We note that the bicriteria approximation guarantee stated in Theorem~\ref{thm:monotone_matroid_greedy} is strictly worse than the bicriteria approximation guarantee that can be obtained for the same problem by combining Corollary~\ref{cor:monotone_downclosed} and Lemma~\ref{lem:rounding_matroid}. However, the algorithm obtained by this combination is much more involved than Algorithm~\ref{alg:greedy_matroid}.

\section{General Submodular Functions} \label{sec:general}

This section includes our results for bicriteria maximization of general submodular functions. It begins with Section~\ref{ssc:additional_preliminaries}, which states some known results and standard definitions that we use later. Then, Section~\ref{ssc:bicriteria_general} presents our results for general down-closed solvable convex set constraints. A simple combinatorial algorithm obtaining the same guarantee for cardinality, knapsack and matroid constraints can be found in Section~\ref{ssc:combinatorial_general}. Finally, Section~\ref{ssc:inapproximability_general} presents two inapproximability results. One inapproximability result showing that no constant bicriteria approximation guarantee can be obtained for non-down-closed convex sets, and one inapproximability result bounding the bicriteria approximation guarantees that can be obtained even for a simple cardinality constraint.

\subsection{Additional Preliminaries} \label{ssc:additional_preliminaries}

In this section, we present some known results from the field of submodular optimization. The first of these results is a variant of a 
lemma originally proven by \citet{feige2011maximizing}. 
\begin{lemma}[Lemma 2.2 of \citet{buchbinder2014submodular}]\label{lemma:prob-lemma}
  Let $f\colon 2^{\cN} \to \nnR$ be a non-negative submodular function. Denote by $A(p)$
  a random subset of $A$ in which each element appears
  with probability at most $p$ (not necessarily independently).
  Then, $\ex{f(A(p))} \ge (1 - p) \cdot f( \varnothing )$. 
\end{lemma}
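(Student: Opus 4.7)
The plan is to reduce the lemma to a one-dimensional calculation via the Lov\'asz extension of $f$. Let $\hat f \colon [0,1]^A \to \bR$ be the Lov\'asz extension of the restriction of $f$ to subsets of $A$. Three standard properties of $\hat f$ do all the work: $\hat f$ agrees with $f$ on integer points, i.e.\ $\hat f(\characteristic_S) = f(S)$ for every $S \subseteq A$; $\hat f$ is convex on $[0,1]^A$ because $f$ is submodular; and $\hat f$ admits the integral representation $\hat f(\vy) = \int_0^1 f(\{u \in A : y_u \ge t\}) \diff t$ for every $\vy \in [0,1]^A$.

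Given these properties, I would execute two short steps. First, apply Jensen's inequality to the convex function $\hat f$ and the random indicator vector $\characteristic_{A(p)}$. Writing $q_u \triangleq \Pr[u \in A(p)] \le p$ and $\vq = (q_u)_{u \in A}$, this yields
\[
\ex{f(A(p))} = \ex{\hat f(\characteristic_{A(p)})} \ge \hat f(\ex{\characteristic_{A(p)}}) = \hat f(\vq).
\]
Second, lower bound $\hat f(\vq)$ using the integral representation. Since each $q_u \le p$, the level set $\{u \in A : q_u \ge t\}$ is empty for every $t \in (p, 1]$, and so splitting the integral at $t = p$ and invoking the non-negativity of $f$ on the first piece gives
\[
\hat f(\vq) = \int_0^p f(\{u \in A : q_u \ge t\}) \diff t + \int_p^1 f(\varnothing) \diff t \ge (1-p) \cdot f(\varnothing).
\]
Chaining the two displays delivers the claim.

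No step is delicate: the only non-elementary input is the convexity of $\hat f$, which is a classical fact equivalent to submodularity of $f$. If one prefers to avoid naming the Lov\'asz extension, the same argument can be recast as a coupling with the threshold sample $B_\tau \triangleq \{u \in A : q_u \ge \tau\}$ for $\tau \sim \mathrm{Uniform}[0,1]$: the random sets $B_\tau$ and $A(p)$ share the marginal vector $\vq$; the submodular exchange identity $f(S) + f(T) \ge f(S \cup T) + f(S \cap T)$, applied repeatedly to shift probability mass from any incomparable pair $\{S, T\}$ in the support of the distribution of $A(p)$ onto $\{S \cup T, S \cap T\}$ (which preserves marginals), shows $\ex{f(A(p))} \ge \ex{f(B_\tau)}$; and $\ex{f(B_\tau)}$ is exactly the integral computed above.
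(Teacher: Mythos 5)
Your proof is correct: Jensen's inequality applied to the convex Lov\'asz extension, followed by the threshold-integral representation $\hat f(\vq) = \int_0^1 f(\{u : q_u \ge t\})\,\diff{t}$ and the non-negativity of $f$ on the $[0,p]$ piece, gives exactly the claimed bound, and the optional uncrossing/coupling recast is also sound. Note that the paper does not prove this lemma at all---it is imported from \citet{buchbinder2014submodular} (originally due to \citet{feige2011maximizing})---and your Lov\'asz-extension argument is essentially the standard proof of that classical fact, so there is nothing further to reconcile with the paper's text.
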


We now recall that in the unconstrained submodular maximization problem, the goal is to find an arbitrary subset maximizing a given non-negative submodular function. As was mentioned in Section~\ref{sec:warmup}, \citet{buchbinder2015tight} designed an algorithm called \textsc{DoubleGreedy} for this problem that has the following guarantee.
\begin{theorem} \label{thm:discretedoublegreedy}
  Let $f\colon 2^\cN \to \nnR$ be a non-negative submodular function. 
  There exists a (randomized) algorithm
  that finds set $X$, such that
  $f(X) \ge (1/2) \cdot \max_{S \subseteq \cN} f(S) + (1/4) \cdot f( \varnothing )$. 
\end{theorem}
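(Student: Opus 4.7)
The plan is to follow the now-standard \textsc{DoubleGreedy} analysis of \citet{buchbinder2015tight}, observing that the extra $\tfrac{1}{4} f(\varnothing)$ term falls out of the usual chain of inequalities essentially for free. Fix an arbitrary ordering $u_1,\dots,u_n$ of $\cN$ and maintain two sets $X_i$ and $Y_i$ starting from $X_0 = \varnothing$ and $Y_0 = \cN$. At iteration $i$, I would set
\[
  a_i \triangleq f(u_i \mid X_{i-1}),
  \qquad
  b_i \triangleq -f(u_i \mid Y_{i-1} - u_i),
\]
note that submodularity gives $a_i + b_i \geq 0$, and then with probability $a_i^+/(a_i^+ + b_i^+)$ set $X_i = X_{i-1} + u_i$, $Y_i = Y_{i-1}$; otherwise set $X_i = X_{i-1}$, $Y_i = Y_{i-1} - u_i$ (with the convention that the ratio is $1$ when the denominator is $0$). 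By construction $X_n = Y_n$, and this common set is the output $X$ of the algorithm.

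The heart of the argument is the single-step inequality
\[
  \ex{f(\opt_{i-1}) - f(\opt_i)}
  \;\leq\;
  \tfrac{1}{2}\cdot\ex{f(X_i) - f(X_{i-1}) + f(Y_i) - f(Y_{i-1})},
\]
where $\opt_i \triangleq (\opt \cup X_i) \cap Y_i$ denotes the ``current optimum'' tracked by the process. I would prove this inequality by a case analysis on whether $u_i \in \opt$ and on the signs of $a_i, b_i$: when both are non-negative the chosen probabilities exactly balance the loss in $f(\opt_i)$ against the gain in $f(X_i) + f(Y_i)$, and when one is negative submodularity forces that branch to be taken deterministically, making the left-hand side non-positive. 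This case analysis is the main bookkeeping step, but it is standard and uses only the submodularity and the specific probabilities of the algorithm.

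Telescoping the single-step inequality over $i = 1,\dots,n$, and using $X_n = Y_n$ (so that $\opt_n = X_n$ and hence $f(\opt_n) = f(X_n) = f(Y_n)$), together with $\opt_0 = \opt$, $X_0 = \varnothing$, $Y_0 = \cN$, gives
\[
  f(\opt) - \ex{f(X_n)}
  \;\leq\;
  \tfrac{1}{2}\bigl(\ex{f(X_n)} - f(\varnothing)\bigr)
  + \tfrac{1}{2}\bigl(\ex{f(Y_n)} - f(\cN)\bigr).
\]
Since $\ex{f(X_n)} = \ex{f(Y_n)}$, rearranging yields
\[
  \ex{f(X_n)}
  \;\geq\;
  \tfrac{1}{2} f(\opt) + \tfrac{1}{4}\bigl(f(\varnothing) + f(\cN)\bigr)
  \;\geq\;
  \tfrac{1}{2}\cdot\max_{S \subseteq \cN} f(S) + \tfrac{1}{4}\cdot f(\varnothing),
\]
where the final inequality uses the non-negativity of $f$ to drop $f(\cN) \geq 0$. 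The main obstacle is the per-iteration case analysis; everything else is algebraic and follows from telescoping.
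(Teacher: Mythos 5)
Your proposal is correct: the paper itself gives no proof of this theorem but cites the \textsc{DoubleGreedy} result of \citet{buchbinder2015tight}, and your argument is exactly that standard analysis, with the only twist being that you keep the boundary terms $f(\varnothing)=f(X_0)$ and $f(\cN)=f(Y_0)$ when telescoping, which indeed yields $\ex{f(X_n)} \ge \tfrac{1}{2}f(\opt)+\tfrac{1}{4}f(\varnothing)+\tfrac{1}{4}f(\cN)$ and hence the stated bound after dropping $f(\cN)\ge 0$. The per-iteration lemma and case analysis you sketch are the standard ones, so there is no gap (modulo stating the guarantee in expectation, which is how the paper uses it).
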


%The second result that we need to state is based on the Double Greedy algorithm of \citet{buchbinder2015tight} for unconstrained submodular maximization.
The result of~\cite{buchbinder2015tight} was improved and generalized by multiple works. For example, \citet{buchbinder2018deterministic} showed how to derandomize the \textsc{DoubleGreedy} algorithm. Another version of \textsc{DoubleGreedy} yields the result that appears below as Theorem~\ref{thm:doublegreedy}. This result applies to DR-submoduar functions, which are a class of continuous functions generalizing multilinear extensions of submodular set functions. Following is the formal definition of DR-submoduar functions (for differentiable functions).
\begin{definition}
A differentiable function $F\colon [0, 1]^\cN \to \bR$ is DR-submodular if the inequality $\nabla F(\vx) \geq \nabla F(\vy)$ holds for every two vectors $\vx \leq \vy$ in $[0, 1]^\cN$.
\end{definition}

An important example of differentiable DR-submodular functions are the multilinear extensions of submodular set functions~\citep{bian2017guaranteed}.
%The next two theorems are results concerning the unconstrained maximization of submodular functions. 
\begin{theorem}[Theorem 2.5 of \citet{buchbinder2024constrained}] \label{thm:doublegreedy}
  There exists a polynomial time algorithm that given a differentiable non-negative DR-submodular function
  $F\colon [0,1]^{\cN} \to \nnR$ and parameter $\eps > 0$,
  outputs a vector $\vx \in [0,1]^{\cN}$
  such that
  \[ \max_{r \ge 0} \bigg[ \bigg( \frac{2r}{(r + 1)^2} - O( \eps ) \bigg) \cdot F( \vo ) + \frac{1}{(r + 1)^2}\cdot F( \vzero ) + \frac{r^2}{(r + 1)^2} \cdot F ( \vone_\cN ) \bigg] \enspace,\]
  where $\vo$ is any vector in $[0,1]^{\cN}$. 
\end{theorem}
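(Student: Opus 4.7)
The plan is to run a continuous DoubleGreedy algorithm on $F$, i.e., the unconstrained analogue of Algorithm~\ref{alg:continuous_double_greedy} with $\ell \equiv 0$, extended from multilinear extensions to an arbitrary differentiable DR-submodular function. The algorithm maintains $\vx(t), \vy(t) \in [0,1]^\cN$ over $t \in [0,1]$ with $\vx(0) = \vzero$ and $\vy(0) = \vone_\cN$, and at each time $t$ sets $\va = \nabla F(\vx(t))$ and $\vb = -\nabla F(\vy(t))$; DR-submodularity together with the invariant $\vx(t) \le \vy(t)$ gives $a_u + b_u \ge 0$ for every $u$. The updates
\[
	\frac{d x_u(t)}{dt} = \frac{a_u^+}{a_u^+ + b_u^+}, \qquad \frac{d y_u(t)}{dt} = -\frac{b_u^+}{a_u^+ + b_u^+},
\]
(with the $0/0 = 0$ convention) drive $\vx$ and $\vy$ to meet at $t = 1$, and the algorithm returns $\vx(1) = \vy(1)$. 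Crucially, the update rule itself does not depend on $r$.

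For the analysis, fix an arbitrary $\vo \in [0,1]^\cN$ and track $\Phi(t) \triangleq F(\opt(\vx(t), \vy(t)))$, where $\opt(\vx, \vy) \triangleq (\vo \vee \vx) \wedge \vy$; note $\Phi(0) = F(\vo)$ and $\Phi(1) = F(\vx(1))$. An extension of Observation~\ref{obs:opt_bounds} to DR-submodular $F$ (its proof relied only on the coordinate-wise monotonicity $\nabla F(\vz) \geq \nabla F(\vy)$ for $\vz \leq \vy$, which is exactly DR-submodularity) implies that the coordinate-$u$ contribution to $-\tfrac{d\Phi(t)}{dt}$ is at most $\max\bigl\{0,\,\tfrac{a_u b_u}{a_u + b_u}\bigr\}$. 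An AM--GM step with a free parameter $s > 0$---applied exactly as in the proof of Inequality~\eqref{eq:element_upper_bound} with $\ell = 0$---then gives
\[
	-\frac{d\Phi(t)}{dt} \le \frac{s^2}{2} \cdot \frac{dF(\vx(t))}{dt} - \frac{s^{-2}}{2} \cdot \frac{dF(\vy(t))}{dt}.
\]
Integrating from $0$ to $1$, using $\vx(1) = \vy(1)$, and rearranging yields
\[
	F(\vx(1)) \ge \frac{2 s^2 \cdot F(\vo) + s^4 \cdot F(\vzero) + F(\vone_\cN)}{(s^2 + 1)^2}.
\]
Setting $r = 1/s^2$ (which ranges over $(0,\infty)$ as $s$ does) transforms the right-hand side into $\tfrac{2r}{(r+1)^2} F(\vo) + \tfrac{1}{(r+1)^2} F(\vzero) + \tfrac{r^2}{(r+1)^2} F(\vone_\cN)$ exactly. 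Since the algorithm is independent of $s$, the lower bound on $F(\vx(1))$ holds simultaneously for every $r > 0$, which is the max-over-$r$ guarantee claimed in the theorem (the $r = 0$ case is trivially valid).

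To convert the continuous scheme into a polynomial-time algorithm, I discretize $[0,1]$ into $K = \Theta(1/\eps)$ time steps and evaluate $\nabla F$ at each step (either analytically, when $F$ is given in closed form, or via finite differences at a resolution polynomial in $|\cN|$ and $\eps^{-1}$ when only a value oracle is provided). Standard discretization analysis shows that the coefficient of $F(\vo)$ is shaved by at most $O(\eps)$ while the other two coefficients remain exactly as claimed. The main obstacle I expect is the careful verification of the DR-submodular analogue of Observation~\ref{obs:opt_bounds} when $\vo$ is genuinely fractional: $\opt_u(\vx, \vy)$ now has three regimes (depending on whether $o_u \le x_u$, $x_u < o_u < y_u$, or $o_u \ge y_u$) rather than two, and in the new middle regime both relevant partial derivatives of $F(\opt)$ vanish. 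One must verify that the coordinate-wise bound $-\max\bigl\{0,\,\tfrac{a_u b_u}{a_u + b_u}\bigr\}$ is still respected in all three regimes; this is straightforward (the middle regime contributes zero, while $\tfrac{a_u b_u}{a_u + b_u} \ge 0$ in the subcase where both $a_u$ and $b_u$ are positive), but the case analysis needs to be written out. A secondary subtlety, already present in Lemma~\ref{lem:r_value}, is handling coordinates where $a_u^+ + b_u^+ = 0$; the $0/0 = 0$ convention suffices once one checks that such coordinates contribute zero to every derivative of interest.
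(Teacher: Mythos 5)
The paper does not prove this statement: Theorem~\ref{thm:doublegreedy} is quoted from \citet{buchbinder2024constrained}, so there is no internal proof to compare against. Judged on its own terms, your proposal follows the standard continuous double greedy route---essentially the same machinery as the paper's Algorithm~\ref{alg:continuous_double_greedy} specialized to $\ell \equiv 0$, with the free parameter $s$ appearing only in the analysis---and the core calculation is right: integrating the coordinate-wise bound gives $F(\vx(1)) \ge \frac{2s^2 F(\vo) + s^4 F(\vzero) + F(\vone_\cN)}{(s^2+1)^2}$, the substitution $r = 1/s^2$ reproduces the stated coefficients exactly, and the simultaneous-in-$r$ guarantee follows because $s$ never enters the algorithm. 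Your three-regime extension of Observation~\ref{obs:opt_bounds} to fractional $\vo$ is also correct as described (the middle regime contributes zero to $\frac{d\Phi}{dt}$, which is harmless since the per-coordinate upper bound is non-negative).

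The one genuine defect is your update rule for $\vy$. With $\frac{dy_u(t)}{dt} = -b_u^+/(a_u^+ + b_u^+)$ and the $0/0=0$ convention, a coordinate with $a_u = b_u = 0$ (the only way the denominator can vanish, since $a_u + b_u \ge 0$) has $\frac{dx_u}{dt} = \frac{dy_u}{dt} = 0$, so the gap $y_u - x_u$ does not shrink there; if this persists on a set of times of positive measure, then $\vx(1) \ne \vy(1)$ and the identity $\Phi(1) = F(\vx(1))$, on which the whole integration rests, fails. Your closing remark that ``such coordinates contribute zero to every derivative of interest'' addresses the wrong worry: the derivatives are indeed zero, but it is the endpoint condition that breaks. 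The fix is exactly what the paper's Algorithm~\ref{alg:continuous_double_greedy} does: set $\frac{d\vy(t)}{dt} = \frac{d\vx(t)}{dt} - \characteristic_\cN$, so the gap closes at rate $1$ in every coordinate, and then verify (trivially, since $a_u = b_u = 0$ in the degenerate case) that the per-coordinate inequality $s^2 a_u \frac{dx_u}{dt} - s^{-2} b_u \frac{dy_u}{dt} \ge 2\max\bigl\{0, \frac{a_u b_u}{a_u+b_u}\bigr\}$ still holds. With that correction, plus actually writing out the three-regime case analysis and a standard smoothness/discretization argument justifying that only the $F(\vo)$ coefficient degrades by $O(\eps)$, your sketch is a valid self-contained proof of the cited result.
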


We conclude this section with the following useful properties of DR-submodular functions.
\begin{lemma}[Lemma~2.1(3) of \citet{buchbinder2024constrained}] \label{lem:positive_direction}
Let $F\colon [0,1]^\cN \to \nnR$ be a non-negative differentiable DR-submodular function. Then, $\inner{\nabla F(\vx)}{\vy} \geq F(\vx+\vy)-F(\vx)$ for every $\vx \in [0,1]^\cN$ and $\vy \geq 0$ such that $\vx+\vy \leq 1$.
\end{lemma}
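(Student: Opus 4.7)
The plan is to prove the inequality by applying the fundamental theorem of calculus along the line segment from $\vx$ to $\vx + \vy$, and then using the DR-submodularity assumption to upper bound the integrand by a constant.

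More precisely, I would first note that since both $\vx$ and $\vx + \vy$ lie in $[0,1]^\cN$ (by hypothesis) and this set is convex, the entire segment $\{\vx + t\vy : t \in [0,1]\}$ lies in $[0,1]^\cN$, so $F$ is differentiable at every point of the segment. The fundamental theorem of calculus, combined with the chain rule applied to $t \mapsto F(\vx + t\vy)$, then gives
\[
F(\vx + \vy) - F(\vx) = \int_0^1 \inner{\nabla F(\vx + t\vy)}{\vy}\, \diff{t}.
\]

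Next, I would invoke DR-submodularity. For every $t \in [0,1]$, the assumption $\vy \geq \vzero$ yields $\vx + t\vy \geq \vx$ coordinate-wise, and therefore $\nabla F(\vx + t\vy) \leq \nabla F(\vx)$. Since $\vy \geq \vzero$, taking the inner product with $\vy$ preserves the inequality, giving $\inner{\nabla F(\vx + t\vy)}{\vy} \leq \inner{\nabla F(\vx)}{\vy}$. Substituting this bound into the integral above yields
\[
F(\vx + \vy) - F(\vx) \leq \int_0^1 \inner{\nabla F(\vx)}{\vy}\, \diff{t} = \inner{\nabla F(\vx)}{\vy},
\]
which is exactly the inequality stated in the lemma.

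I do not anticipate any real obstacle here: the containment of the segment in $[0,1]^\cN$ is automatic from the hypothesis $\vx + \vy \leq \vone_\cN$ and $\vy \geq \vzero$, differentiability of $F$ is assumed, and the only substantive ingredient is the monotonicity of the gradient (i.e., DR-submodularity) applied along the line segment. Thus the proof is essentially a one-line application of DR-submodularity under the integral sign, and the non-negativity of $F$ is not even needed for this particular property.
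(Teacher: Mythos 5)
Your proof is correct; the paper does not prove this lemma itself (it imports it by citation from \citet{buchbinder2024constrained}), and your argument---gradient monotonicity from DR-submodularity applied along the segment from $\vx$ to $\vx+\vy$, integrated against $\vy \geq \vzero$---is exactly the standard proof of the cited fact. One minor technical remark: if $F$ is only assumed differentiable (not continuously differentiable), you can sidestep any integrability concern by applying the mean value theorem to $t \mapsto F(\vx+t\vy)$ instead of the fundamental theorem of calculus; everything else, including your observation that non-negativity of $F$ plays no role here, is fine.
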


\begin{lemma}[Lemma~2.2 of \citet{buchbinder2024constrained}] \label{lem:closure}
Given a non-negative DR-submdoualr function $F\colon [0,1]^\cN \to \nnR$ and vector $\vy \in [0, 1]^\cN$, the functions $G_\psum(\vx) \triangleq F(\vx \psum \vy)$ and $G_\hprod(\vx) \triangleq F(\vx \hprod \vy)$ are both non-negative DR-submodular functions.
\end{lemma}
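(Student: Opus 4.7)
The plan is to verify the two required properties—non-negativity and DR-submodularity—separately for each of the two functions, using only the chain rule together with the monotonicity of $\nabla F$ along coordinate-wise increases. Non-negativity is immediate: since $\vx,\vy \in [0,1]^\cN$, both $\vx \hprod \vy$ and $\vx \psum \vy$ remain in $[0,1]^\cN$, so the non-negativity of $F$ transfers directly to $G_\hprod$ and $G_\psum$.

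For $G_\hprod$, I would apply the chain rule to obtain
\[
	\frac{\partial G_\hprod(\vx)}{\partial x_u}
	=
	y_u \cdot \left.\frac{\partial F(\vz)}{\partial z_u}\right|_{\vz = \vx \hprod \vy}
	\enspace.
\]
For any two vectors $\vx \leq \vx'$ in $[0,1]^\cN$, the non-negativity of $\vy$ ensures $\vx \hprod \vy \leq \vx' \hprod \vy$. DR-submodularity of $F$ then gives $\left.\frac{\partial F(\vz)}{\partial z_u}\right|_{\vz = \vx \hprod \vy} \geq \left.\frac{\partial F(\vz)}{\partial z_u}\right|_{\vz = \vx' \hprod \vy}$, and since $y_u \geq 0$, multiplying both sides by $y_u$ preserves the inequality. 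This shows $\nabla G_\hprod(\vx) \geq \nabla G_\hprod(\vx')$, establishing DR-submodularity.

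For $G_\psum$, write $(\vx \psum \vy)_u = 1 - (1 - x_u)(1 - y_u)$, which yields
\[
	\frac{\partial G_\psum(\vx)}{\partial x_u}
	=
	(1 - y_u) \cdot \left.\frac{\partial F(\vz)}{\partial z_u}\right|_{\vz = \vx \psum \vy}
	\enspace.
\]
For $\vx \leq \vx'$, the fact that $1 - y_u \geq 0$ for every $u$ implies $\vx \psum \vy \leq \vx' \psum \vy$ coordinate-wise. Invoking DR-submodularity of $F$ at these two points, and then multiplying by the non-negative scalar $(1 - y_u)$, yields the desired inequality $\nabla G_\psum(\vx) \geq \nabla G_\psum(\vx')$.

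There is no substantive obstacle here: the argument is a routine chain-rule computation combined with one invocation of DR-submodularity of $F$. The only points that require mild care are verifying that the inner maps $\vx \mapsto \vx \hprod \vy$ and $\vx \mapsto \vx \psum \vy$ are coordinate-wise monotone in $\vx$ (so that a comparison $\vx \leq \vx'$ translates into a comparison of the arguments passed to $F$), and that the chain-rule multipliers $y_u$ and $1 - y_u$ are non-negative—both of which follow immediately from $\vy \in [0,1]^\cN$.
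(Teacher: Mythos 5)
Your proof is correct, and since the paper simply cites this lemma from \citet{buchbinder2024constrained} without reproving it, there is nothing in the paper to diverge from: the chain-rule computation with multipliers $y_u$ and $1-y_u$, combined with monotonicity of the maps $\vx \mapsto \vx \hprod \vy$ and $\vx \mapsto \vx \psum \vy$ and the gradient-monotonicity definition of DR-submodularity, is exactly the standard argument. The only point worth making explicit is that $G_\hprod$ and $G_\psum$ are themselves differentiable (immediate, since each is $F$ composed with a coordinate-wise affine map), so the paper's definition of DR-submodularity applies to them.
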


\subsection{A \texorpdfstring{$(1/2-\eps, O( \eps^{-1} ))$}{(1/2-epsilon, O(1/epsilon))}-bicriteria algorithm} \label{ssc:bicriteria_general}

In this section, we prove the following theorem.

\begin{theorem} \label{thm:general_down-closed_convex_body}
For every constant value $\eps \in (0, 1/2)$, there exists a polynomial-time $(1/2 - \eps, O(\eps^{-1})$-bicriteria algorithm for maximizing a non-negative submodular function $f\colon 2^\cN \to \nnR$ subject to a down-closed solvable convex set $\cP \subseteq [0, 1]^N$.
\end{theorem}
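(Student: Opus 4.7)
The plan is to lift the combinatorial proof of the warmup Theorem~\ref{thm:general_cardinality} to the continuous setting, using the multilinear extension $F$, the Measured Continuous Greedy (MCG) framework of~\cite{feldman2011unified}, and the DR-submodular unconstrained maximization algorithm of Theorem~\ref{thm:doublegreedy}. Set $\ell = \Theta(1/\eps)$. The algorithm iteratively constructs fractional vectors $\vx^{(1)}, \dots, \vx^{(\ell)}$ playing the role of the sets $A_1, \dots, A_\ell$ in the warmup. Each $\vx^{(i)}$ is produced by a variant of MCG with stopping time $T = \ln 2$, run on the down-closed solvable convex region $\cP^{(i)} \triangleq (2\cP) \cap \bigl\{\vz \in [0,1]^\cN : \vz + \sum_{j<i}\vx^{(j)} \leq 2\vone_\cN\bigr\}$. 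Down-closedness and solvability are preserved under intersection with a box constraint. The resulting $\vx^{(i)}$ automatically satisfies $\vx^{(i)}/2 \in \cP$ (infeasibility $2$ per individual run) and $\sum_{j \leq i} \vx^{(j)} \leq 2\vone_\cN$, i.e., the ``almost-disjointness'' property promised in Section~\ref{ssc:our_results}.

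After the $\ell$ MCG runs, set $\vy \triangleq \tfrac{1}{2}\sum_{j=1}^\ell \vx^{(j)} \in [0,1]^\cN$. Writing $\vy = \tfrac{1}{\ell}\sum_{j=1}^\ell (\vx^{(j)}/2)$ exhibits $\vy$ as a convex combination of $\ell$ points in $\cP$, so $\vy/\ell \in \cP$ and $\vy$ has infeasibility at most $\ell$. For each $i$, the function $G_i(\vd) \triangleq F(\vx^{(i)} \psum (\vd \hprod \vy))$ is non-negative and DR-submodular by Lemma~\ref{lem:closure}, so applying Theorem~\ref{thm:doublegreedy} (with $r=1$) returns a vector $\vd^{(i)} \in [0,1]^\cN$ satisfying $G_i(\vd^{(i)}) \geq \tfrac{1}{2} G_i(\vo) + \tfrac{1}{4} G_i(\vzero) - o(1)$ for every $\vo \in [0,1]^\cN$, the continuous counterpart of the Double Greedy call on Line~\ref{D_i_selection} of Algorithm~\ref{alg:comb-bicriteria}. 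The algorithm outputs $\vx^{(i^*)} \psum (\vd^{(i^*)} \hprod \vy)$ for $i^* \in \argmax_i G_i(\vd^{(i)})$. Coordinate-wise the output is bounded by $\vx^{(i^*)} + \vy = (\ell+2)\bigl[\tfrac{2}{\ell+2}(\vx^{(i^*)}/2) + \tfrac{\ell}{\ell+2}(\vy/\ell)\bigr] \in (\ell+2)\cdot \cP$ by convexity, so down-closedness of $\cP$ yields infeasibility $O(\eps^{-1})$.

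The approximation analysis mirrors Lemma~\ref{lem:approximation-ratio}. A telescoping submodular argument exploiting the almost-disjointness $\sum_j \vx^{(j)} \leq 2\vone_\cN$ yields some index $\hat{\imath}$ with $F(\vx^{(\hat{\imath})} \vee \characteristic_\opt) \geq (1 - O(\eps)) f(\opt)$. Write $\characteristic_\opt = \hat{\vo} + \dot{\vo}$ with $\dot{\vo} \triangleq \characteristic_\opt \hprod \vy$ and $\hat{\vo} \triangleq \characteristic_\opt \hprod (\vone_\cN - \vy)$. These correspond, in the random-set interpretation of $F$, to the partition of $\opt$ induced by an independent random set $A \sim R(\vy)$; applying set-submodularity to $S = R(\vx^{(\hat{\imath})}) \cup (\opt \setminus A)$ and $T = R(\vx^{(\hat{\imath})}) \cup (\opt \cap A)$ and taking expectations over both $A$ and $R(\vx^{(\hat{\imath})})$ produces
\[
  F(\vx^{(\hat{\imath})} \vee \characteristic_\opt) + F(\vx^{(\hat{\imath})}) \leq F(\vx^{(\hat{\imath})} \psum \hat{\vo}) + F(\vx^{(\hat{\imath})} \psum \dot{\vo}).
\]
For the first right-hand term, a continuous analog of Lemma~\ref{lem:ohat} gives $F(\vx^{(\hat{\imath})} \psum \hat{\vo}) - F(\vx^{(\hat{\imath})}) \leq \tfrac{1}{2} F(\vx^{(\hat{\imath})})$: since $\hat{\vo} \wedge \vy = \vzero$, the mass of $\hat{\vo}$ is a feasible direction throughout the MCG run producing $\vx^{(\hat{\imath})}$, yet the gradient-following rule chose otherwise, and integrating this observation along the trajectory via Observation~\ref{obs:multilinear-partial} gives the bound. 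For the second term, setting $\vo = \characteristic_\opt$ in $G_{\hat{\imath}}$ gives $\characteristic_\opt \hprod \vy = \dot{\vo}$, so $G_{\hat{\imath}}(\characteristic_\opt) = F(\vx^{(\hat{\imath})} \psum \dot{\vo})$, and the DoubleGreedy guarantee yields $F(\vx^{(\hat{\imath})} \psum \dot{\vo}) \leq 2 G_{\hat{\imath}}(\vd^{(\hat{\imath})}) - \tfrac{1}{2} F(\vx^{(\hat{\imath})}) + o(1)$. Combining all bounds, $(1 - O(\eps)) f(\opt) \leq 2 G_{\hat{\imath}}(\vd^{(\hat{\imath})}) \leq 2 G_{i^*}(\vd^{(i^*)}) + o(1)$, which is the claimed $(1/2 - \eps)$ approximation after absorbing constants into the choice of $\ell$.

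The main obstacle is establishing the continuous ``non-selection'' bound for the MCG variant running inside the time-varying region $\cP^{(i)}$. The non-monotonicity of $f$ rules out the standard monotone MCG analysis, and the blocking constraint imposed by previous iterations shrinks the feasible region over time; the analysis must show that throughout the run, the gradient-weighted feasible direction dominates any competitor fractional vector supported on the free complement of the blocking vector, with a constant that, after integration over $T = \ln 2$, produces the factor-$\tfrac{1}{2}$ bound. This requires a careful adaptation of the non-monotone MCG analysis of~\cite{feldman2011unified} to the restricted region $\cP^{(i)}$, together with a verification that $\cP^{(i)}$ remains solvable (since MCG needs to solve a linear optimization at each step), and this is where the bulk of the technical work lies.
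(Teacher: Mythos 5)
Your overall architecture is the paper's: $\Theta(\eps^{-1})$ runs of a ``blocked'' Measured Continuous Greedy producing almost-disjoint fractional solutions, a DoubleGreedy post-processing of the DR-submodular restriction $G_i$ (Theorem~\ref{thm:doublegreedy} with $r=1$, exactly as in the paper), the split of $\characteristic_{\opt}$ into $\hat\vo$ and $\dot\vo$, and the same infeasibility bookkeeping. But you have left unproven precisely the step that is the heart of the paper's argument: the continuous analogue of Lemma~\ref{lem:ohat}, namely $F(\vx^{(i)} \psum \hat\vo) - F(\vx^{(i)}) \le \tfrac12 F(\vx^{(i)})$, which in the paper is Lemma~\ref{lemma:gmcg-main} (supported by Observations~\ref{obs:monotone-derivative} and~\ref{obs:guided-deriv-bound}). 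You explicitly defer it as ``the main obstacle'' where ``the bulk of the technical work lies,'' so the proposal is a plan rather than a proof; moreover your justification sketch starts from the false premise $\hat\vo \wedge \vy = \vzero$ (since $\hat\vo = \characteristic_{\opt} \hprod (\vone_\cN - \vy)$, its support overlaps that of $\vy$ on every $u \in \opt$ with $0 < y_u < 1$). A smaller point: with overlapping fractional vectors the discrete telescoping argument does not apply directly; the paper replaces it by the probabilistic Lemma~\ref{lemm:prob-vectors} together with Observation~\ref{obs:g}, which you could simply invoke.

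Beyond being unproven, the mechanism you propose appears quantitatively insufficient. Even granting an analogue of Observation~\ref{obs:monotone-derivative} for your variant, the strongest competitor you can feed to the rate bound is (a capped version of) $2\hat\vo \in \cP^{(i)}$, which via Lemma~\ref{lem:positive_direction} gives an instantaneous rate at least $2[F(\vy(t) \psum \hat\vo) - F(\vy(t))]$; integrating over your stopping time $T = \ln 2$ then yields only $F(\vx^{(i)}) \ge 2\ln 2 \cdot [F(\vx^{(i)} \psum \hat\vo) - F(\vx^{(i)})]$, i.e., a loss factor $\tfrac{1}{2\ln 2} \approx 0.72$ instead of $\tfrac12$, and pushing this through your final chain gives an approximation ratio of roughly $0.35$, not $\tfrac12 - \eps$. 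The paper avoids this by not shrinking the feasible region at all: \textsc{Guided-Measured-Continuous-Greedy} (Algorithm~\ref{alg:guided-mcg}) keeps $\cP$ as the search region but multiplies the growth rate by $\vone_\cN - \va \psum \vy(t)$, runs for $T = 2$, and then uses the non-increasing derivative to get $F(\vy(2)) \ge 2 \cdot \frac{\diff{F(\vy(2))}}{\diff{t}} \ge 2[F(\vy(2) \psum ((\vone_\cN - \vb) \hprod \vz)) - F(\vy(2))]$, which is exactly the factor $\tfrac12$; Observation~\ref{obs:sane} simultaneously gives $\vg^{(i)}/2 \in \cP$ and $\|\sum_i \vg^{(i)}\|_\infty \le 2$ with no box constraint (and hence no solvability concerns). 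To salvage your route you would need at least $T \ge 1$ with the doubled competitor (redoing the infeasibility accounting) and a proof of the monotone-derivative and rate-domination lemmas for the box-restricted region — at which point you have essentially reconstructed the paper's guided variant.
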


More specifically, Section~\ref{ssc:guided-mcg} presents and analyzes a sub-routine termed \textsc{Guided-Measured-Continuous-Greedy} (Algorithm~\ref{alg:guided-mcg}), and then, Section~\ref{ssc:full_algorithm} describes an algorithm that uses this sub-routine and has the properties guaranteed by Theorem~\ref{thm:general_down-closed_convex_body}. To simplify the exposition of our algorithm, we assume direct access to the multilinear extension $F$ of $f$ (rather than just to the set function $f$), and we write the sub-routine \textsc{Guided-Measured-Continuous-Greedy} in the form of a continuous time algorithm. Both these things prevent direct implementation of our algorithm. However, as was previously mentioned in Section~\ref{sssc:knapsack_fractional}, the (by now standard) techniques of~\cite{calinescu2011maximizing} can be used to solve this issue at the cost of deteriorating the approximation ratio of our algorithm by a factor of $1 - o(1)$ and making the algorithm randomized. For the sake of clarity, in the rest of the section we ignore the need to apply these techniques, and directly analyze the algorithms in their written form.

\subsubsection{Guided Measured Continuous Greedy} \label{ssc:guided-mcg}
In this section, we present and analyze a sub-routine termed \textsc{Guided-Measured-Continuous-Greedy}. This sub-routine is a version of the Measured
Continuous Greedy algorithm (Algorithm~\ref{alg:guided-mcg}) that is guided by a vector $\va \in [0, 1]^\cN$
(which it avoids).\footnote{The idea of guiding Measured Continuous Greedy by avoiding an input vector can be traced back to the work of~\citet{buchbinder2019constrained}.} In addition to the vector $\va$, Algorithm~\ref{alg:guided-mcg}
has three additional parameters:
the multilinear extension $F$ of a submodular
set function $f$, a stopping time $T \geq 0$, and a down-closed
solvable convex body $\cP$ representing the constraint. 
\begin{algorithm}[ht] 
\caption{\textsc{Guided-Measured-Continuous-Greedy}$(F, T, \va, \cP)$}\label{alg:guided-mcg}
\DontPrintSemicolon
%Define $G( \vx ) \triangleq \FF{ (\vone - \va) \hprod \vx }$ for every $\vx \in [0,1]^\cN$.\\
Let $\vy{(0)} \gets \vzero$. \\
\For{every time $t$ between $0$ and $T$}
{
  %Let $\hat{\vy}{(t)} = (\vone - \va) \hprod \vy{(t)}$. \\
  Let $\vx{(t)} \gets \arg \max_{\vx \in \cP } \inner{\vx \hprod (\vone_\cN - \va \psum \vy{(t)})}{\nabla F(\vy{(t)})}$.\\
  Increase $\vy{(t)}$ at a rate $\frac{ \diff{ \vy{(t)} }}{\diff{t} } = \vx{(t)} \hprod (\vone_\cN - \va \psum \vy{(t)}) $.\\
}
\Return $\vy{(T)}$.
\end{algorithm}

We begin the analysis of Algorithm~\ref{alg:guided-mcg} with the following observation, which states properties that the vector $\vy(t)$ has, and are not related to the objective function.
\begin{observation} \label{obs:sane}
  At any time $t \in [0,T]$,
  $\vy{(t)} \le \vone_\cN - e^{-t(\vone_\cN - \va)} \leq t(\vone_\cN - \va)$ and $\vy{(t)} / t \in \cP$. %If $t \leq 2$, then it also holds that $\vy{(t)} \leq 2(\vone_\cN - \va)$.
\end{observation}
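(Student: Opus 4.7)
The plan is to analyze the evolution of $\vy(t)$ coordinate by coordinate, using the identity $\vone_\cN - \va \psum \vy(t) = (\vone_\cN - \va) \hprod (\vone_\cN - \vy(t))$ to turn the dynamics into a simple scalar differential inequality, and then to handle the $\vy(t)/t \in \cP$ claim by a separate averaging/down-closedness argument.

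\textbf{Step 1 (simplify the dynamics).} I would first unfold the definition of the probabilistic sum. For each coordinate $u \in \cN$,
\[
	1 - (\va \psum \vy(t))_u
	=
	1 - a_u - y_u(t) + a_u y_u(t)
	=
	(1 - a_u)(1 - y_u(t))
	\enspace,
\]
so the update rule of Algorithm~\ref{alg:guided-mcg} reads $\frac{d y_u(t)}{dt} = x_u(t) \cdot (1 - a_u)(1 - y_u(t))$.

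\textbf{Step 2 (pointwise exponential bound).} Since $\vx(t) \in \cP \subseteq [0,1]^\cN$, we have $x_u(t) \leq 1$, which yields the differential inequality $\frac{d(1 - y_u(t))}{dt} \geq -(1 - a_u)(1 - y_u(t))$. Integrating from $0$ to $t$ using the initial condition $y_u(0) = 0$ gives $1 - y_u(t) \geq e^{-t(1 - a_u)}$, i.e., $y_u(t) \leq 1 - e^{-t(1 - a_u)}$. This is the first claimed inequality. The second inequality $\vone_\cN - e^{-t(\vone_\cN - \va)} \leq t(\vone_\cN - \va)$ is immediate from the elementary bound $1 - e^{-z} \leq z$ for $z \geq 0$, applied coordinate-wise with $z = t(1 - a_u)$.

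\textbf{Step 3 (membership in $\cP$).} For the last claim, I would observe that $\vone_\cN - \va \psum \vy(s) \leq \vone_\cN$ coordinate-wise, so the update rule gives $\frac{d\vy(s)}{ds} \leq \vx(s)$. Integrating,
\[
	\frac{\vy(t)}{t}
	=
	\frac{1}{t}\int_0^t \frac{d\vy(s)}{ds}\, ds
	\leq
	\frac{1}{t}\int_0^t \vx(s)\, ds
	\enspace.
\]
The right-hand side is an average of vectors in $\cP$, hence lies in $\cP$ by convexity (and closedness). Since $\cP$ is down-closed and $\vy(t)/t \geq \vzero$ is coordinate-wise dominated by this average, $\vy(t)/t \in \cP$.

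The argument is essentially routine; the only mild subtlety is the initial algebraic identity in Step~1, which makes the coordinate-wise ODE decouple cleanly. Everything else follows from Gr\"onwall-style integration and down-closedness of $\cP$, so I do not anticipate any real obstacle.
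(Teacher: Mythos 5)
Your proof is correct and follows essentially the same route as the paper's: the coordinate-wise identity $\vone_\cN - \va \psum \vy(t) = (\vone_\cN - \va)\hprod(\vone_\cN - \vy(t))$, a Gr\"onwall-style integration of the resulting differential inequality for the exponential bound, and the averaging argument combined with convexity and down-closedness of $\cP$ for the membership claim. No gaps.
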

\begin{proof}
Since $\vx(t) \in \cP \subseteq [0, 1]^\cN$,
\[
	\frac{\diff{\vy(t)}}{\diff{t}}
	=
	\vx{(t)} \hprod (\vone_\cN - \va \psum \vy{(t)})
	\leq
	\vone_\cN - \va \psum \vy{(t)}
	=
	(\vone_\cN - \va) \hprod (\vone_\cN - \vy(t))
	\enspace.
\]
Thus, $\vy(t)$ is upper bounded by $\vone_\cN - e^{-t(\vone_\cN - \va)}$. %In particular, if $t \leq 2$, then
%\[
	%\vy(t)
	%\leq
	%\vone_\cN - e^{-(\vone_\cN - \va)t}
	%\leq
	%\vone_\cN - e^{-2(\vone_\cN - \va)}
	%\leq
	%\vone_\cN - [\vone_\cN - 2(\vone_\cN - \va)]
	%=
	%2(\vone_\cN - \va)
	%\enspace.
%\]
To see why the second inequality of the lemma holds as well, notice that, for every $t \in [0, T]$,
  \[ \frac{\vy{(t)}}{t} = \frac{1}{t} \int_0^t \vx{(\tau)} \hprod (\vone_\cN - \va \psum \vy{(\tau)}) \text{d}\tau \le \frac{1}{t}\int_0^t \vx{( \tau )} \text{d} \tau \enspace. \]
  Since $\cP$ is convex and the vectors
  $\vx{( \tau )}$ belong to $\cP$, the right hand
  side of the above inequality belongs to $\cP$; and
  by down-closedness of $\cP$, so does $\vy{(t)}/t$. 
\end{proof}

Next, we prove two results about the rate in which the value $F(\vy(t))$ of the solution maintained by Algorithm~\ref{alg:guided-mcg} increases as a function of $t$. The first of these results is given by the next observation, which shows that $F(\vy(t))$ increases at a non-increasing rate.

\begin{observation} \label{obs:monotone-derivative}
  For every $t_1,t_2 \in [0,T]$ such that $t_1 \le t_2$,
  $\frac{ \diff{ F(\vy{(t_2)}) }}{\diff{t}} \le \frac{ \diff{ F(\vy{(t_1)})} }{\diff{t}}$. 
\end{observation}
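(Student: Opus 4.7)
The plan is to compute $\frac{dF(\vy(t))}{dt}$ explicitly, and then exploit two monotonicity facts: the state vector $\vy(t)$ is itself non-decreasing in $t$, and the ``available mass'' vector $\vz(t) := \vone_\cN - \va \psum \vy(t) = (\vone_\cN - \va) \hprod (\vone_\cN - \vy(t))$ is non-increasing in $t$. By the chain rule and the optimality of $\vx(t)$ on Line~3 of Algorithm~\ref{alg:guided-mcg},
\[
\frac{dF(\vy(t))}{dt} = \inner{\nabla F(\vy(t))}{\tfrac{d\vy(t)}{dt}} = \max_{\vx \in \cP} \inner{\vx \hprod \vz(t)}{\nabla F(\vy(t))} \enspace,
\]
so the goal reduces to showing that the right-hand side, viewed as a function of $t$, is non-increasing.

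For the monotonicity of $\vy(t)$, observe that $\frac{d\vy(t)}{dt} = \vx(t) \hprod \vz(t) \geq \vzero$ because $\vx(t) \in \cP \subseteq [0,1]^\cN$ and, using Observation~\ref{obs:sane}, $\vz(t) \geq \vzero$. Consequently $\vz(t)$ is coordinate-wise non-increasing in $t$, and by DR-submodularity of $F$ (which holds for multilinear extensions), $\nabla F(\vy(t))$ is also coordinate-wise non-increasing in $t$.

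The main step, and the one where I expect the only real subtlety, is to transfer the maximizer from time $t_2$ back to time $t_1$. Let $\vx^*_2 \in \cP$ attain the maximum at time $t_2$, and define $\vx \in [0,1]^\cN$ coordinate-wise by $x_u := x^*_{2,u} \cdot z_u(t_2)/z_u(t_1)$ at coordinates where $z_u(t_1) > 0$, and $x_u := 0$ elsewhere. Since $\vz(t_2) \leq \vz(t_1)$ we have $\vx \leq \vx^*_2$, and since $\cP$ is down-closed we obtain $\vx \in \cP$. A small check shows $\vx \hprod \vz(t_1) = \vx^*_2 \hprod \vz(t_2)$: at coordinates with $z_u(t_1)>0$ this holds by construction, and at coordinates with $z_u(t_1)=0$ the factorization $\vz(t) = (\vone_\cN - \va)\hprod(\vone_\cN - \vy(t))$ together with monotonicity of $\vy$ forces $z_u(t_2)=0$ as well, so both sides vanish.

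Putting the pieces together,
\[
\frac{dF(\vy(t_1))}{dt} \;\geq\; \inner{\vx \hprod \vz(t_1)}{\nabla F(\vy(t_1))} \;=\; \inner{\vx^*_2 \hprod \vz(t_2)}{\nabla F(\vy(t_1))} \;\geq\; \inner{\vx^*_2 \hprod \vz(t_2)}{\nabla F(\vy(t_2))} \;=\; \frac{dF(\vy(t_2))}{dt} \enspace,
\]
where the first inequality is by the maximum characterization of $\frac{dF(\vy(t_1))}{dt}$, and the second uses that $\vx^*_2 \hprod \vz(t_2) \geq \vzero$ combined with the coordinate-wise inequality $\nabla F(\vy(t_1)) \geq \nabla F(\vy(t_2))$ from DR-submodularity. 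This establishes the claim.
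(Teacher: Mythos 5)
Your proof is correct, and it shares the overall skeleton of the paper's argument: write $\frac{\diff{F(\vy(t))}}{\diff{t}}$ as $\max_{\vx \in \cP}\inner{\vx \hprod \vz(t)}{\nabla F(\vy(t))}$, use monotonicity of $\vy(t)$ (hence of $\vz(t)$) together with DR-submodularity of $F$, and compare the maxima at $t_1$ and $t_2$. The one place where you genuinely diverge is the middle step. The paper keeps the time-$t_2$ maximizer $\vx'$ fixed and argues coordinate-wise that replacing $\vz(t_2)$ by the larger $\vz(t_1)$ cannot decrease the inner product; since $\nabla F(\vy(t_2))$ may have negative coordinates, this requires the preliminary observation (via down-closedness of $\cP$) that $\vx'$ places zero weight on every coordinate with negative partial derivative. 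You instead rescale the maximizer coordinate-wise by $z_u(t_2)/z_u(t_1)$, using down-closedness only to certify that the rescaled vector stays in $\cP$, which makes $\vx \hprod \vz(t_1)$ \emph{exactly equal} to $\vx^*_2 \hprod \vz(t_2)$ and therefore removes any need to reason about gradient signs; your handling of the $z_u(t_1)=0$ coordinates (forcing $z_u(t_2)=0$ via the factorization of $\vz$) is the only delicate point and you treat it correctly. Your variant buys a cleaner argument free of sign case analysis, at the mild cost of introducing a coordinate-wise division and its zero-denominator bookkeeping; the paper's variant avoids constructing an auxiliary vector but pays with the extra observation about where the maximizer must vanish. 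Both uses of down-closedness and the final DR-submodularity step coincide.
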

\begin{proof}
Let $\vx'$ a vector $\vx$ maximizing $\max_{\vx \in \cP} \inner{ \nabla F( \vy{(t_2)} ) }{ \vx \hprod (\vone -\va \psum \vy{(t_2)}) }$. Notice that for every $u \in \cN$, $x'_u$ must be $0$ whenever $\frac{\partial F(\vz)}{\partial z_u}\big|_{\vz = \vy(t_2)} < 0$ because otherwise decreasing $x'_u$ to $0$ would increase the inner product $\inner{ \nabla F( \vy{(t_2)} ) }{ \vx' \hprod (\vone -\va \psum \vy{(t_2)}) }$ while keeping $\vx'$ in the convex set $\cP$ (recall that $\cP$ is down-closed). This observation justifies the first inequality in the following display math because $\vy(t_1) \leq \vy(t_2)$.
		\begin{align*}
			\max_{\vx \in \cP} \inner{ \nabla F( \vy{(t_2)} ) }{ \vx \hprod (\vone -\va \psum \vy{(t_2)}) }
			={} &
			\inner{ \nabla F( \vy{(t_2)} ) }{ \vx' \hprod (\vone -\va \psum \vy{(t_2)}) }\\
			={} &
			\sum_{u \in \cN} \left.\frac{\partial F(\vz)}{\partial z_u}\right|_{\vz = \vy(t_2)} \cdot x'_u \cdot (1 - a_u \psum y_u(t_2))\\
			\leq &
			\sum_{u \in \cN} \left.\frac{\partial F(\vz)}{\partial z_u}\right|_{\vz = \vy(t_2)} \cdot x'_u \cdot (1 - a_u \psum y_u(t_1))\\
			={} &
			\inner{ \nabla F( \vy{(t_2)} ) }{ \vx' \hprod (\vone -\va \psum \vy{(t_1)}) }\\
			\leq{} &
			\max_{\vx \in \cP} \inner{ \nabla F( \vy{(t_2)} ) }{ \vx \hprod (\vone -\va \psum \vy{(t_1)}) }
			\enspace,
		\end{align*}
Using the last inequality, we now get
    \begin{align*}
      \frac{ \diff{ F( \vy{(t_2)} ) } }{\diff{t} } &= \Big\langle \nabla F( \vy{(t_2)} ) , \frac{\diff{\vy{(t_2)}}}{\diff{t} } \Big\rangle \\
															 &= \inner{\nabla F( \vy{(t_2)} ) }{ \vx(t_2) \hprod (\vone -\va \psum \vy{(t_2)}) } \\
                               &= \max_{\vx \in \cP} \inner{ \nabla F( \vy{(t_2)} ) }{ \vx \hprod (\vone -\va \psum \vy{(t_2)}) } \\
															 &\leq \max_{\vx \in \cP} \inner{ \nabla F( \vy{(t_2)} ) }{ \vx \hprod (\vone -\va \psum \vy{(t_1)}) } \\
															 &\leq \max_{\vx \in \cP} \inner{ \nabla F( \vy{(t_1)} ) }{ \vx \hprod (\vone -\va \psum \vy{(t_1)}) } \\
															 &= \inner{\nabla F( \vy{(t_1)} ) }{ \vx(t_1) \hprod (\vone -\va \psum \vy{(t_2)}) } \\
                               &= \Big\langle \nabla F( \vy{(t_1)} ) , \frac{ \diff{ \vy{(t_1)} } }{\diff{t} } \Big\rangle\\
                               &= \frac{ \diff{ F( \vy{(t_1)} ) } }{\diff{t} }\enspace,
    \end{align*}
    where the first and last equalities hold by the chain rule, the third and fourth equalities hold due to the method used by Algorithm~\ref{alg:guided-mcg} to choose $\vx(t)$, and the second inequality holds because the DR-submodularity of $F$ and the inequality $\vy(t_1) \leq \vy(t_2)$ imply together $\nabla F(\vy(t_1)) \geq \nabla F(\vy(t_2))$.
  \end{proof}
	
The next observation lower bounds the rate in which $F(\vy(t))$ increases for a particular value of $t$.
\begin{observation} \label{obs:guided-deriv-bound}
  For any vector $\vb \in [0, 1]^\cN$ obeying $\va \leq \vb \leq \vone_\cN$, 
  vector $\vz \in \cP$ and value $t \in [0,T]$,
  it holds that
  \[ \frac{ \diff{ F( \vy{(t)} ) } }{\diff{t} } \ge \FF{ \vy{(t)} \psum ((\vone_\cN - \vb ) \hprod \vz )} - \FF{ \vy{(t)}} \enspace.\]
\end{observation}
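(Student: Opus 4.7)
The plan is to combine two facts: the maximality of $\vx(t)$ in the definition of the algorithm, and Lemma~\ref{lem:positive_direction} applied to the multilinear extension $F$ (which is DR-submodular). By the chain rule and the choice of $\vx(t)$ on Line~3 of Algorithm~\ref{alg:guided-mcg}, we have
\[
	\frac{\diff F(\vy(t))}{\diff t}
	=
	\inner{\nabla F(\vy(t))}{\vx(t) \hprod (\vone_\cN - \va \psum \vy(t))}
	=
	\max_{\vx \in \cP} \inner{\nabla F(\vy(t))}{\vx \hprod (\vone_\cN - \va) \hprod (\vone_\cN - \vy(t))},
\]
using the identity $\vone_\cN - \va \psum \vy(t) = (\vone_\cN - \va) \hprod (\vone_\cN - \vy(t))$. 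So, for any $\vz' \in \cP$, the derivative is at least the inner product at $\vz'$, and our aim is to pick a suitable $\vz' \in \cP$ that makes this lower bound match the right-hand side of the observation.

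The key construction is to define $\vz' \in [0,1]^\cN$ by setting, for each $u \in \cN$, $z'_u \triangleq z_u \cdot (1 - b_u)/(1 - a_u)$ when $a_u < 1$, and $z'_u \triangleq 0$ when $a_u = 1$ (in which case $\va \leq \vb \leq \vone_\cN$ forces $b_u = 1$, so $z_u(1 - b_u) = 0$ as well). Since $\va \leq \vb$, we have $\vz' \leq \vz$ coordinate-wise, and the down-closedness of $\cP$ yields $\vz' \in \cP$. Moreover, by construction, $\vz' \hprod (\vone_\cN - \va) = \vz \hprod (\vone_\cN - \vb)$ coordinate-wise.

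Plugging $\vz'$ into the maximum gives
\[
	\frac{\diff F(\vy(t))}{\diff t}
	\geq
	\inner{\nabla F(\vy(t))}{\vz' \hprod (\vone_\cN - \va) \hprod (\vone_\cN - \vy(t))}
	=
	\inner{\nabla F(\vy(t))}{(\vone_\cN - \vb) \hprod \vz \hprod (\vone_\cN - \vy(t))}.
\]
Let $\vu \triangleq (\vone_\cN - \vb) \hprod \vz \hprod (\vone_\cN - \vy(t))$, which is non-negative and satisfies $\vy(t) + \vu \leq \vone_\cN$ coordinate-wise (since $u_u \leq 1 - y_u(t)$). Using the identity $\vy(t) \psum \vw = \vy(t) + \vw \hprod (\vone_\cN - \vy(t))$ with $\vw = (\vone_\cN - \vb) \hprod \vz$, we see that $\vy(t) + \vu = \vy(t) \psum ((\vone_\cN - \vb) \hprod \vz)$. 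Lemma~\ref{lem:positive_direction} applied to the non-negative DR-submodular function $F$ then yields
\[
	\inner{\nabla F(\vy(t))}{\vu}
	\geq
	F(\vy(t) + \vu) - F(\vy(t))
	=
	F(\vy(t) \psum ((\vone_\cN - \vb) \hprod \vz)) - F(\vy(t)),
\]
which completes the proof.

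The only subtlety is the construction of $\vz'$, which must be done carefully to ensure $\vz' \in \cP$ and $\vz' \hprod (\vone_\cN - \va) = \vz \hprod (\vone_\cN - \vb)$ hold simultaneously; handling the degenerate coordinates where $a_u = 1$ is the one point that requires attention, but this is forced since $\va \leq \vb \leq \vone_\cN$ makes $b_u = 1$ too. Everything else is a routine rewriting combined with the two ingredients (maximality of $\vx(t)$ and Lemma~\ref{lem:positive_direction}).
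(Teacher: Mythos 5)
Your proof is correct and follows essentially the same route as the paper's: define a scaled vector $\vz' \leq \vz$ with $\vz' \hprod (\vone_\cN - \va) = \vz \hprod (\vone_\cN - \vb)$, use down-closedness of $\cP$ and the greedy choice of $\vx(t)$ to lower bound the derivative, and finish with Lemma~\ref{lem:positive_direction}. Your only addition is making the definition of $\vz'$ explicit (including the degenerate coordinates with $a_u = 1$), which the paper leaves implicit; this is a harmless, slightly more careful rendering of the same argument.
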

\begin{proof}
  Let $\vz'$ be the vector obeying $(\vone_\cN - \va) \hprod \vz' = (\vone_\cN - \vb) \hprod \vz$. Since $\vb \geq \va$, it must hold that $\vz' \leq \vz$, and thus, $\vz \in \cP$ by the down-closedness of $\cP$. Observe now that
  \begin{align*}
    \frac{\diff{ F(\vy{(t)})} }{\diff{t} } &= \inner{ \nabla F( \vy{(t)} ) }{ \vx{(t)} \hprod (\vone_\cN - \va \psum \vy{(t)}) } \\
                            &\geq \inner{ \nabla F( \vy{(t)} ) }{ \vz' \hprod (\vone_\cN - \va \psum \vy{(t)}) } \\
														&= \inner{\nabla F( \vy{(t)} ) }{ \vz' \hprod (\vone_\cN - \va) \hprod (\vone_\cN - \vy(t))}\\
														&\geq F(\vy{(t)} \psum (\vz' \hprod (\vone_\cN - \va))) - F(\vy(t))\\
														&= F(\vy{(t)} \psum ((\vone_\cN - \vb) \hprod \vz)) - F(\vy(t))
														\enspace,
  \end{align*}
  where the first equality hold by the chain rule, the first
  inequality holds by the greedy selection of $\vx{(t)}$ since
  $\vz' \in \cP$, and the last inequality holds by
  Lemma~\ref{lem:positive_direction}.
\end{proof}

Combining the two above results, we can derive a lower bound on the value of the output set of Algorithm~\ref{alg:guided-mcg} for $T = 2$ (which is the only value of $T$ that our full algorithm from Section~\ref{ssc:full_algorithm} uses).
  \begin{lemma} \label{lemma:gmcg-main} 
  If $T = 2$, then, for any vector $\vb \in [0, 1]^\cN$ obeying $\va \leq \vb \leq \vone_\cN$ and
  vector $\vz \in \cP$, it holds that
    \[ \frac{1}{2}\FF{ {\vy}{(2)} } \ge \FF{{\vy}{(2)} \oplus ((\vone_\cN - \vb) \hprod \vz) } - \FF{ {\vy}{(2)}} \enspace.\]
  \end{lemma}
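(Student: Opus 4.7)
The plan is to combine Observations~\ref{obs:monotone-derivative} and~\ref{obs:guided-deriv-bound}, together with the non-negativity of $F$. The key idea is that the instantaneous rate of growth at $t=2$ upper bounds the claimed marginal contribution, while the monotone-decreasing nature of the rate means that this instantaneous rate cannot exceed the \emph{average} rate over $[0,2]$, which in turn is controlled by $F(\vy(2))/2$.

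Concretely, I would first invoke Observation~\ref{obs:guided-deriv-bound} at the specific time $t = 2$ (with the given $\vb$ and $\vz$) to obtain
\[
	\left.\frac{\diff F(\vy(t))}{\diff t}\right|_{t=2} \;\geq\; F(\vy(2) \psum ((\vone_\cN - \vb) \hprod \vz)) - F(\vy(2)).
\]
Next, by Observation~\ref{obs:monotone-derivative}, the function $t \mapsto \tfrac{\diff F(\vy(t))}{\diff t}$ is non-increasing, so its value at $t = 2$ is no larger than its average value over $[0, 2]$:
\[
	\left.\frac{\diff F(\vy(t))}{\diff t}\right|_{t=2} \;\leq\; \frac{1}{2} \int_0^2 \frac{\diff F(\vy(t))}{\diff t} \, \diff t \;=\; \frac{F(\vy(2)) - F(\vy(0))}{2}.
\]
Since $\vy(0) = \vzero$ and $F(\vzero) = f(\varnothing) \geq 0$ by the non-negativity of $f$, the right-hand side is at most $F(\vy(2))/2$. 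Chaining these inequalities yields the lemma.

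I do not anticipate any serious obstacle: the two observations were set up precisely so that this chaining works, and the only ``extra'' ingredient is the non-negativity of $f$ at the empty set to drop the $F(\vy(0))$ term. The proof is therefore a short three-line argument, and the writeup should simply make explicit which inequality is applied at which step.
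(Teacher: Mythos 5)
Your proposal is correct and follows essentially the same route as the paper's proof: bound the derivative at $t=2$ below by the marginal gain via Observation~\ref{obs:guided-deriv-bound}, above by the average rate $\frac{1}{2}\int_0^2 \frac{\diff F(\vy(t))}{\diff t}\,\diff t = \frac{F(\vy(2)) - F(\vy(0))}{2}$ via Observation~\ref{obs:monotone-derivative}, and drop $F(\vy(0)) \geq 0$ by non-negativity. No gaps.
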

  \begin{proof}
	By the non-negativity of $F$,
    \begin{align*}
      \FF{ {\vy}{(2)} } \ge F( \vy{(2)} ) - F( \vy{(0)} ) &= \int_0^2 \frac{\diff{ F(\vy{(t)})}}{\diff{t}} \text{d}t \\
                                                               %&= \int_0^1 \frac{\diff{ F( \vy{(t)} ) } }{\diff{t}} \diff{t} + \int_1^2 \frac{d F( \vy{(t)} )}{dt} dt \\
                                                               %&\ge 2 \int_1^2 \frac{d F( \vy{(t)} )}{dt} dt  \\
																															&\geq 2 \cdot \frac{\diff{ F(\vy{(2)}) }}{\diff{t}}\\
                                                               &\ge 2[ \FF{{\vy}{(2)} \oplus ((\vone_\cN - \vb) \hprod \vz) } - \FF{ {\vy}{(2)}}].
    \end{align*}
    where the second
    inequality follows from Observation~\ref{obs:monotone-derivative},
    and the third inequality from Observation~\ref{obs:guided-deriv-bound}.
  \end{proof}
	
  \subsubsection{Bicriteria Algorithm} \label{ssc:full_algorithm}
	
	In this section, we prove Theorem~\ref{thm:general_down-closed_convex_body} by presenting Algorithm~\ref{alg:gen-bicriteria}, and showing that this algorithm has the properties guaranteed by the theorem. Algorithm~\ref{alg:gen-bicriteria} uses \textsc{Guided-Measured-Continuous-Greedy} (Algorithm~\ref{alg:guided-mcg}) from Section~\ref{ssc:guided-mcg} as a sub-routine.
	
  \begin{algorithm}[ht] 
\caption{\textsc{General-Bicriteria-Alg}$(F, \cP, \eps)$}\label{alg:gen-bicriteria}
\DontPrintSemicolon
Let $\vg \gets \vzero$ and $\ell = \lceil \frac{1}{\eps} \rceil$. \\
\For{$i = 1$ to $\ell$}
{
  Let $\vg^{(i)} \gets $\textsc{Guided-Measured-Continuous-Greedy}$(F, 2, \vg , \cP)$. \\
  Update $\vg \gets ( \vg + \vg^{(i)}) \wedge \vone_\cN$.  \\
}
\For{$i = 1$ to $\ell$}
{
  Let $G^{(i)}$ be the function $\vx \mapsto \FF{ \vg^{(i)} \psum ( \vg \hprod \vx ) }$. \\
  Let $\vd^{(i)} \gets \textsc{DoubleGreedy}(G^{(i)})$.
}
Let $i^*$ be any index $i \in [\ell]$ maximizing $\FF{ \vg^{(i)} \psum ( \vg \hprod \vd^{(i)} )}$.\\
\Return $\vg^{(i^*)} \psum ( \vg \hprod \vd^{(i^*)} )$.
\end{algorithm}

We begin the analysis Algorithm~\ref{alg:gen-bicriteria} with the following lemma, which is a consequence of Lemma~\ref{lemma:prob-lemma}.
\begin{lemma} \label{lemm:prob-vectors}
  %Let $\cN$ be a finite set. 
  Let $X = \{ \vx^{(1)}, \vx^{(2)}, \ldots, \vx^{(m)} \}$ be a finite set of $m$ vectors in $[0,1]^{\cN}$,
  and let $g\colon 2^\cN \to \nnR$ be a non-negative submodular set function whose multilinear extension is denoted by $G$.
  Then
  \[ \frac{1}{m} \sum_{i=1}^m G( \vx^{(i)} ) \ge \bigg( 1 - \frac{\|\sum_{i=1}^m \vx^{(i)}\|_\infty}{m} \bigg) \cdot G( \vzero ) \enspace.\]
\end{lemma}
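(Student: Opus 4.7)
The plan is to interpret the average $\frac{1}{m}\sum_{i=1}^m G(\vx^{(i)})$ as the expected value of $g$ evaluated on a single random set produced by a two-stage sampling process, and then to invoke Lemma~\ref{lemma:prob-lemma} on that random set.

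Concretely, I would define a random set $A \subseteq \cN$ as follows: first pick an index $I$ uniformly at random from $[m]$, and then, conditioned on $I$, independently take $A \gets \RSet(\vx^{(I)})$, the random set that contains each element $u \in \cN$ with probability $x_u^{(I)}$ independently. By the definition of the multilinear extension and the law of total expectation,
\[
\mathbb{E}[g(A)] \;=\; \frac{1}{m}\sum_{i=1}^m \mathbb{E}\bigl[g(\RSet(\vx^{(i)}))\bigr] \;=\; \frac{1}{m}\sum_{i=1}^m G(\vx^{(i)}).
\]

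Next, I would compute the marginal probability that any fixed element $u \in \cN$ belongs to $A$. Using the law of total probability,
\[
\Pr[u \in A] \;=\; \frac{1}{m}\sum_{i=1}^m x_u^{(i)} \;\le\; \frac{\bigl\|\sum_{i=1}^m \vx^{(i)}\bigr\|_\infty}{m} \;=:\; p.
\]
Thus $A$ is a random subset in which every element appears with probability at most $p$ (the indicators are not independent across elements, but Lemma~\ref{lemma:prob-lemma} explicitly does not require independence). Applying that lemma to $A$ yields $\mathbb{E}[g(A)] \ge (1 - p)\cdot g(\varnothing) = (1 - p)\cdot G(\vzero)$, which combined with the identity from the previous paragraph gives exactly the claimed inequality.

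There is essentially no technical obstacle here: the only point worth double-checking is that Lemma~\ref{lemma:prob-lemma} applies to the joint distribution of $A$ despite the correlation between coordinates induced by the shared index $I$; this is explicitly permitted by the phrasing ``not necessarily independently'' in the lemma's statement. Because $g(\varnothing) = G(\vzero)$ by definition of the multilinear extension, no additional reconciliation between the set and continuous versions is needed.
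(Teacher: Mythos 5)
Your proposal is correct and follows essentially the same route as the paper's proof: sampling an index uniformly, observing that the resulting two-stage random set has marginal inclusion probabilities at most $\frac{1}{m}\|\sum_{i=1}^m \vx^{(i)}\|_\infty$, and applying Lemma~\ref{lemma:prob-lemma}, which indeed does not require independence across elements. No gaps.
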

\begin{proof}
  Let $\vs$ be a uniformly at random sample from $X$. Then,
  \begin{equation} \label{eq:expectation_expansion}
    \ex{ g(\RSet(\vs)) } = \sum_{i=1}^m \ex{ g(\RSet(\vs)) \mid \vs = \vx^{(i)} } \cdot \Pr \left[ \vs = \vx^i \right]
                = \frac{1}{m}\sum_{i=1}^m G \left( \vx^{(i)} \right)\enspace. 
  \end{equation}
  Moreover, for each $u \in \cN$, $\Pr[ u \in \RSet(\vs) ] = \frac{1}{m}\sum_{i=1}^m x^{(i)}_u \leq \frac{1}{m}\|\sum_{i = 1}^m \vx^{(i)}\|_\infty$.
  Therefore, by Lemma~\ref{lemma:prob-lemma},
  \[
		\ex{ g(\RSet(\vs)) } \ge \bigg( 1 - \frac{\|\sum_{i = 1}^m \vx^{(i)}\|_\infty}{m} \bigg) \cdot g( \varnothing ) =
  \bigg( 1 - \frac{\|\sum_{i = 1}^m \vx^{(i)}\|_\infty}{m}\bigg) \cdot G( \vzero ) \enspace.
	\]
	The lemma now follows by combining this inequality with Equation~\eqref{eq:expectation_expansion}.
\end{proof}

In the following, we use Lemma~\ref{lemm:prob-vectors} with each vector $\vx^{(i)}$ being $\vg^{(i)}$. The next observation bounds the value of the infinity-norm in this use.

\begin{observation} \label{obs:g} 
It holds that $\|\sum_{i=1}^\ell \vg^{(i)}\|_\infty \le 2$.
\end{observation}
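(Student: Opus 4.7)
The plan is to prove the claim coordinate by coordinate: for every $u \in \cN$, I will show that $\sum_{i=1}^\ell g^{(i)}_u \leq 2$. The whole argument is driven by combining the update rule $\vg \gets (\vg + \vg^{(i)}) \wedge \vone_\cN$ in Algorithm~\ref{alg:gen-bicriteria} with the bound supplied by Observation~\ref{obs:sane} applied with stopping time $T = 2$. Let $\vg_i$ denote the running vector $\vg$ after the $i$-th iteration, so that $\vg_0 = \vzero$, $\vg_i = (\vg_{i-1} + \vg^{(i)}) \wedge \vone_\cN$, and the $i$-th call \textsc{Guided-Measured-Continuous-Greedy}$(F, 2, \vg_{i-1}, \cP)$ produces $\vg^{(i)}$. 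The key consequence of Observation~\ref{obs:sane} (with $\va = \vg_{i-1}$) is the coordinatewise inequality
\[
\vg^{(i)} \le \vone_\cN - e^{-2(\vone_\cN - \vg_{i-1})} \le 2(\vone_\cN - \vg_{i-1}) \enspace,
\]
which in particular forces $g^{(i)}_u = 0$ whenever $g_{i-1,u} = 1$.

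Fix a coordinate $u$. If $g_{\ell, u} < 1$, then the cap never fires in coordinate $u$ at any iteration, so $\sum_{i=1}^\ell g^{(i)}_u = g_{\ell, u} < 1 \le 2$. Otherwise, let $k$ be the first iteration for which $g_{k, u} = 1$. For $i > k$ the above bound immediately yields $g^{(i)}_u \le 2(1 - g_{i-1, u}) = 0$, and for $i \le k - 1$ the cap has not yet activated, so $g_{k-1, u} = \sum_{i=1}^{k-1} g^{(i)}_u < 1$. Applying the same bound at iteration $k$ gives $g^{(k)}_u \le 2(1 - g_{k-1, u})$, whence
\[
\sum_{i=1}^\ell g^{(i)}_u = g_{k-1, u} + g^{(k)}_u \le g_{k-1, u} + 2(1 - g_{k-1, u}) = 2 - g_{k-1, u} \le 2 \enspace.
\]
Taking the maximum over $u$ gives the observation.

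There is no genuine obstacle: once the bound $\vg^{(i)} \le 2(\vone_\cN - \vg_{i-1})$ from Observation~\ref{obs:sane} is in hand, the only point requiring mild care is deciding how to split on whether the coordinate ever saturates, and tracking that the saturating iteration $k$ is precisely the one where the ``leftover budget'' $2(1 - g_{k-1,u})$ is consumed. The factor $T = 2$ is what makes the final constant equal to $2$; using a larger $T$ in Algorithm~\ref{alg:guided-mcg} would relax this bound correspondingly.
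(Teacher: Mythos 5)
Your proof is correct and follows essentially the same route as the paper's: both arguments fix a coordinate $u$, apply Observation~\ref{obs:sane} with $T=2$ to get $g^{(i)}_u \le 2(1 - a_u)$ where $\va$ is the running (capped) guide vector, and split on the saturation point — your ``first iteration $k$ with $g_{k,u}=1$'' is exactly the paper's ``$i'+1$'' where $i'$ is the last index with partial sum below $1$. The only cosmetic difference is that you track the cap $\wedge \vone_\cN$ explicitly, which makes the bookkeeping slightly more self-contained but does not change the argument.
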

\begin{proof}
Fix an arbitrary element $u \in \cN$. We show below that $\sum_{i=1}^\ell g_u^{(i)} \leq 2$. Let $i'$ be the last value in $\{0, 1, \dotsc, \ell\}$ such that $\sum_{i = 1}^{i'} g^{(i)}_u < 1$. If $i' = \ell$, then we are done. Otherwise, we make two observations:
\begin{itemize}
	\item The vector $\va$ passed to execution number $i' + 1$ of \textsc{Guided-Measured-Continuous-Greedy} has the value $\sum_{i = 1}^{i'} g_u^{(i)}$ in the coordinate corresponding to element $u$, and therefore, by Observation~\ref{obs:sane}, $g^{(i' + 1)}_u \leq 2(1 - a_u) = 2(1 - \sum_{i - 1}^{i'} g_u^{(i)})$.
	\item The vector $\va$ passed to later executions of \textsc{Guided-Measured-Continuous-Greedy} has the value $1$ in the coordinate corresponding to element $u$, and therefore, by Observation~\ref{obs:sane}, $g^{(i)}_u \leq 2(1 - a_u) = 0$ for every $i > i'+ 1$.
\end{itemize}
Combining the above observations yields
\[
	\sum_{i=1}^\ell g_u^{(i)}
	=
	\sum_{i=1}^{i'} g_u^{(i)} + g_u^{(i')}
	\leq
	\sum_{i=1}^{i'} g_u^{(i)} + 2\bigg(1 - \sum_{i=1}^{i'} g_u^{(i)}\bigg)
	\leq
	2
	\enspace.
	\qedhere
\]
\end{proof}
% \begin{proof}
%   We will show that $\sum_{i=1}^m \vg^i \le \frac{5}{4}\cdot\vone$
%   inductively for all $1 \le m \le \ell$.
%   Recall that $\vg^m$ is produced by Algorithm \ref{alg:guided-mcg},
%   with parameter $\va$ set to $(\sum_{j=1}^{m-1} \vg^j) \wedge \vone$. 
%   Consider the case that $m = 1$; in this case
%   the parameter $\va = \vzero$;
%   therefore, by Observation \ref{obs:sane}, $\vg^{1} \le \vone$.
%   Next, suppose the statement is true for $m - 1$,
%   and let $\vb = \sum_{i=1}^{m-1} \vg^i$. Let $u \in \cN$;
%   if $\vb_u \ge 1$, then this coordinate of $\vy$ does not increase
%   from $0$ during Algorithm \ref{alg:guided-mcg}; hence
%   $\vg^m_u = 0$ and
%   $\sum_{i=1}^m \vg^i_u = \vb_u \le 5/4$,
%   by the inductive assumption. Otherwise, suppose $\vb_u < 1$. Then, 
%   by Observation \ref{obs:sane}, $\vg^m_u \le (1-\vb_u^2 )$.
%   Hence
%   \begin{equation}
%     \sum_{i=1}^m \vg^i_u \le (1 - \vb_u^2) + \vb_u \le 5/4,
%   \end{equation}
%   since
%   $(1 - \vb_u^2) + \vb_u$ is
%   maximized in $[0,1]$ at $\vb_u = 1/2$. Therefore, it holds
%   that $\sum_{i=1}^m \vg^i \le \frac{5}{4}\cdot\vone$. 
% \end{proof}

\begin{corollary} \label{cor:low_overlap}
For every vector $\vz \in [0, 1]^\cN$,
\[
	\frac{1}{\ell} \sum_{i=1}^\ell F( \vz \psum \vg^{(i)} )
	\geq
	\left( 1 - \frac{2}{\ell }\right) \cdot F( \vz )
	\enspace.
\]
\end{corollary}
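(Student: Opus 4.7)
The plan is to reduce the corollary to a direct application of Lemma~\ref{lemm:prob-vectors}, applied to an auxiliary set function built from $f$ and $\vz$, with the vectors $\vx^{(1)},\dots,\vx^{(\ell)}$ taken to be $\vg^{(1)},\dots,\vg^{(\ell)}$, and with the infinity-norm bound supplied by Observation~\ref{obs:g}.

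First, I would construct the auxiliary set function. Define $g\colon 2^{\cN} \to \nnR$ by
\[
    g(S) \triangleq \bE[f(\RSet(\vz) \cup S)]\enspace,
\]
where the expectation is over the random set $\RSet(\vz)$. Since $f$ is non-negative and submodular, $g$ inherits both properties (each is preserved by fixing the ``other'' side of a union and by taking convex combinations/expectations). Next, I would identify its multilinear extension: if $\RSet(\vx)$ is drawn independently of $\RSet(\vz)$, then $\RSet(\vz) \cup \RSet(\vx)$ is distributed as $\RSet(\vz \psum \vx)$, because each element $u$ is included with probability $1 - (1-z_u)(1-x_u) = z_u \psum x_u$, independently. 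Hence the multilinear extension $G$ of $g$ satisfies
\[
    G(\vx) = \bE[f(\RSet(\vz \psum \vx))] = F(\vz \psum \vx)\enspace.
\]

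Next, I would apply Lemma~\ref{lemm:prob-vectors} to $g$ with the $\ell$ vectors $\vg^{(1)},\dots,\vg^{(\ell)} \in [0,1]^{\cN}$, obtaining
\[
    \frac{1}{\ell}\sum_{i=1}^\ell G(\vg^{(i)})
    \;\geq\; \Bigl(1 - \frac{\|\sum_{i=1}^\ell \vg^{(i)}\|_\infty}{\ell}\Bigr) \cdot G(\vzero)\enspace.
\]
By Observation~\ref{obs:g}, $\|\sum_{i=1}^\ell \vg^{(i)}\|_\infty \le 2$, so the right-hand side is at least $(1 - 2/\ell)\,G(\vzero)$. Substituting the identity $G(\vx) = F(\vz \psum \vx)$ and noting that $\vz \psum \vzero = \vz$, this becomes exactly
\[
    \frac{1}{\ell}\sum_{i=1}^\ell F(\vz \psum \vg^{(i)})
    \;\geq\; \Bigl(1 - \frac{2}{\ell}\Bigr) \cdot F(\vz)\enspace,
\]
which is the claim.

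There is no real obstacle here; the only point that warrants care is verifying that the auxiliary function $G(\vx) = F(\vz \psum \vx)$ arises as the multilinear extension of a \emph{non-negative submodular} set function, so that Lemma~\ref{lemm:prob-vectors} is applicable. Once $g$ is defined as above, non-negativity and submodularity are immediate from the corresponding properties of $f$, and the independence argument identifies $G$ with its multilinear extension.
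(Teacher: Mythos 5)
Your proposal is correct and follows essentially the same route as the paper: both define the auxiliary set function $g(S) = F(\vz \psum \vone_S)$ (your expectation-based definition $\bE[f(\RSet(\vz) \cup S)]$ is equivalent), identify its multilinear extension as $F(\vz \psum \cdot)$ via the independence argument, and then apply Lemma~\ref{lemm:prob-vectors} together with the bound of Observation~\ref{obs:g}.
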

\begin{proof}
Let $g$ be the submodular set function $g(S) = \FF{ \vz \psum \vone_S}$,
  and let $G$ be its multilinear extension. Note that, for every vector $\vs \in [0, 1]^\cN$, it holds that
	\[
		G(\vs) = \bE[g(\RSet(\vs))] = \bE[F(\vz \psum \vone_{\RSet(\vs)})]
		=
		\bE[f(\RSet(\vz \psum \vone_{\RSet(\vs)}))]
		=
		\bE[f(\RSet(\vz \psum \vs))]
		=
		F(\vz \psum \vs)
		\enspace.
	\]
	Thus, by Lemma~\ref{lemm:prob-vectors} and Observation~\ref{obs:g},
	\[
		\frac{1}{\ell} \sum_{i=1}^\ell F( \vz \psum \vg^{(i)} )
		=
		\frac{1}{\ell} \sum_{i=1}^\ell G( \vg^{(i)} )
		\ge
		\left( 1 - \frac{2}{\ell }\right) \cdot G( \vzero )
		=
		\left( 1 - \frac{2}{\ell }\right) \cdot F( \vz )
		\enspace.
		\qedhere
	\]
\end{proof}

We need one more technical lemma about DR-submodular functions.
\begin{lemma} \label{lem:dr-submodularity-addition}
Given a differentiable DR-submodular function $G\colon [0, 1]^\cN \to \bR$ and two vectors $\va, \vb \in [0, 1]^\cN$ obeying $\va + \vb \leq \vone_\cN$, it holds that $F(\va) + F(\vb) \geq F(\vzero) + F(\va + \vb)$.
\end{lemma}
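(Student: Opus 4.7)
The statement is the natural continuous analogue of the set inequality $f(A) + f(B) \geq f(A \cap B) + f(A \cup B)$ specialized to disjoint $A, B$ (which corresponds to the hypothesis $\va + \vb \leq \vone_\cN$). I interpret the statement as asserting $G(\va) + G(\vb) \geq G(\vzero) + G(\va + \vb)$ (the $F$'s in the displayed inequality appear to be typos for $G$). My plan is to prove it directly from the DR-submodularity condition $\nabla G(\vx) \geq \nabla G(\vy)$ whenever $\vx \leq \vy$, using the fundamental theorem of calculus along two parallel line segments.

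First I would write the two differences as line integrals along the direction $\va$: namely,
\[
    G(\va) - G(\vzero) = \int_0^1 \inner{\nabla G(t\va)}{\va}\, \diff{t}
\]
and
\[
    G(\va + \vb) - G(\vb) = \int_0^1 \inner{\nabla G(\vb + t\va)}{\va}\, \diff{t}.
\]
Both integrands are well-defined because the hypothesis $\va + \vb \leq \vone_\cN$, together with $\va, \vb \in [0,1]^\cN$, ensures that the segments $\{t\va : t \in [0,1]\}$ and $\{\vb + t\va : t \in [0,1]\}$ lie entirely inside $[0,1]^\cN$, the domain of $G$.

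Next I would compare the two integrands pointwise. For each $t \in [0,1]$ we have $t\va \leq \vb + t\va$, so DR-submodularity of $G$ gives $\nabla G(t\va) \geq \nabla G(\vb + t\va)$; since $\va \geq \vzero$, taking the inner product with $\va$ preserves the inequality and yields
\[
    \inner{\nabla G(t\va)}{\va} \geq \inner{\nabla G(\vb + t\va)}{\va}.
\]
Integrating over $t \in [0,1]$ and substituting the two identities above gives $G(\va) - G(\vzero) \geq G(\va + \vb) - G(\vb)$, which rearranges to the desired inequality.

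I do not expect any real obstacle: the only subtlety is verifying that both line segments stay inside $[0,1]^\cN$ (this is exactly why the hypothesis $\va + \vb \leq \vone_\cN$ is needed), and the rest is a one-line application of DR-submodularity under the integral sign. An alternative route would be to invoke Lemma~\ref{lem:positive_direction} twice---once at $\vx = \vzero$ with direction $\va$, and once at $\vx = \vb$ with direction $\va$---but the integral argument above is slightly cleaner because it gives the inequality in both directions with a single monotonicity step, whereas Lemma~\ref{lem:positive_direction} only provides a one-sided bound per invocation.
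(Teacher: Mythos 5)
Your proof is correct and takes essentially the same route as the paper: both express the relevant differences as line integrals of the gradient and apply DR-submodularity to compare gradients at comparable points, the only cosmetic difference being that you translate the segment (comparing $\nabla G(t\va)$ with $\nabla G(\vb + t\va)$) while the paper compares $\nabla F(r\va)$ and $\nabla F(r\vb)$ each against $\nabla F(r(\va+\vb))$ along rays from the origin. Your observation that the $F$'s in the statement should read $G$ is also right.
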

\begin{proof}
By the chain rule and the DR-submodularity of $F$,
	\begin{align*}
		F(\va) + F(\vb)
		={} &
		2F(\vzero) + \int_0^1 \langle \va, \nabla F(r \cdot \va) \rangle \text{d}r
		+ \int_0^1 \langle \vb , \nabla F(r \cdot \vb)\rangle \text{d}r\\
		\geq{} &
		2F(\vzero) + \int_0^1 \langle \va , \nabla F(r \cdot (\va + \vb))\rangle \text{d}r
		+ \int_0^1 \langle \vb , \nabla F(r \cdot (\va + \vb)) \rangle \text{d}r\\
		={} &
		2F(\vzero) + \int_0^1 \langle \va + \vb , \nabla F(r \cdot (\va + \vb))\rangle \text{d}r
		=
		F(\vzero) + F(\va + \vb)
		\enspace.
		\qedhere
	\end{align*}
\end{proof}

We are now ready to prove our guarantee for Algorithm~\ref{alg:gen-bicriteria}. Notice that one can get Theorem~\ref{thm:general_down-closed_convex_body} from Theorem~\ref{thm:gen-bicriteria} by simply dividing the value of $\eps$ passed to Algorithm~\ref{alg:gen-bicriteria} by the multiplicative constant that is hidden by the big $O$ notation in $O(\eps)$.

\begin{theorem} \label{thm:gen-bicriteria}
  For any $\eps \in (0, 1/2)$, Algorithm~\ref{alg:gen-bicriteria} returns a vector $\vr$ such that
  $\frac{\vr}{2(\eps^{-1} + 2)} \in \cP$ and,
	for any $\vz \in \cP$, $F( \vr ) \ge (1/2 - O(\eps) )F( \vz )$.
\end{theorem}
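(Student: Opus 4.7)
The plan is to analyze the infeasibility ratio and the approximation ratio separately; the approximation argument will parallel the warmup proof of Theorem~\ref{thm:general_cardinality}, with every set operation replaced by its natural continuous counterpart.

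For the infeasibility ratio, every call of \textsc{Guided-Measured-Continuous-Greedy} is made with $T=2$, so Observation~\ref{obs:sane} yields $\vg^{(i)}/2 \in \cP$ for each $i\in[\ell]$. Since the running $\vg$ is a coordinate-wise truncation of $\sum_{i=1}^{\ell} \vg^{(i)}$, it satisfies $\vg \le \sum_{i=1}^\ell \vg^{(i)}$, and convexity together with down-closedness of $\cP$ then give $\vg/(2\ell) \in \cP$. The output $\vr = \vg^{(i^*)}\psum(\vg\hprod\vd^{(i^*)})$ obeys $\vr \le \vg^{(i^*)} + \vg = 2\cdot(\vg^{(i^*)}/2) + 2\ell\cdot(\vg/(2\ell))$, so $\vr/(2(\ell+1))$ is dominated coordinate-wise by a convex combination of two points of $\cP$ and hence lies in $\cP$ by down-closedness. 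Using $\ell \le \eps^{-1}+1$ then gives $\vr/(2(\eps^{-1}+2)) \in \cP$.

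For the approximation ratio, Corollary~\ref{cor:low_overlap} applied to $\vz$ followed by an averaging argument produces an index $\hat{\imath}\in[\ell]$ with $F(\vz \psum \vg^{(\hat{\imath})}) \ge (1-2/\ell)\cdot F(\vz)$. The crucial step is then a ``cross'' submodularity inequality
\[
    F\bigl(\vg^{(\hat{\imath})}\psum(\vone_\cN-\vg)\hprod\vz\bigr) + F\bigl(\vg^{(\hat{\imath})}\psum\vg\hprod\vz\bigr) \ge F\bigl(\vg^{(\hat{\imath})}\psum\vz\bigr) + F(\vg^{(\hat{\imath})}).
\]
This is the main obstacle: it does not follow from the standard $F(\va)+F(\vb)\ge F(\va\vee\vb)+F(\va\wedge\vb)$ identity, because the independent coupling produces $\vee$ and $\wedge$ with coordinates $g^{(\hat{\imath})}_u + (1-g^{(\hat{\imath})}_u)\max/\min(g_u,1-g_u)\,z_u$, which do not match $\vg^{(\hat{\imath})}\psum\vz$ or $\vg^{(\hat{\imath})}$. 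I would instead use a three-source coupling: draw independent random subsets $R_1, R_2, R_3 \subseteq \cN$ with inclusion probabilities $\vg^{(\hat{\imath})}, \vg, \vz$, and set $R_\alpha := R_1 \cup (R_3\setminus R_2)$ and $R_\beta := R_1\cup(R_3\cap R_2)$. A coordinate-wise check shows that $R_\alpha$, $R_\beta$, $R_\alpha\cup R_\beta = R_1\cup R_3$, and $R_\alpha\cap R_\beta = R_1$ have precisely the inclusion probabilities of the four vectors appearing in the cross inequality, so applying set-function submodularity $f(R_\alpha)+f(R_\beta) \ge f(R_\alpha\cup R_\beta)+f(R_\alpha\cap R_\beta)$ inside the expectation delivers the inequality.

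To close the argument, I apply Lemma~\ref{lemma:gmcg-main} to the $\hat{\imath}$-th call of \textsc{Guided-Measured-Continuous-Greedy} with $\vb$ equal to the final $\vg$ (which is coordinate-wise at least the $\va$-input to that call), obtaining $F(\vg^{(\hat{\imath})}\psum(\vone_\cN-\vg)\hprod\vz) \le (3/2)\,F(\vg^{(\hat{\imath})})$. Combined with the cross inequality and the averaging bound, this rearranges to
\[
    F\bigl(\vg^{(\hat{\imath})}\psum\vg\hprod\vz\bigr) + \tfrac{1}{2}F(\vg^{(\hat{\imath})}) \ge (1-2/\ell)\cdot F(\vz).
\]
Two applications of Lemma~\ref{lem:closure} show that $G^{(\hat{\imath})}$ is DR-submodular, so Theorem~\ref{thm:doublegreedy} with $\vo=\vz$ and $r=1$ gives $G^{(\hat{\imath})}(\vd^{(\hat{\imath})}) \ge (\tfrac12-O(\eps))F(\vg^{(\hat{\imath})}\psum\vg\hprod\vz) + \tfrac14 F(\vg^{(\hat{\imath})}) + \tfrac14 F(\vg^{(\hat{\imath})}\psum\vg)$. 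Dropping the last non-negative term and using $\tfrac14 \ge (\tfrac12-O(\eps))\cdot\tfrac12$ to factor $(\tfrac12-O(\eps))$ out of the first two terms yields $G^{(\hat{\imath})}(\vd^{(\hat{\imath})}) \ge (\tfrac12-O(\eps))(1-2/\ell)\,F(\vz) = (\tfrac12-O(\eps))\,F(\vz)$, since $2/\ell \le 2\eps$. The choice of $i^*$ then gives $F(\vr) \ge G^{(\hat{\imath})}(\vd^{(\hat{\imath})}) \ge (\tfrac12-O(\eps))\,F(\vz)$, as required.
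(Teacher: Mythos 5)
Your proof is correct, and its overall architecture coincides with the paper's: bound the infeasibility via Observation~\ref{obs:sane} plus convexity and down-closedness, split $\vz$ into $\vg \hprod \vz$ and $(\vone_\cN - \vg)\hprod\vz$, control the first part with Theorem~\ref{thm:doublegreedy} (at $r=1$) and the second with Lemma~\ref{lemma:gmcg-main}, and tie things together with Corollary~\ref{cor:low_overlap}. (Fixing a single index $\hat{\imath}$ by averaging early, rather than keeping the average over all $i$ and invoking the choice of $i^*$ at the end as the paper does, is only a cosmetic difference.) The one genuinely different ingredient is your derivation of the ``cross'' inequality $F(\vg^{(\hat{\imath})}\psum\hat\vz) + F(\vg^{(\hat{\imath})}\psum\dot\vz) \ge F(\vg^{(\hat{\imath})}\psum\vz) + F(\vg^{(\hat{\imath})})$: you are right that the lattice inequality $F(\vx)+F(\vy)\ge F(\vx\vee\vy)+F(\vx\wedge\vy)$ does not give it directly, but the paper obtains it in one line from Lemma~\ref{lem:dr-submodularity-addition} applied to the DR-submodular function $\vx \mapsto F(\vx \psum \vg^{(\hat{\imath})})$ (DR-submodular by Lemma~\ref{lem:closure}) with $\va = \dot\vz$, $\vb = \hat\vz$ and $\va+\vb = \vz \le \vone_\cN$; so what you call the main obstacle is exactly what that lemma is there for. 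Your three-source coupling ($R_\alpha = R_1\cup(R_3\setminus R_2)$, $R_\beta = R_1\cup(R_3\cap R_2)$, with $R_\alpha\cup R_\beta = R_1\cup R_3$ and $R_\alpha\cap R_\beta = R_1$, all with product marginals matching the four vectors) is a valid alternative and in effect re-proves this special case of Lemma~\ref{lem:dr-submodularity-addition} by purely combinatorial means; it is a bit more hands-on, while the paper's route generalizes immediately to any DR-submodular $G$. The remaining algebra (dropping the $\tfrac14 G^{(\hat{\imath})}(\vone_\cN)$ term, absorbing $\tfrac14 F(\vg^{(\hat{\imath})})$ via $\tfrac14 \ge (\tfrac12 - O(\eps))\cdot\tfrac12$, and $2/\ell \le 2\eps$) matches the paper's rearrangement of its inequalities~\eqref{eq:guided-mcg} and~\eqref{eq:usm}.
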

\begin{proof}
  By Observation \ref{obs:sane}, for each $i \in [\ell]$, $\vg^{(i)}/2 \in \cP$.
  Therefore, because $\cP$ is convex, $\frac{\vg^{(i^*)} + \vg}{2(\ell + 1)} = \frac{\sum_{i=1}^\ell (\vg^{(i)} / 2) + (\vg^{(i^*)} / 2)}{\ell + 1} \in \cP$. Since $\cP$ is down-closed, this implies that $\frac{\vr}{2(\eps^{-1} + 2)}$ also belongs to $\cP$ because
	\[
		\frac{\vr}{2(\eps^{-1} + 2)}
		\leq
		\frac{\vr}{2(\ell + 1)}
		=
		\frac{\vg^{(i^*)} \psum (\vg \hprod \vd^{(i^*)})}{2(\ell + 1)}
		\leq
		\frac{\vg^{(i^*)} + \vg}{2(\ell + 1)}
		\enspace.
	\]

The rest of the proof is devoted to lower bounding $F(\vr)$. Let $\hat \vz \triangleq (\vone_\cN - \vg) \hprod \vz$ and
  $\dot \vz \triangleq \vg \hprod \vz$.
	Each time Algorithm~\ref{alg:gen-bicriteria} invokes the sub-routine \textsc{Guided-Measured-Continuous-Greedy}, it passes $\vg$ as the value for the parameter $\va$ of this sub-routine. Since $\vg$ only increases during the execution of Algorithm~\ref{alg:gen-bicriteria}, its final value is (coordinate-wise) at least as large as the value passed as $\va$ in each such invocation. Thus, by Lemma~\ref{lemma:gmcg-main},
  for each $i \in [\ell]$,
  \begin{equation} \label{eq:guided-mcg}
    \frac{1}{2} F( \vg^{(i)} ) \ge F( \vg^{(i)} \psum ((\vone_\cN - \vg) \hprod \vz ) - F( \vg^{(i)} )
		=
		F( \vg^{(i)} \psum \hat \vz ) - F( \vg^{(i)} ) \enspace. 
  \end{equation}

	Consider now the function $G^{(i)}( \vx )  \triangleq \FF{ \vg^{(i)} \psum (\vg \hprod \vx ) }$ defined by Algorithm \ref{alg:gen-bicriteria} for some value $i \in [\ell]$. By Lemma~\ref{lem:closure}, $G^{(i)}$ is non-negative and DR-submodular. Thus, we can use
	Theorem \ref{thm:doublegreedy} to lower bound the value of the vector $d^{(i)}$ obtained by invoking \textsc{DoubleGreedy}$(G^{(i)})$. Specifically, setting $r = 1$ in the guarantee of this theorem yields
	\begin{multline*}
		F(\vg^{(i)} \psum (\vg \hprod \vd^{(i)}))
		=
		G^{(i)}(\vd^{(i)})
		\geq
		\bigg(\frac{1}{2} - O(\eps)\bigg) \cdot G^{(i)}(\vz) + \frac{1}{4} \cdot G^{(i)}(\vzero) + \frac{1}{4} \cdot G^{(i)}(\vone_\cN)\\
		\geq
		\bigg(\frac{1}{2} - O(\eps)\bigg) \cdot F(\vg^{(i)} \psum (\vg \hprod \vz)) + \frac{1}{4} \cdot F(\vg^{(i)})
		=
		\bigg(\frac{1}{2} - O(\eps)\bigg) \cdot F(\vg^{(i)} \psum \dot \vz) + \frac{1}{4} \cdot F(\vg^{(i)})
		\enspace,
	\end{multline*}
	where the second inequality follows from the non-negativity of $F$. Rearranging this inequality, and using the non-negativity of $F$ again, gives
  \begin{equation} \label{eq:usm}
    (2 + O(\eps)) \cdot F(\vg^{(i)} \psum (\vg \hprod \vd^{(i)}))
		\geq
		F(\vg^{(i)} \psum \dot \vz) + \frac{1 + O(\eps)}{2} \cdot F(\vg^{(i)})
		\geq
		F(\vg^{(i)} \psum \dot \vz) + \frac{1}{2} \cdot F(\vg^{(i)})
		\enspace.
  \end{equation}

  We are now ready to prove the promised lower bound on $F(\vr)$.
  \begin{align*}
    ( 1 - 2\eps )\FF{ \vz } &\le \bigg( 1 - \frac{2}{\ell }\bigg)\FF{ \vz }  \\
                                                &\le \frac{1}{\ell} \sum_{i=1}^\ell F( \vz \psum \vg^{(i)} ) \\
                                                &\le \frac{1}{\ell} \sum_{i=1}^\ell [F(\hat \vz \psum \vg^{(i)}) + F(\dot \vz \psum \vg^{(i)}) - F(\vg^{(i)})] \\
                                                &\le \frac{1}{\ell} \sum_{i=1}^\ell \Big[\frac{1}{2} F( \vg^{(i)} ) + F(\dot \vz \psum \vg^{(i)})\Big] \\
                                                &\le \frac{1}{\ell} \sum_{i=1}^\ell [(2 + O( \eps )) \cdot F( \vg^{(i)} \psum ( \vg \hprod \vd^{(i)} ))]\\
																								&\le (2 + O( \eps )) \cdot F( \vg^{(i^*)} \psum ( \vg \hprod \vd^{(i^*)} ))
																								=(2 + O( \eps )) \cdot \FF{ \vr }
																								\enspace,
  \end{align*}
	where the first inequality holds since $\eps = \frac{1}{\eps^{-1}} \geq \frac{1}{\ell}$, the second inequality follows from Corollary~\eqref{cor:low_overlap}, the third inequality follows from Lemma~\ref{lem:dr-submodularity-addition} by setting $\va = \dot \vz$ and $\vb = \hat{\vz}$ since $F(\cdot \psum g^{(i)})$ is a DR-submodular function by Lemma~\ref{lem:closure}, the fourth inequality follows from Inequality~\eqref{eq:guided-mcg}, the penultimate inequality follows from Inequality~\eqref{eq:usm}, and the last inequality holds by the choice of $i^*$.
\end{proof}

\subsection{Combinatorial and Deterministic Algorithms for Knapsack and Matroid Constraints} \label{ssc:combinatorial_general}

In this section, we present combinatorial and/or deterministic
bicriteria algorithms with a bicriteria approximation ratio of $(1/2 - \eps, O( \eps^{-1} ))$ for both the matroid-constrained and
knapsack-constrained submodular maximization problems.
Algorithm~\ref{alg:comb-gen-knapsack} summarizes all these 
algorithms. As in the warmup in Section \ref{sec:warmup},
$\ell$ disjoint sets $A_1, \ldots, A_{\ell}$
are greedily constructed. 
The greedy algorithm used for this purpose as a subroutine depends on
the constraint type. For matroid constraints,
\textsc{Iterative Matroid Greedy} (Algorithm \ref{alg:greedy_matroid}) is invoked,
while for knapsack constraints, $\densitygreedy$ (Algorithm \ref{alg:greedy_density})
is used. 
Then, a variant of \textsc{DoubleGreedy} is used to best extend
each set $A_i$ using elements of $A = \bigcup_{i \in [\ell]} A_i$.
The choice of the exact algorithm used to
implement \textsc{DoubleGreedy} determines whether the overall algorithm
is deterministic or combinatorial: the original \textsc{DoubleGreedy}
of \citet{buchbinder2015tight} is randomized and combinatorial,
while the derandomized version of \citet{buchbinder2018deterministic}
is deterministic but requires solving linear programs. For simplicity, we assume in the analysis below that the deterministic algorithm is used. If this is not the case, then appropriate expectation signs should be added throughout the analysis.

 To be more concrete, let $f\colon 2^\cN \to \nnR$ be the non-negative submodular objective function over a ground set $\cN$. If Algorithm~\ref{alg:comb-gen-knapsack} is invoked for a knapsack constraints, it gets as input a positive vector $\vp$ of element prices\footnote{In this section, we assume for simplicity that the price of every element is strictly positive. It is possible to drop this assumption, but this would requires some extra care in both the algorithm and the analysis. Since this extra care is relatively straight forward, and would make understanding of our main new ideas more difficult, we prefer to avoid it.} and a budget $B \geq 0$, and its goal is to find a set that (approximately) maximizes $\max_{S \subseteq \cN : \sum_{e \in S} p_e \le B} f(S)$. Otherwise, if Algorithm~\ref{alg:comb-gen-knapsack} gets a matroid $\cM = (\cN, \cI)$, its goal is to find an independent set of this matroid that (approximately) maximizes $f$.
Algorithm~\ref{alg:comb-gen-knapsack} often uses the notation $f|_S$ to denote the restriction of $f$ to a ground set $S \subseteq \cN$.
Additionally, Algorithm~\ref{alg:comb-gen-knapsack} implicitly assumes that dummy elements are added
to the ground set $\cN$,
in the following fashion.
A dummy element $d$ does not change the
value of any set $S$;
that is $f(S) = f(S \setminus \{ d \} )$.
In the case of a
knapsack constraint $B$,
we set the price of the dummy element $d$ to $p_d = B$.
In the case of a matroid constraint,
adding a dummy element $d$ to a feasible set
does not violating
feasibility if the size of the resulting set does not exceed the rank of the matroid. One may verify that the
augmented set system obtained in this way
is still a matroid. The number of dummy elements that are added should be large enough
to ensure that each element selected by the greedy algorithm used
has non-negative marginal gain (Observation~\ref{obs:comb:dummies} formally determines this number).
To use Algorithm~\ref{alg:comb-gen-knapsack}, one should add the appropriate number of dummy elements to the ground set, execute the algorithm, and then simply delete
any dummy elements from the set returned by the algorithm. Notice that such deletion does not affect the value of the output set.

\begin{algorithm}[ht]
   \caption{\textsc{Combinatorial $(1/2 - \eps, O(1/\eps ))$-Bicriteria Algorithm for Knapsack/Matroid Constraint}} \label{alg:comb-gen-knapsack}
   \DontPrintSemicolon
   \textbf{Input:} function $f \colon 2^\cN \to \nnR$, error parameter $\eps > 0$, and \textbf{either} matroid $\cM = (\cN, \mathcal I )$; \textbf{or} both knapsack budget $B$ and element prices $\vp$ \\
   Let $A \gets \varnothing$ and $\ell = \lceil \frac{1}{2\eps } \rceil$.\\
	 Add in sufficiently many dummy elements as specified by Observation~\ref{obs:comb:dummies}. \\
   \For{ $i = 1$ to $\ell$ }
   {
     \textcolor{teal}{[The flavor of combinatorial greedy algorithm is selected based upon the constraint. Parameters are set to produce a solution
       infeasible by a factor of at least $2$ and at most $3$.]} \\
     \If{ \textit{knapsack constraint} }
     {
       $A_i \gets \densitygreedy ( f |_{\cN \setminus A}, \vp, 2B, 1/e )$ \textcolor{teal}{(Algorithm \ref{alg:greedy_density})}\\
     } \Else {
       $A_i \gets \textsc{Iterative Matroid Greedy} ( f |_{\cN \setminus A}, \cM, 1/4 )$ \textcolor{teal}{(Algorithm \ref{alg:greedy_matroid})}\\
     }
     $A \gets A \cup A_i$} 
   \For{ $i = 1$ to $\ell$ }
   {
     Let $g_i$ be function $g_i(S) = f( A_i \cup S )$.\\

     $D_i \gets \textsc{DoubleGreedy} ( g_i |_{A} ) $}
   \Return the set in $\{D_i \cup A_i \mid i \in [\ell]\}$ maximizing $f$.
 \end{algorithm}

\paragraph{Analysis of Algorithm~\ref{alg:comb-gen-knapsack}.} In the following,
%we denote $\marge{x}{S} \triangleq f(S \cup \{x \}) - f(S)$. 
let $A,A_i, D_i$ (for every $i \in [\ell]$) denote their respective values at termination
of Algorithm \ref{alg:comb-gen-knapsack}. 
%The following observation bounds the number of dummy elements required for each constraint type.
\begin{observation} \label{obs:comb:dummies}
  For the case of the knapsack constraint,
  add $2\ell$ dummy elements to the ground set.
  For the case of the matroid constraint, add
  $2\ell k $ dummy elements, where $k$ is the
  rank of the matroid. Note that at most $O(n/\eps)$ dummy elements are added in either case.
	
  For every $i \in [\ell]$ and element $a \in A_i$,
  let $A_{i,a}$
  be the subset of $A_i$ already chosen by
  the greedy algorithm when $a$ is selected.
  Given that the above dummy elements have been added, it is guaranteed to hold that
  \begin{enumerate}
    \item the marginal gain $\marge{a}{ A_{i,a} } \ge 0$.
    \item the sets $A_i$ are large in the following
sense. For a knapsack constraint, $c( A_i ) \ge 2B$;
and for a matroid constraint, $A_i$ is the union of
two disjoint bases of the matroid.
\end{enumerate}
\end{observation}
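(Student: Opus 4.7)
The plan is to handle each constraint type in turn. Two facts about dummies make property~(1) immediate: a dummy has marginal contribution exactly $0$ to $f$, and the selection rules of both greedy subroutines order candidates by marginal value (per unit cost, in the knapsack case), so a dummy ties or beats any real element of non-positive marginal and is thus preferred whenever a real element of strictly negative marginal might otherwise be chosen. Consequently, property~(1) reduces to ensuring a dummy is always admissible whenever needed; below I verify this via a dummy-consumption count for each case, and simultaneously establish property~(2).

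For the knapsack case, \densitygreedy{} is invoked with budget $2B$ and $\eps = 1/e$, yielding termination threshold $2B \cdot \ln e = 2B$. I first bound the number of dummies consumed in a single outer iteration: suppose $k$ dummies are picked there, contributing cumulative cost $kB$. Since the inner loop only picks another element while cumulative cost is $< 2B$, before the last dummy is chosen the cumulative cost from the previous dummies alone is at most $(k-1)B < 2B$, forcing $k \leq 2$. Hence at most $2\ell$ dummies are consumed across the $\ell$ outer iterations, matching the number added. Conversely, at the start of each outer iteration at least $2$ dummies remain, contributing cost $\geq 2B$ to $c(\cN \setminus A)$, so the inner greedy either reaches the threshold inside its while-loop or (in the boundary case $c(\cN \setminus A) = 2B$) returns $A_i = \cN \setminus A$ of cost exactly $2B$ via its early-termination branch. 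Either way $c(A_i) \geq 2B$, giving property~(2).

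For the matroid case, the augmented matroid $\cM'$ has rank $k$, and any dummy can be added to any independent set of size strictly less than $k$ (dummies lie outside $\cN$, so they preserve $S \cap \cN \in \cI$; the only remaining constraint is the size bound). \textsc{Iterative Matroid Greedy} is invoked with $\eps = 1/4$, performing $\lceil \log_2 4 \rceil = 2$ iterations of its outer loop, constructing disjoint independent sets $T_{i,1}$ and $T_{i,2}$. Each is a base of $\cM'$: the inner while-loop continues as long as some element can be added, and whenever $|T_{i,j-1}| < k$, a fresh dummy can be added (given sufficient supply, verified next). Hence both sets have size $k$ and $A_i = T_{i,1} \cup T_{i,2}$ is the union of two disjoint bases, giving property~(2). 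Each base consumes at most $k$ dummies (worst case: all its elements are dummies), so at most $2k$ dummies per outer iteration and $2\ell k$ overall, matching the number added. The bound $O(n/\eps)$ holds since $\ell = O(1/\eps)$ and $k \leq n$.

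The principal technical point in both cases is verifying that dummies are still available whenever they are needed, both for property~(1) and for completing property~(2). This reduces to matching the per-iteration consumption bounds above ($2$ or $2k$) with the total supply. The bookkeeping is tight but routine, once one observes that a dummy's ``footprint''---price $B$ in the knapsack case, one unit of rank in the matroid case---saturates the respective termination condition at exactly the stated rate.
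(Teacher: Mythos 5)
Your proposal is correct and follows essentially the same route as the paper: bound the dummy consumption per outer iteration (at most $2$ for knapsack, at most $2k$ for matroid), conclude that a zero-marginal dummy is always an admissible candidate, and derive both the non-negative marginals and the size guarantees ($c(A_i) \geq 2B$, respectively two disjoint bases). Your explicit treatment of the knapsack boundary case $c(\cN \setminus A) = 2B$ (where the early-termination branch fires but still yields cost exactly $2B$) is a small extra care the paper glosses over, but it does not change the argument.
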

\begin{proof}
  The greedy algorithm employed by Algorithm
  \ref{alg:comb-gen-knapsack} depends on the constraint. For either
  version, observe that at least one dummy element always remains as a candidate
  for the greedy algorithm because:
  for a knapsack constraint, at most two dummy elements are selected
  into each set $A_i$, and for a matroid constraint, at most $2k$ dummy
  elements are selected into each set $A_i$.
	This immediately implies $1$.
	In the case of a knapsack constraint, it also means that Line~\ref{line:early_termination} of Algorithm~\ref{alg:greedy_density} is never invoked, and thus, $c(A_i) \ge 2B$. In the case of a matroid constraint, it means that the loop on Line~\ref{line:greedy_matroid_loop} of Algorithm~\ref{alg:greedy_matroid} does not terminate before $T_{i, j}$ becomes a base, and thus, $A_i$ is the union of two disjoint bases of the matroid. Thus, $2$ holds as well.
	%Hence, it is never the case that
  %the greedy algorithm terminates before the feasibility condition is met; that
  %is, $c(A_i) \ge 2B$ in the case of the knapsack constraint, and $A_i$ is the
  %union of two, disjoint bases in the case of the matroid constraint.
  %Hence, both 1 and 2 follow. Finally, at most $O( n / \eps )$ dummy elements
  %are added to the ground set. However, only a constant amount
  %of additional work is needed per element selected by the
  %greedy algorithm, as the gain of the dummy elements does
  %not need to be checked. Therefore,
  %the time complexity of Algorithm \ref{alg:comb-gen-knapsack}
  %remains polynomial in the size of the original ground set. 
\end{proof}

Observe that $A_i \cap A_j = \varnothing$ for all $i \neq j \in [\ell]$,
and $A = \cup_{i \in [\ell]} A_i$. We often use below the notation $A_{i, a}$ introduced by Observation~\ref{obs:comb:dummies}. 
When $i$ is understood
from the context in this notation, we sometimes drop it and write $A_a$. 
As usual, let $\opt \subseteq \cN$ be an optimal solution to
the problem, and let $\hat O \triangleq O \setminus A$, $\dot O \triangleq O \cap A$. 
The proofs for knapsack and matroid constraints
differs only slightly; in each case, we partition
$A_i$ into two parts.
For a matroid constraint, the partition is
entirely natural: $A_i = E_1\cupdot E_2$,
where $E_1$ and $E_2$ are disjoint bases of the matroid,
and every element of $E_1$ is selected before those
in $E_2$ (such a partition exists by the definition of Algorithm~\ref{alg:greedy_matroid}). For a knapsack constraint, a little
bit more work is needed. 
For the purposes of the analysis only,
% suppose we have an ordered set $\{s_1, \ldots, s_m \}$; on several occasions, we will wish to
% consider a segment $S_k = \{ s_1, \ldots, s_k \}$ of the set of cost exactly equal to $B$, when such segment may not exist.
% That is, $c(S_{k-1}) \le B$ and $c(S_k) > B$. Thus, to get a segment of cost exactly $B$,
% we need to assign part of element $s_k$ to the segment.
we may \textit{split}
an item $s \in \cN$ into two items.
Formally, this is defined as follows.
\begin{definition} \label{def:split}
  Let $f$ be a non-negative submodular function and $\vp$ be a positive price vector over $\cN$.
  Given an element $s \in \cN$ and
  value $\gamma \in (0,1)$,
  a $\gamma$-split of $s$ produces a new function $f^*$ and price vector $\vp^*$ as follows.
  Let $\cN^* = \cN \cup \{\sigma_1, \sigma_2 \} \setminus s $.
  The function $f^*\colon 2^{\cN^*} \to \nnR$ is defined for
  every set $S \subseteq \cN^*$ by
  % \[ f^*(S) = \begin{cases} 
  %   f(S) & \text{if } S \cap \{ \sigma_1, \sigma_2 \} = \varnothing ,\\
  %   f( \widehat{S} + s ) & \text{if } S \supseteq \{ \sigma_1, \sigma_2 \} ,\\
  %   f( \widehat{S} ) + \gamma \marge{s}{ \widehat S }& \text{if } S \cap \{ \sigma_1, \sigma_2 \} = \{ \sigma_1 \} ,\\
  %   f( \widehat{S} ) + ( 1 - \gamma ) \marge{s}{ \widehat S }& \text{if } S \cap \{ \sigma_1, \sigma_2 \} = \{ \sigma_2 \} .\\
  % \end{cases}  \]
  \[f^*(S) = f( \widehat S ) + \gamma \cdot f(s \mid \widehat S ) \cdot \mathbb{I} \left[ \sigma_1 \in S \right] + (1 - \gamma) \cdot f(s \mid \widehat S ) \cdot \mathbb{I} \left[ \sigma_2 \in S \right] \enspace, \]
  where $\widehat{S} \triangleq S \setminus \{\sigma_1, \sigma_2 \}$, and $\mathbb I$ takes the value $1$ if its argument is true, and the value $0$ otherwise. 
  The price vector $\vp^*$ is defined by $p^*_x = p_x$ for every $x \in \cN^* \setminus \{\sigma_1, \sigma_2\}$,
  $p^*_{\sigma_1} = \gamma p_s$, and $p^*_{ \sigma_2 } = (1 - \gamma ) p_s$.
\end{definition}
The next observation relates the marginal gain of a split item to the marginal gain of the original item.
It follows directly from Definition~\ref{def:split}.
\begin{observation} \label{obs:split-1}
  Consider a $\gamma$-split of $s \in \cN$. 
  For all sets $S \subseteq \cN^*$,
  1) if $\sigma_1 \not \in S$, then
  $\margen{f^*}{\sigma_1}{S} = \gamma \margen{f}{s}{\widehat S}$;
  and 2) if $\sigma_2 \not\in S$, then $\margen{f^*}{\sigma_2}{S} = (1 - \gamma ) \margen{f}{s}{\widehat S}$.
\end{observation}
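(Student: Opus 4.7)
The plan is to prove both parts by directly unpacking Definition~\ref{def:split} and exploiting the observation that the indicators $\mathbb{I}[\sigma_1 \in S]$ and $\mathbb{I}[\sigma_2 \in S]$ are the only parts of $f^*(S)$ that depend on whether $\sigma_1$ and $\sigma_2$ belong to $S$; in particular, the set $\widehat{S} = S \setminus \{\sigma_1,\sigma_2\}$ is unaffected by adding $\sigma_1$ or $\sigma_2$ to $S$. This alone should make both identities almost immediate.

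For part 1, assume $\sigma_1 \notin S$. I would first note that $\widehat{S + \sigma_1} = \widehat{S}$ (regardless of whether $\sigma_2 \in S$), so the three ingredients $f(\widehat{S})$, $f(s \mid \widehat{S})$, and $\mathbb{I}[\sigma_2 \in S]$ appearing in the formula for $f^*$ take the same values at $S$ and at $S + \sigma_1$. Writing out $f^*(S+\sigma_1) - f^*(S)$ using Definition~\ref{def:split}, the only surviving term is the one whose indicator changed from $0$ to $1$, namely $\gamma \cdot f(s \mid \widehat{S})$, which is exactly $\gamma \cdot \margen{f}{s}{\widehat{S}}$ by the definition of the marginal.

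Part 2 is entirely symmetric: when $\sigma_2 \notin S$, the same reasoning applies with $\sigma_1$ and $\sigma_2$ interchanged, and the surviving term is $(1-\gamma)\cdot f(s \mid \widehat{S})$. Since the argument is a direct application of the definition and involves no submodularity, non-negativity, or other property of $f$, there is no substantive obstacle; the only thing to be careful about is that when one of $\sigma_1,\sigma_2$ is added we do not also change the status of the other, which is ensured by the hypotheses of the two parts.
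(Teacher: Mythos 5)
Your proposal is correct and matches the paper's intent: the paper simply states that the observation ``follows directly from Definition~\ref{def:split},'' and your argument -- noting that $\widehat{S+\sigma_1}=\widehat{S}$ so only the indicator term for $\sigma_1$ (resp.\ $\sigma_2$) changes -- is exactly the direct unpacking the paper has in mind.
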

The next lemma shows that the function $f^*$ resulting from a $\gamma$-split is submodular like the original function $f$. 
% \begin{proof}
%   We prove 1) only; the proof for 2) is analogous.
%   First consider the case that $\sigma_2 \not \in S$.
%   Then
%   \begin{align*}
%     \margen{f^*}{\sigma_1}{S} &= f^*( S + \sigma_1 ) - f^*( S ) \\
%     &= f( \widehat{S} ) + \gamma \margen{f}{s}{\widehat S} - f( \widehat S ) = \gamma \margen{f}{s}{ \widehat S },
%   \end{align*}
%   since $S = \widehat S$.
%   Next, consider the case that $\sigma_2 \in S$.
%   \begin{align*}
%     \margen{f^*}{\sigma_1}{S} &= f^*( S + \sigma_1 ) - f^*( S ) \\
%                               &= f( \widehat{S} + s ) - \left[ f( \widehat S ) + ( 1 - \gamma ) \margen{f}{s}{ \widehat S } \right] \\
%                               &= \gamma \margen{f}{s}{ \widehat S }. 
%   \end{align*}
% \end{proof}
\begin{lemma} \label{lemm:split-submod}
  Consider a $\gamma$-split of $s \in \cN$
  into $\{ \sigma_1, \sigma_2 \}$. 
  The resulting function $f^*$ is submodular.
\end{lemma}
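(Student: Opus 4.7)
Plan: The cleanest approach is to realize $f^*$ as the restriction of the multilinear extension $F$ of $f$ to a specific family of points in $[0, 1]^\cN$. Define $\Phi \colon 2^{\cN^*} \to [0, 1]^\cN$ by $(\Phi(S))_v = \mathbb{I}[v \in S]$ for every $v \in \cN \setminus \{s\}$, and $(\Phi(S))_s = \gamma \cdot \mathbb{I}[\sigma_1 \in S] + (1 - \gamma) \cdot \mathbb{I}[\sigma_2 \in S]$. Note that the $s$-coordinate of $\Phi(S)$ always takes one of the four values $0$, $\gamma$, $1 - \gamma$, or $1$, so $\Phi(S) \in [0, 1]^\cN$. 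A direct case analysis on $\{\sigma_1, \sigma_2\} \cap S$, using Definition~\ref{def:split} together with the multilinearity of $F$ in the $s$-coordinate, establishes the identity $f^*(S) = F(\Phi(S))$ for every $S \subseteq \cN^*$.

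Given this identification, I would compute the marginal $f^*(u \mid S)$ for every $u \in \cN^* \setminus S$ by exploiting the multilinearity of $F$. If $u \in \cN \setminus \{s\}$, then $\Phi(S + u) = \Phi(S) + \vone_{\{u\}}$, and multilinearity yields $f^*(u \mid S) = \frac{\partial F(\vx)}{\partial x_u}\big|_{\vx = \Phi(S)}$. If $u = \sigma_1$, the hypothesis $\sigma_1 \notin S$ forces the $s$-coordinate of $\Phi(S)$ to be either $0$ or $1 - \gamma$, so $\Phi(S + \sigma_1) = \Phi(S) + \gamma \cdot \vone_{\{s\}}$; multilinearity of $F$ in $x_s$ then gives $f^*(\sigma_1 \mid S) = \gamma \cdot \frac{\partial F(\vx)}{\partial x_s}\big|_{\vx = \Phi(S)}$. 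The case $u = \sigma_2$ is symmetric and yields $f^*(\sigma_2 \mid S) = (1 - \gamma) \cdot \frac{\partial F(\vx)}{\partial x_s}\big|_{\vx = \Phi(S)}$.

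To conclude, for any $S \subseteq T \subseteq \cN^*$ and $u \in \cN^* \setminus T$, I would verify $f^*(u \mid S) \geq f^*(u \mid T)$. Since $S \subseteq T$ implies both $\mathbb{I}[v \in S] \leq \mathbb{I}[v \in T]$ for every $v \in \cN \setminus \{s\}$ and the monotonicity of $\gamma \cdot \mathbb{I}[\sigma_1 \in \cdot] + (1 - \gamma) \cdot \mathbb{I}[\sigma_2 \in \cdot]$ in the set argument, we obtain $\Phi(S) \leq \Phi(T)$ coordinatewise. As the multilinear extension of a submodular function is DR-submodular (see Section~\ref{ssc:additional_preliminaries}), this implies $\nabla F(\Phi(S)) \geq \nabla F(\Phi(T))$, and comparing the appropriate partial derivative in each of the three marginal cases above yields the desired inequality. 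The main technical step, and essentially the only part requiring genuine calculation, is the verification of $f^*(S) = F(\Phi(S))$; once this identity is in place, the DR-submodularity of $F$ does all the real work.
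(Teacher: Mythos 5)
Your proof is correct, and it takes a genuinely different route from the paper's. The paper proves the lemma by verifying the pairwise characterization $f^*(S + x_1) + f^*(S + x_2) \ge f^*(S + x_1 + x_2) + f^*(S)$ directly from Definition~\ref{def:split}, through an exhaustive case analysis on whether $x_1, x_2$ are split items and on $S \cap \{\sigma_1, \sigma_2\}$. You instead observe that $f^*(S) = F(\Phi(S))$, where $\Phi(S)$ agrees with $\characteristic_{\widehat{S}}$ off the coordinate $s$ and has $s$-coordinate $\gamma \cdot \mathbb{I}[\sigma_1 \in S] + (1-\gamma) \cdot \mathbb{I}[\sigma_2 \in S]$; this identity is immediate since $F$ restricted to such points equals $f(\widehat{S}) + t \cdot f(s \mid \widehat{S})$ with $t$ the $s$-coordinate (no real case analysis is even needed there). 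The marginal computations via multilinearity are right (in particular, $\sigma_1 \notin S$ guarantees the $s$-coordinate of $\Phi(S)$ is $0$ or $1-\gamma$, so adding $\sigma_1$ raises it by exactly $\gamma$ and stays in $[0,1]$), monotonicity of $\Phi$ with respect to inclusion is clear, and DR-submodularity of the multilinear extension (cited in Section~\ref{ssc:additional_preliminaries}) then gives the diminishing-returns inequality in all three cases for $u$, with the nonnegative factors $\gamma$ and $1-\gamma$ preserving the direction of the inequality. What your approach buys is brevity and generality---it avoids the subcase bookkeeping and would extend verbatim to splitting an element into any number of fractional copies---at the cost of invoking the (standard, and already used elsewhere in the paper) fact that multilinear extensions are DR-submodular, whereas the paper's argument is fully elementary and self-contained from the set-function definition.
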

\begin{proof}
  We use in this proof the following characterization of submodularity (due to \citep{zotero-3825}):
  $f^*$ is submodular if and only if,
  for all sets $S \subseteq \cN^*$ and elements
  $x_1, x_2 \in \cN^* \setminus S$,
  \begin{equation} \label{eq:sm1}
    f^*( S + x_1 ) + f^*( S + x_2 ) \ge f^*( S + x_1 + x_2 ) + f^*( S ) \enspace.
  \end{equation}
  Notice that Inequality~\eqref{eq:sm1} can also be written as
  $\margen{f^*}{ x_2 }{ S + x_1 } \le \margen{f^*}{ x_2 }{ S }$.
  To proceed, fix a set $S \subseteq \cN^*$ and elements $x_1, x_2 \in \cN^* \setminus S$, and recall that
	$\widehat{T}$ denotes the set $T$ without the items $\sigma_1$ and $\sigma_2$ resulting from the split. 
  We consider below several cases.

  \textbf{Case $x_2 \in \{ \sigma_1 ,\sigma_2 \}$.}
  Suppose $x_2 = \sigma_1$.
  \begin{align*}
    \margen{f^*}{ x_2}{ S + x_1 } &= \gamma \cdot \margen{f}{s}{ \widehat{S + x_1} } \\
    &\le \gamma \cdot \margen{f}{s}{ \widehat{S} } = \margen{f^*}{ x_2}{ S }\enspace,
  \end{align*}
  where the first and last equalities are by Observation \ref{obs:split-1},
  and the inequality is from the submodularity of $f$. 
  The argument for $x_2 = \sigma_2$ is exactly analogous, and we omit it.

  \textbf{Case $x_2 \not \in \{\sigma_1, \sigma_2\}$ and $x_1 \in \{ \sigma_1, \sigma_2 \}$.}
  Suppose $x_1 = \sigma_1$ (the argument for $x_1 = \sigma_2$ is analogous).
  Then
  \begin{align*}
    \margen{f^*}{ x_2}{ S + \sigma_1 } &= f^*(S + \sigma_1 + x_2) - f^*( S + \sigma_1 ) \\
                                       &= \margen{f^*}{ \sigma_1 }{ S + x_2} + f^*( S + x_2 ) - [ \margen{f^*}{\sigma_1}{S} + f^*(S) ] \\
                                       &= [ \margen{f^*}{ \sigma_1 }{ S + x_2} - \margen{f^*}{\sigma_1}{S} ] + \margen{f^*}{x_2}{S} \\
                                       &\le \margen{f^*}{x_2}{S}\enspace,
  \end{align*}
  where the inequality follows from the previous case.
	
	\textbf{Case $x_1, x_2 \not \in \{\sigma_1, \sigma_2\}$.}
  We consider four separate subcases here, depending on the intersection of $S$ with
  $\{ \sigma_1, \sigma_2 \}$.

  \textbf{Subcase $\sigma_1, \sigma_2 \in S$.}
  Then
  \begin{align*}
  \margen{f^*}{ x_2}{ S + x_1 } &= f^*(S + x_1 + x_2) - f^*( S + x_1 ) \\
                                &= f(\hat S + x_1 + x_2 + s) - f( \hat S + x_1 + s ) \\
                                &\le f(\hat S + x_2 + s) - f( \hat S + s ) \\
                                &= \margen{f^*}{x_2}{S} \enspace,
  \end{align*}
  where the second and third equalities are from the definition of $f^*$, and the inequality is by the submodularity of $f$.

  \textbf{Subcase $S \cap \{ \sigma_1, \sigma_2 \} = \{ \sigma_1 \}$.}
  Then
  \begin{align*}
    \margen{f^*}{ x_2}{ S + x_1 } &= f^*(S + x_1 + x_2) - f^*( S + x_1 ) \\
                                &= \margen{f^*}{ \sigma_1 }{ \hat S + x_1 + x_2 } + f^*( \hat S + x_1 + x_2 )
                                  - \margen{f^*}{ \sigma_1 }{ \hat S + x_1 } - f^*( \hat S + x_1 ) \\
                                &= \gamma \left[ \margen{f}{s}{ \hat S + x_1 + x_2 } - \margen{f}{s}{\hat S + x_1 } \right] + \margen{f^*}{x_2}{\hat S + x_1 } \\
                                %&= \gamma \left[ f( \hat S + x_1 + x_2 + s ) - f( \hat S + x_1 + x_2 ) - f( \hat S + x_1 + s ) + f( \hat S + x_1 ) \right] + \margen{f^*}{x_2}{\hat S + x_1 } \\
                                &= \gamma \left[ \margen{f}{x_2}{\hat S + x_1 + s} - \margen{f}{x_2}{ \hat S + x_1 } \right] + \margen{f^*}{x_2}{\hat S + x_1 } \\
                                &= \gamma \margen{f}{x_2}{\hat S + x_1 + s} + (1 - \gamma) \margen{f}{x_2}{\hat S + x_1 } \\
                                &\le \gamma \margen{f}{x_2}{\hat S + s} + (1 - \gamma) \margen{f}{x_2}{\hat S } \\
                                &= \margen{f^*}{ x_2}{ S } \enspace, 
  \end{align*}
  where the first four equalities are from the definition of marginal gain, the definition of $f^*$, and the assumption on $S$;
  the fifth equality is by the fact that none of the sets involved contain $\sigma_1$ or $\sigma_2$, so $f = f^*$ on these sets;
  the inequality is from submodularity of $f$, and the last equality is by similar reasoning as the first four equalities.

  \textbf{Subcase $S \cap \{ \sigma_1, \sigma_2 \} = \{ \sigma_2 \}$.} This subcase is exactly analogous to the previous subcase.

  \textbf{Subcase $S \cap \{ \sigma_1, \sigma_2 \} = \varnothing$.} In this subcase, none of the sets involved contain $\sigma_1$
  or $\sigma_2$, so it immediately follows from the submodularity of $f$. 
\end{proof}
Having defined as studied the split operation, we proceed to formally define
the partition of $A_i$ mentioned above. 
\begin{definition} \label{def:partition-ai}
  For every $i \in [ \ell ]$,
  we partition $A_i$ into two disjoint parts:
  $A_i = E_1 \cupdot E_2$, as follows.
  For a matroid constraint,
  $E_1$ is the base selected first
  by the greedy algorithm, and $E_2$
  is the base selected second.
  For a knapsack constraint, let
  $A_i = \{a_1, \ldots, a_m \}$, where the elements of $A_i$ are ordered here by the order in which they were chosen by the greedy algorithm.
  Let $j$ be the largest index such
  that $\sum_{r = j}^m p_{a_r} \ge B$.
  If the inequality is strict,
  split item $a_j$ into two items, $\sigma_1$ and $\sigma_2$,
  such that $p_{\sigma_2} + \sum_{r = j + 1}^m p_{a_r} = B$, and
  set $E_1 = \{ a_1, \ldots, a_{j-1}, \sigma_1 \}$,
  and $E_2 = \{ \sigma_2, a_{j+1}, \ldots, a_m \}$.
  Otherwise, if no split is needed, simply let
  $E_2 = \{ a_j, a_{j+1}, \ldots, a_m \}$, and
  $E_1 = A_i \setminus E_2$.  
  Since $\sum_{a \in A_i} p_a \ge 2B$, it must hold that
  $\sum_{a \in E_1} p_a \ge B = \sum_{a \in E_2} p_a$.
\end{definition}
%Thus, for each constraint type,
%we have a partition of $A_i = E_1 \dot{ \cup } E_2$.
Our next goal is to prove that $E_1$ holds at least half of the greedy value, which is formally stated by Lemma~\ref{lem:greedy_non-monotone}. Intuitively, this is
a consequence of submodularity and the greedy selection. However, to formally prove it, we first need to define a mapping $\theta\colon E_2 \to E_1$ such that for every element $a \in E_1$ it holds that $\sum_{u \in \theta^{-1}(a)} f(u \mid A_u) \leq f(a \mid A_a)$. For matroid constraints, the mapping $\theta$ can be simply chosen as the mapping $h$ whose existence is guaranteed by Corollary~\ref{cor:perfect_matching_two_bases}. Notice that this mapping guarantees that for every $a \in E_1$, there is a sole element $\theta^{-1}(a) \in E_2$, and furthermore, $E_1 + \theta^{-1}(a) - a$ is independent in the matroid. Thus, $\theta^{-1}(a)$ was a candidate for selection by the greedy algorithm when $a$ was select. Since $a$ ended up being selected, we get
\[
	\sum_{u \in \theta^{-1}(a)} f(u \mid A_u)
	=
	f(\theta^{-1}(a) \mid A_{\theta^{-1}(a)})
	\leq
	f(\theta^{-1}(a) \mid A_a)
	\leq
	f(a \mid A_a)
	\enspace,
\]
where the first inequality follows from the submodularity of $f$ since $a \in E_1$ was selected before $\theta^{-1}(a) \in E_2$.

  For a knapsack constraint, defining $\theta$ requires us to split
  items again. We also need to explain the meaning of the notation $A_a$ in the presence of splits.
	Recall that $A_a$ contains the elements of $A_i$ that precede $a$. Originally, before the first split,
	the elements of $A_i$ are sorted by the order in which they are added to $A_i$ by the greedy algorithm.
	Whenever an element $s$ is split, the elements $\sigma_1$ and $\sigma_2$ that replace $s$ take its
	place in the order of $A_i$, and $\sigma_1$ precedes $\sigma_2$.
  The following definition now describes the process used to construct $\theta$ for a knapsack constraint. 
  \begin{definition} \label{def:theta_construction}
    %We define a mapping $\theta\colon E_2 \to E_1$.
    %Let $E_1$ and
    %$E_2$ be ordered by their greedy construction.
    We define $\theta$ through the
    following process. Initially $\theta$ is undefined for all elements of $E_2$, and we iteratively define it, while maintaining the invariant that $\sum_{u \in \theta^{-1}(a)} p_u \leq p_a$ for every $a \in E_1$.
    In every iteration, let $u$ be the last element in $E_2$
    for which $\theta$ is undefined. If there is no such element, terminate.
    Otherwise, let $a$ be the last element
    in $E_1$ for which
    $\sum_{u \in \theta^{-1}(a)} p_u < p_a$.
    Such an element must exist by our invariant since $\sum_{u \in E_2} p_u \le \sum_{a \in E_1} p_a$
    and currently $u \not \in \theta^{-1}(a)$ for any $a \in E_1$.
		If setting $\theta(u) = a$ preserves the invariant $\sum_{u \in \theta^{-1}(a)} p_u \leq p_a$, then the definition of $\theta$ is extended in this way. Otherwise, the process first splits $a$ into $\sigma_1, \sigma_2$ using a $(1 - \frac{p_a - \sum_{u' \in\theta^{-1}(a)} p_{u'}}{p_u})$-split, and then sets $\theta(\sigma_2) = a$, while leaving $\theta(\sigma_1)$ undefined for the time being. In the last case $\sigma_1$ and $\sigma_2$ take the place of $u$ in $A_i$ and $E_2$, and one can verify that the invariant inequality $\sum_{u \in \theta^{-1}(a)} p_u \leq p_a$ becomes tight for the element $a$.
		%
    %If $c( f_k ) \le c( e_ j ) - \theta^{-1}(e_j )$,
    %extend the definition of $\theta$ by setting
    %$\theta( f_k ) = e_j$. Otherwise, split
    %$f_k$ in $\sigma_1, \sigma_2$, such that
    %$c( \sigma_2 ) = c( e_ j ) - \theta^{-1}(e_j )$,
    %and set $\theta ( \sigma_2 ) = e_j$, and leave
    %$\theta( \sigma_1 )$ undefined.
    %After the split,
    %$E_2$ is ordered as follows
    %$\{ f_1, \ldots, f_{k-1}, \sigma_1, \sigma_2, f_{k+1}, \ldots, f_m \}$.
%
    %This process is continued until every element of $E_2$ has been assigned to
    %an element of $E_1$. It terminates since $E_1$ and $E_2$ are finite and
    %at most $2|E_1|$ new elements are added to $E_2$. 
    %At termination, $\theta$ is defined on $E_2$
    %since $c(E_1) \ge c(E_2)$; and the total cost of $E_2$ remains unchanged after
    %each split.
    %Moreover, it holds by definition that, 
    %for each $e \in E_1$,
    %$\sum_{x \in \theta^{-1}(e)} c(x) \le c(e)$.
%
    %In the case of the matroid constraint,
    %$\theta$ is defined as follows;
    %let $E_1 = \{e_1, \ldots, e_m\}$ be
    %ordered by the greedy algorithm.
    %Since $|E_2| = |E_1|$, $|E_2| > |E_1 \setminus \{ e_m \}|$,
    %and by the defining properties of a matroid, there exists
    %$e \in E_2$, such that $E_1 \setminus \{ e_m \} \cup \{ e \}$
    %is independent. Define $\theta( e ) = e_m$ and delete both of
    %these elements from their respective sets, and continue this
    %process iteratively with $e_{m - 1}$. 
  \end{definition}
	 The iterative process described by the last definition is guaranteed to terminate since $E_2$ is originally finite, and every time an element of it is split, this results in in the invariant inequality of one of the finitely many elements of $E_1$ becoming tight.
  For convenience, we continue
  to denote by $f$
  the resulting function after all of the splits; recall that, by
  Lemma~\ref{lemm:split-submod}, it remains submodular.
	
  \begin{observation} \label{obs:integrity-of-splits}
	For every element $a \in E_1$, $\sum_{u \in \theta^{-1}(a)} f(u \mid A_u) \leq f(a \mid A_a)$.
%    Consider an item $\chi \in A_i$, which, through the above definitions,
    %was split into subitems $x_1, x_2, \ldots, x_{l}$ by one or more splits.
%    a factor $\gamma$.
    %Let $A_\chi$ be all the elements of $A_i$ preceding $\chi$.
    %Then $\marge{ \chi }{ A_\chi } = \sum_{i=1}^l \marge{ x_i }{ A_\chi \cup \{ x_1, \ldots, x_{i-1} \} }$. 
  \end{observation}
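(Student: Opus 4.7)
The plan is to establish, for every $u \in \theta^{-1}(a)$, the per-element density inequality $f(u \mid A_u)/p_u \le f(a \mid A_a)/p_a$, and then sum it:
\[
\sum_{u \in \theta^{-1}(a)} f(u \mid A_u) \le \frac{f(a \mid A_a)}{p_a} \sum_{u \in \theta^{-1}(a)} p_u \le f(a \mid A_a),
\]
where the last step uses the invariant $\sum_{u \in \theta^{-1}(a)} p_u \le p_a$ that Definition~\ref{def:theta_construction} explicitly maintains (and which is tightened to equality exactly when a split occurs).

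To see the per-element density inequality in the unsplit case, I would appeal to the construction of $\theta$: it iterates $u$ from the last undefined element of $E_2$ backwards and assigns it to the last slack $a \in E_1$, so at the moment the assignment $\theta(u)=a$ is made, the original greedy had selected $a$ strictly before $u$ (every element of $E_2$ sits later in the greedy order than every element of $E_1$, by Definition~\ref{def:partition-ai}). Hence $u$ was still a candidate when $a$ was chosen, so the density selection rule of $\densitygreedy$ gives $f(a \mid A_a)/p_a \ge f(u \mid A_a)/p_u$, and the submodularity of $f$ combined with $A_a \subseteq A_u$ yields $f(u \mid A_u) \le f(u \mid A_a)$, proving the claim.

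The main obstacle is propagating this argument through the splits that the construction performs on elements of $E_2$: once an element $s$ of $E_2$ is split into $\sigma_1, \sigma_2$, the pieces might in turn be split again in later iterations, and we must argue about $f(\sigma_j \mid A_{\sigma_j})/p_{\sigma_j}$ rather than about $f(s \mid A_s)/p_s$. The key observation I will use is that a split preserves density: since $\sigma_1, \sigma_2$ occupy $s$'s slot in the order, we have $\widehat{A_{\sigma_j}} = A_s$ for both $j \in \{1,2\}$, and therefore Observation~\ref{obs:split-1} together with the price rule $p^*_{\sigma_j} = \gamma_j p_s$ gives
\[
\frac{f^*(\sigma_j \mid A_{\sigma_j})}{p^*_{\sigma_j}} = \frac{\gamma_j \cdot f(s \mid \widehat{A_{\sigma_j}})}{\gamma_j p_s} = \frac{f(s \mid A_s)}{p_s}.
\]
Iterating this identity, every piece $u \in \theta^{-1}(a)$ inherits the density of its ultimate ancestor $s$ in the original (pre-split) greedy execution; and since $a$ is never itself split (the invariant becomes tight for $a$ precisely when the split that produced $\sigma_2 \in \theta^{-1}(a)$ is performed), its density likewise equals the pre-split greedy density of its ancestor. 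Applying the unsplit-case argument of the previous paragraph to these ancestors, and translating the resulting inequality back through the density-preservation identity above, yields the per-element bound $f(u \mid A_u)/p_u \le f(a \mid A_a)/p_a$, which completes the proof when combined with the price-sum invariant as in the first paragraph.
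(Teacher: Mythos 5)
Your proposal is correct and follows essentially the same route as the paper's proof: density preservation under splits via Observation~\ref{obs:split-1} and the price scaling, submodularity to move the conditioning set from $A_{s(u)}$ (resp.\ $A_u$) down to $A_a$, the selection rule of \densitygreedy{} applied to the pre-split ancestors, and finally the invariant $\sum_{u \in \theta^{-1}(a)} p_u \leq p_a$. The only missing justification is in your last summation step, where bounding $\frac{f(a \mid A_a)}{p_a}\sum_{u \in \theta^{-1}(a)} p_u$ by $f(a \mid A_a)$ additionally requires $f(a \mid A_a) \geq 0$, which holds because of the dummy elements (Observation~\ref{obs:comb:dummies}) and which the paper states explicitly.
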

  \begin{proof}
    Consider a $\gamma$-split of some element $s$ into elements $\sigma_1$ and $\sigma_2$.
    By Observation \ref{obs:split-1},
    $\marge{\sigma_1}{A_{\sigma_1}} = \gamma \marge{\sigma_1}{A_s}$
    and $\marge{\sigma_2}{A_{\sigma_2}} = (1 - \gamma) \marge{\sigma_2}{A_s}$. Since we also have $p_{\sigma_1} = \gamma p_s$ and $p_{\sigma_2} = (1 - \gamma)p_s$ by the definition of a split, we get
		\[
			\frac{f(\sigma_1 \mid A_{\sigma_1})}{p_{\sigma_1}}
			=
			\frac{f(\sigma_2 \mid A_{\sigma_2})}{p_{\sigma_2}}
			=
			\frac{f(s \mid A_{s})}{p_{s}}
			\enspace.
		\]
		In other words, if we think about the ratio $\frac{f(s \mid A_{s})}{p_{s}}$ as the ``density'' of element $s$, then a split operation creates two new elements that inherit the density of the spitted element. By repeating this argument multiple times, we get that, if $u$ is an element of the final set $A_i$ derived from an element $s(u)$ of the original set $E_2$ through a series of zero or more splits, then $u$ and $s(u)$ have the same density. Thus,
		\begin{align*}
			\sum_{u \in \theta^{-1}(u)} f(u \mid A_u)
			={} &
			\sum_{u \in \theta^{-1}(u)} p_u \cdot \frac{f(s(u) \mid A_{s(u)})}{p_{s(u)}}\\
			\leq{} &
			\sum_{u \in \theta^{-1}(u)} p_u \cdot \frac{f(s(u) \mid A_{a})}{p_{s(u)}}
			\leq
			\sum_{u \in \theta^{-1}(u)} p_u \cdot \frac{f(a \mid A_{a})}{p_a}
			\leq
			f(a \mid A_{a})
			\enspace,
		\end{align*}
		where the first inequality follows from the submodularity of $f$, the second inequality holds since $s(u)$, as an element of the original set $E_2$ was a candidate that Algorithm~\ref{alg:greedy_density} could have selected instead of the element $a \in E_1$,\footnote{There is also a possibility that either $s(u)$, $a$ or both were derived from the single split done during the definition of $E_1$ and $E_2$. However, the inequality holds also in this case for similar reasons.} and the last inequality holds due to the invariant maintained by the construction of $\theta$ according to Definition~\ref{def:theta_construction} because the existence of the dummy elements guarantees that $f(a \mid A_{a}) \geq 0$.
    %Thus, $\marge{y_1}{A_\chi} + \marge{y_2}{A_\chi \cup \{ y_1 \}} = \marge{ \chi }{ A_\chi }$.
    %Similarly, subsequent splits of $y_1$ or $y_2$ do not change the value of \textit{e.g.} %
    %$\marge{ y_2 }{A_\chi \cup \{ y_1 \} }$;
    %and likewise for any splits of further subitems. Hence, the sum of the marginal contributions
    %of all subitems is equal to the original contribution of $\chi$. 
  \end{proof}
	
	We are now ready to prove Lemma~\ref{lem:greedy_non-monotone}.
	
	\begin{lemma} \label{lem:greedy_non-monotone}
  For every $i \in [\ell]$, %and partition $A_i$ into $E_1$,$E_2$ as
  %in Def. \ref{def:partition-ai}.
  %Then
	$f( E_1 ) \ge f(A_i) / 2$. 
\end{lemma}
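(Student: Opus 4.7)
The plan is to write $f(A_i)$ as a telescoping sum over the greedy order of selections, split that sum into contributions from $E_1$ and $E_2$, and then use Observation~\ref{obs:integrity-of-splits} to bound the $E_2$ contribution by the $E_1$ contribution.

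More concretely, by Definition~\ref{def:partition-ai}, $E_1$ consists entirely of elements selected by the greedy algorithm before any element of $E_2$ (the elements $a_1,\dots,a_{j-1}$ in the knapsack case, together with $\sigma_1$ when a split occurs, and the first base in the matroid case). Consequently, for every $a\in E_1$ the prefix set $A_a$ is a subset of $E_1$, and the sum $\sum_{a\in E_1} f(a\mid A_a)$ telescopes to $f(E_1)-f(\varnothing)$. Expanding $f(A_i)$ along the full greedy order, we therefore obtain
\[
	f(A_i)
	=
	f(\varnothing)+\sum_{a\in E_1} f(a\mid A_a)+\sum_{u\in E_2} f(u\mid A_u)
	=
	f(E_1)+\sum_{u\in E_2} f(u\mid A_u)\enspace.
\]

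The next step is to control the $E_2$-sum using the mapping $\theta\colon E_2\to E_1$ constructed just before the lemma. Since every $u\in E_2$ lies in $\theta^{-1}(a)$ for exactly one $a\in E_1$, we may regroup the sum by its $\theta$-image, and then invoke Observation~\ref{obs:integrity-of-splits} for each $a\in E_1$:
\[
	\sum_{u\in E_2} f(u\mid A_u)
	=
	\sum_{a\in E_1}\sum_{u\in \theta^{-1}(a)} f(u\mid A_u)
	\le
	\sum_{a\in E_1} f(a\mid A_a)
	=
	f(E_1)-f(\varnothing)\enspace.
\]
Combining the two displays yields $f(A_i)\le 2f(E_1)-f(\varnothing)\le 2f(E_1)$, where the last inequality uses the non-negativity of $f$. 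Rearranging gives the desired bound $f(E_1)\ge f(A_i)/2$.

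The only mildly delicate point is the telescoping identity $\sum_{a\in E_1} f(a\mid A_a)=f(E_1)-f(\varnothing)$ in the knapsack case when a split occurs; one must verify that $A_{\sigma_1}$ is precisely $\{a_1,\dots,a_{j-1}\}$ so that $f(\sigma_1\mid A_{\sigma_1})=f(E_1)-f(\{a_1,\dots,a_{j-1}\})$ by the definition of $f^*$ on the split elements. After this bookkeeping, both the matroid and knapsack cases fall under the same two-line calculation above. No further submodularity arguments are needed beyond those already encapsulated in Observation~\ref{obs:integrity-of-splits}.
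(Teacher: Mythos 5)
Your proposal is correct and follows essentially the same route as the paper's proof: telescope the greedy marginals, regroup the $E_2$-sum via $\theta$, apply the key bound $\sum_{u\in\theta^{-1}(a)} f(u\mid A_u)\le f(a\mid A_a)$, and finish with non-negativity of $f$ (the paper just starts the chain from $f(E_1)$ rather than $f(A_i)$). The only cosmetic difference is that for the matroid case the paper invokes the analogous mapping property established just before the lemma via Corollary~\ref{cor:perfect_matching_two_bases} rather than Observation~\ref{obs:integrity-of-splits} itself, which is stated for the split/knapsack setting.
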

\begin{proof}
By the properties of the mapping $\theta$ (Observation~\ref{obs:integrity-of-splits} in the case of a knapsack constraint, and a similar claim proved earlier for a matroid constraint), we get
\begin{align*}
	f(E_1)
	={} &
	f(\varnothing) + \sum_{a \in E_1} \marge{ a }{ A_a }
	\geq
	f(\varnothing) + \sum_{a \in E_1} \sum_{u \in \theta^{-1}(a)} \marge{ u }{ A_u }\\
	={} &
	f(\varnothing) + \sum_{u \in E_2} \marge{ u }{ A_u }
	=
	f(\varnothing) + [f(A_i) - f(E_1)]
	\enspace,
\end{align*}
where the last equality uses the fact that splitting an element of a set $S$ does not affect the value of this set. %We also have the inequality $f(E_1) \geq f(A_i) + f(\varnothing) - f(E_2)$ by the submodularity of $f$. 
The lemma now follows by rearranging this inequality, and observing that $f(\varnothing) \geq 0$ since $f$ is non-negative.
\end{proof}
The next part of the analysis of Algorithm~\ref{alg:comb-gen-knapsack} maps $\hat O$,
the optimal elements that were not selected
by any of the $\ell$ runs of the greedy algorithm,
to the elements in $E_2$.
Using such a mapping, we get the following
observation.
\begin{observation} \label{prop:ohat}
  For all $i \in [\ell]$,
  $f( \hat O \cup A_i ) - f(A_i) \le f(A_i)/2.$
\end{observation}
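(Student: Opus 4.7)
The plan is to reduce the claim to Lemma~\ref{lem:greedy_non-monotone}. Rewriting $f(\hat O \cup A_i) - f(A_i) = f(\hat O \mid A_i)$, it suffices to establish the intermediate bound $f(\hat O \mid A_i) \le f(A_i) - f(E_1)$; combined with $f(A_i) - f(E_1) \le f(A_i)/2$ from Lemma~\ref{lem:greedy_non-monotone}, this yields the claim. By submodularity, $f(\hat O \mid A_i) \le \sum_{u \in \hat O} f(u \mid A_i)$, and telescoping along the greedy selection order of $E_2$ gives $f(A_i) - f(E_1) = \sum_{a \in E_2} f(a \mid A_a)$, where $A_a$ denotes the prefix of $A_i$ (in selection order) that precedes $a$, so that $A_a \supseteq E_1$ for every $a \in E_2$. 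The task thus reduces to constructing a charging from $\hat O$ into $E_2$ that respects these greedy marginals, analogously to the construction of $\theta$ in Definition~\ref{def:theta_construction} but with the roles of greedy and optimal elements interchanged.

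For the matroid constraint this is straightforward: $\hat O \subseteq \opt$ has size at most the rank of $\cM$, so extending $\hat O$ to a base and invoking Corollary~\ref{cor:perfect_matching_two_bases} against the base $E_2$ yields an injection $\psi\colon \hat O \to E_2$ with $(E_2 - \psi(u)) + u \in \cI$. Downclosedness of $\cI$ guarantees $(A_{\psi(u)} \setminus E_1) + u \in \cI$, since $A_{\psi(u)} \setminus E_1$ is a subset of $E_2$ missing $\psi(u)$; and because $u \notin A$ the element $u$ was an eligible greedy candidate at the moment $\psi(u)$ was chosen, so the greedy rule and submodularity give $f(\psi(u) \mid A_{\psi(u)}) \ge f(u \mid A_{\psi(u)}) \ge f(u \mid A_i)$. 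Summing over $u \in \hat O$ and using that $\psi$ is an injection produces the desired bound.

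For the knapsack constraint the analogous argument uses prices: $\sum_{u \in \hat O} p_u \le B \le \sum_{a \in E_2} p_a$, where the first inequality follows from feasibility of $\opt \supseteq \hat O$ and the second from Definition~\ref{def:partition-ai}. Running the process of Definition~\ref{def:theta_construction} with $\hat O$ in the role previously played by $E_2$ and $E_2$ in the role previously played by $E_1$ then produces, after possibly splitting some elements of $\hat O$, a mapping $\psi$ satisfying $\sum_{u \in \psi^{-1}(a)} p_u \le p_a$ for every $a \in E_2$. Submodularity on the split function is preserved by Lemma~\ref{lemm:split-submod}, and each (possibly split) piece of an element of $\hat O$ is always an available density-greedy candidate because $\hat O \cap A = \varnothing$. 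The density-greedy comparison $f(u \mid A_{\psi(u)})/p_u \le f(\psi(u) \mid A_{\psi(u)})/p_{\psi(u)}$, combined with the price invariant and submodularity to replace $A_{\psi(u)}$ by $A_i$ on the left (exactly as in Observation~\ref{obs:integrity-of-splits}), yields $\sum_{u \in \hat O} f(u \mid A_i) \le \sum_{a \in E_2} f(a \mid A_a) = f(A_i) - f(E_1)$. The main technical obstacle is this knapsack bookkeeping, but since the required analogs of Lemma~\ref{lemm:split-submod} and Observation~\ref{obs:integrity-of-splits} are minor adaptations of already-established results, no new machinery is needed.
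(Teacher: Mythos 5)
Your proposal is correct and follows essentially the same route as the paper: bound $f(\hat O \mid A_i)$ by element-wise marginals via submodularity, charge $\hat O$ into $E_2$ (matroid exchange via Corollary~\ref{cor:perfect_matching_two_bases} for matroids, the splitting process of Definition~\ref{def:theta_construction} with the price budget $\sum_{u \in \hat O} p_u \leq B = \sum_{a \in E_2} p_a$ for knapsacks), telescope $\sum_{a \in E_2} f(a \mid A_a) = f(A_i) - f(E_1)$, and finish with Lemma~\ref{lem:greedy_non-monotone}. The only step left implicit (as in the paper) is that extending the sum from the image of your charging map to all of $E_2$ uses the non-negativity of the greedy marginals guaranteed by the dummy elements of Observation~\ref{obs:comb:dummies}.
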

\begin{proof}
  We define a mapping
  $\theta\colon \hat O \to E_2$ such that $\sum_{u \in \theta^{-1}(a)} f(u \mid A_u) \leq f(a \mid A_a)$ for every $a \in E_2$.
	For knapsack constraints, this can be done using the same method
	described by Definition~\ref{def:theta_construction} since
	$\sum_{a \in E_2} p_a = B \ge \sum_{u \in \opt} p_u \geq \sum_{u \in \hat{O}} p_u$.
	For a matroid constraint, one can get the necessary mapping by first adding to $\hat{O}$ enough
  dummy elements to make it a base like $E_2$, and then invoking Corollary~\ref{cor:perfect_matching_two_bases}.
	Using the mapping $\theta$, we now get
  \begin{align*}
    f( \hat O \cup A_i ) - \ff{ A_i } &\le \sum_{u \in \hat O} \marge{u}{ A_i } \\
                                        &= \sum_{a \in E_2} \sum_{u \in \theta^{-1}(a)} \marge{u}{ A_i } \\
                                        &\le \sum_{a \in E_2} \sum_{u \in \theta^{-1}(a)} \marge{u}{ A_u } \\
                                        %&\le \sum_{e \in E_2} \sum_{o \in \theta^{-1}(e)} \frac{ c(o) \marge{e}{ A_e }}{c(e)} \\
                                        &\le \sum_{a \in E_2} \marge{a}{A_a} \\
                                        &= \ff{A_i } - \ff{ E_1 } \\
                                        &\le \ff{ A_{i} } - \ff{A_i} / 2 = \ff{A_i}/2,
  \end{align*}
  where the first and second inequalities are from submodularity,
  and the last inequality follows from Lemma~\ref{lem:greedy_non-monotone}.
\end{proof}
%\begin{observation} \label{prop:i}
  %$\sum_{i \in [\ell]} \ff{ O \cup A_i } \ge ( \ell - 1 ) f(O) + f( O \cup A )$.
%\end{observation}
%\begin{proof}
  %Repeated application of submodularity using the fact the $A_i$'s
  %are pairwise disjoint.
%\end{proof}
\begin{theorem} \label{thm:general_knapsack_matroid} 
  For every $\eps \in (0, 1/2)$, Algorithm~\ref{alg:comb-gen-knapsack} is a $(1/2 - \eps, O( 1 / \eps ))$-bicriteria approximation algorithm for the problems of maximizing a non-negative submodular function subject to  either a matroid or a knapsack constraint.
  %In the case of the knapsack constraint, the infeasibility ratio is less
  %than $\frac{3}{2\eps} + 3$. For the matroid constraint,
  %the infeasibility ratio is less than $(1/\eps) + 2$.
\end{theorem}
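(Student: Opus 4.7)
The plan is to follow the same three-part structure as the warmup proof of Theorem~\ref{thm:general_cardinality} in Section~\ref{sec:warmup}, but using the non-monotone/knapsack-aware machinery developed earlier in Section~\ref{ssc:combinatorial_general} in place of the simpler cardinality-only arguments. Specifically, I would verify (i) that the output set $A_{i^*} \cup D_{i^*}$ satisfies the required infeasibility bound, (ii) that the call to \textsc{DoubleGreedy} on $g_i|_A$ delivers the key inequality $f(A_i \cup D_i) \ge \tfrac14 f(A_i) + \tfrac12 \max_{D \subseteq A} f(A_i \cup D)$, and (iii) that the approximation ratio is at least $1/2 - \eps$.

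For (i), the infeasibility bound is immediate from the construction. In the knapsack case, each invocation of \densitygreedy{} with budget $2B$ and error $1/e$ produces a set $A_i$ whose cost is at most $2B + \max_u p_u \le 3B$ by Lemma~\ref{lem:greedy_monotone_infeasibility} (using the standing assumption that every element has cost at most $B$). Since the $\ell = \lceil 1/(2\eps) \rceil$ sets $A_i$ are disjoint and the output is contained in $A$, the output has cost at most $3\ell B = O(B/\eps)$, giving infeasibility $O(1/\eps)$. In the matroid case, Theorem~\ref{thm:monotone_matroid_greedy} applied with $\eps = 1/4$ combined with Observation~\ref{obs:comb:dummies} forces each $A_i$ to be the union of two bases of $\cM$, so $A$ is a disjoint union of $2\ell = O(1/\eps)$ independent sets, and the output, being a subset of $A$, lies in $2\ell$ times the matroid polytope. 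Claim (ii) is immediate from Theorem~\ref{thm:discretedoublegreedy} applied to $g_i|_A$, since $g_i(\varnothing) = f(A_i)$ and $\max_{D \subseteq A} g_i(D) = \max_{D \subseteq A} f(A_i \cup D)$.

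For (iii), I would mimic the proof of Lemma~\ref{lem:approximation-ratio} verbatim, with Observation~\ref{prop:ohat} playing the role of Lemma~\ref{lem:ohat}. Repeated submodularity and the disjointness of the $A_i$'s yield
\[
  \sum_{i \in [\ell]} f(A_i \cup \opt) \ge (\ell - 1) \cdot f(\opt),
\]
so by averaging there exists $\hat{\imath} \in [\ell]$ with $f(A_{\hat{\imath}} \cup \opt) \ge (1 - 1/\ell) \cdot f(\opt) \ge (1 - 2\eps) \cdot f(\opt)$. Splitting $\opt = \hat O \cupdot \dot O$ with $\dot O = \opt \cap A$, submodularity gives $f(A_{\hat\imath} \cup \opt) \le f(\hat O \mid A_{\hat\imath}) + f(A_{\hat\imath} \cup \dot O)$. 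Observation~\ref{prop:ohat} bounds the first term by $f(A_{\hat\imath})/2$, and (ii) together with $\dot O \subseteq A$ gives $f(A_{\hat\imath} \cup \dot O) \le 2 f(D_{\hat\imath} \cup A_{\hat\imath}) - f(A_{\hat\imath})/2$; thus $(1 - 2\eps) f(\opt) \le 2 f(D_{\hat\imath} \cup A_{\hat\imath}) \le 2 f(D_{i^*} \cup A_{i^*})$, yielding the $1/2 - \eps$ approximation.

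The only subtle point—and the one I expect would be the main obstacle if the groundwork had not already been laid in Section~\ref{ssc:combinatorial_general}—is justifying that Observation~\ref{prop:ohat} is applicable in both constraint settings simultaneously, since $\hat O$ and $E_2$ are related through the mapping $\theta$ which, for knapsack constraints, requires the item-splitting machinery of Definition~\ref{def:split} and the density-preservation argument of Observation~\ref{obs:integrity-of-splits}. Given that these ingredients are already established, the remaining proof reduces to assembling the estimates above, and Theorem~\ref{thm:general_knapsack_matroid} follows.
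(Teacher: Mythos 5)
Your proposal is correct and follows essentially the same route as the paper's own proof: the same infeasibility accounting ($3B$ per $A_i$ for knapsack via Lemma~\ref{lem:greedy_monotone_infeasibility}, two disjoint bases per $A_i$ for matroids via Observation~\ref{obs:comb:dummies}), the same use of Theorem~\ref{thm:discretedoublegreedy} for the $D_i$ guarantee, and the same averaging-plus-Observation~\ref{prop:ohat} argument adapted from the warmup Lemma~\ref{lem:approximation-ratio}. The only cosmetic difference is that you state the \textsc{DoubleGreedy} guarantee with $\max_{D \subseteq A} f(A_i \cup D)$ while the paper specializes directly to $\dot O$, which is equivalent since $\dot O \subseteq A$.
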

\begin{proof}
  Let us begin by proving the infeasibility ratio.
  For a knapsack constraint,
  each $A_i$ has a total price of at most $3B$ by Lemma~\ref{lem:greedy_monotone_infeasibility}, and
  thus, the total price of $A$ is at most $3 \ell B$. 
  Since the solution produced by Algorithm~\ref{alg:comb-gen-knapsack} is a subset of $A$, it follows that its price is
  at most $O(1 / \eps ) \cdot B$. For a matroid constraint $M$,
  each $A_i$ is the union of two disjoint bases; and hence, the
  set $A$ is the union of $2 \ell$ bases, which implies
	$\vone_A / (2\ell) \in \cP_M$. Since the output set of Algorithm~\ref{alg:comb-gen-knapsack} is a subset of $A$,
	the infeasibility ratio of Algorithm~\ref{alg:comb-gen-knapsack} for this kind of constraints
	is at most $2 \ell \le 1/\eps + 2$.

  It remains to bound the approximation ratio of Algorithm~\ref{alg:comb-gen-knapsack}. Towards this goal, note that using the fact that the $A_i$'s
  are pairwise disjoint and repeated application of submodularity, one can get
	\[
		\sum_{i \in [\ell]} \ff{ \opt \cup A_i } \ge ( \ell - 1 ) f(\opt) + f( \opt \cup A )
		\geq
		( \ell - 1 ) f(\opt)
		\enspace.
	\]
  Thus, by an averaging argument, there must exist a value $i \in \ell$ such that
  $f( \opt \cup A_i ) \ge \left( 1 - \frac{1}{\ell } \right) f(\opt)$.
  By Theorem~\ref{thm:discretedoublegreedy}, 
  \begin{equation} \label{eq:D_i_guarantee}
    \ff{ D_i \cup A_i }
		=
		g_i(D_i)
		\geq
		\frac{1}{2} g_i(\dot O) + \frac{1}{4} g_i(\varnothing)
		=
		\frac{1}{2} f( \dot O \cup A_i ) + \frac{1}{4} \ff{A_i}
		\enspace,
  \end{equation}
  and thus,
  \begin{align*}
    (1 - 2 \eps ) f(\opt) &\le \Big( 1 - \frac{1}{\ell } \Big) f(\opt) \\
    &\le f(\opt \cup A_i ) \\
                        &\le f( \hat O \cup A_i  )  - \ff{ A_i } + f( \dot O \cup A_i ) \\
                        &\le \frac{f(A_i)}{2} + \Big[2f( D_i \cup A_i ) - \frac{f(A_i)}{2}\Big] \\
                        &= 2f( D_i \cup A_i )
												\enspace,
  \end{align*}
  where the third inequality follows from submodularity,
  and the fourth inequality follows from Observation~\ref{prop:ohat}
  and Inequality~\eqref{eq:D_i_guarantee}. 
\end{proof}

\subsection{Inapproximability Results} \label{ssc:inapproximability_general}

In this section, we prove two inapproximability results about bicriteria optimization of general submodular functions. The first result is given by the next theorem, and shows that maximization of general submodular functions subject to quite simple non-down-closed constraints does not admit bicriteria approximation with constant approximation and infeasibility ratios. Previously, this result was shown by \cite{vondrak2013symmetry} for single criteria optimization, and we notice that its proof extends quite easily to bicriteria optimization.
\begin{theorem} \label{thm:hardness_non_down-closed}
Any algorithm for maximizing a non-negative submodular function over the bases of a partition matroid\footnote{A partition matroid is defined by a partition of a ground set $\cN$ into \emph{disjoint} sets $\cN_1, \cN_2, \dotsc, \cN_\ell$ and an integer value $r_i$ between $0$ and $|\cN_i|$ for every $i \in [\ell]$. A set $S$ is independent in the partition matroid if it contains at most $r_i$ elements of each set $\cN_i$ (i.e., $|S \cap \cN_i| \leq r_i$ for every $i \in [\ell]$). One can verify that a partition matroid is indeed a matroid. Furthermore, a set $S$ is a base of this matroid if $|S \cap \cN_i| = r_i$ for every $i \in [\ell]$.} that guarantees $(\alpha, \beta)$-bicriteria approximation for constants $\alpha \in (0, 1]$ and $\beta \geq 1$ must use an exponential number of value oracle queries.
\end{theorem}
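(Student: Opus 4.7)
The plan is to extend the symmetry-gap inapproximability framework of Vondr\'ak~\cite{vondrak2013symmetry} to the bicriteria setting. The first step is to translate the bicriteria constraint into combinatorial bounds for partition matroids. Every vector $\vy$ in the base polytope $\cP$ of a partition matroid with parts $\cN_1,\dots,\cN_\ell$ and capacities $r_1,\dots,r_\ell$ satisfies $\sum_{u \in \cN_i} y_u = r_i$ for all $i$. Hence, if $\vy_1,\vy_2 \in \cP$ and $\vy_1 \leq \vone_S \leq \beta\,\vy_2$, summing coordinates over each $\cN_i$ forces $r_i \leq |S \cap \cN_i| \leq \beta r_i$. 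So any $(\alpha,\beta)$-bicriteria output must pick between $r_i$ and $\beta r_i$ elements from each part.

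Second, I would construct the hard distribution. Take $n$ parts, each of size $m$ for a large constant $m$ to be fixed later, and set $r_i = 1$; bases are transversals, and any bicriteria output $S$ satisfies $1 \leq |S \cap \cN_i| \leq \beta$ for every $i$. Draw a planted transversal $R$ uniformly (one element per part), and let $f_R$ be the cut function of the complete bipartite graph between $R$ and $\cN \setminus R$; this is a non-negative submodular function with $f_R(R) = n \cdot n(m-1) = n^2(m-1)$. The key symmetry-gap computation is that the fully symmetric point $\bar{\vx}$ with $\bar{\vx}_u = \beta/m$ for every $u$ lies in $\beta \cP$ and has multilinear value
\[
	F(\bar{\vx})
	=
	n^2(m-1) \cdot 2(\beta/m)(1-\beta/m)
	\enspace,
\]
since each of the $n^2(m-1)$ bipartite edges is cut with probability $2(\beta/m)(1-\beta/m)$. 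Consequently the ratio $F(\bar{\vx}) / f_R(R) \leq 2\beta/m$ can be driven below any constant $\alpha$ by choosing $m \geq 2\beta/\alpha + 1$, a constant depending only on $\alpha$ and $\beta$.

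Third, I would invoke the indistinguishability argument of~\cite{vondrak2013symmetry}, adapted to the bicriteria regime. Fix any deterministic algorithm making $T$ value-oracle queries. For every query $S$ obeying $|S \cap \cN_i| \leq \beta$, the intersection $|S \cap R|$ is a sum of $n$ independent Bernoulli variables with means $|S \cap \cN_i|/m \leq \beta/m$, so a Chernoff bound yields concentration around the mean with deviation exponentially small in $n$. A union bound over a sub-exponential number of queries then shows that, with probability $1 - o(1)$ over $R$, every query returns a value coinciding with the value of $f_R$ at the corresponding symmetric point up to a $1 + o(1)$ factor. The output $S^\star$ must therefore have $f_R(S^\star) \leq (1 + o(1)) \cdot 2\beta n^2 < \alpha \cdot n^2(m-1) = \alpha\,f_R(R)$ for our choice of $m$, contradicting the $(\alpha,\beta)$-guarantee; Yao's principle extends this to randomized algorithms. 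The main technical hurdle is handling the adaptive dependence of queries on prior answers; this is resolved exactly as in Vondr\'ak's original proof by coupling the adaptive query trace with queries to a fully symmetric ``blurred'' version of $f_R$, so that the trace reveals only $o(n\log m)$ bits of information about the planted $R$ in $T$ queries.
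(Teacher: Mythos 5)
Your overall strategy (a hidden-structure/symmetry-gap argument over partition-matroid bases, with a gap that can be driven below any constant $\alpha$) is the right kind of argument, and your first step---translating the bicriteria condition into $r_i \le |S \cap \cN_i| \le \beta r_i$ via the base polytope---is fine. For comparison, the paper takes a shorter route with a different instance: it applies its bicriteria generalization of Vondr\'ak's symmetry-gap theorem (Theorem~\ref{thm:symmetry_gap}) to the directed-arcs instance on $A \cup B$ with the partition-matroid base constraint $|S \cap A| = 1$, $|S \cap B| = n-1$, shows the symmetry gap is at most $\beta/n$, and lets $n$ grow; all the indistinguishability work is packaged inside that theorem.

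Your concrete construction, however, has a fatal flaw: the planted function is exactly distinguishable by cheap queries, so the distribution is not hard. For the cut function of the complete bipartite graph between $R$ and $\cN \setminus R$, a singleton query returns $f_R(\{u\}) = n(m-1)$ if $u \in R$ and $f_R(\{u\}) = n$ if $u \notin R$; since you need $m \ge 2\beta/\alpha + 1 > 2$, these values differ, so an algorithm can recover $R$ with $nm$ value queries and simply output $R$, a feasible base of optimal value. The root cause is the indistinguishability step: showing that query answers match the symmetric values ``up to a $1+o(1)$ factor'' is not sufficient, because the algorithm can exploit small but systematic discrepancies (exactly as above) to learn $R$; what is needed is that, with high probability, every answer coincides \emph{exactly} with the value of a function that is independent of the hidden structure. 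Achieving this requires truncating/smoothing the planted function so that it is flat on all ``balanced'' queries (compare the $\delta'$-truncated function $F(x,y)$ built in Section~\ref{sssc:cardinality_inapproximability_small_c}, or the constructions inside Vondr\'ak's Theorem~1.8), and this is precisely the step your proposal leaves as a hand-wave (``coupling with a blurred version of $f_R$''). Vondr\'ak's argument cannot be borrowed verbatim here, because it starts from a genuinely symmetric base instance and hides the structure only in a refinement, whereas your $f_R$ is itself asymmetric; also note that, even after such a repair, you would need an upper bound on the symmetrized value over \emph{all} admissible symmetric points (not an evaluation at the single point $\beta/m$). Carrying out these fixes essentially amounts to re-deriving the symmetry-gap theorem that the paper's proof simply invokes.
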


Theorem~\ref{thm:hardness_non_down-closed} shows that the restriction of our algorithm from Section~\ref{ssc:bicriteria_general} to down-closed convex sets is unavoidable. For down-closed convex sets, we prove the following inapproximability result.
\begin{theorem} \label{thm:hardness_general}
Let
\[
	\nu(\beta)
	=
	\begin{cases}
		2e^{-\beta/2}(1 - e^{-\beta/2}) & \text{if $\beta \leq 2 \ln 2$} \enspace,\\
		1/2 & \text{if $\beta \geq 2 \ln 2$} \enspace.
	\end{cases}
\]
Any algorithm for maximizing a non-negative submodular function over a cardinality constraint that guarantees $(\nu(\beta) + \delta, \beta)$-bicriteria approximation for constants $\beta \geq 0$ and $\delta > 0$ must use an exponential number of value oracle queries.
\end{theorem}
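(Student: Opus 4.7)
The plan is to establish the theorem via two complementary arguments corresponding to the two regimes in the definition of $\nu(\beta)$.

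For $\beta \geq 2\ln 2$, where $\nu(\beta) = 1/2$, I would reduce from the classical $1/2$-inapproximability of unconstrained non-negative submodular maximization~\cite{feige2011maximizing}. Given a hard instance of the unconstrained problem on a ground set of size $n$, interpret it as a cardinality-constrained instance with $k = \lceil n/\beta \rceil$; then the bicriteria feasibility bound $|S| \leq \beta k$ allows any subset of the ground set, so a $(1/2 + \delta, \beta)$-bicriteria approximation algorithm would yield a $(1/2 + \delta)$-approximation for the unconstrained problem, contradicting the \cite{feige2011maximizing} hardness.

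For $\beta < 2\ln 2$, the approach is to extend the symmetry-gap construction of \cite{qi2022maximizing} (which established $\nu(1) \approx 0.478$ in the single-criterion case) to the bicriteria setting. Qi's construction produces a family of non-negative submodular functions on ground sets $\cN_n = A_n \cup B_n$ with $|A_n| = |B_n| = k$, invariant under swapping $A_n$ with $B_n$, admitting two disjoint optima of value $1$ (namely $A_n$ and $B_n$), and whose multilinear extensions converge as $n \to \infty$ to a continuous symmetric function $F\colon [0,1]^2 \to [0,1]$. Vondrak's symmetry-gap framework \cite{vondrak2013symmetry} then implies that any algorithm using sub-exponentially many value-oracle queries outputs, in expectation over a uniformly random $A_n \leftrightarrow B_n$ swap, a solution whose value is at most $\left[\max_{t \,:\, 2t \leq \beta} F(t,t) + o(1)\right] \cdot f(\opt)$. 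The crucial verification step is that for Qi's choice of $F$, this symmetric maximum equals exactly $\nu(\beta) = 2e^{-\beta/2}(1-e^{-\beta/2})$ throughout the range $\beta \leq 2\ln 2$; this recovers Qi's value $0.478$ at $\beta = 1$ and extends smoothly to general $\beta$. Yao's principle then transfers the bound to randomized algorithms, ruling out any $(\nu(\beta)+\delta, \beta)$-bicriteria approximation.

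The main obstacle lies in the construction of the continuous function $F$ achieving the exponential form $\nu(\beta)$. Natural polynomial candidates, such as the multilinear extension of the cut function of a complete bipartite graph, only give $F(t,t) = 2t(1-t)$, yielding the weaker symmetric maximum $\beta - \beta^2/2$, which is strictly larger than $\nu(\beta)$ for $\beta \in (0, 2\ln 2)$. Obtaining the exponential factor $e^{-\beta/2}$ requires, as in Qi's construction, a careful gadget design with a large ``multiplicity'' parameter $m$, exploiting the limit $(1 - 1/m)^{\alpha m} \to e^{-\alpha}$, together with a variational/LP-duality argument over the space of symmetric submodular functions to certify tightness. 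Once the right $F$ is in hand, the information-theoretic indistinguishability of instances under hidden random permutations of $A_n$ and $B_n$ follows from standard arguments in the style of \cite{feige2011maximizing,vondrak2013symmetry}.
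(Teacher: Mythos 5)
Your high-level route for the regime $\beta<2\ln 2$ is the paper's route (a bicriteria generalization of Vondr\'{a}k's symmetry-gap theorem applied to a Qi-style instance), but the part you defer as ``the main obstacle'' is in fact the entire content of the proof, and your plan for it does not work as stated. You propose to take \emph{Qi's} fixed construction, with two symmetric blocks and a limit function $F\colon[0,1]^2\to[0,1]$, and to verify that its symmetric maximum under the relaxed budget equals $\nu(\beta)$ for \emph{all} $\beta\le 2\ln 2$. No single fixed instance has this property: a mixture tuned for one value of $\beta$ is tight only there, and evaluating it at other budgets gives a strictly weaker bound. The paper instead uses a $\beta$-dependent family: ground set $\{a,b\}\cup A\cup B$ with $A,B$ being $n\times n$ grids, objective $f=\kappa g+(1-\kappa)h$ where $g$ is the cut function on the pair $\{a,b\}$, $h$ is a coverage-type function on the grids, and, crucially, $\kappa=\min\{1,e^{\beta/2}-1\}$ is chosen as a function of $\beta$. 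The symmetrized profile is two-dimensional $(p,q)$ (not one-dimensional as in your two-block picture), the budget forces $q\lesssim \beta/(2n)$ so that $(1-q)^n\to e^{-\beta/2}$ (this is where your ``multiplicity'' intuition is correct), and an explicit optimization over $p$ at the chosen $\kappa$ yields exactly $2e^{-\beta/2}(1-e^{-\beta/2})+O(1/n)$. Also note that in this construction the sets $A_n,B_n$ are not the optima (they are infeasible, of size $n^2\gg n+1$); the optimum is $\{a\}\cup\{b_{1,j}\mid j\in[n]\}$. None of this construction or verification appears in your proposal, and the ``variational/LP-duality argument over symmetric submodular functions'' you invoke to certify tightness is not supplied either; so the core of the theorem remains unproven. (You also silently use a bicriteria version of Vondr\'{a}k's Theorem 1.8; the paper states and justifies this generalization separately.)

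Your regime-1 reduction ($\beta\ge 2\ln 2$) also has a flaw for $\beta>2$. With $k=\lceil n/\beta\rceil<n/2$, the bicriteria guarantee is relative to the \emph{constrained} optimum $\max_{|T|\le k}f(T)$, which in the hard unconstrained instances of \cite{feige2011maximizing} is strictly smaller than the unconstrained optimum; one can check that an oblivious balanced set of size about $n/2$ already exceeds $(1/2+\delta)$ times this shrunken benchmark, so no contradiction follows. The fix is easy --- take $k=\lceil n/2\rceil$, so the constrained and unconstrained optima coincide and the FMV bound applies to whatever (possibly oversized) set the algorithm outputs --- and indeed this regime is already known (the paper credits it to \cite{crawford2023scalable}); the paper handles it within the same construction simply by setting $\kappa=1$, which makes the case analysis uniform.
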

Theorem~\ref{thm:hardness_general} interpolates between two known results. For single-criteria optimization ($\beta = 1$), Theorem~\ref{thm:hardness_general} recovers an inapproximability due to \cite{qi2022maximizing} (who extended a result of \cite{gharan2011submodular} that applied to partition matroids, but not to cardinality constraints). On the other exreme, for $\beta \geq 2 \ln 2$, the hardness of Theorem~\ref{thm:hardness_general} follows from the hardness proved by \cite{feige2011maximizing} for unconstrained submodular maximization (as was previously observed by \cite{crawford2023scalable}). 

Unfortunately, there is a large gap between the inapproximability proved by Theorem~\ref{thm:hardness_general} and our results. This mirrors a similar gap that exists in the literature between the state-of-the-art single-criteria approximation algorithm for maximizing a non-negative submodular function subject to either a cardinality or a matroid constraint that guarantees $0.401$-approximation \citep{buchbinder2024constrained}, and the above mentioned inapproximability results of \cite{gharan2011submodular} and \cite{qi2022maximizing}. The following proposition explains why it might be difficult to close this gap using stronger inapproximability results.
\begin{proposition} \label{prop:double_measured_continuous_greedy}
Given a non-negative submodular function $f \colon 2^\cN \to \nnR$ and a down-closed convex set $\cP$. If there exist two disjoint optimal solutions $\opt_1$ and $\opt_2$ for the problem of maximizing $f$ subject to the constraint that the characteristic vector of the solution must be in $\cP$, then for any two constants $\beta \geq 1$ and $\delta > 0$, a variant of the Measured Continuous Greedy algorithm of \cite{feldman2011unified} can be used to obtain $(\nu(\beta) - \delta, \beta)$-bicriteria integer approximation for the problem of maximizing the multilinear extension $F$ of $f$ over the convex set $\cP$.

Moreover, the above is the case even if $\opt_2$ is not an optimal solution, as long as it ``imitates'' $\opt_1$ in the sense that, for every vector $\vx \in [0, 1]^\cN$ encountered during the execution of the variant of Measured Continuous Greedy, $\inner{\nabla F(\vx)}{\characteristic_{\opt_1}} = \inner{\nabla F(\vx)}{\characteristic_{\opt_2}}$.
\end{proposition}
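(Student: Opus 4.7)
The plan is to take the variant to be a version of Measured Continuous Greedy that runs from time $0$ to $T \triangleq \min\{\beta, 2\ln 2\}$ and, at every time $t \in [0, T]$, uses the update direction $\vx(t) \triangleq (\characteristic_{\opt_1} + \characteristic_{\opt_2})/2$. Since $\cP$ is convex and contains both $\characteristic_{\opt_1}$ and $\characteristic_{\opt_2}$, this direction lies in $\cP$, so the trajectory $\vy(T) = \int_0^T \vx(\tau) \hprod (\vone_\cN - \vy(\tau))\, d\tau$ satisfies $\vy(T)/T \in \cP$; combined with $T \leq \beta$ and down-closedness of $\cP$, this gives infeasibility ratio at most $\beta$. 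Direct integration of the ODE $\frac{d\vy(t)}{dt} = \vx(t) \hprod (\vone_\cN - \vy(t))$ produces the ``symmetric'' trajectory $y_u(t) = 1 - e^{-t/2}$ for $u \in \opt_1 \cup \opt_2$ and $y_u(t) = 0$ otherwise.

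The value analysis starts from the chain rule $\frac{dF(\vy(t))}{dt} = \inner{\nabla F(\vy(t))}{\vx(t) \hprod (\vone_\cN - \vy(t))}$. Because $1 - y_u(t) = e^{-t/2}$ is constant on $\opt_1 \cup \opt_2$, and because the imitation hypothesis gives $\inner{\nabla F(\vy(t))}{\characteristic_{\opt_1}} = \inner{\nabla F(\vy(t))}{\characteristic_{\opt_2}}$, this expression simplifies to $\inner{\nabla F(\vy(t))}{\characteristic_{\opt_1} \hprod (\vone_\cN - \vy(t))}$, which by the telescoping-marginal form of submodularity is at least $F(\vy(t) \vee \characteristic_{\opt_1}) - F(\vy(t))$. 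The crucial lower bound on $F(\vy(t) \vee \characteristic_{\opt_1})$ comes from Lemma~\ref{lemma:prob-lemma} applied to $g(S) \triangleq f(S \cup \opt_1)$ on the ground set $\cN \setminus \opt_1$: $g$ is non-negative and submodular with $g(\varnothing) = f(\opt_1) = f(\opt)$, and every $u \in \cN \setminus \opt_1$ appears in $\RSet(\vy(t)) \setminus \opt_1$ with probability at most $1 - e^{-t/2}$, so the lemma delivers $F(\vy(t) \vee \characteristic_{\opt_1}) \geq e^{-t/2} f(\opt)$.

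Combining these estimates yields the differential inequality $\frac{dF(\vy(t))}{dt} + F(\vy(t)) \geq e^{-t/2} f(\opt)$. Multiplying by $e^t$ turns the left-hand side into $\frac{d}{dt}[e^t F(\vy(t))]$; integrating from $0$ to $T$ and using $F(\vy(0)) = f(\varnothing) \geq 0$ yields $e^T F(\vy(T)) \geq 2(e^{T/2} - 1) f(\opt)$, i.e., $F(\vy(T)) \geq 2e^{-T/2}(1 - e^{-T/2}) f(\opt)$. A short case split --- $T = \beta$ when $\beta \leq 2\ln 2$ gives exactly $\nu(\beta) f(\opt)$, while $T = 2\ln 2$ when $\beta > 2\ln 2$ gives $\tfrac{1}{2} f(\opt) = \nu(\beta) f(\opt)$ --- shows the fractional guarantee is $\nu(\beta) f(\opt)$ in both regimes.

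The main obstacle I anticipate is converting this continuous-time argument into a polynomial-time algorithm. As written, the direction $\vx(t)$ uses explicit knowledge of $\opt_1$ and $\opt_2$; this is acceptable for the existence claim (the variant is a well-defined, instance-aware MCG), and the imitation hypothesis in fact ensures that $(\characteristic_{\opt_1} + \characteristic_{\opt_2})/2$ belongs to the set of greedy maximizers of MCG whenever $\characteristic_{\opt_1}$ does. For a truly polynomial-time implementation, the ODE would be discretized into $O(|\cN|^2 \delta^{-1})$ steps and partial derivatives estimated by sampling as in \cite{calinescu2011maximizing}, with standard concentration arguments absorbing the associated losses into the additive $\delta$ slack in the approximation ratio.
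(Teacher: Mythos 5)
Your calculus is fine---the differential inequality $\frac{dF(\vy(t))}{dt} \geq e^{-t/2}f(\opt) - F(\vy(t))$, the choice $T = \min\{\beta, 2\ln 2\}$, and the resulting bound $2e^{-T/2}(1 - e^{-T/2})\cdot f(\opt) = \nu(\beta)\cdot f(\opt)$ all match the paper's bound for $\ell = 2$---but the object you analyze is not an algorithm in the sense the proposition needs. You fix the update direction to be $(\characteristic_{\opt_1} + \characteristic_{\opt_2})/2$, which hard-codes the (hidden) optimal solutions into the procedure; in the value-oracle model this is not implementable, and the whole point of the proposition (as a barrier against stronger symmetry-gap-style hardness) collapses if the ``variant'' must be told $\opt_1$ and $\opt_2$. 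Your attempted bridge back to genuine MCG---that the imitation hypothesis puts $(\characteristic_{\opt_1}+\characteristic_{\opt_2})/2$ among the greedy maximizers ``whenever $\characteristic_{\opt_1}$ is one''---does not close this gap: there is no reason $\characteristic_{\opt_1}$ maximizes the linearized objective at time $t$, and even if the averaged direction were a maximizer, greedy MCG could pick a different one, destroying the closed-form trajectory $y_u(t) = 1 - e^{-t/2}$ on $\opt_1 \cup \opt_2$ on which your entire value computation (the constant factor $e^{-t/2}$ and the simplification via the imitation equality) rests. A further wrinkle: standard MCG optimizes over $\cP$, where $\characteristic_{\opt_1} + \characteristic_{\opt_2}$ is generally infeasible, so it cannot even serve as a comparator there.

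The paper's route (Algorithm~\ref{alg:mcg-disjoint}) fixes exactly this: the algorithm knows only $\ell = 2$, greedily maximizes $\inner{\vx \hprod (\vone_\cN - \vy(t))}{\nabla F(\vy(t))}$ over the dilated set $\{\vx \in [0,1]^\cN \mid \vx/2 \in \cP\}$, and moves at the halved rate $\frac{1}{2}\vx(t) \hprod (\vone_\cN - \vy(t))$. The disjoint optima enter only the analysis: $\characteristic_{\opt_1} + \characteristic_{\opt_2}$ is a feasible comparator for the greedy step (Observation~\ref{obs:disjoint-deriv}, via Lemma~\ref{lem:positive_direction}), and the bound $\|\vy(t)\|_\infty \leq 1 - e^{-t/2}$ together with Lemma~\ref{lemma:prob-lemma} (Observation~\ref{obs:opt-t-disjoint}) replaces your explicit trajectory, yielding the same differential inequality for an arbitrary greedy trajectory. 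If you restructure your proof this way, your remaining steps go through essentially verbatim. One smaller point: in the main case (two disjoint \emph{optimal} solutions) you invoke the imitation equality $\inner{\nabla F(\vy(t))}{\characteristic_{\opt_1}} = \inner{\nabla F(\vy(t))}{\characteristic_{\opt_2}}$, which is not assumed there; you should instead bound the two terms separately, using $F(\vy(t) \vee \characteristic_{\opt_i}) \geq e^{-t/2} f(\opt)$ for each $i \in \{1,2\}$, which is exactly what optimality of both solutions gives you.
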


Given Proposition~\ref{prop:double_measured_continuous_greedy}, to improve over Theorem~\ref{thm:hardness_general}, it will be necessary to construct hard instances that do not include solutions $\opt_1$ and $\opt_2$ of the kind mentioned in Proposition~\ref{prop:double_measured_continuous_greedy}. Currently, we are not aware of any known such constructions with non-monotone objective functions.\footnote{Even if one is interested also in monotone objective functions, the only such constructions that we are aware of are the ones used to prove inapproximability results for maximizing a monotone submodular function subject to a cardinality constraint with density $c > 1/2$ (like the construction from Section~\ref{sssc:cardinality_inapproximability_large_c}). However, even these constructions are obtained by starting with a construction that has solutions $\opt_1$ and $\opt_2$ of the kind mentioned in Proposition~\ref{prop:double_measured_continuous_greedy}, and then enriching it with free elements that can always be added to any solution (and thus, belong to every optimal solution since $f$ is monotone in these constructions).}

We defer the proof of Proposition~\ref{prop:double_measured_continuous_greedy} to Section~\ref{sssc:double_measured_continuous_greedy}, and concentrate from this point on proving our inapproximability results (Theorems~\ref{thm:hardness_non_down-closed} and~\ref{thm:hardness_general}). The work horse behind the proofs of these results is Theorem~\ref{thm:symmetry_gap}, which is presented below. However, before presenting Theorem~\ref{thm:symmetry_gap} itself, we first need to state a few definitions.

\begin{definition}[Definition~1.5 of~\cite{vondrak2013symmetry}]
Given a set function $f\colon 2^\cN \to \bR$ and a constraint set $\cF \subseteq 2^\cN$, we call the optimization problem instance $\max\{f(S) \mid S \in \cF\}$ \emph{strongly symmetric} with respect to a group of permutations $\cG$ on $\cN$ if $f(S) = f(\sigma(S))$ for all $S \subseteq \cN$ and $\sigma \in \cG$, and $S \in \cF \Leftrightarrow S' \in \cF$ whenever $\bE_{\sigma \in \cG} [\characteristic_{\sigma(S)}] = \bE_{\sigma \in \cG}[\characteristic_{\sigma(S')}]$.
\end{definition}

\begin{definition}[Generalization of Definition~1.6 of~\cite{vondrak2013symmetry}]
Let $f\colon 2^\cN \to \bR$ be a set function and $\cF \subseteq 2^\cN$ a constraint set such that the instance $\max\{f(S) \mid S \in \cF\}$ is strongly symmetric with respect to a group $\cG$ of permutations over $\cN$. Let $F(x)$ be the multilinear extension of $f$, and $\cP(\cF)$ be the convex closure of the characteristic vectors of the sets in $\cF$. Let $\bar{\vx} = \bE_{\sigma \in \cG}[\sigma(\vx)]$. The symmetry gap of $\max\{f(S) \mid S \in \cF\}$ for infeasibility ratio $\beta \geq 0$ is defined as $\gamma_\beta = \overline{\optvalue}_\beta / \optvalue$, where
\[
	\optvalue = \max\{F(\vx) \mid \vx \in \cP(\cF)\}
	\enspace,
\]
and
\[
	\overline{\optvalue}_r = \max\{F(\bar{\vx}) \mid \exists_{\vy^{(1)} \in \cP(\cF), \vy^{(2)}/\beta \in \cP(\cF)}\; \vy^{(1)} \leq \vx \leq \vy^{(2)}\}
	\enspace.
\]
\end{definition}

\begin{definition}[Definition~1.7 of~\cite{vondrak2013symmetry}]
Given a ground set $\cN$, a constraint set $\cF \subseteq 2^\cN$ and a positive integer $k$, we say that $\tilde{\cF} \subseteq 2^{\cN \times [k]}$ is a refinement of $\cF$ if
\[
	\tilde{\cF}
	=
	\{\tilde{S} \subseteq \cN \times[k] \mid (x_1, x_2, \dotsc, x_{|\cN|}) \in \cP(\cF)\text{, where }x_j = \tfrac{1}{k}|\tilde{S} \cap (\{j\} \times [k])|\}
	\enspace.
\]
\end{definition}

\begin{theorem}[Generalization of Theorem~1.8 of~\cite{vondrak2013symmetry}] \label{thm:symmetry_gap}
Let $\max\{f(S) \mid S \in \cF\}$ be an instance of an optimization problem that has a non-negative submodular objective function $f$, and furthermore, is strongly symmetric and has a symmetry gap $\gamma_\beta$ with respect to a group $\cG$ of permutations and a constant infeasibility ratio $\beta$. Let $\cC$ be the class of instances $\max\{\tilde{f}(S) \mid S \in \tilde{F}\}$, where $\tilde{F}$ is a refinement of $F$ and $\tilde{f}$ is non-negative, monotone if $f$ is monotone, and submodular. Then, for every $\delta > 0$, any (even randomized) $((1 + \delta)\gamma_\beta, \beta)$-bicriteria approximation algorithm for the class $\cC$ would require exponentially many value queries to $\tilde{f}(S)$.
\end{theorem}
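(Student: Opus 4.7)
The plan is to adapt Vondrák's proof of the symmetry gap theorem (Theorem~1.8 of \cite{vondrak2013symmetry}) to the bicriteria setting. The overall strategy is a reduction via indistinguishability: I will construct a family of submodular functions $\{\tilde{f}_\pi\}$ on the refined ground set $\tilde{\cN} = \cN \times [k]$, indexed by random permutations $\pi$ drawn from the symmetry group $\cG$, such that (i)~any algorithm using sub-exponentially many value queries cannot, with high probability, distinguish $\tilde{f}_\pi$ from its symmetrization $\tilde{f}_{\mathrm{sym}}$; (ii)~the true optimum of $\tilde{f}_\pi$ over $\tilde{\cF}$ is at least $(1 - o(1)) \cdot \optvalue$; and (iii)~any bicriteria-feasible output is bounded in value by $\overline{\optvalue}_\beta$ on the symmetric instance. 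Combining these will contradict the assumed $((1+\delta)\gamma_\beta, \beta)$-approximation.

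First, following Vondrák, I would lift the multilinear extension $F$ to a DR-submodular function $F^*$ on $[0,1]^{\tilde{\cN}}$ that smoothly interpolates between $F(\vx(\tilde{\vz}))$ and $F(\bar{\vx}(\tilde{\vz}))$, where $\vx(\tilde{\vz}) \in [0,1]^\cN$ collapses copies by averaging ($x_j = \tfrac{1}{k}\sum_{i \in [k]} \tilde z_{(j,i)}$). Using the standard construction via a cap function and convolution, this yields a submodular (and monotone, when $f$ is monotone) refined set function $\tilde f$ on $\tilde{\cN}$, which agrees with the symmetrized objective everywhere except in a small neighborhood of the ``diagonal.'' Twisting by a hidden $\pi \in \cG$ gives the family $\{\tilde f_\pi\}$. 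The standard argument of Vondrák, combined with Yao's principle, shows that a sub-exponential number of queries cannot locate the neighborhood where $\tilde f_\pi$ differs from $\tilde f_{\mathrm{sym}}$, so with high probability every queried set receives the symmetric value $F(\bar{\vx(\tilde S)})$.

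Next, I verify optimality transfer. Every $\vx \in \cP(\cF)$ with $F(\vx)$ close to $\optvalue$ induces, by the refinement definition, a set $\tilde S \in \tilde{\cF}$ with $\vx(\tilde S) = \vx$ (up to rounding error that vanishes as $k \to \infty$), and a short calculation using the smoothing shows $\tilde f_\pi(\tilde S) \geq (1-o(1))F(\vx)$. Hence $\max_{\tilde S \in \tilde{\cF}} \tilde f_\pi(\tilde S) \geq (1-o(1))\cdot \optvalue$. For the upper bound against the algorithm, suppose the algorithm outputs $\tilde S$ that is $\beta$-bicriteria feasible: there are $\vy^{(1)}, \vy^{(2)}$ in $\cP(\tilde{\cF})$ with $\vy^{(1)} \leq \vone_{\tilde S} \leq \beta \vy^{(2)}$. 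Applying the collapsing map $\vx(\cdot)$ coordinate-wise, which is linear and sends $\cP(\tilde{\cF})$ into $\cP(\cF)$ by the refinement definition, I obtain $\vu^{(1)} := \vx(\vy^{(1)}) \leq \vx(\tilde S) \leq \beta \vu^{(2)} := \beta\vx(\vy^{(2)})$ with $\vu^{(1)}, \vu^{(2)} \in \cP(\cF)$. Therefore $F(\overline{\vx(\tilde S)}) \leq \overline{\optvalue}_\beta$ by the definition of $\overline{\optvalue}_\beta$, and hence the algorithm's value on the symmetric instance is at most $\overline{\optvalue}_\beta = \gamma_\beta \cdot \optvalue$. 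Combining with the indistinguishability argument gives the desired contradiction with the claimed $(1+\delta)\gamma_\beta$ approximation ratio.

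The main technical obstacle will be Step~4, the bicriteria transfer: verifying that the linear collapse $\vy \mapsto \vx(\vy)$ maps bicriteria-feasible vectors on the refined side to bicriteria-feasible vectors on the original side with the \emph{same} infeasibility ratio $\beta$, while simultaneously preserving the sandwich $\vu^{(1)} \leq \vx(\tilde S) \leq \beta \vu^{(2)}$. The single-criterion proof never had to argue about two-sided containment, so I need to check carefully (i)~that $\vx(\cdot)$ maps $\cP(\tilde{\cF})$ into $\cP(\cF)$ (this follows essentially from the refinement definition, but should be spelled out), and (ii)~that averaging preserves the componentwise inequalities, which is straightforward because $\vx(\cdot)$ is a non-negative linear map. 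A secondary subtlety is ensuring that the smoothed refinement $\tilde f$ retains monotonicity when $f$ is monotone; this is already handled in Vondrák's construction and carries over unchanged.
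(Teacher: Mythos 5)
Your proposal is correct and follows exactly the route the paper intends: the paper gives no explicit proof, asserting only that Vondr\'{a}k's argument for his Theorem~1.8 extends immediately, and your sketch is precisely that extension (hidden-permutation family via the smoothed refinement, indistinguishability plus Yao, optimality transfer). The one genuinely new ingredient—the bicriteria transfer, namely that the averaging collapse is a non-negative linear map sending $\cP(\tilde{\cF})$ into $\cP(\cF)$ and preserving the sandwich $\vy^{(1)} \leq \vone_{\tilde{S}} \leq \beta \vy^{(2)}$, so that $F(\overline{\vx(\tilde{S})}) \leq \overline{\optvalue}_\beta$ by taking $\vy^{(2)} = \beta\vu^{(2)}$ in the definition—is identified and verified correctly.
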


We do not prove Theorem~\ref{thm:symmetry_gap} here because the proof of Theorem~1.8 of \cite{vondrak2013symmetry} immediately extends to proving our more general Theorem~\ref{thm:symmetry_gap}. To use Theorem~\ref{thm:symmetry_gap}, we need to present particular families of instances. The first such family was originally suggested by \cite{vondrak2013symmetry}, and is parameterized by an integer parameter $n \geq 1$. An instance of this family consists of a ground set $\cN \triangleq A \cup B$, where $A \triangleq \{a_i \mid i \in [n]\}$ and $B \triangleq \{b_i \mid i \in [n]\}$, a constraint set $\cF \triangleq \{S \subseteq \cN \mid |S \cap A| = 1, |S \cap B| = n - 1\}$, and an objective function $f \colon 2^\cN \to \nnR$ given by
\[
	f(S)
	\triangleq
	\sum_{i = 1}^n \characteristic[a_i \in S \wedge b_i \not \in S]
	\qquad
	\forall\;S \subseteq \cN
	\enspace.
\]
Observe that $f$ is submodular since it is the cut function of a directed graph consisting of $n$ disjoint arcs (one arc from $a_i$ to $b_i$ for every $i \in [n]$). Furthermore, the above instance is invariant to permutations that permute $A$ and $B$ in the same way (i.e., they map $a_i$ to $a_{\sigma(i)}$ and $b_i$ to $b_{\sigma(i)}$ for some permutation $\sigma$ of $[n]$). Let $\cG$ be the group of such permutations. One can verify that the feasibility (i.e., membership in $\cF$) of a set $S \subseteq \cN$ can be determined based only on $\bE_{\sigma \in \cG}[\characteristic_{\sigma(S)}]$, and therefore, the instance $\max\{f(S) \mid S \in \cF\}$ is strongly symmetric with respect to $\cG$.

\begin{lemma}
The symmetry gap of the above instance $\max\{f(S) \mid S \in \cF\}$ with respect to $\cG$ and an infeasibility ratio $\beta \geq 1$ is at least $\beta/n$.
\end{lemma}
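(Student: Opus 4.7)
The plan is to compute the symmetry gap directly by combining an easy upper bound on $\optvalue$ with an explicit lower-bound construction for $\overline{\optvalue}_\beta$. I would work throughout in the regime $n \geq \beta$, which is the only regime in which the stated bound is nontrivial (and the only regime in which the lemma will be invoked, since it is used asymptotically with $n \to \infty$ for a fixed $\beta$).

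First I would establish $\optvalue \leq 1$. Every $\vx \in \cP(\cF)$ satisfies $\sum_i x_{a_i} = 1$ and $\vx \in [0,1]^{\cN}$, so the multilinear extension factors as $F(\vx) = \sum_{i=1}^n x_{a_i}(1-x_{b_i}) \leq \sum_{i=1}^n x_{a_i} = 1$, using $1-x_{b_i} \in [0,1]$ and $x_{a_i} \geq 0$. (Taking $S = \{a_1\} \cup (B \setminus \{b_1\})$ in fact yields $\optvalue = 1$, but only the upper bound is needed.)

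Next, I would exhibit a specific symmetric fractional vector whose symmetrization achieves $F(\bar{\vx}) \geq \beta/n$ together with the bicriteria certificates demanded by the definition of $\overline{\optvalue}_\beta$. Take $\vx$ to be the uniform vector with $x_{a_i} = \beta/n$ and $x_{b_i} = (n-1)/n$ for every $i$; the assumption $n \geq \beta$ ensures $\vx \in [0,1]^{\cN}$. For the witnesses, let $\vy^{(1)}$ be the centroid of $\cP(\cF)$, namely $y^{(1)}_{a_i} = 1/n$ and $y^{(1)}_{b_i} = (n-1)/n$; this is a convex combination of characteristic vectors of bases in $\cF$, hence lies in $\cP(\cF)$. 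Then set $\vy^{(2)} = \beta \cdot \vy^{(1)}$, so $\vy^{(2)}/\beta = \vy^{(1)} \in \cP(\cF)$. A coordinate-wise check using $\beta \geq 1$ gives $\vy^{(1)} \leq \vx \leq \vy^{(2)}$ (the $a$-coordinates are $1/n \leq \beta/n \leq \beta/n$, and the $b$-coordinates are $(n-1)/n \leq (n-1)/n \leq \beta(n-1)/n$). Since $\vx$ is invariant under every permutation in $\cG$, we have $\bar{\vx} = \vx$, and therefore
\[
F(\bar{\vx}) \;=\; \sum_{i=1}^n \frac{\beta}{n}\!\left(1 - \frac{n-1}{n}\right) \;=\; n \cdot \frac{\beta}{n} \cdot \frac{1}{n} \;=\; \frac{\beta}{n}.
\]
Combining the two bounds gives $\gamma_\beta = \overline{\optvalue}_\beta / \optvalue \geq (\beta/n)/1 = \beta/n$.

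There is no substantive obstacle; the argument is a routine calculation mirroring the directed-bipartite symmetry-gap construction of \cite{vondrak2013symmetry}, adapted to the relaxed bicriteria setting by scaling the base polytope by $\beta$. The one subtlety worth flagging is that $\vy^{(2)}/\beta$ is only required to lie in $\cP(\cF) \subseteq [0,1]^{\cN}$, which allows $\vy^{(2)}$ itself to live in $[0,\beta]^{\cN}$; this extra freedom is precisely what enables the $b$-coordinates of $\vy^{(2)}$ to exceed $x_{b_i} = (n-1)/n$ so that the sandwich $\vy^{(1)} \leq \vx \leq \vy^{(2)}$ actually holds.
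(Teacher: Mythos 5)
Your calculation is internally sound, but it proves the reverse inequality from the one the paper actually establishes and needs, so there is a genuine gap relative to the paper's proof. The word ``at least'' in the statement is a sign typo for ``at most'': what the paper proves (and what must be fed into Theorem~\ref{thm:symmetry_gap} to obtain Corollary~\ref{cor:non-down-closed} and Theorem~\ref{thm:hardness_non_down-closed}) is the upper bound $\gamma_\beta \le \beta/n$. The paper lower bounds $\optvalue \ge 1$ via the feasible integral set $\{a_1\} \cup (B - \{b_1\})$, and then upper bounds $\overline{\optvalue}_\beta$ by a universal argument: for \emph{every} $\vx$ admitting witnesses $\vy^{(1)} \le \vx \le \vy^{(2)}$ with $\vy^{(1)} \in \cP(\cF)$ and $\vy^{(2)}/\beta \in \cP(\cF)$, the symmetrized coordinates $\bar{x}_{a_i} = a$ and $\bar{x}_{b_i} = b$ must satisfy $a \le \beta/n$ (from the upper witness) and $b \ge (n-1)/n$ (from the lower witness), whence $F(\bar{\vx}) = na(1-b) \le \beta/n$. (The displayed ``$\ge$'' inside the paper's proof is itself a typo; its stated conclusion $\overline{\optvalue}_\beta \le \beta/n$ is the intended one.) The upper bound is the direction that matters because the symmetry-gap machinery converts a \emph{small} value of $\gamma_\beta = \overline{\optvalue}_\beta/\optvalue$ into an inapproximability threshold: no algorithm can guarantee ratio $(1+\delta)\gamma_\beta$, so one needs $\gamma_\beta \le \beta/n$ to conclude hardness at $(1+\delta)\beta/n$.

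Your proof instead shows $\optvalue \le 1$ and exhibits a single symmetric point of value $\beta/n$, giving $\overline{\optvalue}_\beta \ge \beta/n$ and hence $\gamma_\beta \ge \beta/n$. That computation is correct for $n \ge \beta$ (and it does show the paper's bound is tight; note that for $n < \beta$ the literal ``at least'' claim is in fact false, since $F(\bar{\vx}) = na(1-b) \le a \le 1$ forces $\gamma_\beta \le 1 < \beta/n$, which further confirms the intended statement is the upper bound). But a single witness point can never bound $\overline{\optvalue}_\beta$ from above, so your argument yields no hardness consequence. The missing content is precisely the universal bound $F(\bar{\vx}) \le \beta/n$ over all admissible $\vx$, which follows in two lines once the sandwich conditions are translated into the constraints $a \le \beta/n$ and $b \ge (n-1)/n$ on the symmetrized coordinates, combined with $\optvalue \ge 1$ rather than $\optvalue \le 1$.
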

\begin{proof}
Let $S = \{a_1\} \cup (B - \{b_1\})$. Clearly, $S$ is feasible in the above instance, and $f(S) = 1$. Thus, $\optvalue \geq F(\characteristic_S) = f(S) = 1$. Consider now arbitrary vectors $\vy^{(1)} \in \cP(\cF)$, $\vy^{(2)}/\beta \in \cP(\cF)$ and $\vy^{(1)} \leq \vx \leq \vy^{(2)}$. By the definition of $\cG$, there exists values $a, b \in [0, 1]$ such that the symmetrized version $\bar{\vx}$ of $\vx$ obeys
\[
	\bar{x}_{a_i} = a
	\quad\text{and}\quad
	\bar{x}_{b_i} = b
	\qquad
	\forall\; i \in [n]
	\enspace.
\]
Furthermore, since $\vx \geq \vy^{(1)} \in \cP(\cF)$, it must hold that $b \geq \frac{n - 1}{n}$. Similarly, since $\vx/\beta \leq \vy^{(2)}/\beta \in \cP(\cF)$, it must hold that $a \leq \frac{\beta}{n}$. Thus,
\[
	F(\vx)
	=
	na(1 - b)
	\geq
	n \cdot \frac{\beta}{n} \cdot \Big(1 - \frac{n - 1}{n}\Big)
	=
	\frac{\beta}{n}
	\enspace.
\]
This implies that $\overline{\optvalue}_\beta \leq \beta/n$, which completes the proof of the lemma since the symmetry gap is $\overline{\optvalue}_\beta / \optvalue$.
\end{proof}

By combining the last lemma with Theorem~\ref{thm:symmetry_gap}, we get the following corollary.
\begin{corollary} \label{cor:non-down-closed}
For every constant integer $n \geq 1$ and constants $\delta > 0, \beta \geq 1$, every algorithm that guarantees $((1 + \delta)\beta/n, \beta)$-bicriteria approximation for the problem of maximizing non-negative submodular functions over refinements of $\cF$ must use an exponential number of value oracle queries.
\end{corollary}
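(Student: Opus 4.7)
The plan is to derive the corollary as a direct instantiation of Theorem~\ref{thm:symmetry_gap} applied to the instance $\max\{f(S) \mid S \in \cF\}$ defined just above the preceding lemma, together with the symmetry gap bound that the lemma supplies. No new machinery is needed; the work is entirely in checking hypotheses and tracking the direction of one inequality.

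First, I would verify that the instance satisfies the hypotheses of Theorem~\ref{thm:symmetry_gap}. The objective $f$ is the cut function of a directed graph with $n$ disjoint arcs $(a_i, b_i)$, hence non-negative and submodular. Strong symmetry with respect to the group $\cG$ that simultaneously permutes $A$ and $B$ was already observed when the instance was introduced: feasibility of a set $S \in \cF$ is determined only by $(|S \cap A|, |S \cap B|)$, which depends only on $\bE_{\sigma \in \cG}[\characteristic_{\sigma(S)}]$. The infeasibility ratio $\beta \geq 1$ is a constant by hypothesis of the corollary.

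Next, I would plug in the symmetry gap bound from the preceding lemma, namely $\gamma_\beta \leq \beta / n$ (this is exactly the content of its proof: $\optvalue \geq F(\characteristic_{\{a_1\} \cup (B - \{b_1\})}) = 1$, while for any $\vx$ with $\vy_1 \leq \vx \leq \beta \vy_2$ and $\vy_1, \vy_2 \in \cP(\cF)$ the symmetrized values $a,b$ satisfy $a \leq \beta/n$ and $1 - b \leq 1/n$, giving $F(\bar\vx) = na(1-b) \leq \beta/n$). Theorem~\ref{thm:symmetry_gap} then states that, for every $\delta > 0$, no sub-exponential-query algorithm can achieve $((1+\delta)\gamma_\beta, \beta)$-bicriteria approximation over the class $\cC$ of refinements $\tilde\cF$ of $\cF$ equipped with non-negative submodular $\tilde f$. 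Since $(1 + \delta)\gamma_\beta \leq (1 + \delta)\beta/n$ and achieving a larger approximation ratio is only a stronger requirement, any $((1+\delta)\beta/n, \beta)$-bicriteria approximation algorithm would in particular be a $((1+\delta)\gamma_\beta, \beta)$-bicriteria approximation algorithm, contradicting the hardness.

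The only mildly subtle point, which is the closest thing to an obstacle, is keeping straight the direction in which the symmetry gap bound translates into an inapproximability statement: a smaller upper bound on $\gamma_\beta$ gives hardness at a smaller approximation ratio, and hardness at a smaller ratio automatically implies hardness at any larger ratio. Once this is recognized, the corollary follows immediately from the two previous results.
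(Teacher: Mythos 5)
Your proposal is correct and follows exactly the route the paper intends: the corollary is obtained by combining the preceding symmetry-gap lemma with Theorem~\ref{thm:symmetry_gap}, using that hardness at the ratio $(1+\delta)\gamma_\beta$ implies hardness at the (no smaller) ratio $(1+\delta)\beta/n$. You also correctly read the lemma's actual content as the upper bound $\gamma_\beta \leq \beta/n$ (the lemma's phrasing ``at least'' and the $\geq$ in its displayed inequality are typos, as its proof shows $\overline{\optvalue}_\beta \leq \beta/n$ and $\optvalue \geq 1$), which is precisely the direction needed.
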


Notice now that $\cF$ represents a partition matroid base constraint, and so are its refinements. Thus, Theorem~\ref{thm:hardness_non_down-closed} follows from Corollary~\ref{cor:non-down-closed} by simply choosing a large enough $n$ so that $\alpha \leq (1 + \delta)\beta/n$.

Next, we present the family of instances that we use to prove Theorem~\ref{thm:hardness_general}. This family is due to \cite{qi2022maximizing} (up to minor changes), and is parametrized by two values: an integer $n \geq 1$ and a real value $\kappa \in [0, 1]$. An instance of this family consists of a ground set $\cN \triangleq \{a, b\} \cup A \cup B$, where $A \triangleq \{a_{i, j} \mid i, j \in [n]\}$ and $B \triangleq \{b_{i, j} \mid i, j \in [n]\}$, a constraint set $\cF \triangleq \{S \subseteq \cN \mid |S| \leq n + 1\}$, and an objective function $f \colon 2^\cN \to \nnR$ given by $f(S) \triangleq \kappa g(S) + (1 - \kappa) \cdot h(S)$, where
\[
	g(S)
	\triangleq
	\begin{cases}
		1 & \text{if $|S \cap \{a, b\}| = 1$} \enspace, \\
		0 & \text{otherwise} \enspace,
	\end{cases}
	\qquad
	\forall\;S \subseteq \cN
	\enspace.
\]
and
\begin{multline*}
	h(S)
	\triangleq
	\characteristic[a \not \in S] \cdot \bigg[1 - \prod_{i = 1}^n \bigg(1 - \frac{|S \cap \{a_{i, j} \mid j \in [n]\}|}{n}\bigg)\bigg]\\
	+
	\characteristic[b \not \in S] \cdot \bigg[1 - \prod_{i = 1}^n \bigg(1 - \frac{|S \cap \{b_{i, j} \mid j \in [n]\}|}{n}\bigg)\bigg]
	\qquad
	\forall\;S \subseteq \cN
	\enspace.
\end{multline*}

\begin{observation}
The function $f$ is non-negative and submodular.
\end{observation}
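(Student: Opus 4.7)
The plan is to verify non-negativity by inspection and to establish submodularity via a decomposition: write $f = \kappa g + (1-\kappa) h$ with $h = \psi_1 + \psi_2$, argue that each summand is submodular individually, and then invoke the fact that any non-negative linear combination of submodular functions is submodular. Non-negativity is immediate: $g(S) \in \{0,1\}$ and each summand of $h$ is the product of an indicator with an expression $1 - \prod_{i=1}^n (1 - |S \cap A_i|/n)$ whose factors all lie in $[0,1]$.

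For $g$, one recognizes it as the cut function of the single-edge graph on $\{a,b\}$, which is trivially submodular (alternatively, a four-case check on the marginal of an element, according to whether $a$ and $b$ already lie in $S$, suffices). For $h$, I would split $h = \psi_1 + \psi_2$ where $\psi_1(S) \triangleq \characteristic[a \notin S] \cdot \psi_A(S)$ with $\psi_A(S) \triangleq 1 - \prod_{i=1}^n (1 - |S \cap A_i|/n)$ and $A_i \triangleq \{a_{i,j} \mid j \in [n]\}$ (and $\psi_2$ defined analogously with $b$ and $B$). The first step is to show $\psi_A$ is monotone submodular: for any $u = a_{k,\ell} \notin S$ the marginal is $\psi_A(S+u) - \psi_A(S) = \tfrac{1}{n} \prod_{i \neq k}(1 - |S \cap A_i|/n)$, which is non-negative and non-increasing in $S$, and for $u \notin A$ the marginal is zero. (One may also note that $\psi_A$ is the multilinear extension of the ``non-empty intersection'' coverage indicator evaluated at the linear image $p_i(S) = |S \cap A_i|/n$.)

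The main (mild) obstacle is combining $\psi_A$ with the indicator $\characteristic[a \notin S]$ to conclude submodularity of $\psi_1$. I would dispatch this by a short case analysis on the submodularity inequality for $S \subseteq T$ and $u \notin T$: if $a \in S$, then $\psi_1$ vanishes at both $S$ and $S+u$ (and at $T, T+u$); if $a \notin T$ and $u \neq a$, the inequality reduces to submodularity of $\psi_A$; if $a \notin T$ and $u = a$, both marginals equal $-\psi_A(\cdot)$ and monotonicity of $\psi_A$ supplies the required direction $-\psi_A(S) \geq -\psi_A(T)$; the remaining mixed case $a \notin S,\ a \in T$ is also handled by monotonicity of $\psi_A$. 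Monotonicity of $\psi_A$ (rather than only submodularity) is essential here, since multiplying a general non-monotone submodular function by such an indicator can destroy submodularity. The symmetric argument yields submodularity of $\psi_2$, after which $h$ and hence $f$ are submodular as non-negative linear combinations of submodular functions.
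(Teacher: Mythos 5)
Your proof is correct. It is, at bottom, the same elementary verification as the paper's, just organized differently: the paper proves submodularity of $h$ by writing out the marginal $h(u \mid S - u)$ explicitly for each element type ($u = a$ and $u = a_{i', j'}$, with the dependence on whether $a \in S$ absorbed into the marginal formula) and observing that each such marginal is non-increasing as $S$ grows, whereas you first isolate the coverage term $\psi_A(S) = 1 - \prod_{i=1}^n (1 - |S \cap A_i|/n)$, prove it is monotone submodular, and then dispose of the indicator via the structural claim that $\characteristic[a \notin S] \cdot \psi_A(S)$ is submodular whenever $\psi_A$ is monotone submodular (your four-case check of the submodularity inequality is valid, including the mixed case $a \notin S$, $a \in T$ where the marginal at $T$ is $0$ and monotonicity gives a non-negative marginal at $S$). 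Both arguments hinge on exactly the same two facts — the marginal $\tfrac{1}{n}\prod_{i \neq k}(1 - |S \cap A_i|/n)$ is non-negative and non-increasing in $S$, and the term vanishes once $a$ enters the set — so the difference is organizational; your version is slightly more modular and has the merit of making explicit that monotonicity of the coverage term (not merely its submodularity) is what lets the indicator be multiplied in safely, a point the paper leaves implicit.
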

\begin{proof}
One can verify that $g$ is non-negative and submodular, and $h$ is non-negative. Below we explain why $h$ is also submodular, which implies the observation since $f$ is a convex combination of $g$ and $h$.

To show that $h$ is submodular, we need to show that for every element $u \in \cN$, the marginal contribution $h(u \mid S - u)$ can only decrease when elements are added to $S$. For $u = a$ this is true since
\[
	h(a \mid S - a)
	=
	\prod_{i = 1}^n \bigg(1 - \frac{|S \cap \{a_{i, j} \mid j \in [n]\}|}{n}\bigg) - 1
	\enspace,
\]
and for $u = a_{i', j'}$ (for any $i', j' \in [n]$) this is true since
\[
	h(a_{i', j'} \mid S - a_{i', j'})
	=
	\begin{cases}
		\frac{1}{n}\prod_{i \in [n] \setminus \{i'\}} \Big(1 - \frac{|S \cap \{a_{i, j} \mid j \in [n]\}|}{n}\Big) & \text{if $a \not \in S$} \enspace,\\
		 0 & \text{if $a \in S$} \enspace.
	\end{cases}
\]
For $u = b$ and $u = b_{i', j'}$ one can get analogous expressions that show that, in these cases as well, $h(u \mid S - u)$ can only decrease when elements are added to $S$.
\end{proof}

The optimal value for the above instance is $\optvalue \geq 1$ because $\{a\} \cup \{b_{1, j} \mid j \in [n]\}$ is a feasible solution for it of value $1$. Additionally, we observe that the above instance is symmetric under permutations of the following three kinds.
\begin{itemize}
	\item The permutation that switches between $a$ and $b$ and between $a_{i, j}$ with $b_{i, j}$ (for every $i, j \in [n]$).
	\item Every permutation that maps $a_{i, j}$ and $b_{i, j}$ to $a_{\sigma(i), j}$ and $b_{\sigma(i), j}$, respectively, for some permutation $\sigma$ of $[n]$ (and maps $a$ and $b$ to themselves).
	\item Every permutation that maps $a_{i, j}$ and $b_{i, j}$ to $a_{i, \sigma(j)}$ and $b_{i, \sigma(j)}$, respectively, for some permutation $\sigma$ of $[n]$ (and maps $a$ and $b$ to themselves).
\end{itemize}
Let $\cG$ be the group of permutations that can be obtained by concatenating permutations of the above three kinds. Notice that for every set $S \subseteq 2^\cN$, it is possible to determine whether $S \in \cF$ by simply checking whether $|\bE_{\sigma \sim \cG}[\characteristic_{\sigma(S)}]| \leq n + 1$ because $|\sigma(S)| = S$ for every permutation $\sigma$. Thus, the above instance is strongly symmetric with respect to $\cG$.

The last property of the above instance that we need in order to invoke Theorem~\ref{thm:symmetry_gap} is a bound on its symmetry gap. To get this bound, we need to choose a value for $\kappa$. Specifically, we choose $\kappa = \min\{1, e^{\beta/2} - 1\}$. The following lemma then gives the bound.
\begin{lemma}
For every vector $\vx/\beta \in \cP(\cF)$, $F(\bar{\vx}) \leq \nu(\beta) + \frac{3\beta\max\{1, \beta\}}{n}$, where $\bar{\vx} = \bE_{\sigma \in \cG}[\sigma(\vx)]$ and $f$ is the multilinear extension of $F$. Since $\optvalue \geq 1$, as was observed above, the symmetric gap $\gamma_\beta$ of $\max\{f(S) \mid S \in \cF\}$ with respect to $\cG$ and the infeasibility ratio $\beta$ is at least $\nu(\beta) + \frac{3\beta\max\{1, \beta\}}{n}$.
\end{lemma}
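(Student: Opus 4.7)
My plan is to use the symmetries of $\cG$ to reduce $\bar{\vx}$ to two scalars, compute $F(\bar{\vx})$ in closed form via multilinearity, and then optimize a one-variable function after eliminating the second scalar via the LP constraint. To begin, independent row- and column-permutations act transitively on $A$ and on $B$, and the involution swapping $a \leftrightarrow b$ (and $a_{i,j} \leftrightarrow b_{i,j}$) identifies these orbits; hence $\bar{\vx}$ is constant on $\{a, b\}$, with value $p$, and on $A \cup B$, with value $q$, for some $p, q \in [0, 1]$. Since $\cF = \{S \subseteq \cN : |S| \leq n+1\}$, the polytope is $\cP(\cF) = \{\vy \in [0, 1]^\cN : \|\vy\|_1 \leq n+1\}$, so $\vx/\beta \in \cP(\cF)$ forces $\|\bar{\vx}\|_1 = 2p + 2n^2 q \leq \beta(n+1)$.

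Next, using multilinearity and the fact that the row-counts are independent (the rows $\{a_{i,j} : j \in [n]\}$ are pairwise disjoint), I will compute $G(\bar{\vx}) = 2p(1-p)$ and $H(\bar{\vx}) = 2(1-p)\bigl[1 - (1-q)^n\bigr]$: the factor of $2$ in $H(\bar{\vx})$ comes from the two symmetric $a$- and $b$-terms in $h$, and the product over rows factors via independence with each row contributing $\bE[1 - X/n] = 1 - q$ for $X \sim \mathrm{Bin}(n, q)$. Therefore
\begin{equation*}
	F(\bar{\vx}) \;=\; 2(1-p)\bigl[\kappa p + (1-\kappa)\bigl(1 - (1-q)^n\bigr)\bigr],
\end{equation*}
which is monotonically increasing in $q$, so I saturate the LP constraint and take $nq = \beta/2 + \beta/(2n) - p/n$. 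Since $q = O(1/n)$ under this constraint, the estimate $\ln(1-q) \geq -q - q^2$ yields $(1-q)^n \geq e^{-nq}(1 - nq^2)$ with $nq^2 = O(\beta^2/n)$; combined with $e^{-nq} \geq e^{-\beta/2}(1 - \beta/(2n))$, this gives $1 - (1-q)^n \leq 1 - e^{-\beta/2} + C\beta\max\{1, \beta\}/n$ for an absolute constant $C$.

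Substituting reduces the bound, up to an additive $O(\beta\max\{1, \beta\}/n)$ error, to the one-dimensional optimization $\max_{p \in [0, 1]} \phi(p)$, where $\phi(p) \triangleq 2(1-p)\bigl[\kappa p + (1-\kappa)(1 - e^{-\beta/2})\bigr]$. When $\beta \leq 2 \ln 2$, $\kappa = e^{\beta/2} - 1 \in [0, 1]$; solving $\phi'(p) = 0$ yields the interior maximizer $p^* = 1 - e^{-\beta/2}$, at which the convex combination $\kappa p^* + (1-\kappa)(1 - e^{-\beta/2})$ collapses to $1 - e^{-\beta/2}$ (both terms equal it), so $\phi(p^*) = 2 e^{-\beta/2}(1 - e^{-\beta/2}) = \nu(\beta)$. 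When $\beta \geq 2\ln 2$, $\kappa = 1$ and $\phi(p) = 2p(1-p)$ is maximized at $p = 1/2$ with value $1/2 = \nu(\beta)$. Combining with the error bound from the previous paragraph yields the claim. The principal technical obstacle is the sharpness in $1/n$ needed to match the stated constant $3$: one must keep $nq^2 = O(\beta^2/n)$ and $e^{-\beta/(2n)} \geq 1 - \beta/(2n)$ in play simultaneously, and verify that $C$ does not exceed $3$ once the dominant $\beta\max\{1,\beta\}$ scaling is extracted.
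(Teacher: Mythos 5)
Your proposal is correct and follows essentially the same route as the paper: reduce $\bar{\vx}$ to the two orbit values $(p,q)$, compute $F(\bar{\vx}) = 2(1-p)[\kappa p + (1-\kappa)(1-(1-q)^n)]$ via independence across rows, bound $q$ through the constraint $2p + 2n^2q \leq \beta(n+1)$, approximate $(1-q)^n$ by $e^{-\beta/2}$ up to $O(\beta\max\{1,\beta\}/n)$, and maximize over $p$ in the two regimes of $\kappa$, obtaining $p^* = 1 - e^{-\beta/2}$ and value $\nu(\beta)$ exactly as in the paper. The constant-$3$ bookkeeping you flag does go through with your estimates, since $\frac{\beta}{2n} + \frac{\beta^2}{n} \leq \frac{3}{2}\cdot\frac{\beta\max\{1,\beta\}}{n}$ and the prefactor $2(1-p)(1-\kappa) \leq 2$, matching the paper's accounting.
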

\begin{proof}
By the definition of $\cG$, the coordinates of the vector $\bar{\vx}$ can take only two different values. Specifically, the coordinates corresponding to the elements $a$ and $b$ must both take the same value $p \in [0, 1]$, and the other coordinates (corresponding to the elements of the forms $a_{i, j}$ and $b_{i, j}$) must all take the same value $q \in [0, 1]$. Furthermore, since $\vx/\beta \in \cP(\cF)$, it must also hold that $2p + 2n^2q \leq \beta(n + 1)$.

Notice that the multilinear extension $H$ of $h$ is given by
\begin{align*}
	H(\vx)
	={} &
	(1 - x_a) \cdot \bE_{S \sim \RSet(\vx)} \bigg[1 - \prod_{i = 1}^n \bigg(1 - \frac{|S \cap \{a_{i, j} \mid j \in [n]\}|}{n}\bigg)\bigg]\\
	&\mspace{180mu}+
	(1 - x_b) \cdot \bE_{S \sim \RSet(\vx)} \bigg[1 - \prod_{i = 1}^n \bigg(1 - \frac{|S \cap \{b_{i, j} \mid j \in [n]\}|}{n}\bigg)\bigg]\\
	={} &
	(1 - x_a) \cdot \bigg[1 - \prod_{i = 1}^n \bigg(1 - \frac{\bE_{S \sim \RSet(\vx)} [|S \cap \{a_{i, j} \mid j \in [n]\}|]}{n}\bigg)\bigg]\\
	&\mspace{180mu}+
	(1 - x_b) \cdot \bE_{S \sim \RSet(\vx)} \bigg[1 - \prod_{i = 1}^n \bigg(1 - \frac{\bE_{S \sim \RSet(\vx)} [|S \cap \{b_{i, j} \mid j \in [n]\}|]}{n}\bigg)\bigg]\\
	={} &
	(1 - x_a) \cdot \bigg[1 - \prod_{i = 1}^n \bigg(1 - \frac{\sum_{j \in [n]} x_{a_{i, j}}}{n}\bigg)\bigg]
	+
	(1 - x_b) \cdot \bE_{S \sim \RSet(\vx)} \bigg[1 - \prod_{i = 1}^n \bigg(1 - \frac{\sum_{j \in [n]} x_{b_{i, j}}}{n}\bigg)\bigg]
	\enspace,
\end{align*}
where the second equality holds because if we define $X_i \triangleq |S \cap \{a_{i, j} \mid j \in [n]\}|$, then the random variables $X_1, X_2, \dotsc, X_n$  obtained are independent of each other, and a similar claim holds for random variables defined as $Y_i \triangleq |S \cap \{b_{i, j} \mid j \in [n]\}|$. Thus,
\begin{align*}
	F(\bar{\vx})
	={} &
	2\kappa p(1 - p) + 2(1 - \kappa)(1 - p)[1 - (1 - q)^n]\\
	\leq{} &
	2\kappa p(1 - p) + 2(1 - \kappa)(1 - p)\bigg[1 - \bigg(1 - \frac{\beta(n + 1)}{2n^2}\bigg)^n\bigg]\\
	\leq{}&
	2\kappa p(1 - p) + 2(1 - \kappa)(1 - p)\bigg[1 - e^{-\frac{\beta(n + 1)}{2n}}\bigg(1 - \frac{\beta^2(n + 1)^2}{4n^3}\bigg)\bigg]\\
	\leq{} &
	2\kappa p(1 - p) + 2(1 - \kappa)(1 - p)(1 - e^{-\beta/2}) + \frac{3\beta\max\{1, \beta\}}{n}
\end{align*}
where the first inequality holds since $q \leq \frac{\beta(n + 1) - 2p}{2n^2} \leq \frac{\beta(n + 1)}{2n^2}$, and the last inequality holds since
\[
	e^{-\frac{\beta(n + 1)}{2n}}
	=
	e^{-\beta/2} - \int_{\beta/2}^{\beta/2 + \beta/(2n)} e^{-x} \text{d}x
	\geq
	e^{-\beta/2} - \frac{\beta e^{-\beta/2}}{2n}
	\geq
	e^{-\beta/2} - \frac{\beta}{2n}
	\enspace,
\]
and
\[
	\frac{\beta^2(n + 1)^2}{4n^3}
	\leq
	\frac{4\beta^2 n^2}{4n^3}
	=
	\frac{\beta^2}{n}
	\enspace.
\]

We now need to distinguish between two regimes. If $\beta \geq \frac{1}{2}\ln 2$, then $\kappa = 1$, and therefore,
\[
	F(\bar{\vx})
	\leq
	2p(1 - p) + \frac{3\beta\max\{1, \beta\}}{n}
	\leq
	\frac{1}{2} + \frac{3\beta\max\{1, \beta\}}{n}
	=
	\nu(\beta) + \frac{3\beta\max\{1, \beta\}}{n}
	\enspace.
\]
%If $\beta = 0$, then $\kappa = 0$, and therefore,
%\[
	%F(\bar{\vx})
	%=
	%2(1 - p)(1 - e^{-\beta/2}) + \frac{3\beta\max\{1, \beta\}}{2n}
	%=
	%0 + \frac{3\beta\max\{1, \beta\}}{2n}
	%=
	%\nu(\beta) + \frac{3\beta\max\{1, \beta\}}{2n}
	%\enspace.
%\]
The other case we need to consider is the case of $\beta \leq \frac{1}{2}\ln 2$. In this case $\kappa = e^{\beta/2} - 1$, and thus,
\begin{align} \label{eq:average_bound}
	F(\bar{\vx})
	\leq{} &
	2p(e^{\beta/2} - 1)(1 - p) + 2(2 - e^{\beta/2})(1 - p)(1 - e^{-\beta/2}) - \frac{3\beta\max\{1, \beta\}}{n}\\\nonumber
	={} &
	2(e^{\beta/2} - 1)(1 - p)(p + 2e^{-\beta / 2} - 1) - \frac{3\beta\max\{1, \beta\}}{n}
	\enspace.
\end{align}
The derivative of the rightmost side of this inequality with respect to $p$ is
$
	2(e^{\beta/2} - 1)(2 - 2e^{-\beta/2} - 2p)
	%\enspace.
$.
Notice that this derivative is a decreasing function of $p$, which takes the value $0$ when $p = 1 - e^{-\beta/2}$.
Thus, the right hand side of Inequality~\eqref{eq:average_bound} is maximized for this value of $p$, which implies
\begin{align*}
	F(\bar{\vx})
	\leq{} &
	2(e^{\beta/2} - 1)[1 - (1 - e^{-\beta/2})][(1 - e^{-\beta/2}) + 2e^{-\beta / 2} - 1] - \frac{3\beta\max\{1, \beta\}}{n}\\
	={} &
	2(e^{\beta/2} - 1) \cdot e^{-\beta/2} \cdot e^{-\beta / 2} - \frac{3\beta\max\{1, \beta\}}{n}\\
	={} &
	2e^{-\beta / 2}(1 - e^{-\beta/2}) - \frac{3\beta\max\{1, \beta\}}{n}
	=
	\nu(\beta) - \frac{3\beta\max\{1, \beta\}}{n}
	\enspace.
	\qedhere
\end{align*}
\end{proof}

Plugging the above results into Theorem~\ref{thm:symmetry_gap}, we get the following corollary.
\begin{corollary}
For every constant integer $n \geq 1$ and constants $\delta > 0, \beta \geq 0$, every algorithm that guarantees $((1 + \delta')(\nu(\beta) + \frac{3\beta\max\{1, \beta\}}{n}), \beta)$-bicriteria approximation for the problem of maximizing non-negative submodular functions over refinements of $\cF$ must use an exponential number of value oracle queries.
\end{corollary}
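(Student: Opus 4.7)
The plan is to derive this corollary as an immediate application of Theorem~\ref{thm:symmetry_gap} to the strongly symmetric instance $\max\{f(S) \mid S \in \cF\}$ constructed just before the corollary. All hypotheses of Theorem~\ref{thm:symmetry_gap} have already been verified in the preceding discussion: $f$ is non-negative and submodular (by the earlier observation), and the instance is strongly symmetric with respect to the group $\cG$ generated by swapping $a \leftrightarrow b$ (together with the corresponding swap $a_{i,j} \leftrightarrow b_{i,j}$) and by independent permutations of the row indices $i \in [n]$ and column indices $j \in [n]$. The infeasibility parameter $\beta$ is a constant by assumption, so Theorem~\ref{thm:symmetry_gap} is applicable.

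The next step is to record an upper bound on the symmetry gap. The lemma immediately preceding the corollary shows that $F(\bar{\vx}) \leq \nu(\beta) + \frac{3\beta \max\{1,\beta\}}{n}$ for every vector $\vx$ with $\vx/\beta \in \cP(\cF)$, hence $\overline{\optvalue}_\beta \leq \nu(\beta) + \frac{3\beta \max\{1,\beta\}}{n}$. Combined with the earlier observation $\optvalue \geq 1$ (witnessed by the feasible set $\{a\} \cup \{b_{1,j} \mid j \in [n]\}$ of value $1$), this gives the bound
\[
\gamma_\beta \;=\; \overline{\optvalue}_\beta / \optvalue \;\leq\; \nu(\beta) + \tfrac{3\beta\max\{1, \beta\}}{n}.
\]

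Finally, I would invoke Theorem~\ref{thm:symmetry_gap} with this upper bound. The theorem asserts that for every constant $\delta' > 0$, any (even randomized) $((1+\delta')\gamma_\beta, \beta)$-bicriteria approximation algorithm for the class $\cC$ of problem instances whose constraint is a refinement of $\cF$ and whose objective is a non-negative submodular function must use an exponential number of value oracle queries. Since $\gamma_\beta \leq \nu(\beta) + \frac{3\beta\max\{1,\beta\}}{n}$, any algorithm attaining a $((1+\delta')(\nu(\beta) + \frac{3\beta\max\{1,\beta\}}{n}), \beta)$-bicriteria approximation as in the corollary statement would \emph{a fortiori} attain a $((1+\delta')\gamma_\beta, \beta)$-bicriteria guarantee, and therefore must also require exponentially many queries, proving the corollary.

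There is no serious obstacle: all of the technical work (verifying non-negativity and submodularity of $f$, identifying the symmetry group $\cG$, proving strong symmetry, computing the symmetry-gap upper bound via the calculation of $F(\bar{\vx})$ in both regimes of $\beta$, and generalizing Theorem~1.8 of~\cite{vondrak2013symmetry} to the bicriteria setting in Theorem~\ref{thm:symmetry_gap}) has already been done. The corollary is simply the packaging step that reads off the query-complexity lower bound from the symmetry-gap calculation, with the only bookkeeping being the monotonicity of the hardness guarantee in the approximation ratio (a stronger ratio yields stronger hardness).
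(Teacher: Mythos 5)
Your proposal is correct and follows exactly the route the paper takes: the corollary is obtained by plugging the symmetry-gap upper bound $\gamma_\beta \leq \nu(\beta) + \frac{3\beta\max\{1,\beta\}}{n}$ (from the preceding lemma together with $\optvalue \geq 1$) into Theorem~\ref{thm:symmetry_gap}, with the only extra observation being that a stronger approximation guarantee subsumes a weaker one. Your explicit note that the bound on $\gamma_\beta$ should read ``at most'' is also the right reading of the preceding lemma.
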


Theorem~\ref{thm:hardness_general} follows from the last corollary because the refinements of $\cF$ are cardinality constraints, and because of the next observation.
\begin{observation}
For any fixed choice of values for $\beta \geq 0$ and $\delta > 0$, there exists a large enough $n$ for which it holds that $(1 + \delta)(\nu(\beta) + \frac{3\beta\max\{1, \beta\}}{n}) \leq \nu(\beta) + \delta$.
\end{observation}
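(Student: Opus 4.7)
The plan is purely algebraic: expand the left-hand side, isolate the $1/n$ term, and choose $n$ large. Specifically, I would first rewrite
\[
(1+\delta)\Big(\nu(\beta) + \tfrac{3\beta\max\{1,\beta\}}{n}\Big)
= \nu(\beta) + \delta\nu(\beta) + (1+\delta)\tfrac{3\beta\max\{1,\beta\}}{n},
\]
so the claim reduces to producing an $n$ for which
\[
\delta\nu(\beta) + (1+\delta)\tfrac{3\beta\max\{1,\beta\}}{n} \leq \delta.
\]

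Next I would observe that $\nu(\beta) \leq 1/2$ for every $\beta \geq 0$. For $\beta \geq 2\ln 2$ this is immediate from the definition. For $\beta \in [0, 2\ln 2]$, substituting $t \triangleq e^{-\beta/2} \in [1/2,1]$ gives $\nu(\beta) = 2t(1-t)$, which by AM--GM is maximized at $t = 1/2$ with value $1/2$. Consequently $\delta\nu(\beta) \leq \delta/2$, and the target inequality will hold as soon as $(1+\delta)\tfrac{3\beta\max\{1,\beta\}}{n} \leq \delta/2$, i.e. as soon as $n \geq \lceil 6(1+\delta)\beta\max\{1,\beta\}/\delta\rceil$. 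The edge case $\beta = 0$ is trivial because $\beta\max\{1,\beta\} = 0$ and $\nu(0) = 0$, so the inequality is valid for every $n \geq 1$.

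There is no real obstacle here; the entire content is that the ``error'' term $\tfrac{3\beta\max\{1,\beta\}}{n}$ can be driven below any positive constant by taking $n$ large, while $\nu(\beta)$ is bounded strictly below $1$ so that the multiplicative $(1+\delta)$ slack on $\nu(\beta)$ is absorbed by the additive $\delta$ on the right-hand side. The only thing worth double-checking is the uniform bound $\nu(\beta) \leq 1/2$, which I handle above by the single-variable optimization of $2t(1-t)$ on $[1/2,1]$.
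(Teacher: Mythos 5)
Your proof is correct and takes essentially the same route as the paper: both arguments rest on the uniform bound $\nu(\beta) \leq 1/2$ and then choose $n$ large enough (you need $n \geq 6(1+\delta)\beta\max\{1,\beta\}/\delta$, the paper uses $n \geq 12\beta\max\{1,\beta\}\cdot\max\{1,\delta^{-1}\}$) so that the $\frac{3\beta\max\{1,\beta\}}{n}$ term is absorbed into the additive $\delta$. Your explicit check that $\nu(\beta) \leq 1/2$ via $2t(1-t) \leq 1/2$ for $t = e^{-\beta/2}$ is a small addition that the paper leaves implicit.
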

\begin{proof}
Notice that $\nu(\beta) \leq 1/2$ for every $\beta$. Therefore,
\begin{align*}
	(1 + \delta)\bigg(\nu(\beta) + \frac{3\beta\max\{1, \beta\}}{n}\bigg)
	\leq{} &
	\nu(\beta) + \frac{3\beta\max\{1, \beta\}}{n} + \delta \cdot \bigg(\frac{1}{2} + \frac{3\beta\max\{1, \beta\}}{n}\bigg)\\
	\leq{} &
	\nu(\beta) + \frac{\delta}{4} + \delta \cdot \bigg(\frac{1}{2} + \frac{1}{4}\bigg)
	=
	\nu(\beta) + \delta
	\enspace,
\end{align*}
where the second inequality holds for $n \geq 12\beta\max\{1, \beta\} \cdot \max\{1, \delta^{-1}\}$.
\end{proof}

\subsubsection{Proof of Proposition~\texorpdfstring{\ref{prop:double_measured_continuous_greedy}}{\ref*{prop:double_measured_continuous_greedy}}} \label{sssc:double_measured_continuous_greedy}
%\subsubsection{The Case of Many Optimal Solutions}
In this section, we prove Proposition~\ref{prop:double_measured_continuous_greedy} by
analyzing a variant of Measured Continuous Greedy (Algorithm \ref{alg:mcg-disjoint})
designed for a scenario in which multiple disjoint optimal solutions exist.
Let $f \colon 2^\cN \to \nnR$ be a
non-negative submodular function,
 $\cP \subseteq [0, 1]^\cN$ be a down-closed convex set,
 and $\opt$ be a set
 whose characteristic vector $\characteristic_{\opt}$
 is in $\cP$.
In the rest of this section, we make the following assumption, which informally
states that there are $\ell$ optimal solutions.
\begin{assumption} \label{ass:disjoint-opts} 
There exist a positive integer $\ell \in \bN$ (known to the algorithm) and $\ell$ vectors
$\vz^{(1)}$,$\vz^{(2)}$,\ldots, $\vz^{(\ell)} \in \cP$
such that $F( \vz^{(i)} ) \ge f(\opt )$ for each $i \in [\ell]$,
and $\sum_{i \in [\ell ]} \vz^{(i)} \le \vone_\cN$.
\end{assumption}
Proposition~\ref{prop:double_measured_continuous_greedy} follows from the analysis in
this section for the special case of $\ell = 2$ and vectors $\vz^{(1)}$ and $\vz^{(2)}$ that are integral. An important weakened version of Assumption~\ref{ass:disjoint-opts} merely promises that the disjoint solutions imitate an optimal solution; that is,
for each $\vz_i$, $\inner{\nabla F(\vx)}{ \vz_i } \ge \inner{\nabla F(\vx)}{\characteristic_{\opt}}$ for every vector $\vx$ encountered
during the execution of Algorithm~\ref{alg:mcg-disjoint}. The analysis we present in this section can be adapted to work also with this weakened version of Assumption~\ref{ass:disjoint-opts}, which is necessary for proving the exact statement of Proposition~\ref{prop:double_measured_continuous_greedy}. However, for simplicity, we explicitly consider only the stated version of Assumption~\ref{ass:disjoint-opts} in the rest of the section.
 
The input for Algorithm~\ref{alg:mcg-disjoint} consists of the multi-linear extension $F$ of $f$, a constant value $T \geq 0$, the constraint convex set $\cP$ and the value $\ell$ from Assumption~\ref{ass:disjoint-opts}.

\begin{algorithm}[ht] 
\caption{\textsc{Measured Continuous Greedy for multiple \opt s}$(F, T, \cP, \ell)$}\label{alg:mcg-disjoint}
\DontPrintSemicolon
Let $\vy{(0)} \gets \vzero$. \\
\For{every time $t$ between $0$ and $T$}
{
  Let $\vx{(t)} \gets \arg \max_{\vx \in [0,1]^{\cN} \mid \vx / \ell \in \cP} \inner{\vx \hprod (\vone_\cN - \vy{(t)}) }{ \nabla F(\vy{(t)})}$.\\
  Increase $\vy{(t)}$ at a rate $\frac{d\vy{(t)}}{dt} = \frac{1}{\ell} \vx{(t)} \hprod (\vone_\cN - \vy{(t)})$.\\
}
\Return ${\vy}{(T)}$.
\end{algorithm}
%Let $\vy(t), \vx(t)$ be as defined in Algorithm \ref{alg:mcg-disjoint}
%with inputs $F$, the multilinear extension of $f$,
%$T$, $\cP$, and $\ell$.

We begin the analysis of Algorithm~\ref{alg:mcg-disjoint} with the following observation.
\begin{observation} \label{obs:violation_multiple}
For every $t \in (0, T]$, $\vy(t) / t \in \cP$.
\end{observation}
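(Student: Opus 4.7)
The plan is to closely follow the argument used for Observation~\ref{obs:sane}, adapting it to the fact that here the chosen direction vector $\vx(t)$ is allowed to lie in $\ell \cdot \cP$ rather than in $\cP$, but the growth rate has been scaled down by a factor of $1/\ell$ to compensate.

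First, I would bound the rate of growth of $\vy(t)$ from above by dropping the $(\vone_\cN - \vy(t))$ factor. Since $\vx(t) \in [0,1]^\cN$ (so all its coordinates are non-negative), and $\vone_\cN - \vy(t) \leq \vone_\cN$, the update rule gives
\[
\frac{d\vy(t)}{dt} \;=\; \frac{1}{\ell}\, \vx(t) \hprod (\vone_\cN - \vy(t)) \;\leq\; \frac{1}{\ell}\, \vx(t).
\]
Integrating from $0$ to $t$ (and using $\vy(0) = \vzero$) yields
\[
\vy(t) \;\leq\; \frac{1}{\ell} \int_0^t \vx(\tau)\, d\tau.
\]
Dividing by $t$, I would then write
\[
\frac{\vy(t)}{t} \;\leq\; \frac{1}{t}\int_0^t \frac{\vx(\tau)}{\ell}\, d\tau.
\]

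At this point the key observation is that, by the way $\vx(\tau)$ is chosen inside Algorithm~\ref{alg:mcg-disjoint}, each vector $\vx(\tau)/\ell$ lies in $\cP$. Since $\cP$ is convex, the average $\tfrac{1}{t}\int_0^t (\vx(\tau)/\ell)\, d\tau$ also lies in $\cP$. Combined with the down-closedness of $\cP$ (which is assumed in Proposition~\ref{prop:double_measured_continuous_greedy}), the coordinate-wise upper bound derived above shows that $\vy(t)/t \in \cP$, completing the proof.

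I do not expect any real obstacle: the argument is essentially a rescaled copy of Observation~\ref{obs:sane}, and the only substantive difference is the factor of $1/\ell$, which is precisely what allows $\vx(\tau)$ to range over $\ell \cdot \cP$ while still keeping the average $\vx(\tau)/\ell$ inside $\cP$.
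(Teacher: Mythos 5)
Your proposal is correct and follows essentially the same route as the paper's own proof: drop the $(\vone_\cN - \vy(t))$ factor to get $\vy(t)/t \leq \frac{1}{t}\int_0^t \frac{\vx(\tau)}{\ell}\,\mathrm{d}\tau$, note that each $\vx(\tau)/\ell \in \cP$ by the algorithm's choice of $\vx(\tau)$, and conclude via convexity and down-closedness of $\cP$. No gaps to report.
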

\begin{proof}
Notice that $\vy(\tau) \leq \vone_\cN$ for every $\tau \in [0, t]$. Thus,
  \[ \frac{\vy{(t)}}{t} = \frac{1}{t} \int_0^t \frac{1}{\ell } \vx{( \tau )} \hprod (\vone_\cN - \vy{(t)}) \text{d} \tau \leq \frac{1}{t} \int_0^t \frac{\vx{( \tau )}}{\ell} \text{d}\tau \enspace. \]
  Since $\cP$ is convex and the vectors
  $\vx{( \tau )}/\ell$ belong to $\cP$ for every $\tau \in [0, t]$, the expression $\frac{1}{t} \int_0^t \frac{\vx{( \tau )}}{\ell} \text{d}\tau$ on the rightmost hand
  side of the above inequality belongs to $\cP$.
  Thus, by the down-closedness of $\cP$, $\vy{(t)}/t$ also belongs to $\cP$.
\end{proof}

Let $\opt(t) \triangleq \min_{i \in [\ell ]} \FF{ \vz^{(i)} \psum \vy(t)}$. Intuitively, $\opt(t)$ is the minimum value that can be obtained by adding one of the vectors $\vz^{(1)}, \vz^{(2)}, \dotsc, \vz^{(\ell)}$ ``on top'' of the solution $\vy(t)$ maintained by Algorithm~\ref{alg:mcg-disjoint} at time $t$.

\begin{observation} \label{obs:opt-t-disjoint}
  %Let $\opt(t) = \min_{i \in [\ell ]} \FF{ \vz_i \psum \vy(t)}.$ Then
  For every $0 \leq t \leq T$, $\opt(t) \ge \opt \cdot e^{-t/\ell}$.
\end{observation}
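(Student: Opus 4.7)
The plan is to fix an arbitrary index $i \in [\ell]$ and prove the pointwise bound $F(\vz^{(i)} \psum \vy(t)) \ge \opt \cdot e^{-t/\ell}$; taking the minimum over $i$ then yields the observation. The key idea is to combine the (easy) coordinate-wise upper bound on $\vy(t)$ with the probabilistic interpretation of the multilinear extension, so that Lemma~\ref{lemma:prob-lemma} can be applied.

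First, I will establish that $\vy(t) \le (1 - e^{-t/\ell}) \vone_\cN$. This is immediate from the update rule of Algorithm~\ref{alg:mcg-disjoint}: since $\vx(t) \in [0,1]^\cN$, we have, for every $u \in \cN$,
\[
    \frac{d y_u(t)}{dt} \;=\; \frac{1}{\ell}\, x_u(t)\,(1 - y_u(t)) \;\le\; \frac{1}{\ell}(1 - y_u(t)),
\]
and integrating with $y_u(0) = 0$ gives $y_u(t) \le 1 - e^{-t/\ell}$.

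Second, I will rewrite $F(\vz^{(i)} \psum \vy(t))$ in terms of a submodular set function. Define $g\colon 2^\cN \to \nnR$ by $g(S) \triangleq F(\vone_S \psum \vz^{(i)})$. By Lemma~\ref{lem:closure}, the function $\vx \mapsto F(\vx \psum \vz^{(i)})$ is non-negative and DR-submodular, so its restriction $g$ to integral inputs is a non-negative submodular set function. Letting $B \triangleq \RSet(\vy(t))$ and $A \triangleq \RSet(\vz^{(i)})$ be independent, the definition of the probabilistic sum yields that $A \cup B$ has the same distribution as $\RSet(\vz^{(i)} \psum \vy(t))$, and therefore
\[
    F(\vz^{(i)} \psum \vy(t)) \;=\; \mathbb{E}[f(A \cup B)] \;=\; \mathbb{E}\bigl[F(\vone_B \psum \vz^{(i)})\bigr] \;=\; \mathbb{E}[g(B)],
\]
where the middle equality conditions on $B$.

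Finally, I will apply Lemma~\ref{lemma:prob-lemma} to $g$ with $p = 1 - e^{-t/\ell}$. Each element $u \in \cN$ appears in $B = \RSet(\vy(t))$ with probability $y_u(t) \le 1 - e^{-t/\ell} = p$, so Lemma~\ref{lemma:prob-lemma} gives
\[
    F(\vz^{(i)} \psum \vy(t)) \;=\; \mathbb{E}[g(B)] \;\ge\; (1 - p)\, g(\varnothing) \;=\; e^{-t/\ell}\, F(\vz^{(i)}) \;\ge\; e^{-t/\ell}\, \opt,
\]
where the last inequality is Assumption~\ref{ass:disjoint-opts}. Taking the minimum over $i \in [\ell]$ proves the observation. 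There is no serious obstacle: the only point worth stressing is that it is essential to take the ``probabilistic sum'' perspective (so that $\vy(t)$ sits inside $B$ as a true random set with marginals $y_u(t)$), as this is what makes Lemma~\ref{lemma:prob-lemma} applicable with the favorable coefficient $e^{-t/\ell}$ coming directly from the dynamics of $\vy(t)$.
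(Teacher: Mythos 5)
Your proof is correct and follows essentially the same route as the paper: both derive $\|\vy(t)\|_\infty \le 1 - e^{-t/\ell}$ from the differential inequality and then apply Lemma~\ref{lemma:prob-lemma} to conclude $F(\vz^{(i)} \psum \vy(t)) \ge e^{-t/\ell} \cdot F(\vz^{(i)}) \ge e^{-t/\ell} \cdot f(\opt)$. The only (cosmetic) difference is that you condition on $\RSet(\vy(t))$ and apply the lemma to the induced function $g$, whereas the paper conditions on $\RSet(\vz^{(i)})$ and applies the lemma to $f(S \cup \cdot)$ for each fixed $S$.
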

\begin{proof}
  %Let $i \in [\ell]$.
  Observe that
  $ \frac{d \vy(t) }{dt} = \frac{1}{\ell} \vx(t) \hprod ( \vone_\cN - \vy(t) ) \le \frac{1}{\ell}(\vone_\cN - \vy(t))$
  since $\vx(t) \in [0,1]^\cN$, and $\vy(t) \leq \vone_\cN$.
  Combining this differential inequality with the boundary condition $\vy(0) = \vzero$ yields
  $\| \vy(t) \|_{\infty} \le 1 - e^{-t/\ell}.$ 
  Therefore, for every $i \in [\ell]$,
	\begin{multline*}
		F( \vz^{(i)} \psum \vy(t) )
		=
		\bE[f(\RSet(\vz^{(i)} \psum \vy(t)))]
		=
		\bE_{\RSet(\vz^{(i)})}[\bE_{\RSet(\vy(t))}[f(\RSet(\vz^{(i)}) \cup \RSet(\vy(t))]]\\
		\geq
		\bE_{\RSet(\vz^{(i)})}[(1 - \|\vy(t)\|_\infty) \cdot f(\RSet(\vz^{(i)})]
		=
		(1 - \|\vy(t)\|_\infty) \cdot F(\vz^{(i)})
		\geq
		e^{-t/\ell} \cdot \opt
		\enspace,
	\end{multline*}
	where the first inequality follows from Lemma~\ref{lemma:prob-lemma} since $f(S \cup \cdot)$ is a non-negative submodular function for every set $S \subseteq \cN$.
\end{proof}

Using the notation $\opt(t)$, we can lower bound the rate in which the value of the solution $\vy(t)$ of Algorithm~\ref{alg:mcg-disjoint} increases as a function of $t$.

\begin{observation} \label{obs:disjoint-deriv}
  %Let $\opt(t) = \min_{i \in [\ell ]} \FF{ \vz_i \psum \vy(t)}.$ Then
	For every $0 \leq t \leq T$,
  \[ \frac{d \FF{ \vy{(t)} }}{dt} \ge \opt(t) - \FF{ \vy(t) } \enspace. \]
\end{observation}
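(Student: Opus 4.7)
The plan is to prove Observation~\ref{obs:disjoint-deriv} by relating the derivative $dF(\vy(t))/dt$ to the objective values at the points $\vy(t) \psum \vz^{(i)}$ for $i \in [\ell]$, and then averaging over $i$ to introduce the minimum that defines $\opt(t)$.

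First, I apply the chain rule to write
\[
  \frac{dF(\vy(t))}{dt}
  =
  \inner{\nabla F(\vy(t))}{\tfrac{d\vy(t)}{dt}}
  =
  \tfrac{1}{\ell} \inner{\nabla F(\vy(t))}{\vx(t) \hprod (\vone_\cN - \vy(t))}.
\]
Next, I need a feasible comparison vector for the optimization defining $\vx(t)$. The natural choice is $\vx' \triangleq \sum_{i \in [\ell]} \vz^{(i)}$: by Assumption~\ref{ass:disjoint-opts} we have $\vx' \leq \vone_\cN$, and since each $\vz^{(i)} \in \cP$ and $\cP$ is convex, $\vx'/\ell$ is a convex combination of points of $\cP$ and hence lies in $\cP$. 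Thus the greedy choice of $\vx(t)$ guarantees
\[
  \frac{dF(\vy(t))}{dt}
  \geq
  \tfrac{1}{\ell} \sum_{i \in [\ell]} \inner{\nabla F(\vy(t))}{\vz^{(i)} \hprod (\vone_\cN - \vy(t))}.
\]

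Second, I apply Lemma~\ref{lem:positive_direction} with $\vx = \vy(t)$ and $\vy = \vz^{(i)} \hprod (\vone_\cN - \vy(t))$. Observe that $\vx + \vy = \vy(t) + \vz^{(i)} - \vz^{(i)} \hprod \vy(t) = \vy(t) \psum \vz^{(i)} \leq \vone_\cN$ and $\vy \geq \vzero$, so the hypothesis of the lemma is satisfied. This yields
\[
  \inner{\nabla F(\vy(t))}{\vz^{(i)} \hprod (\vone_\cN - \vy(t))}
  \geq
  F(\vy(t) \psum \vz^{(i)}) - F(\vy(t)).
\]

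Finally, using the definition $\opt(t) = \min_{i \in [\ell]} F(\vz^{(i)} \psum \vy(t))$, each term $F(\vy(t) \psum \vz^{(i)}) - F(\vy(t))$ is at least $\opt(t) - F(\vy(t))$, so averaging gives
\[
  \frac{dF(\vy(t))}{dt}
  \geq
  \tfrac{1}{\ell} \sum_{i \in [\ell]} [\opt(t) - F(\vy(t))]
  =
  \opt(t) - F(\vy(t)),
\]
as claimed. There is no real obstacle here; the only subtle step is ensuring that $\sum_i \vz^{(i)}$ is indeed a legal candidate in the maximization defining $\vx(t)$, which is exactly why the assumption $\sum_{i \in [\ell]} \vz^{(i)} \leq \vone_\cN$ was imposed and why the algorithm's feasible set is scaled by $\ell$.
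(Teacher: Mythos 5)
Your proof is correct and follows essentially the same route as the paper: chain rule, comparing the greedy choice $\vx(t)$ against the candidate $\sum_{i \in [\ell]} \vz^{(i)}$ (feasible since $\sum_i \vz^{(i)} \leq \vone_\cN$ and $\frac{1}{\ell}\sum_i \vz^{(i)} \in \cP$), then Lemma~\ref{lem:positive_direction} applied to each $\vz^{(i)} \hprod (\vone_\cN - \vy(t))$, and finally bounding each term by $\opt(t) - F(\vy(t))$. No gaps.
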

\begin{proof}
By the chain rule,
      \begin{align*}
        \frac{d \FF{ \vy{(t)} }}{dt} &= \Big\langle \nabla F( \vy{(t)} ) , \frac{1}{\ell}\vx(t) \hprod (\vone - \vy{(t)}) \Big\rangle \\
                                     &= \frac{1}{\ell} \inner{ \vx(t) }{ \nabla F( \vy{(t)}) \hprod (\vone - \vy{(t)}) } \\
                                     &\ge \frac{1}{\ell} \bigg \langle \sum_{i=1}^\ell \vz^{(i)}, \nabla F( \vy{(t)}) \hprod (\vone - \vy{(t)}) \bigg\rangle \\
                                     &= \frac{1}{\ell} \sum_{i=1}^\ell \inner{\vz^{(i)} }{ \nabla F( \vy{(t)}) \hprod (\vone - \vy{(t)}) } \\
                                     &\ge \frac{1}{\ell} \bigg( \sum_{i=1}^\ell F( \vy(t) \psum \vz^{(i)} ) - \FF{ \vy(t) } \bigg) \\
                                     &\ge \opt(t) - F( \vy(t) ),
      \end{align*}
      where the first inequality follows from the way $\vx(t)$ is chosen by Algorithm~\ref{alg:mcg-disjoint}
      since $\sum_{i=1}^\ell \vz_i \in [0, 1]^\cN$ and $\frac{1}\ell \sum_{i=1}^\ell \vz_i \in \cP$, and the second inequality follows from Lemma~\ref{lem:positive_direction}.
  \end{proof}

From Observations~\ref{obs:opt-t-disjoint} and~\ref{obs:disjoint-deriv}, we obtain the following
differential inequality.
\[ \frac{dF(\vy{(t)})}{dt} \ge f(\opt) \cdot e^{-t/\ell} - \FF{ \vy{(t)} }; \FF{\vy{(0)}} \ge 0 \enspace, \]
which yields
\[ \FF{ \vy{(t)} } \ge \frac{\ell  ( e^{-t/\ell} - e^{-t}) }{ \ell - 1 } \cdot f( \opt ) \enspace. \]
Together with Observation~\ref{obs:violation_multiple}, this inequality implies that plugging $T = \beta$ into Algorithm~\ref{alg:mcg-disjoint} yields $(\frac{\ell  ( e^{-\beta/\ell} - e^{-\beta}) }{ \ell - 1 }, \beta)$-biciteria integer-approximation algorithm (under Assumption~\ref{ass:disjoint-opts}). See Figure~\ref{fig:performance_for_ell} for some intuition about this guarantee. Furthermore, one can verify that for $\ell = 2$, the approximation ratio part of the guarantee is exactly equal to $\nu(\beta)$, which completes the proof of Proposition~\ref{prop:double_measured_continuous_greedy}.

\begin{figure}[t]
  \centering
  \includegraphics[width=0.4\textwidth]{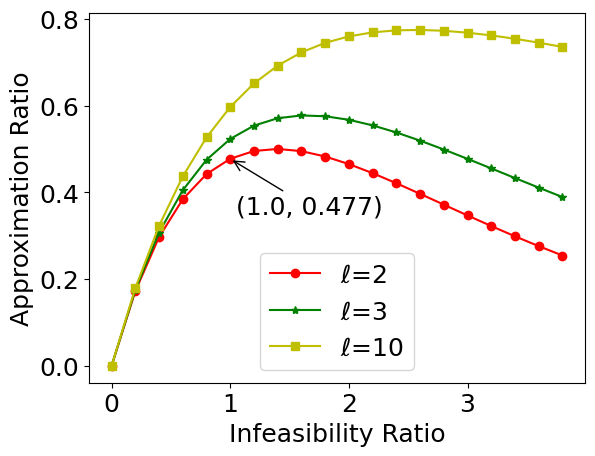}
  \caption{A plot of the approximation ratio achieved by Algorithm~\ref{alg:mcg-disjoint} vs.\ the infeasibility ratio $\beta$ for a few values of $\ell$, the number of disjoint optimal solutions. We remark that with $\ell > 2$, approximation ratios better than $1/2$ are achievable (even with an infeasibility ratio of less than $1$, i.e., feasible solutions). } \label{fig:performance_for_ell}
\end{figure}

% \begin{proposition}
%   Suppose Assumption \ref{ass:disjoint-opts} holds
%   for some constant $\ell \in \mathbb N \setminus \{0,1\}$.
%   Let $\vy(t)$ be defined as in Algorithmz \ref{}
% \end{proposition}

%%% Local Variables:
%%% mode: latex
%%% TeX-master: "main.tex"
%%% End:

\section{Symmetric Submodular Functions} \label{sec:symmetric}

This section includes our results for bicriteria maximization of symmetric submodular functions. In Section~\ref{ssc:symmetric_down_closed}, we present results for general down-closed convex set constraints. For cardinality and knapsack constraints, we have improved results depending on the density of these constraints, which are presented in Section~\ref{ssc:improved_results_symmetric}. Finally, in Section~\ref{ssc:simple_greedy_symmetric}, we present a simple greedy algorithm for bicriteria maximization of symmetric submodular functions subject to cardinality and knapsack constraints.

\subsection{Algorithms for Down-Closed Convex Sets} \label{ssc:symmetric_down_closed}

In this section, we prove the following theorem.
\begin{theorem} \label{thm:symmetric_polytope}
For every constant $\eps \in (0, 1/2]$, there exists a polynomial time $(1/2 - \eps - o(1),\allowbreak \frac{1}{2}\ln(\eps^{-1}/2))$-bicriteria integer-approximation algorithm for the problem of maximizing the multilinear extension $F$ of a non-negative symmetric submodular function $f\colon 2^\cN \to \nnR$ subject to a down-closed solvable convex set $\cP \subseteq [0, 1]^\cN$.
\end{theorem}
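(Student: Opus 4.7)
My plan is to run a variant of Measured Continuous Greedy due to \cite{feldman2017maximizing} (tailored to symmetric submodular objectives), with the stopping time set to $T = \tfrac{1}{2}\ln(\eps^{-1}/2)$. As in Sections~\ref{ssc:guided-mcg} and~\ref{sssc:double_measured_continuous_greedy}, the algorithm evolves a fractional vector $\vy(t)$ starting from $\vzero$ according to $\frac{\diff \vy(t)}{\diff t} = \vx(t) \hprod (\vone_\cN - \vy(t))$, where $\vx(t) \in \cP$ is chosen to (approximately) maximize $\inner{\nabla F(\vy(t))}{\vx \hprod (\vone_\cN - \vy(t))}$. The distinctive feature inherited from \cite{feldman2017maximizing} is a non-smooth decrease step that is occasionally applied in order to cap the growth of individual coordinates of $\vy(t)$; this is exactly what enables the analysis to exploit the symmetry of $f$.

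For the infeasibility bound, I would argue as in the proofs of Observations~\ref{obs:sane} and~\ref{obs:violation_multiple}: $\vy(T) \le \int_0^T \vx(\tau)\, \diff \tau$ coordinate-wise, so $\vy(T)/T$ is a convex combination of points of $\cP$, and therefore belongs to $\cP$ by convexity and down-closedness. Since the non-smooth decrease step only decreases $\vy(t)$, it cannot worsen this bound, so the infeasibility ratio is at most $T = \tfrac{1}{2}\ln(\eps^{-1}/2)$, as required.

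For the approximation bound, the target is $F(\vy(T)) \ge (\tfrac{1}{2} - \eps) \cdot f(\opt) - o(1)$. A plain application of Measured Continuous Greedy to a non-monotone objective would give only $F(\vy(T)) \ge T e^{-T} f(\opt)$ (the kind of bound obtained for $\ell = 1$ in Section~\ref{sssc:double_measured_continuous_greedy}), which is maximized at $T = 1$ and is at most $1/e < 1/2 - \eps$ for small $\eps$. To reach the target, I would establish a differential inequality of the form $\frac{\diff F(\vy(t))}{\diff t} \ge f(\opt) - 2 F(\vy(t)) - o(1)$, whose solution (with $F(\vy(0)) \ge 0$) integrates to $F(\vy(T)) \ge \tfrac{1}{2}(1 - e^{-2T}) \cdot f(\opt) - o(1) = (\tfrac{1}{2} - \eps) \cdot f(\opt) - o(1)$. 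The crucial coefficient $2$ in front of $F(\vy(t))$ should come from the symmetry identity $F(\vy) = F(\vone_\cN - \vy)$: one bound on $\frac{\diff F(\vy(t))}{\diff t}$ arises in the usual way from the greedy choice of $\vx(t)$ together with $\characteristic_\opt \in \cP$, while a second, complementary bound is obtained by rewriting the relevant multilinear-extension terms via the symmetry identity and then invoking Lemma~\ref{lemma:prob-lemma}; the non-smooth decrease step is used precisely to ensure that the two bounds can be combined throughout the interval $[0, T]$.

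The main obstacle, I expect, is controlling the non-smooth decrease step: it must be triggered often enough to keep every coordinate of $\vy(t)$ below the threshold beyond which the symmetry-based lower bound on the derivative ceases to hold, yet rarely enough that the cumulative loss in $F(\vy(t))$ is only $o(1)$ and the convex-combination argument for the infeasibility is undisturbed. Balancing these requirements for an arbitrary stopping time $T$---which, for our choice of parameter, can be either much smaller or much larger than the value used in \cite{feldman2017maximizing} to obtain the single-criterion $(\tfrac{1}{2} - o(1))$-approximation---is the most technically delicate part of adapting that analysis to the bicriteria setting.
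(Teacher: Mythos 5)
Your proposal is correct and follows essentially the same route as the paper: the paper proves Theorem~\ref{thm:symmetric_polytope} by taking the symmetric-function variant of Measured Continuous Greedy of \cite{feldman2017maximizing} (restated as Theorem~\ref{thm:measured_continuous_greedy_symmetric}) and simply plugging in the stopping time $T = \tfrac{1}{2}\ln(\eps^{-1}/2)$, which yields approximation $\tfrac{1}{2}(1 - e^{-2T}) - o(1) = \tfrac{1}{2} - \eps - o(1)$ and infeasibility $T$, exactly as you compute. One remark on your anticipated ``main obstacle'': the cited result is already stated for an arbitrary constant stopping time $T \geq 0$, so no re-balancing of the non-smooth decrease step for your particular (smaller or larger) $T$ is actually needed --- you can cite the parametrized guarantee directly rather than re-derive it. Moreover, the paper gives a self-contained alternative (Algorithm~\ref{alg:super_measured_continuous_greedy}) that dispenses with the decrease step altogether: it grows the solution at the damped rate $\vx(t) \hprod (\characteristic_\cN - 2\vy(t))$, which automatically keeps $\vy(t) \leq \tfrac{1}{2}\characteristic_\cN$ and still satisfies $\vy(t)/t \in \cP$ (Observation~\ref{obs:y_properties}), and the symmetry of $f$ enters through the identity $\frac{\partial F}{\partial z_u}\big|_{\characteristic_\cN - \vy} = -\frac{\partial F}{\partial z_u}\big|_{\vy}$ together with $F(\vy \vee \characteristic_{\opt}) + F((\characteristic_\cN - \vy) \vee \characteristic_{\opt}) \geq f(\opt)$ (submodularity plus non-negativity), giving exactly your target differential inequality $\frac{\diff F(\vy(t))}{\diff t} \geq f(\opt) - 2F(\vy(t))$ with no loss term and no delicate thresholding (Lemma~\ref{lem:increase_symmetric} and Corollary~\ref{cor:super_measured_value}). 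So the delicate part you flag is avoidable, and your sketch of obtaining the complementary derivative bound via Lemma~\ref{lemma:prob-lemma} is not quite how the clean argument goes, though the overall plan is sound.
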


One can observe that Theorem~\ref{thm:symmetric_polytope} follows from the following known result by choosing $T = \frac{1}{2}\ln(\eps^{-1}/2)$.

\begin{theorem}[Part of Theorem~1.1 of \cite{feldman2017maximizing}] \label{thm:measured_continuous_greedy_symmetric}
Given a non-negative symmetric submodular function $f\colon 2^\cN \to \nnR$, a down-closed solvable convex set $\cP \subseteq [0, 1]^N$ and a constant stopping time $T \geq 0$, there exists a polynomial time algorithm that finds a point $\vx \in [0, 1]^\cN$ such that $\vx / T \in \cP$ and $\bE[F(\vx)] \geq \frac{1}{2}(1 - e^{-2T} - o(1)) \cdot f(\opt)$, where $\opt$ is a set maximizing $f(\opt)$ among all the sets $\{S \subseteq \cN \mid \characteristic_S \in P\}$.\footnote{Technically, as stated in~\citep{feldman2017maximizing}, the result is restricted to down-closed solvable polytopes rather than general down-closed solvable convex sets. However, the proof of~\cite{feldman2017maximizing} does not use this restrictions in any way.}
\end{theorem}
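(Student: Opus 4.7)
The plan is to derive Theorem~\ref{thm:symmetric_polytope} directly from Theorem~\ref{thm:measured_continuous_greedy_symmetric} by choosing the stopping time $T$ appropriately, so the proof is essentially a one-line invocation plus a short algebraic verification. Specifically, I would set $T = \tfrac{1}{2}\ln(\eps^{-1}/2)$, and observe that this is a legitimate constant stopping time: since $\eps \in (0, 1/2]$, we have $\eps^{-1}/2 \geq 1$ and hence $T \geq 0$, with $T$ a constant whenever $\eps$ is, matching the hypothesis of Theorem~\ref{thm:measured_continuous_greedy_symmetric}.

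With this choice of $T$, Theorem~\ref{thm:measured_continuous_greedy_symmetric} yields in polynomial time a point $\vx \in [0,1]^\cN$ that I would then check satisfies both requirements of our bicriteria integer-approximation claim. First, for the infeasibility ratio, the theorem guarantees $\vx/T \in \cP$, i.e., there exists $\vy \in \cP$ (namely $\vy = \vx/T$) with $\vzero \leq \vx \leq T \cdot \vy$, so the infeasibility ratio is exactly
\[
T = \tfrac{1}{2}\ln(\eps^{-1}/2),
\]
as required (taking $\vy_1 = \vzero \in \cP$ by down-closedness and $\vy_2 = \vy$). Second, for the approximation ratio, I would plug the value of $T$ into the bound $\bE[F(\vx)] \geq \tfrac{1}{2}(1 - e^{-2T} - o(1)) \cdot f(\opt)$ and compute
\[
\tfrac{1}{2}\bigl(1 - e^{-2T} - o(1)\bigr) = \tfrac{1}{2}\bigl(1 - e^{-\ln(\eps^{-1}/2)} - o(1)\bigr) = \tfrac{1}{2}\bigl(1 - 2\eps - o(1)\bigr) = \tfrac{1}{2} - \eps - o(1),
\]
which is exactly the approximation guarantee claimed by Theorem~\ref{thm:symmetric_polytope}, with $\opt$ being an integral maximizer over $\{S \subseteq \cN \mid \characteristic_S \in \cP\}$, as needed for integer-approximation.

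There is essentially no nontrivial obstacle here: all the technical work (the algorithmic design, the analysis of how the symmetry of $f$ enables the factor $1 - e^{-2T}$ rather than $1 - e^{-T}$, and the handling of general down-closed solvable convex sets) is already encapsulated in Theorem~\ref{thm:measured_continuous_greedy_symmetric} as quoted from \cite{feldman2017maximizing}. The only thing to be careful about is that the $o(1)$ error term in Theorem~\ref{thm:measured_continuous_greedy_symmetric} is preserved under our substitution; this holds because $T$ does not grow with $|\cN|$, so the $o(1)$ on the right hand side simply transfers to the $o(1)$ term in our stated bound $1/2 - \eps - o(1)$. This completes the proof of Theorem~\ref{thm:symmetric_polytope}.
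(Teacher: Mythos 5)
You have proved the wrong statement. The claim to be established is Theorem~\ref{thm:measured_continuous_greedy_symmetric} itself---the existence of a polynomial time algorithm producing a point $\vx$ with $\vx/T \in \cP$ and $\bE[F(\vx)] \geq \frac{1}{2}(1 - e^{-2T} - o(1)) \cdot f(\opt)$---but your argument takes that theorem as a black box and uses it to derive Theorem~\ref{thm:symmetric_polytope}. You even state that ``all the technical work \ldots is already encapsulated in Theorem~\ref{thm:measured_continuous_greedy_symmetric}''; that technical work is precisely what a proof of the statement must supply, so the argument is circular with respect to the task. (Your substitution $T = \frac{1}{2}\ln(\eps^{-1}/2)$ and the resulting algebra are correct, but they reproduce the one-line remark the paper makes just before stating Theorem~\ref{thm:measured_continuous_greedy_symmetric}, and they prove a different claim.)

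What the paper actually does is construct and analyze a variant of Measured Continuous Greedy (Algorithm~\ref{alg:super_measured_continuous_greedy}) whose update rate is $\vx(t) \hprod (\characteristic_\cN - 2\vy(t))$, i.e., with an extra factor of $2$ multiplying $\vy(t)$. The proof has three steps. First, Observation~\ref{obs:y_properties} shows $\vzero \leq \vy(t) \leq \frac{1}{2}\characteristic_\cN$ and $\vy(t) \in t \cdot \cP$, which gives $\vx/T \in \cP$. Second, Lemma~\ref{lem:increase_symmetric} establishes the differential inequality $\frac{dF(\vy(t))}{dt} \geq f(\opt) - 2F(\vy(t))$; this is where symmetry enters, via the identity $\left.\frac{\partial F(\vz)}{\partial z_u}\right|_{\vz = \characteristic_\cN - \vy(t)} = -\left.\frac{\partial F(\vz)}{\partial z_u}\right|_{\vz = \vy(t)}$, which converts the loss term caused by the factor $2$ into gradient terms at the complementary point and allows one to use $F(\vy(t) \vee \characteristic_{\opt}) + F((\characteristic_\cN - \vy(t)) \vee \characteristic_{\opt}) \geq f(\opt)$. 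Third, Corollary~\ref{cor:super_measured_value} solves the differential inequality with boundary condition $F(\vy(0)) \geq 0$ to obtain $F(\vy(T)) \geq \frac{1 - e^{-2T}}{2} \cdot f(\opt)$; the $o(1)$ loss arises from discretizing the continuous process. None of this (nor any appeal to the actual construction of \cite{feldman2017maximizing}, together with a verification of the footnote's claim that the argument extends from polytopes to general solvable down-closed convex sets) appears in your proposal, so the missing content is essentially the entire proof.
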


Despite Theorem~\ref{thm:measured_continuous_greedy_symmetric} being known, in the rest of this section, we describe and analyze an algorithm (Algorithm~\ref{alg:super_measured_continuous_greedy}) that has the properties stated in Theorem~\ref{thm:measured_continuous_greedy_symmetric}. This algorithm is based on a somewhat different technique compared to the algorithm used by \cite{feldman2017maximizing} to prove Theorem~\ref{thm:measured_continuous_greedy_symmetric}, and it will play a central role in obtaining the results of Section~\ref{ssc:improved_results_symmetric}. As we often do in this paper, we present a version of Algorithm~\ref{alg:super_measured_continuous_greedy} that is not implementable because it is continuous and assumes direct access to the multilinear extension $F$ of $f$. Using the (by now standard) techniques of~\cite{calinescu2011maximizing}, it is possible to get an efficiently implementable version of \cite{calinescu2011maximizing} which with high probability obtains an approximation ratio that is worse only by $o(1)$ compared to the approximation ratio we prove for the unimplementable version of Algorithm~\ref{alg:super_measured_continuous_greedy}. One can observe that Algorithm~\ref{alg:super_measured_continuous_greedy} is very similar to the Measured Continuous Greedy algorithm of \cite{feldman2011unified}. The only difference between them is the factor of $2$ multiplying $\vy(t)$ on Lines~\ref{line:increase} and~\ref{line:direction}.

\begin{algorithm}
\DontPrintSemicolon
\caption{\textsc{More Measured Continuous Greedy}$(f \colon 2^\cN \to \nnR, \cP, T)$} \label{alg:super_measured_continuous_greedy}
Let $\vy(0) \gets \vzero$.\\
\For{each time $t \in [0, T]$}
{
	Let $\vx(t)$ be a vector in $\cP$ maximizing $\inner{\vx(t)}{(\characteristic_\cN - 2\cdot \vy(t)) \hprod \nabla F(\vy(t))}$.\label{line:direction}\\
	Increase $\vy(t)$ at a rate $\frac{d\vy(t)}{dt} = \vx(t) \hprod (\characteristic_\cN - 2 \cdot \vy(t))$.\label{line:increase}
}
\Return $y(T)$.
\end{algorithm}

We begin the analysis of Algorithm~\ref{alg:super_measured_continuous_greedy} with the following observation.
\begin{observation} \label{obs:y_properties}
For every time $t \in [0, T]$, it holds that
\begin{itemize}
	\item $\vzero \leq \vy \leq \frac{1}{2} \cdot \characteristic_\cN$, and
	\item $y(t) \in t \cdot \cP$.
\end{itemize}
\end{observation}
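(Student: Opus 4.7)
The plan is to establish the two parts of Observation~\ref{obs:y_properties} by a standard analysis of the ODE defining $\vy(t)$, exploiting the down-closedness of $\cP$ and the fact that $\vx(t)\in\cP\subseteq[0,1]^\cN$.

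For the first bullet, I would work coordinate-by-coordinate. Fix $u\in\cN$; the dynamics on Line~\ref{line:increase} read $\frac{dy_u(t)}{dt}=x_u(t)(1-2y_u(t))$ with $y_u(0)=0$ and $x_u(t)\ge0$. The key observation is that the region $y_u\in[0,1/2]$ is forward-invariant: if $y_u(t)\le 1/2$ then $1-2y_u(t)\ge 0$, so $\frac{dy_u(t)}{dt}\ge 0$, giving $y_u(t)\ge 0$; and if we let $g_u(t)=\tfrac12-y_u(t)$, then $\frac{dg_u(t)}{dt}=-2x_u(t)\,g_u(t)$, whose solution is
\[
g_u(t)=\tfrac12\exp\!\left(-2\int_0^t x_u(\tau)\,d\tau\right)\ge 0,
\]
so $y_u(t)\le 1/2$ for all $t\in[0,T]$. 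Since this holds for every coordinate $u$, we conclude $\vzero\le\vy(t)\le \tfrac12\characteristic_\cN$.

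For the second bullet, I would integrate the dynamics and then use the convexity and down-closedness of $\cP$. Writing
\[
\vy(t)=\int_0^t \vx(\tau)\hprod(\characteristic_\cN-2\vy(\tau))\,d\tau,
\]
the first bullet gives $\characteristic_\cN-2\vy(\tau)\le\characteristic_\cN$ coordinate-wise, and $\vx(\tau)\ge\vzero$, hence $\vy(t)\le \int_0^t \vx(\tau)\,d\tau$. Dividing by $t$, the vector $\tfrac{1}{t}\int_0^t \vx(\tau)\,d\tau$ is a limit of convex combinations of vectors in $\cP$ and therefore lies in $\cP$ by its convexity and closedness. Since $\vy(t)/t$ is dominated coordinate-wise by this vector and $\cP$ is down-closed, $\vy(t)/t\in\cP$, i.e., $\vy(t)\in t\cdot\cP$.

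There is no real obstacle here; the only subtlety is ensuring the invariance argument for $y_u\le 1/2$ is carried out carefully so that the factor $\characteristic_\cN-2\vy(\tau)$ stays non-negative throughout $[0,T]$, which is what makes both the monotonicity of $\vy(t)$ and the dominance $\vy(t)\le\int_0^t\vx(\tau)\,d\tau$ legitimate.
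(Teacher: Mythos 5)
Your proposal is correct and follows essentially the same route as the paper: a coordinate-wise invariance argument for $\vzero \leq \vy(t) \leq \frac{1}{2}\characteristic_\cN$ (your explicit solution $g_u(t) = \frac{1}{2}e^{-2\int_0^t x_u(\tau)\,d\tau}$ is just a more detailed form of the paper's observation that the rate is non-negative at $y_u = 0$ and vanishes at $y_u = 1/2$), followed by bounding $\vy(t) \leq \int_0^t \vx(\tau)\,d\tau$ and invoking convexity and down-closedness of $\cP$ exactly as the paper does.
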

\begin{proof}
Consider a particular element $u \in \cN$. The rate in which $y_u(t)$ is increased is $x_u(t) \cdot (1 - 2y_u(t))$. Since $x_u(t)$ is non-negative, this rate is non-negative when $y_u(t) = 0$ and $0$ when $y_u(t) = 1/2$. Therefore, the value of $y_u(t)$ can never leave the range $[0, 1/2]$, which proves the first part of the observation. To see why the second part of the observation holds as well, note that
\[
	\vy(t)
	=
	\int_0^t \frac{d\vy(\tau)}{d\tau} \text{d}\tau
	=
	\int_0^t \vx(\tau) \hprod (\characteristic_\cN - 2 \cdot \vy(\tau)) \text{d}\tau
	\leq
	\int_0^t \vx(\tau) \text{d}\tau
	\enspace,
\]
which implies that $\vy(t) \in t \cdot \cP$ since $\cP$ is convex and down-closed.
\end{proof}

Next, we need to lower bound the rate in which $F(\vy(t))$ increases as a function of $t$.
\begin{lemma} \label{lem:increase_symmetric}
For every $t \in (0, T)$, $\frac{dF(\vy(t))}{dt} \geq f(\opt) - 2F(\vy)$.
\end{lemma}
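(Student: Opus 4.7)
The plan is to combine the greedy choice of $\vx(t)$ with a two-step symmetry-and-submodularity argument. First I would apply the chain rule to get
\[
	\frac{dF(\vy(t))}{dt}
	=
	\inner{\nabla F(\vy(t))}{\vx(t) \hprod (\characteristic_\cN - 2\vy(t))}
	\enspace.
\]
Since the characteristic vector $\characteristic_\opt$ is in $\cP$, it is a candidate in the maximization defining $\vx(t)$ on Line~\ref{line:direction}, so we may replace $\vx(t)$ by $\characteristic_\opt$ and still have a lower bound. Observation~\ref{obs:y_properties} guarantees $\vy(t) \leq (1/2) \characteristic_\cN$, so the vector $\vz \triangleq \characteristic_\opt \hprod (\characteristic_\cN - 2\vy(t))$ is non-negative; moreover $\vy(t) + \vz$ is in $[0,1]^\cN$ because its $u$-th coordinate is $1 - y_u(t)$ for $u \in \opt$ and $y_u(t)$ for $u \notin \opt$. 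Therefore Lemma~\ref{lem:positive_direction} applies and yields
\[
	\frac{dF(\vy(t))}{dt}
	\geq
	\inner{\nabla F(\vy(t))}{\vz}
	\geq
	F(\vw) - F(\vy(t))
	\enspace,
\]
where $\vw \triangleq \vy(t) + \vz$.

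The statement of the lemma now reduces to the pointwise-in-$t$ inequality $F(\vw) + F(\vy(t)) \geq f(\opt)$. This reduction is the easy part; the main obstacle is proving this inequality, which must crucially exploit the symmetry of $f$ (without it, one can only obtain the weaker bound $f(\opt)/2$ via the Feige--Mirrokni--Vondr\'ak sampling lemma applied to $g(S) = f(\opt \cup S)$).

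To prove $F(\vw) + F(\vy(t)) \geq f(\opt)$, I would couple the two multilinear expectations using a single pair of independent random sets $R_1 \subseteq \opt$ and $R_2 \subseteq \cN \setminus \opt$, where each element $u$ appears with probability $y_u(t)$. Then $\RSet(\vw) = (\opt \setminus R_1) \cup R_2$ and $\RSet(\vy(t)) = R_1 \cup R_2$. The key move is to apply the symmetry of $f$ to the second set: $f(R_1 \cup R_2) = f(\cN \setminus (R_1 \cup R_2)) = f((\opt \setminus R_1) \cup ((\cN \setminus \opt) \setminus R_2))$, so both random sets in the sum share the common ``spine'' $\opt \setminus R_1$ with $R_2$ and its complement (within $\cN \setminus \opt$) as the differing pieces.

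Finally, I would apply submodularity twice to the coupled sum pointwise. Submodularity applied to $(\opt \setminus R_1) \cup R_2$ and $(\opt \setminus R_1) \cup ((\cN \setminus \opt) \setminus R_2)$ gives a lower bound in terms of their union $\cN \setminus R_1$ and their intersection $\opt \setminus R_1$; a second use of symmetry rewrites $f(\cN \setminus R_1)$ as $f(R_1)$, and a second use of submodularity on $R_1$ and $\opt \setminus R_1$ (which partition $\opt$) gives $f(R_1) + f(\opt \setminus R_1) \geq f(\opt) + f(\varnothing) \geq f(\opt)$. Chaining these inequalities yields $f((\opt \setminus R_1) \cup R_2) + f(R_1 \cup R_2) \geq f(\opt)$ deterministically, and taking expectations over $R_1, R_2$ completes the argument.
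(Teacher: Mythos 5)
Your proof is correct, and it reaches the bound by a route that differs from the paper's mainly in how and where the symmetry of $f$ is invoked. The paper stays at the gradient level: after the same chain-rule and greedy steps, it splits $\characteristic_\cN - 2\vy(t)$ as $(\characteristic_\cN - \vy(t)) - \vy(t)$, uses symmetry to rewrite $-\vy(t) \hprod \nabla F(\vy(t))$ as $\vy(t) \hprod \nabla F(\characteristic_\cN - \vy(t))$, converts both inner products into value differences via Observation~\ref{obs:multilinear-partial} and submodularity (arriving at $F(\vy(t) \vee \characteristic_{\opt})$ and $F((\characteristic_\cN - \vy(t)) \vee \characteristic_{\opt})$, and using symmetry once more for $F(\characteristic_\cN - \vy(t)) = F(\vy(t))$), and closes with the single set-level inequality $f(S \cup \opt) + f((\cN \setminus S) \cup \opt) \geq f(\opt)$. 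You instead absorb the entire direction $\characteristic_{\opt} \hprod (\characteristic_\cN - 2\vy(t))$ in one application of Lemma~\ref{lem:positive_direction}, which is legitimate because Observation~\ref{obs:y_properties} makes that direction non-negative and keeps $\vw = \vy(t) + \vz$ inside $[0,1]^\cN$, and you then push all uses of symmetry into the deterministic inequality $f((\opt \setminus R_1) \cup R_2) + f(R_1 \cup R_2) \geq f(\opt)$, proved by complementation plus two applications of submodularity and non-negativity; the coupling step is sound since each of the two random sets has the product-form marginals of $\RSet(\vw)$ and $\RSet(\vy(t))$, so the expectations are exactly $F(\vw)$ and $F(\vy(t))$. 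Your packaging is arguably cleaner (no gradient identities, a single discrete inequality handled pointwise), whereas the paper's identity $\nabla F(\characteristic_\cN - \vy) = -\nabla F(\vy)$ is the reusable gradient-level form of symmetry that underlies its related continuous analyses; both yield exactly the claimed bound $\frac{dF(\vy(t))}{dt} \geq f(\opt) - 2F(\vy(t))$.
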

\begin{proof}
By the chain rule,
\begin{align} \label{eq:chain_rule}
	\frac{dF(\vy(t))}{dt}
	={} &
	\Big\langle\frac{d\vy(t)}{dt}, \nabla F(\vy(t))\Big\rangle
	=
	\inner{\vx(t) \hprod (\characteristic_\cN - 2 \cdot \vy(t))}{\nabla F(\vy(t))}\\\nonumber
	={} &
	\inner{\vx(t)}{(\characteristic_\cN - 2 \cdot \vy(t)) \hprod \nabla F(\vy(t))}
	\geq
	\inner{\characteristic_{\opt}}{(\characteristic_\cN - 2 \cdot \vy(t)) \hprod \nabla F(\vy(t))}\\\nonumber
	={} &
	\inner{\characteristic_{\opt}}{(\characteristic_\cN - \vy(t)) \hprod \nabla F(\vy(t))} + \inner{\characteristic_{\opt}}{\vy(t) \hprod \nabla F(\characteristic_\cN - \vy(t))}
	\enspace,
\end{align}
where the inequality holds since $\opt$ is one candidate to be $\vx(t)$, and the last equality holds since the symmetry of $f$ implies that, for every element $u \in \cN$,
\begin{align*}
	\left.\frac{\partial F(\vz)}{z_u}\right|_{\vz = \characteristic_\cN - \vy(t)}
	={} &
	\bE[f(\RSet(\characteristic_\cN - \vy(t)) \vee \characteristic_{\{u\}}) - f(\RSet(\characteristic_\cN - \vy(t)) \wedge \characteristic_{\cN \setminus \{u\}})]\\
	={} &
	\bE[f(\RSet(\vy(t)) \wedge \characteristic_{\cN \setminus \{u\}}) - f(\RSet(\vy(t)) \vee \characteristic_{\{u\}})]
	=
	- \left.\frac{\partial F(\vz)}{z_u}\right|_{\vz = \vy(t)}
	\enspace.
\end{align*}

By Observation~\ref{obs:multilinear-partial}, Inequality~\eqref{eq:chain_rule} implies that
\begin{align*}
	\frac{dF(\vy(t))}{dt}
	={} &
	\sum_{u \in \opt} [F(\vy(t) \vee \characteristic_{\{u\}}) - F(\vy(t))] + \sum_{u \in \opt} [F((\characteristic_\cN - \vy(t)) \vee \characteristic_{\{u\}}) - F(\characteristic_\cN - \vy(t))]\\
	\geq{} &
	F(\vy(t) \vee \characteristic_{\opt}) - F(\vy(t)) + F((\characteristic_\cN - \vy(t)) \vee \characteristic_{\opt}) - F(\characteristic_\cN - \vy(t))\\
	={} &
	F(\vy(t) \vee \characteristic_{\opt}) + F((\characteristic_\cN - \vy(t)) \vee \characteristic_{\opt}) - 2F(\vy(t))
	\geq
	f(\opt) - 2F(\vy)
	\enspace,
\end{align*}
where the first inequality follows from the submodularity of $f$, the second equality follows by the symmetry of $f$, and finally, the second inequality holds since the submodularity and non-negativity of $f$ imply together that
\begin{align*}
	F(\vy(t) \vee \characteristic_{\opt}) + F((\characteristic_\cN - \vy(t)) \vee \characteristic_{\opt})
	={} &
	\bE_{S \sim \RSet(\vy(t))}[f(S \cup \opt) + f((\cN \setminus S) \cup \opt)]\\
	\geq{} &
	\bE_{S \sim \RSet(\vy(t))}[f(\opt)]
	=
	f(\opt)
	\enspace.
	\qedhere
\end{align*}
\end{proof}

\begin{corollary} \label{cor:super_measured_value}
The output vector $\vy(T)$ of Algorithm~\ref{alg:super_measured_continuous_greedy} obeys $F(\vy(T)) \geq \frac{1 - e^{-2T}}{2} \cdot f(\opt)$.
\end{corollary}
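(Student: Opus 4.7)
The plan is to solve the first-order linear differential inequality provided by Lemma~\ref{lem:increase_symmetric}, namely $\tfrac{dF(\vy(t))}{dt} \geq f(\opt) - 2F(\vy(t))$, with the boundary condition $F(\vy(0)) = F(\vzero) = f(\varnothing) \geq 0$ coming from the non-negativity of $f$ and the initialization of Algorithm~\ref{alg:super_measured_continuous_greedy}.

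The standard trick is to multiply both sides by the integrating factor $e^{2t}$. Rearranging the inequality of Lemma~\ref{lem:increase_symmetric} as $\tfrac{dF(\vy(t))}{dt} + 2F(\vy(t)) \geq f(\opt)$ and multiplying by $e^{2t}$, the left-hand side becomes exactly $\tfrac{d}{dt}\bigl[e^{2t} F(\vy(t))\bigr]$, so we obtain
\[
	\frac{d}{dt}\bigl[e^{2t} F(\vy(t))\bigr] \geq e^{2t} f(\opt) \enspace.
\]
Integrating both sides from $0$ to $T$ then yields
\[
	e^{2T} F(\vy(T)) - F(\vy(0)) \geq \frac{e^{2T} - 1}{2} \cdot f(\opt) \enspace,
\]
and dropping the non-negative term $F(\vy(0))$ and dividing by $e^{2T}$ gives the claimed bound $F(\vy(T)) \geq \tfrac{1 - e^{-2T}}{2} \cdot f(\opt)$.

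I do not expect any real obstacle here: the only subtlety is justifying that Lemma~\ref{lem:increase_symmetric} is valid throughout $(0,T)$ (which it states directly) and that $F \circ \vy$ is differentiable along the trajectory (which follows because $\vy(t)$ is continuously differentiable by construction in Algorithm~\ref{alg:super_measured_continuous_greedy} and $F$ is smooth as a multilinear polynomial). Thus the proof is essentially a one-line application of Gr\"onwall's inequality (or equivalently, the explicit integrating-factor computation above) together with $F(\vzero) \geq 0$.
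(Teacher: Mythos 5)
Your proposal is correct and follows exactly the paper's route: the paper also derives the bound by solving the differential inequality of Lemma~\ref{lem:increase_symmetric} with the boundary condition $F(\vy(0)) \geq 0$ (from non-negativity of $f$), merely stating the solution rather than spelling out the integrating-factor computation as you do. Nothing further is needed.
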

\begin{proof}
By the non-negativity of $f$, $F(\vy(0)) \geq 0$. Additionally, Lemma~\ref{lem:increase_symmetric} proves the differential inequality $\frac{dF(\vy(t))}{dt} \geq f(\opt) - 2F(\vy)$. By solving this differential inequality with the inequality $F(\vy(0)) \geq 0$ as the boundary condition, we get that for every $t \in [0, T]$ it holds that
\[
	F(\vy(t)) \geq \frac{1 - e^{-2t}}{2} \cdot f(\opt)
	\enspace.
\]
The corollary now follows by plugging $t = T$ into this inequality.
\end{proof}

Observation~\ref{obs:y_properties} and Corollary~\ref{cor:super_measured_value} prove together that Algorithm~\ref{alg:super_measured_continuous_greedy} has the properties guaranteed by Theorem~\ref{thm:measured_continuous_greedy_symmetric}.

\subsection{Improved Results for Knapsack and Cardinality Constraints} \label{ssc:improved_results_symmetric}

In this section, we prove the following theorem.

\begin{theorem} \label{thm:symmetric_knapsack_fractional}
For every constant $\eps \in (0, 1/2]$, there exists a polynomial time $(1/2 - \eps - o(1),\allowbreak \frac{1 - (2\eps)^c}{2c})$-bicriteria integer-approximation algorithm for the problem of maximizing the multilinear extension of a non-negative symmetric submodular function $f\colon 2^\cN \to \nnR$ subject to the polytope $\{\vx \in [0, 1]^\cN \mid \inner{\vp}{\vx} \leq B\}$ of a knapsack constraint with density $c$.
\end{theorem}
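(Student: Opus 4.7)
The plan is to invoke the \textsc{More Measured Continuous Greedy} algorithm (Algorithm~\ref{alg:super_measured_continuous_greedy}) on $\cP = \{\vx \in [0,1]^\cN : \inner{\vp}{\vx} \leq B\}$ with stopping time $T = \tfrac{1}{2}\ln(1/(2\eps)) - 1/|\cN|$; the small $-1/|\cN|$ correction exists solely to absorb the $o(1)$ discretization error of any implementable version into the approximation ratio rather than the infeasibility ratio. Plugging this $T$ into Corollary~\ref{cor:super_measured_value} handles the approximation ratio immediately:
\[
	F(\vy(T))
	\;\geq\;
	\tfrac{1 - e^{-2T}}{2} \cdot f(\opt)
	\;=\;
	\tfrac{1 - 2\eps \cdot e^{2/|\cN|}}{2} \cdot f(\opt)
	\;=\;
	\bigl(\tfrac{1}{2} - \eps - o(1)\bigr) \cdot f(\opt)
	\enspace.
\]

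The technical heart is to replace the crude bound $\vy(T) \in T \cdot \cP$ from Observation~\ref{obs:y_properties} with a density-dependent knapsack-cost bound, in the spirit of Lemma~III.19 of~\cite{feldman2011unified}, but adapted to the factor-of-two dynamics $\tfrac{dy_u(t)}{dt} = x_u(t)(1 - 2y_u(t))$ of our algorithm. Integrating these dynamics yields the pointwise inequality $1 - 2y_u(t) \geq \exp\bigl(-2\textstyle\int_0^t x_u(\tau)\,d\tau\bigr)$. Setting $\phi(t) \triangleq \sum_u p_u(1 - 2y_u(t))$ and applying Jensen's inequality to the convex function $e^{-x}$ with the probability weights $p_u/\|\vp\|_1$ gives
\[
	\phi(t)
	\;\geq\;
	\|\vp\|_1 \cdot \exp\!\left(-\tfrac{2}{\|\vp\|_1}\textstyle\int_0^t \inner{\vp}{\vx(\tau)}\,d\tau\right)
	\;\geq\;
	\|\vp\|_1 \cdot e^{-2ct}
	\enspace,
\]
where the last inequality uses $\inner{\vp}{\vx(\tau)} \leq B$ and $c = B/\|\vp\|_1$. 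Rearranging, $\inner{\vp}{\vy(t)} = \tfrac{1}{2}(\|\vp\|_1 - \phi(t)) \leq \tfrac{B}{2c}(1 - e^{-2ct})$, and at $t = T$ this evaluates to at most $\tfrac{B}{2c}(1 - (2\eps)^c \cdot e^{-2c/|\cN|})$, which for sufficiently large $|\cN|$ is at most $\tfrac{B}{2c}(1 - (2\eps)^c) = \beta B$ even after adding the $O(\delta) cT$ additive slack coming from discretization---this is exactly what the $1/|\cN|$ correction in $T$ pays for.

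Finally, to certify that $\vy(T)$ is a genuine bicriteria fractional solution with infeasibility ratio $\beta = \tfrac{1 - (2\eps)^c}{2c}$, I still need to check that $\vy(T)/\beta \in [0,1]^\cN$ (this is nontrivial because $\beta$ may be smaller than $1$). Since $y_u(T) \leq (1 - e^{-2T})/2 \leq \tfrac{1}{2} - \eps$, it suffices to show $\beta \geq \tfrac{1}{2} - \eps$, which is equivalent to the inequality $g(c) \triangleq 1 - c(1 - 2\eps) - (2\eps)^c \geq 0$ on $[0,1]$; this follows because $g$ vanishes at both endpoints and is concave on $[0,1]$ (since $g''(c) = -(2\eps)^c (\ln(2\eps))^2 \leq 0$). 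The step I expect to be the main obstacle is verifying that the refined $O(\delta) c T$ form of the discretization error, rather than the naive $O(\delta) T$, carries over from standard Measured Continuous Greedy to Algorithm~\ref{alg:super_measured_continuous_greedy}; this is the only place where the improved bound is really needed (to absorb discretization loss when $c$ is small), and the argument should be a careful re-tracing of Appendix~\ref{app:improved_analysis_unified} with the factor of two in the dynamics accounted for throughout.
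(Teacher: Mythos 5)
Your proposal is correct and takes essentially the same route as the paper: run Algorithm~\ref{alg:super_measured_continuous_greedy} with $T \approx \frac{1}{2}\ln(\eps^{-1}/2)$, get the approximation ratio from Corollary~\ref{cor:super_measured_value}, and derive the cost bound $\inner{\vp}{\vy(t)} \leq B\cdot\frac{1 - e^{-2tc}}{2c}$ by integrating the dynamics and applying convexity of $e^{-x}$ with weights $p_u/\|\vp\|_1$, which is precisely Lemma~\ref{lem:bound_with_density}; your two refinements (shaving $1/|\cN|$ off $T$ in the spirit of Proposition~\ref{prop:measured_continuous_greedy_density}, and checking $\vy(T)/\beta \in [0,1]^\cN$ via concavity of $1 - c(1-2\eps) - (2\eps)^c$, needed since $\beta$ can be below $1/2$) are sound additions that the paper glosses over. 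One small slip: with $T = \frac{1}{2}\ln(\eps^{-1}/2) - 1/|\cN|$ the continuous-time bound evaluates to $\frac{B}{2c}\bigl(1 - (2\eps)^c e^{+2c/|\cN|}\bigr)$, not $e^{-2c/|\cN|}$ as written --- it is this correct sign that actually leaves the room of order $B(2\eps)^c/|\cN|$ needed to absorb the $O(\delta)cT$ discretization slack, so your conclusion stands once the exponent is fixed.
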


Before getting to the proof of Theorem~\ref{thm:symmetric_knapsack_fractional}, let us discuss its implications. By Lemma~\ref{lem:rounding_cardinality_knapsack}, Theorem~\ref{thm:symmetric_knapsack_fractional} immediately yields $(1/2 - \eps - o(1), \frac{1 - (2\eps)^c}{2c} + 1/\cardBudget)$-bicriteria approximation for maximizing a non-negative symmetric submodular function subject to a cardinality constraint.\footnote{The error term $1/\cardBudget$ should be viewed as an arbitrary small constant since the problem is polynomially solvable for a constant $\cardBudget$. Furthermore, the infeasibility ratio obtained is in fact $\cardBudget^{-1} \cdot \lceil \cardBudget \cdot \frac{1 - (2\eps)^c}{2c}\rceil$, which is sometimes better than $\frac{1 - (2\eps)^c}{2c} + 1/\cardBudget$.} It is particularly interesting to consider the single-criteria optimization that can be derived from Theorem~\ref{thm:symmetric_knapsack_fractional} for cardinality constraints.
\begin{corollary} \label{cor:symmetry_cardinality_single_criteria}
For every constant $c \in [0, 1/2)$, there exists a polynomial time $(\frac{1 - \sqrt[c]{1 - 2c}}{2} - o(1))$-approximation algorithm for the problem of maximizing the a non-negative symmetric submodular function subject to a cardinality constraint of density $c$.
\end{corollary}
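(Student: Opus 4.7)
The plan is to derive this corollary by selecting the parameter $\eps$ in Theorem~\ref{thm:symmetric_knapsack_fractional} so that the infeasibility ratio becomes exactly $1$, and then invoking the cardinality‐constraint rounding from Lemma~\ref{lem:rounding_cardinality_knapsack}. Specifically, I would set
\[
    \eps \triangleq \frac{\sqrt[c]{1-2c}}{2} \enspace.
\]
Since $c \in (0, 1/2)$ is constant, we have $1 - 2c \in (0, 1)$, and therefore $\eps$ is a constant in $(0, 1/2)$, so Theorem~\ref{thm:symmetric_knapsack_fractional} is applicable. Plugging this choice into the infeasibility‐ratio expression yields
\[
    \frac{1 - (2\eps)^c}{2c}
    =
    \frac{1 - (1 - 2c)}{2c}
    =
    1 \enspace,
\]
and the approximation ratio becomes $1/2 - \eps - o(1) = \frac{1 - \sqrt[c]{1-2c}}{2} - o(1)$, matching the target.

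Next, I would apply the cardinality‐constraint part of Lemma~\ref{lem:rounding_cardinality_knapsack} to the fractional solution $\vx$ produced by Theorem~\ref{thm:symmetric_knapsack_fractional}. That solution satisfies $\inner{\vp}{\vx} = \|\vx\|_1 \leq B \cdot 1 = B$, and since $B$ is an integer in a cardinality constraint, the lemma produces a set $S$ with $|S| \leq \lceil \|\vx\|_1\rceil \leq B$ and $\bE[f(S)] \geq F(\vx)$. Hence $S$ is genuinely feasible (not only approximately so), and it inherits the approximation guarantee on $F(\vx)$, giving the desired single‐criterion approximation ratio.

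For the boundary case $c = 0$, the only feasible set is $\varnothing$, so the claim is vacuous (any algorithm is optimal); this can be dispatched with a single sentence. I do not expect any real obstacles: the work is essentially arithmetic verification that the parameter substitution produces infeasibility ratio exactly $1$ while preserving the approximation factor up to $o(1)$ terms, together with the observation that rounding under a cardinality constraint (Lemma~\ref{lem:rounding_cardinality_knapsack}) does not inflate the constraint beyond its integer budget. The only subtlety worth explicitly checking is that $\eps$ is a positive constant bounded away from $0$ (so that the $o(1)$ term from Theorem~\ref{thm:symmetric_knapsack_fractional} genuinely vanishes), which follows from $c$ being a fixed constant in $[0, 1/2)$.
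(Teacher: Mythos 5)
Your proposal is correct and follows exactly the paper's own route: plug $\eps = \frac{\sqrt[c]{1-2c}}{2}$ into Theorem~\ref{thm:symmetric_knapsack_fractional} so that the infeasibility ratio $\frac{1-(2\eps)^c}{2c}$ collapses to $1$, then round via the cardinality case of Lemma~\ref{lem:rounding_cardinality_knapsack}. Your extra checks (that $\eps$ is a constant in $(0,1/2)$, that $|S| \leq \lceil \|\vx\|_1 \rceil \leq B$, and the trivial $c=0$ case) are fine but do not change the argument.
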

\begin{proof}
The corollary follows from Theorem~\ref{thm:symmetric_knapsack_fractional} by plugging $\eps = \frac{\sqrt[c]{1 - 2c}}{2}$, and then rounding the resulting fractional solution using Lemma~\ref{lem:rounding_cardinality_knapsack}.
\end{proof}

The approximation ratio from the last corollary is better than the state-of-the-art for $c$ values that are smaller than roughly $0.282$ (the state-of-the-art is either an algorithm of~\cite{buchbinder2014submodular} or an algorithm of~\cite{feldman2017maximizing}, depending on the value of $c$). However, for larger values of $c$, the algorithm of~\cite{buchbinder2014submodular} outperforms Corollary~\ref{cor:symmetry_cardinality_single_criteria}.

For the more general problem of maximizing a non-negative symmetric submodular function subject to a knapsack constraint, combining Theorem~\ref{thm:knapsack_fractional} with Lemma~\ref{lem:rounding_cardinality_knapsack} yields the somewhat weaker guarantee of $(1/2 - \eps - o(1), \frac{1 - (2\eps)^c}{2c} + 1)$-bicriteria approximation. Notice that this bicriteria approximation guarantee is incomparable with the $(1/2 - \eps - \delta, \frac{1}{2} \ln (\eps^{-1}/2))$-bicriteria approximation obtained for this problem by Theorem~\ref{thm:symmetric_knapsack_greedy} below.

We also note that the technique of \cite{feldman2017maximizing} can be used to get the following variant of Theorem~\ref{thm:symmetric_knapsack_fractional} (see Appendix~\ref{app:equality} for more detail). Notice that the guarantee of Theorem~\ref{thm:symmetric_knapsack_equality} is not monotone in $c$, and therefore (as mentioned in Section~\ref{sec:preliminaries}), one cannot assume without loss of generality in the context of this theorem that no element has a cost exceeding the budget. This does not affect the proof of the theorem, but does affect the possibility of rounding because the proof of Lemma~\ref{lem:rounding_cardinality_knapsack} (for non-cardinality constraints) depends on this assumption.
\begin{restatable}{theorem}{thmSymmetricKnapsackEquality} \label{thm:symmetric_knapsack_equality}
For every constat $\eps \in (0, 1/2]$, there exists a polynomial time $(1/2 - \eps - o(1),\allowbreak \max\big\{1, \frac{1 - (2\eps)^{\min\{c, 1 - c\}}}{2\min\{c, 1 - c\}}\big\})$-bicriteria integer-approximation algorithm for the problem of maximizing the multilinear extension of a non-negative symmetric submodular function $f\colon 2^\cN \to \nnR$ subject to the polytope $\{\vx \in [0, 1]^\cN \mid \inner{\vp}{\vx} = B\}$ of a knapsack constraint with density $c$.
\end{restatable}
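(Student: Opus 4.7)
The plan is to run Algorithm~\ref{alg:super_measured_continuous_greedy} (\textsc{More Measured Continuous Greedy} from Section~\ref{ssc:symmetric_down_closed}) on the input instance with stopping time $T = \tfrac{1}{2}\ln((2\eps)^{-1})$, which is non-negative because $\eps \leq 1/2$. For this choice of $T$, Corollary~\ref{cor:super_measured_value} delivers the approximation guarantee $F(\vy(T)) \geq \tfrac{1 - e^{-2T}}{2} \cdot f(\opt) = (1/2 - \eps) \cdot f(\opt)$ without any extra work, so the entire novelty of the proof lies in establishing a cost bound of $\inner{\vp}{\vy(T)} \leq B \cdot \frac{1-(2\eps)^c}{2c}$. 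This refines the generic bound $\vy(T) \in T \cdot \cP$ supplied by Observation~\ref{obs:y_properties}, which would give only the weaker infeasibility ratio of Theorem~\ref{thm:symmetric_polytope}.

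For the tighter cost bound, I would exploit the fact that the update rule $\frac{dy_u(t)}{dt} = x_u(t)(1 - 2y_u(t))$ with $y_u(0) = 0$ admits the closed-form solution $y_u(t) = \tfrac{1}{2}(1 - e^{-2X_u(t)})$, where $X_u(t) \triangleq \int_0^t x_u(\tau)\diff{\tau}$. Since each $\vx(\tau) \in \cP$ implies $\inner{\vp}{\vx(\tau)} \leq B$, integrating over $[0,T]$ yields $\sum_u p_u X_u(T) \leq BT$. Writing the cost as
\[
\inner{\vp}{\vy(T)} \;=\; \frac{\|\vp\|_1}{2}\sum_{u \in \cN} \frac{p_u}{\|\vp\|_1}\left(1 - e^{-2X_u(T)}\right),
\]
Jensen's inequality applied to the concave function $g(z) = 1 - e^{-2z}$ with the probability distribution $\{p_u/\|\vp\|_1\}$ gives
\[
\inner{\vp}{\vy(T)} \;\leq\; \frac{\|\vp\|_1}{2}\left(1 - \exp\!\Big(-2\,\tfrac{\sum_u p_u X_u(T)}{\|\vp\|_1}\Big)\right) \;\leq\; \frac{B}{2c}\bigl(1 - e^{-2cT}\bigr),
\]
where the last step uses $\|\vp\|_1 = B/c$ and the bound $\sum_u p_u X_u(T) \leq BT$. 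Substituting $T = \tfrac{1}{2}\ln((2\eps)^{-1})$ gives exactly $\inner{\vp}{\vy(T)} \leq B \cdot \frac{1-(2\eps)^c}{2c}$, as required.

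The main technical obstacle I anticipate is the same one that arises in the proof of Proposition~\ref{prop:measured_continuous_greedy_density}: the continuous algorithm is not directly implementable, and the standard discretization of \textsc{More Measured Continuous Greedy} (together with the sampling needed to estimate $\nabla F$) introduces an $O(\delta)T$ additive error on the cost, which would blow past $B \cdot \frac{1-(2\eps)^c}{2c}$ for small $c$. To handle this, I would mirror the sharper accounting used in Appendix~\ref{app:improved_analysis_unified}, tracking the per-step change of $\inner{\vp}{\vy(t)}$ against the current $\vy(t)$ to replace the $O(\delta)T$ term by $O(\delta)\cdot cT$, which is absorbed into the leading cost for any $c$. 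The approximation guarantee, meanwhile, degrades only by an $o(1)$ term from discretization and from the standard high-probability estimation of the multilinear extension, producing the final $(1/2 - \eps - o(1), \frac{1-(2\eps)^c}{2c})$-bicriteria integer-approximation guarantee.
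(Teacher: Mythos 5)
There is a genuine gap: you have proved the wrong theorem. Your argument (run Algorithm~\ref{alg:super_measured_continuous_greedy} with $T = \tfrac{1}{2}\ln(\eps^{-1}/2)$, bound the value via Corollary~\ref{cor:super_measured_value}, and bound the cost via the closed-form $y_u(T) = \tfrac{1}{2}(1 - e^{-2X_u(T)})$ plus Jensen) is exactly the paper's proof of Theorem~\ref{thm:symmetric_knapsack_fractional}, i.e., the \emph{inequality} constraint $\inner{\vp}{\vx} \leq B$. Theorem~\ref{thm:symmetric_knapsack_equality} concerns the \emph{equality} polytope $\{\vx \mid \inner{\vp}{\vx} = B\}$, which is not down-closed, so the bicriteria definition additionally requires a vector $\vy_1$ with $\inner{\vp}{\vy_1} = B$ and $\vy_1 \leq$ output; in effect the output must have cost at least $B$. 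The Measured Continuous Greedy output $\vy(T)$ can have $\inner{\vp}{\vy(T)}$ far below $B$, and nothing in your proposal addresses this. The mismatch is visible in your final guarantee: you obtain infeasibility $\frac{1-(2\eps)^c}{2c}$, whereas the theorem has $\max\bigl\{1, \frac{1-(2\eps)^{\min\{c,1-c\}}}{2\min\{c,1-c\}}\bigr\}$ --- the $\max\{1,\cdot\}$ is forced precisely because the cost must reach $B$, and the $\min\{c,1-c\}$ reflects a preliminary reduction you also omit.

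The paper's proof fills these gaps as follows. First, it invokes Reduction~2 of \cite{feldman2017maximizing} to assume $c \leq 1/2$ (this is where $\min\{c,1-c\}$ comes from). Second, it takes the vector $\vy$ produced by Theorem~\ref{thm:symmetric_knapsack_fractional} --- noting the extra property $\|\vy\|_\infty \leq 1/2$ from Observation~\ref{obs:y_properties} --- and, if $\inner{\vp}{\vy} < B$, moves it toward $\tfrac{1}{2}\characteristic_\cN$, setting $\vz = \vy + \max\bigl\{\frac{B - \inner{\vp}{\vy}}{B/(2c) - \inner{\vp}{\vy}}, 0\bigr\}\bigl(\tfrac{1}{2}\characteristic_\cN - \vy\bigr)$, so that $B \leq \inner{\vp}{\vz} \leq B\cdot\max\bigl\{1, \frac{1-(2\eps)^c}{2c}\bigr\}$. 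The value is controlled by concavity of $F$ along non-negative directions together with $F(\tfrac{1}{2}\characteristic_\cN) \geq \tfrac{1}{2}f(\opt)$ (Theorem~2.1 of \cite{feige2011maximizing}), giving $F(\vz) \geq \min\{F(\vy), \tfrac{1}{2}f(\opt)\} \geq (\tfrac{1}{2} - \eps - o(1))\cdot f(\opt)$. Your cost-bound derivation and your discretization remarks are fine as far as they go (they reprove Lemma~\ref{lem:bound_with_density} and Theorem~\ref{thm:symmetric_knapsack_fractional}), but without the reduction to $c \leq 1/2$ and the blending step toward $\tfrac{1}{2}\characteristic_\cN$, the stated theorem does not follow.
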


We now get to the promised proof of Theorem~\ref{thm:symmetric_knapsack_fractional}. This proof is based on Algorithm~\ref{alg:super_measured_continuous_greedy} from Section~\ref{ssc:symmetric_down_closed}. To get improved results from this algorithm for cardinality and knapsack constraints, we need a better upper bound on $\inner{\vp}{\vy(t)}$.

\begin{lemma} \label{lem:bound_with_density}
Assume Algorithm~\ref{alg:super_measured_continuous_greedy} is executed on a knapsack constraint defined by a cost vector $\vp$ and budget $B$. Then, for every time $t \in [0, T]$, it holds that $\inner{\vp}{\vy(t)} \leq B \cdot \frac{1 - e^{-2tc}}{2c}$.
\end{lemma}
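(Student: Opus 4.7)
The plan is to analyze the evolution of each coordinate of $\vy(t)$ separately, obtain a closed form, and then aggregate using Jensen's inequality. The direct approach of establishing a differential inequality of the form $\frac{d}{dt}\inner{\vp}{\vy(t)} \leq B - 2c \cdot \inner{\vp}{\vy(t)}$ fails because it would require $\sum_u p_u x_u(t) y_u(t) \geq c \sum_u p_u y_u(t)$, which is not forced by the selection rule on Line~\ref{line:direction}. So I take a coordinate-wise route instead.

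First, fix an element $u \in \cN$ and observe that $y_u$ evolves according to the scalar ODE $\frac{d y_u(t)}{dt} = x_u(t) \cdot (1 - 2 y_u(t))$ with $y_u(0) = 0$. Writing $z_u(t) \triangleq 1 - 2 y_u(t)$, this becomes $z_u'(t) = -2 x_u(t) z_u(t)$, whose unique solution is $z_u(t) = \exp\bigl(-2 \int_0^t x_u(\tau) \diff{\tau}\bigr)$. Defining $X_u(t) \triangleq \int_0^t x_u(\tau) \diff{\tau}$, we get the closed form
\[
y_u(t) \;=\; \tfrac{1}{2}\bigl(1 - e^{-2 X_u(t)}\bigr)\enspace.
\]

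Next, I would sum weighted by $\vp$. Letting $W \triangleq \|\vp\|_1 = B/c$ and $q_u \triangleq p_u / W$ (so that $q$ is a probability distribution on $\cN$), we obtain
\[
\inner{\vp}{\vy(t)} \;=\; \frac{W}{2}\Bigl(1 - \sum_{u \in \cN} q_u \cdot e^{-2 X_u(t)}\Bigr)\enspace.
\]
The map $x \mapsto e^{-2x}$ is convex on $[0,\infty)$, so Jensen's inequality yields $\sum_u q_u e^{-2 X_u(t)} \geq \exp\bigl(-2 \sum_u q_u X_u(t)\bigr) = \exp\bigl(-2 \inner{\vp}{\int_0^t \vx(\tau)\diff{\tau}} / W\bigr)$.

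Finally, because $\vx(\tau) \in \cP$ implies $\inner{\vp}{\vx(\tau)} \leq B$ for every $\tau \in [0, T]$, we have $\inner{\vp}{\int_0^t \vx(\tau)\diff{\tau}} = \int_0^t \inner{\vp}{\vx(\tau)}\diff{\tau} \leq tB$. Plugging this in and using the monotonicity of $x \mapsto 1 - e^{-x}$, I get
\[
\inner{\vp}{\vy(t)} \;\leq\; \frac{W}{2}\bigl(1 - e^{-2tB/W}\bigr) \;=\; \frac{B}{2c}\bigl(1 - e^{-2tc}\bigr)\enspace,
\]
which is precisely the claim. There is no serious technical obstacle here once one chooses the per-coordinate solve plus Jensen route; the only subtlety to point out is why the naive differential-inequality approach fails, which justifies the switch to the coordinate-wise analysis.
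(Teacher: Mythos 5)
Your proposal is correct and follows essentially the same route as the paper's proof: both solve the coordinate-wise ODE to get $\vy(t) = \tfrac{1}{2}(\characteristic_\cN - e^{-2\int_0^t \vx(\tau)\,\text{d}\tau})$, apply Jensen's inequality using the convexity of $e^{-x}$ with the weights $p_u/\|\vp\|_1$, and then bound $\int_0^t \inner{\vp}{\vx(\tau)}\,\text{d}\tau \leq tB$ from $\vx(\tau) \in \cP$. The only addition is your side remark about why a direct differential-inequality argument fails, which is not needed but does no harm.
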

\begin{proof}
Recall that $\frac{d\vy(t)}{dt} = \vx(t) \hprod (\characteristic_\cN - 2 \cdot \vy(t))$. Together with the boundary condition $\vy(0) = \vzero$, the solution for this differential equation is
\[
	\vy(t)
	=
	\frac{1}{2}(\characteristic_\cN - e^{-2\int_0^t \vx(\tau) \text{d}\tau})
	\enspace.
\]
Thus,
\begin{align*}
	2\inner{\vp}{\vy(t)}
	={} &
	|\vp|_1 - \sum_{u \in \cN} p_u \cdot e^{-2 \int_0^t x_u(\tau) \text{d}\tau}
	\leq
	|\vp|_1 - |\vp|_1 \cdot e^{-\frac{2}{|\vp|_1} \sum_{u \in \cN} p_u\int_0^t x_u(\tau) \text{d}\tau}\\
	={} &
	|\vp|_1 \big(1 - e^{-\frac{2}{|\vp|_1} \int_0^t \inner{\vp}{\vx(\tau)} \text{d}\tau}\big)
	\leq
	|\vp|_1 \big(1 - e^{-\frac{2tB}{|\vp|_1}}\big)
	=
	B \cdot \frac{1 - e^{-2tc}}{c}
	\enspace.
\end{align*}
where the first inequality holds since $e^{-x}$ is a convex function. The lemma now follows by rearranging this inequality.
\end{proof}

Using the bound from the last lemma, we can prove Theorem~\ref{thm:symmetric_knapsack_fractional} as follows.
\begin{proof}[Proof of Theorem~\ref{thm:symmetric_knapsack_fractional}]
We would like to show that the implementable version Algorithm~\ref{alg:super_measured_continuous_greedy} has all the properties stated in the theorem when its stopping time $T$ is set to $\frac{1}{2} \ln(\eps^{-1}/2)$. According to Corollary~\ref{cor:super_measured_value}, the output vector $\vy(T)$ of Algorithm~\ref{alg:super_measured_continuous_greedy} obeys
\[
	F(\vy(T))
	\geq
	\frac{1 - e^{-2T}}{2} \cdot f(\opt)
	=
	\frac{1 - e^{-\ln(\eps^{-1}/2)}}{2} \cdot f(\opt)
	=
	\frac{1 - 2\eps}{2} \cdot f(\opt)
	=
	(1/2 - \eps) \cdot f(\opt)
	\enspace.
\]
Thus, the integer-approximation ratio of Algorithm~\ref{alg:super_measured_continuous_greedy} is at least $1/2 - \eps$. Since the implementable version of Algorithm~\ref{alg:super_measured_continuous_greedy} has an integer-approximation ratio worse by $o(1)$, we get that its integer-approximation ratio is at least $1/2 - \eps - o(1)$. To complete the proof of the theorem, it remains to observe that the output vector $\vy(T)$ of Algorithm~\ref{alg:super_measured_continuous_greedy} also obeys, by Lemma~\ref{lem:bound_with_density},
\[
	\inner{\vp}{\vy(T)}
	\leq
	B \cdot \frac{1 - e^{-2Tc}}{2c}
	=
	B \cdot \frac{1 - e^{-c\ln(\eps^{-1}/2)}}{2c}
	=
	B \cdot \frac{1 - (2\eps)^{c}}{2c}
	\enspace,
\]
and therefore, the infeasibility ratio of Algorithm~\ref{alg:super_measured_continuous_greedy} is at most $\frac{1 - (2\eps)^{c}}{c}$.
\end{proof}

\subsection{Simple Greedy Algorithm} \label{ssc:simple_greedy_symmetric}

In this section, we present a simple greedy algorithm for bicriteria maximization of symmrtric submodular functions subject to cardinality and knapsack constraints. The algorithm appears as Algorithm~\ref{alg:greedy_density_symmetric}. One can observe that this algorithm is similar to the density-based greedy algorithm (Algorithm~\ref{alg:greedy_density}) from Section~\ref{ssc:simple_greedy_monotone}. However, there are some differences. The first difference between the algorithms is that Algorithm~\ref{alg:greedy_density_symmetric} removes from its solution set $S_i$ elements whose marginal contribution is negative enough (smaller than $-\frac{\delta m}{|\cN|}$). The second difference between the algorithms is that Algorithm~\ref{alg:greedy_density_symmetric} cannot always include in its solution all the elements of zero cost because they might be detrimental when $f$ is non-monotone. Thus, these elements require special treatment. The last difference between Algorithms~\ref{alg:greedy_density} and~\ref{alg:greedy_density_symmetric} is that Algorithm~\ref{alg:greedy_density_symmetric} might terminate before using all the available budget if no element has a positive marginal contribution with respect to the current solution.

\begin{algorithm}[ht] 
\caption{\textsc{Density-Based Greedy}$(f \colon 2^\cN \to \nnR, c, B, \eps, \delta)$}\label{alg:greedy_density_symmetric}
\DontPrintSemicolon
Let $m \gets \max\{f(\varnothing), \max_{u \in \cN} f(\{u\})\}$.\\
Let $S_0 \gets \varnothing$ and $i \gets 0$.\\
\While{$c(S_i) < \frac{B}{2} \cdot \ln (\eps^{-1}/2)$\label{line:greedy_density_loop_symmetric}}
{
	\While{there exists an element $u \in S_i$ for which $f(u \mid S_i - u) < -\frac{\delta m}{|\cN|}$\label{line:removal_loop}}
	{
		Remove from $S_i$ an arbitrary element $u$ obeying the condition of the loop on Line~\ref{line:removal_loop}.\label{line:element_removal}
	}
	Update $i \gets i + 1$.\\
	\If{there exists an element $u \in \cN \setminus S_{i - 1}$ such that $c(u) = 0$ and $f(u \mid S_{i - 1}) > 0$ \label{line:good_zero_value_element}}
	{
		Let $u_i$ be an element obeying the condition on Line~\ref{line:good_zero_value_element}.\\
		Let $S_i \gets S_{i - 1} + u_i$.
	}
	\ElseIf{there exists an element $u \in \cN \setminus S_{i - 1}$ such that $f(u \mid S_{i - 1}) > 0$\label{line:good_non-zero_value_element}}
	{
		Let $u_i$ be an element of $\{u \in \cN \setminus S_{i - 1} \mid c(u) > 0\}$ maximizing $\frac{f(u_i \mid S_{i - 1})}{c(u_i)}$.\label{line:select_u_by_density}\\
		Let $S_i \gets S_{i - 1} + u_i$.
	}
	\Else
	{
		\Return $S_{i - 1}$. \label{line:early_return}
	}
}
\Return $S_i$.\label{line:late_return}
\end{algorithm}

Intuitively, Algorithm~\ref{alg:greedy_density_symmetric} is based on the continuous algorithm of \cite{feldman2017maximizing}. The ideas underlying this algorithm were previously adapted to a discrete algorithm by \cite{wan2024efficient} in the context of (single-criteria) optimization subject to a cardinality constraint, but not in the context of a knapsack constraint. The properties of Algorithm~\ref{alg:greedy_density_symmetric} are summarized by the following theorem.
\begin{theorem} \label{thm:symmetric_knapsack_greedy}
For every $\eps \in (0, 1)$ and constant $\delta > 0$, Algorithm~\ref{alg:greedy_density_symmetric} is a polynomial time $(1 - \eps - \delta, 1 + \frac{1}{2}\ln(\eps^{-1}/2))$-bicriteria approximation algorithm for the problem of maximizing a non-negative symmetric submodular function subject to a knapsack constraint. In the special case of a cardinality constraint, the infeasibility ratio of Algorithm~\ref{alg:greedy_density_symmetric} improves to $\lceil \ln \frac{1}{2}\ln(\eps^{-1}/2) \rceil$.
\end{theorem}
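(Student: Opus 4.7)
\emph{Infeasibility.} The ratio $1 + \tfrac{1}{2}\ln(\eps^{-1}/2)$ will follow directly from the loop structure: the inner removal loop only decreases $c(S_i)$, and the outer loop can terminate with a cost exceeding the threshold $(B/2)\ln(\eps^{-1}/2)$ only by the cost of a single added element $u_i$, which is at most $B$. An early return at Line~\ref{line:early_return} only shrinks the cost further. For a cardinality constraint, $c(u_i) = 1$ and the integrality of $|S_i|$ sharpen this to $\lceil \tfrac{1}{2}\ln(\eps^{-1}/2) \rceil$.

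\emph{Structural lemma from the removal loop.} Whenever control returns to Line~\ref{line:greedy_density_loop_symmetric}, every $u \in S_i$ obeys $f(u \mid S_i - u) \geq -\delta m / |\cN|$, because otherwise the removal loop would not have exited. Telescoping this marginal across re-additions of elements of $T \cap S_i$ using submodularity yields, for every $T \subseteq \cN$,
\[
  f(S_i) - f(S_i \setminus T) \geq -\delta m \enspace.
\]
I would then combine this with the symmetric identity $f(\opt \cup (\cN \setminus S_i)) = f(S_i \setminus \opt)$ and the submodularity inequality $f(\opt \cup S_i) + f(\opt \cup (\cN \setminus S_i)) \geq f(\opt) + f(\cN) \geq f(\opt)$ (which uses $f(\cN) = f(\varnothing) \geq 0$) to derive the key estimate $f(\opt \mid S_i) \geq f(\opt) - 2 f(S_i) - \delta m$. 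Throughout the rest of the proof, note that $m \leq f(\opt)$ because $\varnothing$ and every singleton are feasible.

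\emph{Main case: termination via Line~\ref{line:late_return}.} I would follow the density-greedy template from the proof of Lemma~\ref{lem:monotone_knapsack_approximation}: averaging the greedy density rule on Line~\ref{line:select_u_by_density} over the positive-cost elements of $\opt \setminus S_{i-1}$ (together with submodularity) gives $f(u_i \mid S_{i-1}) \geq (c(u_i)/B) \cdot f(\opt \mid S_{i-1})$. Substituting the key estimate from the previous paragraph yields the recurrence
\[
  f(S_i) \geq f(S_{i-1}) + \frac{c(u_i)}{B}\bigl(f(\opt) - 2 f(S_{i-1}) - \delta m\bigr) \enspace,
\]
which, like in the monotone case but with an extra factor of $2$ in front of $f(S_{i-1})$, unrolls to $f(S_i) \geq \tfrac{1}{2}\bigl(1 - e^{-2 c(S_i)/B}\bigr)\bigl(f(\opt) - \delta m\bigr)$. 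Evaluating at the exit threshold $c(S_i) \geq (B/2)\ln(\eps^{-1}/2)$ and using $m \leq f(\opt)$ produces at least $(1/2 - \eps)(1 - \delta) f(\opt) \geq (1/2 - \eps - \delta) f(\opt)$.

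\emph{Early termination case and main obstacle.} In the remaining case, exit at Line~\ref{line:early_return} means no element of $\cN \setminus S_{i-1}$ has a strictly positive marginal contribution; together with the removal slack, $S_{i-1}$ is an approximate single-swap local maximum of $f$. The key hard part of the proof is exploiting the symmetry of $f$ here via a Feige--Mirrokni--Vondr\'ak-style local-search argument. Telescoping non-positive forward marginals gives $f(S_{i-1} \cup \opt) \leq f(S_{i-1})$, and applying the same telescoping to the symmetric target $\cN \setminus \opt$ (combined with $f(\opt \setminus S_{i-1}) = f(S_{i-1} \cup (\cN \setminus \opt))$) gives $f(\opt \setminus S_{i-1}) \leq f(S_{i-1})$. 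The structural lemma with $T = \cN \setminus \opt$ yields $f(\opt \cap S_{i-1}) \leq f(S_{i-1}) + \delta m$. Combining these with the subadditivity inequality $f(\opt) \leq f(\opt \cap S_{i-1}) + f(\opt \setminus S_{i-1})$ (from submodularity and $f(\varnothing) \geq 0$) gives $f(\opt) \leq 2 f(S_{i-1}) + \delta m$, i.e., $f(S_{i-1}) \geq (1/2 - \delta) f(\opt)$. The main technical obstacle throughout is keeping the aggregate removal slack bounded by $O(\delta f(\opt))$ rather than growing with $|\cN|$; this is exactly why the algorithm uses the threshold $-\delta m/|\cN|$, so that summing over the at most $|\cN|$ removed or re-added elements contributes at most $\delta m \leq \delta f(\opt)$, which can be absorbed into the final $\delta$ term.
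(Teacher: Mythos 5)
Your argument tracks the paper's proof essentially step for step. Your ``key estimate'' $f(\opt \mid S_i) \geq f(\opt) - 2f(S_i) - \delta m$ is precisely the paper's Lemma~\ref{lem:opt_value_guarantee} (stated there as $f(S_{i-1} \cup \opt) \geq (1-\delta)\cdot f(\opt) - f(S_{i-1})$, using $m \leq f(\opt)$), and you obtain it the same way: symmetry converts $f(\opt \cup (\cN \setminus S_i))$ into $f(S_i \setminus \opt)$, submodularity together with $f(\cN) = f(\varnothing) \geq 0$ gives the complementary-union bound, and the removal-loop threshold is telescoped over at most $|\cN|$ elements. The factor-two density recurrence and its solution $\tfrac{1}{2}(1 - e^{-2c(S_i)/B})$, the separate case of an exit through Line~\ref{line:early_return}, and the infeasibility accounting (one extra element of cost at most $B$, integrality in the cardinality case) coincide with the paper's Lemma~\ref{lem:symmetric_knapsack_approximation} and its surrounding corollary and lemma; your early-return derivation via the symmetric target $\cN \setminus \opt$ is just an equivalent rearrangement of the same inequalities. (The ratio you prove, $1/2 - \eps - \delta$, is what the paper's lemmas and Table~\ref{tbl:results} actually give; the ``$1-\eps-\delta$'' in the theorem statement is evidently a typo, so you are not at fault there.)

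What is genuinely missing is the polynomial running-time claim, which is part of the theorem and is not automatic for Algorithm~\ref{alg:greedy_density_symmetric}: an element removed on Line~\ref{line:element_removal} can later be re-added, so the number of iterations of the outer loop is not bounded by $|\cN|$ a priori. The paper's argument is that every removal increases $f(S_i)$ by more than $\delta m / |\cN|$, while $f(S_i) \leq |\cN| \cdot m$ by the lower bound $m \geq |\cN|^{-1}\max_{S \subseteq \cN} f(S)$ of Observation~\ref{obs:m_bounds} (you only invoke the other direction, $m \leq f(\opt)$); hence at most $|\cN|^2/\delta$ removals and at most $1 + |\cN| + |\cN|^2/\delta$ outer iterations occur. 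This is the second purpose of the $-\delta m/|\cN|$ threshold, beyond the aggregate-slack control you identify, and your write-up should include it. Two smaller glosses: your averaging step $f(u_i \mid S_{i-1}) \geq (c(u_i)/B)\cdot f(\opt \mid S_{i-1})$ needs the remark that when Line~\ref{line:select_u_by_density} is reached the branch on Line~\ref{line:good_zero_value_element} has failed, so zero-cost elements of $\opt \setminus S_{i-1}$ have non-positive marginals and can be dropped from the sum (and when $u_i$ itself has zero cost the recurrence is trivial since $c(u_i)=0$ while $f(u_i \mid S_{i-1}) > 0$); and the induction survives the removal loop between iterations only because removals increase $f(S_i)$ while decreasing $c(S_i)$, and your lower bound is increasing in $c(S_i)$ --- worth a sentence, though the paper is equally terse on this point.
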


We begin the analysis of Theorem~\ref{thm:symmetric_knapsack_greedy} with the following observation.
\begin{observation} \label{obs:m_bounds}
$|\cN|^{-1} \cdot \max_{S \subseteq \cN} f(S) \leq m \leq f(\opt)$.
\end{observation}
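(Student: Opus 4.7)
\medskip
\noindent\textbf{Proof proposal for Observation~\ref{obs:m_bounds}.} The upper bound $m \leq f(\opt)$ is immediate from optimality: by the preliminaries, a knapsack constraint is assumed throughout to allow every singleton ($p_u \leq B$ for all $u \in \cN$) and of course allows the empty set, so both $\varnothing$ and $\{u\}$ for every $u \in \cN$ are feasible candidates for $\opt$. Hence $f(\varnothing) \leq f(\opt)$ and $f(\{u\}) \leq f(\opt)$ for every $u$, and taking the max gives $m \leq f(\opt)$.

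For the lower bound, I will show that for any $S \subseteq \cN$, $f(S) \leq |\cN| \cdot m$. If $S = \varnothing$ this is trivial since $f(\varnothing) \leq m$. Otherwise, enumerate $S = \{u_1,\dotsc,u_k\}$ with $k \geq 1$ and telescope using submodularity:
\[
	f(S)
	= f(\varnothing) + \sum_{i=1}^k f(u_i \mid \{u_1,\dotsc,u_{i-1}\})
	\leq f(\varnothing) + \sum_{i=1}^k [f(\{u_i\}) - f(\varnothing)],
\]
where the inequality is just the diminishing returns property applied element by element. Rearranging gives $f(S) \leq (1-k) f(\varnothing) + \sum_{i=1}^k f(\{u_i\})$, and since $k \geq 1$ and $f(\varnothing) \geq 0$ (non-negativity of $f$), the first term is non-positive, so
\[
	f(S) \leq \sum_{i=1}^k f(\{u_i\}) \leq k \cdot \max_{u \in \cN} f(\{u\}) \leq |\cN| \cdot m.
\]
Taking the maximum over $S$ and dividing by $|\cN|$ yields $|\cN|^{-1} \cdot \max_{S \subseteq \cN} f(S) \leq m$, completing the proof.

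Neither direction is technically difficult; the only mild subtlety is remembering to invoke the blanket assumption (stated in Section~\ref{sec:preliminaries}) that every single element fits in the knapsack, which is what makes each singleton a competitor to $\opt$ and thereby justifies $m \leq f(\opt)$.
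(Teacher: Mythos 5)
Your proposal is correct and follows essentially the same route as the paper: the upper bound uses the blanket assumption $p_u \leq B$ to make $\varnothing$ and every singleton feasible, and the lower bound is the same submodularity/telescoping argument bounding $f(S)$ by $f(\varnothing) + \sum_{u \in S} f(u \mid \varnothing)$ and then discarding the non-positive $(1-|S|)f(\varnothing)$ term via non-negativity. The paper's proof is just a more compressed write-up of the identical argument.
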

\begin{proof}
Since we assume that $c(u) \leq B$ for every $u \in \cN$, the value of $m$ is the maximum of the values of various feasible solutions, and thus, is upper bounded by $f(\opt)$. Additionally, by the submodularity and non-negativity of $f$, for every set $S \subseteq \cN$,
\[
	f(S)
	\leq
	f(\varnothing) + \sum_{u \in S} f(u \mid \varnothing)
	\leq
	\max\Big\{f(\varnothing), \sum_{u \in S} f(u)\Big\}
	\leq
	|\cN| \cdot m
	\enspace.
	\qedhere
\]
\end{proof}

\begin{corollary}
Algorithm~\ref{alg:greedy_density_symmetric} runs in polynomial time.
\end{corollary}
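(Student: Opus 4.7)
The plan is to bound the total number of elementary set-modification operations (additions on Line~\ref{line:good_zero_value_element} or Line~\ref{line:good_non-zero_value_element}, and removals on Line~\ref{line:element_removal}) performed by Algorithm~\ref{alg:greedy_density_symmetric} by a polynomial in $|\cN|$ and $\delta^{-1}$. Since each such operation can be implemented using $O(|\cN|)$ value oracle queries (a single scan of the ground set to locate the relevant element), and since $\delta$ is a positive constant, this will yield a polynomial-time implementation.

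The first step is to establish a strict-monotonicity property for $f$ along the sequence of intermediate sets maintained by the algorithm (viewing every addition and every removal as a single elementary step). Each addition requires $f(u_i \mid S_{i - 1}) > 0$ and therefore strictly increases $f$. Each removal on Line~\ref{line:element_removal} is triggered by the condition $f(u \mid S_i - u) < -\frac{\delta m}{|\cN|}$, so the set obtained after the removal has a value that exceeds the value beforehand by strictly more than $\frac{\delta m}{|\cN|}$. In particular, $f$ is strictly increasing along the whole sequence of intermediate sets.

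The second step is to bound the total number of removals. By Observation~\ref{obs:m_bounds}, the value of $f$ on any subset of $\cN$ is at most $|\cN| \cdot m$, so the cumulative increase in $f$ over the entire execution is at most $|\cN| \cdot m$. Dividing by the per-removal increment $\frac{\delta m}{|\cN|}$ bounds the total number of removals by $\frac{|\cN|^2}{\delta}$.

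The third step is a short pigeonhole argument for additions. Both Line~\ref{line:good_zero_value_element} and Line~\ref{line:good_non-zero_value_element} select elements from $\cN \setminus S_{i - 1}$, so any element that is added more than once must be removed on Line~\ref{line:element_removal} between consecutive additions. Hence the number of additions of any given element is at most $1$ plus the number of its removals, and summing over $\cN$ gives a total of at most $|\cN| + \frac{|\cN|^2}{\delta}$ additions. Each outer iteration either performs exactly one addition or exits via Line~\ref{line:early_return}, so the number of outer iterations is also $O(|\cN|^2 / \delta)$. Multiplying by the $O(|\cN|)$ per-operation query cost yields a polynomial overall running time. The only real subtlety I anticipate is making the monotonicity claim clean when additions and removals are interleaved across outer iterations; once that is handled, the remaining steps are routine counting.
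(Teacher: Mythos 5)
Your proposal is correct and follows essentially the same route as the paper: the crux in both is that each removal on Line~\ref{line:element_removal} raises $f$ by at least $\frac{\delta m}{|\cN|}$ while $f$ never decreases and is bounded by $|\cN|\cdot m$ via Observation~\ref{obs:m_bounds}, giving at most $|\cN|^2/\delta$ removals, after which the number of additions (and hence iterations) is bounded by a routine counting argument. Your per-element pigeonhole for additions is just a cosmetic variant of the paper's ``additions minus removals equals final size $\leq |\cN|$'' count.
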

\begin{proof}
Let $\Delta$ be the total number of elements removed from the solution by Line~\ref{line:element_removal} of Algorithm~\ref{alg:greedy_density_symmetric}. Every removal of such an element increases the value of the solution set $S_i$ of Algorithm~\ref{alg:greedy_density_symmetric} by at least $\frac{m\delta}{|\cN|}$. Since the value of $f(S_i)$ never decreases during the execution of the algorithm, the final value of $f(S_i)$ must obey
\[
	f(S_i)
	\geq
	\Delta \cdot \frac{m\delta}{|\cN|}
	\geq
	\Delta \cdot \frac{(|\cN|^{-1} \cdot f(S_i))\delta}{|\cN|}
	=
	\frac{\Delta\delta \cdot f(S_i)}{|\cN|^2}
	\enspace,
\]
where the second inequality follows from Observation~\ref{obs:m_bounds}. Rearranging this inequality yields $\Delta \leq |\cN|^2/\delta$. Thus, throughout the execution of Algorithm~\ref{alg:greedy_density_symmetric}, at most $|\cN|^2/\delta$ elements are removed from its solution. In contrast, every iteration of the main loop of this algorithm (except maybe the last iteration) adds one element to this solution. Since the size of the solution of Algorithm~\ref{alg:greedy_density_symmetric} cannot exceed $|\cN|$, this implies that the main loop of Algorithm~\ref{alg:greedy_density_symmetric} makes at most $1 + |\cN| + |\cN|^2/\delta$ iterations, which is a polynomial number of iterations since $\delta$ is a constant.
\end{proof}

Next, we would like to analyze the approximation ratio of Algorithm~\ref{alg:greedy_density_symmetric}. The first step towards this goal is the next lemma. Let $\ell$ be the number of iterations performed by the main loop of Algorithm~\ref{alg:greedy_density_symmetric}.
\begin{lemma} \label{lem:opt_value_guarantee}
For every $i \in [\ell]$, $f(S_{i - 1} \cup \opt) \geq (1 - \delta) \cdot f(\opt) - f(S_{i - 1})$.
\end{lemma}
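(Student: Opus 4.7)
The plan is to combine three ingredients: the submodularity of $f$, the symmetry property $f(S) = f(\cN \setminus S)$, and the fact guaranteed by the inner loop on Line~\ref{line:removal_loop} that, at the time this lemma is invoked, every element $u \in S_{i-1}$ satisfies $f(u \mid S_{i-1} - u) \geq -\delta m / |\cN|$.

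First I would derive a purely structural inequality relating $f(S_{i-1} \cup \opt)$ to $f(\opt)$ and $f(S_{i-1} \setminus \opt)$, using only submodularity and symmetry. Applying submodularity to the disjoint sets $S_{i-1}$ and $\cN \setminus (S_{i-1} \cup \opt)$ (whose union is $\cN \setminus (\opt \setminus S_{i-1})$) and then rewriting the two complement terms via symmetry yields
\[
	f(S_{i-1}) + f(S_{i-1} \cup \opt) \geq f(\opt \setminus S_{i-1}) + f(\varnothing) \enspace.
\]
Applying submodularity to $S_{i-1}$ and $\cN \setminus \opt$, and again using symmetry on the two resulting complement terms, yields
\[
	f(S_{i-1}) + f(\opt) \geq f(\opt \setminus S_{i-1}) + f(S_{i-1} \setminus \opt) \enspace.
\]
Subtracting the second inequality from the first cancels $f(\opt \setminus S_{i-1})$ and, after dropping $f(\varnothing) \geq 0$, leaves $f(S_{i-1} \cup \opt) \geq f(\opt) - f(S_{i-1} \setminus \opt)$.

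Second I would use the removal-loop property to upper bound $f(S_{i-1} \setminus \opt)$ in terms of $f(S_{i-1})$. Writing $S_{i-1} \cap \opt = \{u_1, \dotsc, u_s\}$ in any order and telescoping,
\[
	f(S_{i-1}) - f(S_{i-1} \setminus \opt) = \sum_{j=1}^s f\bigl(u_j \bigm| (S_{i-1} \setminus \opt) \cup \{u_1, \dotsc, u_{j-1}\}\bigr) \enspace;
\]
since each conditioning set on the right-hand side is contained in $S_{i-1} - u_j$, submodularity lower-bounds each summand by $f(u_j \mid S_{i-1} - u_j) \geq -\delta m / |\cN|$. Summing, using $s \leq |\cN|$, and invoking Observation~\ref{obs:m_bounds} to get $m \leq f(\opt)$, I obtain $f(S_{i-1} \setminus \opt) \leq f(S_{i-1}) + \delta f(\opt)$. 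Combining this with the structural inequality yields $f(S_{i-1} \cup \opt) \geq f(\opt) - f(S_{i-1}) - \delta f(\opt) = (1 - \delta) f(\opt) - f(S_{i-1})$, as required.

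The main obstacle is identifying the right pair of submodularity applications whose difference isolates $f(S_{i-1} \cup \opt)$; the symmetry hypothesis is crucial here, since it is what makes both applications produce the common term $f(\opt \setminus S_{i-1})$ that cancels upon subtraction. Beyond that, the only other thing to verify is that at the time the lemma is applied, $S_{i-1}$ has already been processed by the removal loop (so that the marginal bound on its elements is in force); this follows directly from the control flow of Algorithm~\ref{alg:greedy_density_symmetric}, where the inner loop on Line~\ref{line:removal_loop} runs at the top of each outer iteration before $u_i$ is selected.
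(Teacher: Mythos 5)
Your second step (telescoping over $S_{i-1}\cap\opt$, bounding each marginal by $f(u\mid S_{i-1}-u)\geq -\delta m/|\cN|$ via submodularity, then using $|S_{i-1}\cap\opt|\leq|\cN|$ and $m\leq f(\opt)$ from Observation~\ref{obs:m_bounds}) is correct and is essentially the paper's argument. The problem is in your first step. From the two valid inequalities
$f(S_{i-1}) + f(S_{i-1}\cup\opt) \geq f(\opt\setminus S_{i-1}) + f(\varnothing)$ and
$f(S_{i-1}) + f(\opt) \geq f(\opt\setminus S_{i-1}) + f(S_{i-1}\setminus\opt)$,
you ``subtract the second from the first.'' Subtracting two inequalities that point in the same direction is not a valid deduction: from $a\geq b$ and $c\geq d$ one cannot conclude $a-c\geq b-d$. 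Concretely, what you would need is the reverse of your second inequality (a lower bound on $f(\opt\setminus S_{i-1}) + f(S_{i-1}\setminus\opt)$ in terms of $f(S_{i-1})+f(\opt)$), and submodularity gives you only the other direction, so the cancellation of $f(\opt\setminus S_{i-1})$ is not justified. The conclusion you are after, $f(S_{i-1}\cup\opt) + f(S_{i-1}\setminus\opt) \geq f(\opt)$, is true, but your route to it is a non-proof.

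The fix is a single, direct application of submodularity to the pair $A = S_{i-1}\cup\opt$ and $B = (\cN\setminus S_{i-1})\cup\opt$: since $A\cup B = \cN$ and $A\cap B = \opt$, submodularity gives $f(S_{i-1}\cup\opt) + f((\cN\setminus S_{i-1})\cup\opt) \geq f(\cN) + f(\opt) \geq f(\opt)$ by non-negativity, and symmetry rewrites $f((\cN\setminus S_{i-1})\cup\opt) = f(S_{i-1}\setminus\opt)$. This is exactly the paper's first step (stated there as $f(S_{i-1}\cup\opt) \geq f(\opt) - f(\overline{S}_{i-1}\cup\opt) = f(\opt) - f(S_{i-1}\cap\overline{\opt})$), after which your telescoping argument completes the proof as in the paper. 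Your control-flow remark (that the removal loop has already pruned $S_{i-1}$ when the lemma is invoked) is correct and worth keeping.
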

\begin{proof}
The loop on Line~\ref{line:removal_loop} of Algorithm~\ref{alg:greedy_density_symmetric} guarantees that $f(u \mid S_{i - 1} - u) \geq -\frac{\delta m}{|\cN|}$ for every $u \in S_i$. Therefore,
\begin{align*}
	f(S_i) \cup \opt)
	\geq{} &
	f(\opt) - f(\bar{S}_i \cup \opt)
	=
	f(\opt) - f(S_i \cap \overline{\opt})\\
	\geq{} &
	f(\opt) - f(S_i) + \sum_{u \in S_i \cap \opt} \mspace{-9mu} f(u \mid S_i - u)
	\geq
	f(\opt) - f(S_i) - |S_i \cap \opt| \cdot \frac{\delta m}{|\cN|}\\
	\geq{} &
	f(\opt) - f(S_i) - \delta m
	\geq
	(1 - \delta) \cdot f(\opt) - f(S_i)
	\enspace,
\end{align*}
where the first and second inequalities follow from the non-negativity and submodularity of $f$, the equality holds by the symmetry of $f$, and the last inequality follows from Observation~\ref{obs:m_bounds}.
\end{proof}

We can now prove the approximation ratio of Algorithm~\ref{alg:greedy_density_symmetric} in the case that it returns before exhausting the budget.
\begin{corollary}
If Algorithm~\ref{alg:greedy_density_symmetric} returns from Line~\ref{line:early_return}, then its approximation ratio is at least $1/2 - \eps - \delta$.
\end{corollary}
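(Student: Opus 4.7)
The plan is to combine a ``local optimum is half-approximate'' style argument with Lemma~\ref{lem:opt_value_guarantee}. First I would unpack what it means that Algorithm~\ref{alg:greedy_density_symmetric} returned from Line~\ref{line:early_return}: in the iteration in which we return, the increment on $i$ was performed but neither the condition on Line~\ref{line:good_zero_value_element} nor the condition on Line~\ref{line:good_non-zero_value_element} held, so $f(u \mid S_{i-1}) \leq 0$ for every $u \in \cN \setminus S_{i-1}$. By the submodularity of $f$, adding the elements of $\opt \setminus S_{i-1}$ to $S_{i-1}$ one at a time gives
\[
    f(S_{i-1} \cup \opt) - f(S_{i-1}) \;\leq\; \sum_{u \in \opt \setminus S_{i-1}} f(u \mid S_{i-1}) \;\leq\; 0,
\]
so that $f(S_{i-1} \cup \opt) \leq f(S_{i-1})$.

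Next, I would invoke Lemma~\ref{lem:opt_value_guarantee} (applied to the iteration $i$ in which the early return occurs), which yields $f(S_{i-1} \cup \opt) \geq (1 - \delta) \cdot f(\opt) - f(S_{i-1})$. Combining this with the previous inequality gives $2 f(S_{i-1}) \geq (1 - \delta) \cdot f(\opt)$, hence
\[
    f(S_{i-1}) \;\geq\; \tfrac{1 - \delta}{2} \cdot f(\opt) \;=\; \bigl(\tfrac{1}{2} - \tfrac{\delta}{2}\bigr) \cdot f(\opt) \;\geq\; (\tfrac{1}{2} - \eps - \delta) \cdot f(\opt),
\]
where the last inequality is immediate since $\eps \geq 0$ and $\delta/2 \leq \delta$. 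Since Algorithm~\ref{alg:greedy_density_symmetric} returns $S_{i-1}$, this is exactly the desired approximation guarantee.

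There is no real obstacle here: the argument is essentially the classical observation that a set with no beneficial additions is a $\tfrac{1}{2}$-approximation, transplanted to the symmetric bicriteria setting through the ``complement'' bound of Lemma~\ref{lem:opt_value_guarantee}. The only mild subtlety is bookkeeping with the index $i$: because Algorithm~\ref{alg:greedy_density_symmetric} increments $i$ \emph{before} testing the conditions that can trigger the early return, the returned set is $S_{i-1}$, and this is precisely the set for which both the submodularity-based upper bound and Lemma~\ref{lem:opt_value_guarantee} furnish inequalities, so the two can be combined directly.
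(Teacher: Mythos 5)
Your proof is correct and follows essentially the same route as the paper: both combine Lemma~\ref{lem:opt_value_guarantee} with the observation that the early return forces $f(u \mid S_{i-1}) \leq 0$ for all $u \in \cN \setminus S_{i-1}$, using submodularity to bound $f(S_{i-1} \cup \opt)$ by $f(S_{i-1})$ and then rearranging to get $f(S_{i-1}) \geq \frac{1-\delta}{2} \cdot f(\opt)$. The only difference is cosmetic (you state the two inequalities separately rather than chaining them), and your final relaxation to $\tfrac{1}{2} - \eps - \delta$ is valid.
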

\begin{proof}
The fact that Algorithm~\ref{alg:greedy_density_symmetric} returns from Line~\ref{line:early_return} implies that $f(u \mid S_{i - 1}) \leq 0$ for every $u \in \cN \setminus S_{i - 1}$. Thus, its output set $S_{i - 1}$ obeys
\begin{align*}
	f(S_{i - 1})
	\geq{} &
	(1 - \delta) \cdot f(\opt) - f(S_{i - 1} \cup \opt)\\
	\geq{} &
	(1 - \delta) \cdot f(\opt) - f(S_{i - 1}) - \sum_{u \in \opt \setminus S_{i - 1}} \mspace{-18mu} f(u \mid S_{i - 1})
	\geq
	(1 - \delta) \cdot f(\opt) - f(S_{i - 1}) 
	\enspace,
\end{align*}
where the first inequality holds by Lemma~\ref{lem:opt_value_guarantee}, and the second inequality follows from the submodulairty of $f$. Rearranging this inequality now gives $f(S_{i - 1}) \geq \frac{1 - \delta}{2} \cdot f(\opt) \geq (1/2 - \delta) \cdot f(\opt)$, which proves the corollary.
\end{proof}

Proving the approximation ratio of Algorithm~\ref{alg:greedy_density_symmetric} when it returns from Line~\ref{line:late_return} is somewhat more involved.
\begin{lemma} \label{lem:symmetric_knapsack_approximation}
If Algorithm~\ref{alg:greedy_density_symmetric} returns from Line~\ref{line:late_return}, then its approximation ratio is at least $1/2 - \eps - \delta$.
\end{lemma}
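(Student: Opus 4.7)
The plan is to port the analysis of the continuous Algorithm~\ref{alg:super_measured_continuous_greedy} (namely Lemma~\ref{lem:increase_symmetric} and Corollary~\ref{cor:super_measured_value}) to the discrete setting, using Lemma~\ref{lem:opt_value_guarantee} to play the role of the symmetry-based inequality $F(\vy(t)\vee\characteristic_\opt) + F((\characteristic_\cN - \vy(t))\vee\characteristic_\opt) \geq f(\opt)$. Let $\ell$ denote the number of iterations performed, let $T'_{i-1}$ denote the state of $S$ at the moment $u_i$ is selected (i.e., after the removal sub-loop but before the addition), and set $v_i = f(S_i)$, $v'_{i-1} = f(T'_{i-1})$, $\tau_i = c(u_i)/B$. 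Three elementary facts will be used throughout: (i) each iteration of the removal loop strictly increases $f$, so $v'_{i-1} \geq v_{i-1}$; (ii) each addition also never decreases $f$ (a zero-cost element is only added when its marginal is positive, and a positive-cost element is only added from the else branch, again with positive marginal by the greedy rule); and (iii) therefore $\sum_{i=1}^{\ell} \tau_i \geq c(S_\ell)/B \geq \tfrac{1}{2}\ln(\eps^{-1}/2)$ by the terminating condition of the main loop.

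The core technical step is the per-iteration inequality: for every iteration $i$ with $c(u_i) > 0$,
\[
  v_i - v'_{i-1} \;\geq\; \tau_i\bigl[(1-\delta)\,f(\opt) \,-\, 2\,v'_{i-1}\bigr].
\]
To prove it, apply Lemma~\ref{lem:opt_value_guarantee} at the set $T'_{i-1}$ to obtain $f(T'_{i-1}\cup\opt) - f(T'_{i-1}) \geq (1-\delta)f(\opt) - 2v'_{i-1}$. Partition $\opt\setminus T'_{i-1}$ into $O_0 = \{u : c(u)=0\}$ and $O_+ = \{u : c(u)>0\}$. Since Line~\ref{line:select_u_by_density} is executed only when the condition of Line~\ref{line:good_zero_value_element} fails, every $u\in O_0$ satisfies $f(u\mid T'_{i-1}) \leq 0$; hence, by submodularity, $\sum_{u\in O_+} f(u\mid T'_{i-1}) \geq f(T'_{i-1}\cup\opt) - f(T'_{i-1})$. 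Combining this with the greedy selection of $u_i$ and $\sum_{u\in O_+} c(u) \leq B$ gives $f(u_i\mid T'_{i-1})/c(u_i) \geq [f(T'_{i-1}\cup\opt) - f(T'_{i-1})]/B$, which rearranges to the claimed inequality.

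The remainder is a case analysis. Assume throughout that $v'_{j-1} < \tfrac{1-\delta}{2}f(\opt)$ for every prior iteration $j$, since otherwise monotonicity of $v$ immediately yields $v_\ell \geq \tfrac{1-\delta}{2}f(\opt) \geq (\tfrac{1}{2} - \eps - \delta)f(\opt)$. \textbf{Case A:} some iteration has $\tau_i \geq 1/2$. Rewriting the key inequality as $v_i \geq v'_{i-1}(1-2\tau_i) + \tau_i(1-\delta)f(\opt)$ and using $(1 - 2\tau_i) \leq 0$ together with $v'_{i-1} < \tfrac{1-\delta}{2}f(\opt)$ shows that $v_i \geq \tfrac{1-\delta}{2}f(\opt)$ in this iteration, and we are done. \textbf{Case B:} every $\tau_i \leq 1/2$. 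Set $g_i \triangleq \tfrac{1-\delta}{2}f(\opt) - v_i \geq 0$. Since $v'_{i-1}\geq v_{i-1}$ and $(1-2\tau_i)\geq 0$, the key inequality yields $g_i \leq (1-2\tau_i)\,g_{i-1} \leq e^{-2\tau_i}\,g_{i-1}$. Chaining over $i$ and using $\sum_i\tau_i \geq \tfrac{1}{2}\ln(\eps^{-1}/2)$ gives $g_\ell \leq g_0\,e^{-\ln(\eps^{-1}/2)} = 2\eps\, g_0 \leq \eps\,f(\opt)$, and hence $v_\ell \geq (\tfrac{1}{2} - \tfrac{\delta}{2} - \eps)f(\opt) \geq (\tfrac{1}{2} - \eps - \delta)f(\opt)$, completing the proof. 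The main obstacle is Case A: the naive discrete Gronwall chain $g_i \leq g_{i-1}(1-2\tau_i)$ is vacuous when $1 - 2\tau_i < 0$, and the fix is the observation that in precisely this regime a single iteration already pushes $v_i$ above the target $\tfrac{1-\delta}{2}f(\opt)$, short-circuiting the recursion. A secondary subtlety is routing the density-greedy lower bound through only the positive-cost part of $\opt$, which requires the structural fact that Line~\ref{line:select_u_by_density} is reached only after the zero-cost branch has been ruled out.
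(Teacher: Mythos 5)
Your proof is correct and follows essentially the same route as the paper's: your per-iteration inequality is exactly the paper's combination of Lemma~\ref{lem:opt_value_guarantee} with the density-greedy comparison against the positive-cost part of $\opt$ (after ruling out zero-cost elements via the failed condition of Line~\ref{line:good_zero_value_element}), and your gap-contraction recursion $g_i \le e^{-2\tau_i} g_{i-1}$ together with $\sum_i \tau_i \ge c(S_\ell)/B \ge \tfrac{1}{2}\ln(\eps^{-1}/2)$ is just a repackaging of the paper's induction $f(S_i)\ge\tfrac{1-e^{-2c(S_i)/B}}{2}(1-\delta)\cdot f(\opt)$. Your explicit Case~A (an iteration with $c(u_i)>B/2$, where $1-2\tau_i<0$) is a welcome extra bit of care, since the paper's inductive substitution of the hypothesis implicitly relies on $1-2c(u_i)/B\ge 0$ at that step.
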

\begin{proof}
Below, we prove by induction that for every integer $0 \leq i \leq \ell$, $f(S_i) \geq \frac{1 - e^{-2c(S_i) / B}}{2} \cdot (1 - \delta) \cdot f(\opt)$.  This will complete the proof of the lemma since the fact that Algorithm~\ref{alg:greedy_density_symmetric} has returned from Line~\ref{line:late_return} implies that $c(S_\ell) \geq \frac{B}{2} \cdot \ln (\eps^{-1}/2)$, and thus,
\begin{align*}
	f(S_\ell)
	\geq{} &
	\frac{1 - e^{-2c(S_i) / B}}{2} \cdot (1 - \delta) \cdot f(\opt)\\
	\geq{} &
	\frac{1 - e^{-\ln (\eps^{-1} / 2)}}{2} \cdot (1 - \delta) \cdot f(\opt)
	=
	\frac{1 - 2\eps}{2} \cdot (1 - \delta) \cdot f(\opt)
	%=
	%(1/2 - \eps) \cdot (1 - \delta) \cdot f(\opt)
	\geq
	(1/2 - \eps - \delta) \cdot f(\opt)
	\enspace.
\end{align*}

For $i = 0$, the inequality $f(S_i) \geq \frac{1 - e^{-c(S_i) / (2B)}}{2} \cdot (1 - \delta) \cdot f(\opt)$ holds since the non-negativity of $f$ guarantees that
\[
	f(S_0)
	\geq
	0
	=
	\frac{1 - e^{-c(\varnothing) / (2B)}}{2} \cdot (1 - \delta) \cdot f(\opt)
	=
	\frac{1 - e^{-c(S_0) / (2B)}}{2} \cdot (1 - \delta) \cdot f(\opt)
	\enspace.
\]
Next, we need to prove the inequality $f(S_i) \geq \frac{1 - e^{-2c(S_i) / B}}{2} \cdot f(\opt)$ for $i \in [\ell]$ given the induction hypothesis that this inequality holds for $i - 1$. There are two cases to consider. If $c(u_i) = 0$, then $c(S_i) \leq c(S_{i - 1} + u_i) = c(S_{i - 1})$, and therefore,
\[
	f(S_i)
	\geq
	f(S_{i - 1})
	\geq
	\frac{1 - e^{-2c(S_{i - 1}) / B}}{2} \cdot (1 - \delta) \cdot f(\opt)
	\geq
	\frac{1 - e^{-2c(S_i) / B}}{2} \cdot (1 - \delta) \cdot f(\opt)
	\enspace.
\]
Otherwise, we know that the condition on Line~\ref{line:good_zero_value_element} evaluated to false in the iteration in which $u_i$ was selected, and thus, $f(u \mid S_{i - 1}) \leq 0$ for every $u \in \cN \setminus S_{i - 1}$ that obeys $c(u) = 0$. Since every element of $\{u \in \opt \setminus S_{i - 1} \mid c(u) > 0\}$ is a possible candidate to be chosen as $u_i$ on Line~\ref{line:select_u_by_density}, we have
\begin{align*}
	f(S_i)
	\geq{} &
	f(S_{i - 1}) + f(u_i \mid S_i)
	\geq
	f(S_{i - 1}) + c(u_i) \cdot \max_{u \in \opt \setminus S_{i - 1} \mid c(u) > 0} \frac{f(u \mid S_{i - 1})}{c(u)}\\
	\geq{} &
	f(S_{i - 1}) + c(u_i) \cdot \frac{\sum_{u \in \opt \setminus S_{i - 1} \mid c(u) > 0} f(u \mid S_{i - 1})}{\sum_{u \in \opt \setminus S_{i - 1} \mid c(u) > 0} c(u)}\\
	\geq{} &
	f(S_{i - 1}) + c(u_i) \cdot \frac{\sum_{u \in \opt \setminus S_{i - 1}} f(u \mid S_{i - 1})}{\sum_{u \in \opt \setminus S_{i - 1}} c(u)}
	\geq
	f(S_{i - 1}) + c(u_i) \cdot \frac{f(\opt \mid S_{i - 1})}{B}\\
	\geq{} &
	f(S_{i - 1}) + c(u_i) \cdot \frac{(1 - \delta) \cdot f(\opt) - 2 f(S_{i - 1})}{B}\\
	={} &
	\Big(1 - \frac{2c(u_i)}{B}\Big) \cdot f(S_{i - 1}) + \frac{c(u_i)}{B} \cdot (1 - \delta) \cdot f(\opt)
	\enspace,
\end{align*}
where the fifth inequality follows from $f$'s submodularity, and the last inequality holds because of Lemma~\ref{lem:opt_value_guarantee}. Using the induction hypothesis, the previous inequality implies
\begin{align*}
	f(S_i)
	\geq{} &
	\Big(1 - \frac{2c(u_i)}{B}\Big) \cdot \frac{1 - e^{-2c(S_{i - 1}) / B}}{2} \cdot (1 - \delta) \cdot f(\opt) + \frac{c(u_i)}{B} \cdot (1 - \delta) \cdot f(\opt)\\
	={} &
	\frac{1 - e^{-2c(S_{i - 1}) / B}\Big(1 - \frac{2c(u_i)}{B}\Big)\Big)}{2} \cdot (1 - \delta) \cdot f(\opt)\\
	\geq{} &
	(1 - e^{-c(S_{i - 1}) / B}\cdot e^{- 2c(u_i) / B}) \cdot (1 - \delta) \cdot f(\opt)\\
	={} &
	(1 - e^{-c(S_{i - 1} + u_i) / B}) \cdot (1 - \delta) \cdot f(\opt)
	\geq
	(1 - e^{-c(S_i) / B}) \cdot (1 - \delta) \cdot f(\opt)
	\enspace,
\end{align*}
where the last inequality holds since $S_i \subseteq S_{i - 1} + u_i$. This completes the proof by induction.
\end{proof}

To complete the proof of Theorem~\ref{thm:symmetric_knapsack_greedy}, it only remains to bound the infeasibility ratio of Algorithm~\ref{alg:greedy_density_symmetric}, which is done by the next lemma.

\begin{lemma}
The infeasibility ratio of Algorithm~\ref{alg:greedy_density_symmetric} is at most $1 + \frac{1}{2}\ln(\eps^{-1}/2)$. In the special case of a cardinality constraint, this infeasibility ratio improves to $\lceil \frac{1}{2}\ln(\eps^{-1}/2) \rceil$.
\end{lemma}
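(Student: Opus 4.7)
The plan is a straightforward case analysis on which return statement terminates the algorithm, together with the observation that the removal loop on Line~\ref{line:removal_loop} only deletes elements and hence can only decrease the cost $c(S)$ of the running solution. Throughout I will use the standing assumption (Section~\ref{sec:preliminaries}) that $c(u) \leq B$ for every $u \in \cN$.

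I plan to handle the normal exit from the main loop (Line~\ref{line:late_return}) first, as it is the case that drives the claimed bound. Here $S_i$ is returned because the loop condition $c(S_i) < \frac{B}{2}\ln(\eps^{-1}/2)$ has just failed, so $S_i$ was produced in the preceding iteration by first shrinking the running set via removals and then adding exactly one element $u_i$ on either Line~\ref{line:good_zero_value_element} or Line~\ref{line:good_non-zero_value_element}. Letting $S^{\star}$ denote the running set at the \emph{start} of that iteration, the loop condition gives $c(S^{\star}) < \frac{B}{2}\ln(\eps^{-1}/2)$; combining this with $c(u_i) \leq B$ and the fact that removals do not increase cost yields
\[
c(S_i) \;\leq\; c(S^{\star}) + c(u_i) \;<\; \tfrac{B}{2}\ln(\eps^{-1}/2) + B \;=\; B\bigl(1 + \tfrac{1}{2}\ln(\eps^{-1}/2)\bigr).
\]
For the early return on Line~\ref{line:early_return}, the returned set is the running set \emph{after} the removal step of an iteration whose loop condition held, so its cost is already strictly less than $\frac{B}{2}\ln(\eps^{-1}/2)$, well inside the claimed bound.

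For the cardinality refinement, I will invoke integrality: $B$ is a positive integer and $c(u) = 1$ for every $u$, so $|S^{\star}|$ is an integer strictly smaller than $\frac{B}{2}\ln(\eps^{-1}/2)$, and therefore $|S^{\star}| \leq \lceil \frac{B}{2}\ln(\eps^{-1}/2)\rceil - 1$ (whether or not $\frac{B}{2}\ln(\eps^{-1}/2)$ happens to be an integer). After adding $u_i$, this gives $|S_i| \leq \lceil \frac{B}{2}\ln(\eps^{-1}/2)\rceil \leq B\lceil \frac{1}{2}\ln(\eps^{-1}/2)\rceil$, where the second inequality uses the elementary fact $\lceil Bx\rceil \leq B\lceil x\rceil$ for any positive integer $B$. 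Dividing by $B$ yields the claimed infeasibility ratio of $\lceil \frac{1}{2}\ln(\eps^{-1}/2)\rceil$. No step poses a genuine obstacle; the only care required is in disentangling which instance of the running set is being returned in each exit case, and in confirming that the at-most-one-element overshoot suffered in Case~\ref{line:late_return} accounts exactly for the additive~$1$ (respectively, the ceiling) in the two parts of the lemma.
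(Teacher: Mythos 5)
Your proposal is correct and follows essentially the same argument as the paper: a case split on the two return lines, bounding the cost of the returned set by the loop threshold plus the cost of at most one added element (using $c(u)\leq B$ and the fact that removals only decrease cost), and an integrality/ceiling argument for the cardinality case. No gaps.
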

\begin{proof}
Consider first the case that Algorithm~\ref{alg:greedy_density_symmetric} returns from Line~\ref{line:early_return}. In this case, the fact that the algorithm has begun its final iteration implies that its output set $S_{\ell - 1}$ obeys $c(S_{\ell - 1}) < \frac{B}{2} \cdot \frac{1}{2}\ln(\eps^{-1}/2)$ (notice that this set can only lose elements after the beginning of the iteration). Thus, the infeasibility ratio of the algorithm in this case is at most $\frac{1}{2}\ln(\eps^{-1}/2)$.

Assume now that Algorithm~\ref{alg:greedy_density_symmetric} returns from Line~\ref{line:late_return}, and thus, returns $S_\ell$. Since Algorithm~\ref{alg:greedy_density} did not terminate after $\ell - 1$ iterations, we must have $c(S_{\ell - 1}) < \frac{B}{2} \cdot \ln(\eps^{-1}/2)$ (notice that $\ell \geq 1$ because $c(S_0) = c(\varnothing) = 0 < \frac{B}{2} \cdot \ln(\eps^{-1}/2)$). Thus, the output set $S_\ell$ of Algorithm~\ref{alg:greedy_density} obeys
\[
	c(S_\ell)
	\leq
	c(S_{\ell - 1}) + c(u_\ell)
	<
	\frac{B}{2} \cdot \ln(\eps^{-1}/2) + B
	=
	B \cdot \Big(1 + \frac{1}{2}\ln(\eps^{-1}/2)\Big)
	\enspace.
\]
In the special case of a cardinality constraint, $c(S_\ell)$ and $B$ are integers and $c(u_\ell) = 1$, and thus, the last inequality improves to
\[
	c(S_\ell)
	\leq
	c(S_{\ell - 1}) + c(u_\ell)
	\leq
	\bigg[\bigg\lceil \frac{B}{2} \cdot \ln(\eps^{-1}/2) \bigg\rceil - 1\bigg] + 1
	=
	\bigg\lceil \frac{B}{2} \cdot \ln(\eps^{-1}/2) \bigg\rceil
	\leq
	B \cdot \bigg\lceil \frac{1}{2} \ln(\eps^{-1}/2) \bigg\rceil
	\enspace.
	\qedhere
\]
\end{proof}

\appendix

\section{Pipage Rounding}

Pipage Rounding is an algorithm suggested by \citet{calinescu2011maximizing} for rounding a fractional solution in a matroid polytope into an integral solution without reducing (in expectation) the value of a submodular objective function. In Section~\ref{app:pipage_rounding_matroid}, we show how Pipage Rounding can be used for matroid constraints also in the bicriteria approximation setting. In Section~\ref{app:pipage_rounding_knapsack}, we present and analyze a variant of Pipage Rounding for knapsack constraints.

\subsection{Pipage Rounding for Matroid Constraints} \label{app:pipage_rounding_matroid}

In this section, we prove Lemma~\ref{lem:rounding_matroid}, which we repeat here for convenience.

\lemRoundingMatroid*

A central component in the proof of Lemma~\ref{lem:rounding_matroid} is the matroid $\cM' = (\cN, \cI')$ defined by $\cI' = \{\cup_{i = 1}^{\lceil \beta \rceil} S_i \mid \forall_{i \in [\lceil \beta \rceil]}\; S_i \in \cI\}$. By the Matroid Union Theorem (Corollary 42.1a of~\cite{schijver2003combinatorial}), $\cM'$ is indeed a matroid.

\begin{lemma} \label{lem:matroid_union_polytope}
The vector $\vx$ belongs to the matroid polytope $\cP'$ of $\cM'$.
\end{lemma}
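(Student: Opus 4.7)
The plan is to invoke the explicit formula for the rank function of the union matroid $\cM'$ (which is a standard consequence of the Matroid Union Theorem of Edmonds), and then verify the corresponding rank inequalities for $\vx$ coordinate by coordinate. Specifically, the matroid polytope $\cP'$ admits the description
\[
	\cP'
	=
	\Big\{\vz \in [0, 1]^\cN \;\Big|\; \sum_{u \in S} z_u \leq \rank_{\cM'}(S) \text{ for every } S \subseteq \cN\Big\},
\]
and the rank function of the $k$-fold union of a matroid $\cM$ with itself (for $k = \lceil \beta \rceil$) is given by
\[
	\rank_{\cM'}(S)
	=
	\min_{T \subseteq S} \bigl[|S \setminus T| + \lceil \beta \rceil \cdot \rank_\cM(T)\bigr]
	\enspace.
\]
Thus, to show $\vx \in \cP'$, I only need to establish that, for every $S \subseteq \cN$ and every $T \subseteq S$,
\[
	\sum_{u \in S} x_u \leq |S \setminus T| + \lceil \beta \rceil \cdot \rank_\cM(T)
	\enspace.
\]

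To do this, I would split the sum $\sum_{u \in S} x_u = \sum_{u \in S \setminus T} x_u + \sum_{u \in T} x_u$ and bound each part separately. For the first part, since $\vx \in [0, 1]^\cN$, one has $\sum_{u \in S \setminus T} x_u \leq |S \setminus T|$. For the second part, the hypothesis $\vx / \beta \in \cP$ means $\sum_{u \in T} x_u/\beta \leq \rank_\cM(T)$, so $\sum_{u \in T} x_u \leq \beta \cdot \rank_\cM(T) \leq \lceil \beta \rceil \cdot \rank_\cM(T)$ because ranks are non-negative integers and $\beta \leq \lceil \beta \rceil$. Adding these two inequalities yields exactly the required bound, and taking the minimum over $T \subseteq S$ gives $\sum_{u \in S} x_u \leq \rank_{\cM'}(S)$, so $\vx \in \cP'$.

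There is no genuine obstacle here: the only subtlety is citing the correct closed-form for the rank of the union matroid (see, e.g., Corollary~42.1a in~\cite{schrijver2003combinatorial}, which is already used to justify the definition of $\cM'$), after which the verification is a two-line computation. The only small point worth noting is that the inequality $\beta \leq \lceil \beta \rceil$ makes the bound loose in general, but this is exactly the slack the lemma is designed to exploit (it is what converts a fractional solution feasible for $\cP$ scaled by $\beta$ into an integral solution feasible for $\lceil \beta \rceil \cdot \cP$ after pipage rounding).
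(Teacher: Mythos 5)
Your proof is correct and relies on exactly the same two ingredients as the paper's own argument---the rank-function description of the matroid polytope and the Matroid Union Theorem formula $\rank_{\cM'}(S)=\min_{T\subseteq S}\{|S\setminus T|+\lceil\beta\rceil\cdot\rank_\cM(T)\}$---with the same decomposition of $S$ into $S\setminus T$ and $T$; the only difference is that the paper phrases the verification as a proof by contradiction while you argue directly, and your intermediate step $\inner{\characteristic_T}{\vx}\leq\beta\cdot\rank_\cM(T)\leq\lceil\beta\rceil\cdot\rank_\cM(T)$ is valid since ranks are non-negative.
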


The proof of Lemma~\ref{lem:matroid_union_polytope} is technical, and thus, we defer it to the end of this section. Lemma~\ref{lem:matroid_union_polytope} implies that one can use the standard Pipage Rounding (of \cite{calinescu2011maximizing}) to get a random set $S$ such that $\bE[f(S)] \geq F(\vx)$ and $S \in \cI'$. The following observation shows that this set $S$ has all the properties guaranteed by Lemma~\ref{lem:rounding_matroid}.

\begin{observation}
It holds that $\characteristic_S / \lceil \beta \rceil \in \cP$.
\end{observation}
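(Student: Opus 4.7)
The plan is to exploit the explicit description of a set $S \in \cI'$ and the fact that the matroid polytope $\cP$ (as defined in the lemma, the convex hull of characteristic vectors of \emph{all} independent sets, not just bases) is down-closed.

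First I would unpack the definition of $\cM'$ used in the construction: $S \in \cI'$ means there exist independent sets $S_1, S_2, \ldots, S_{\lceil \beta \rceil} \in \cI$ with $S = \bigcup_{i=1}^{\lceil \beta \rceil} S_i$. A union bound on indicator vectors then gives the pointwise inequality
\[
	\characteristic_S \;\leq\; \sum_{i=1}^{\lceil \beta \rceil} \characteristic_{S_i}
	\enspace,
\]
and dividing by $\lceil \beta \rceil$ yields
\[
	\frac{\characteristic_S}{\lceil \beta \rceil}
	\;\leq\;
	\frac{1}{\lceil \beta \rceil} \sum_{i=1}^{\lceil \beta \rceil} \characteristic_{S_i}
	\enspace.
\]

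Next I would observe that each $\characteristic_{S_i}$ belongs to $\cP$ by definition of $\cP$ as the convex hull of characteristic vectors of independent sets, so the right-hand side, being a convex combination of vectors in $\cP$, also belongs to $\cP$.

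Finally, the only nontrivial step is to argue that the pointwise inequality lets us conclude $\characteristic_S/\lceil \beta \rceil \in \cP$ itself. This follows from the down-closedness of $\cP$: since $\cI$ is closed under taking subsets (one of the matroid axioms), every vector dominated coordinate-wise by a convex combination of characteristic vectors of independent sets is again in $\cP$. I do not anticipate a real obstacle here; the observation is essentially a one-liner given Lemma~\ref{lem:matroid_union_polytope} and the standard guarantee of Pipage Rounding that was invoked just before.
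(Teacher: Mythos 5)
Your proof is correct, but it handles the overlap between the $S_i$'s differently from the paper, and the difference is worth noting. The paper uses the subset-closure axiom of matroids at the level of sets: since any subset of an independent set is independent, one may assume without loss of generality that $S_1, \dotsc, S_{\lceil \beta \rceil}$ are \emph{disjoint}, which upgrades your inequality $\characteristic_S \leq \sum_{i} \characteristic_{S_i}$ to the equality $\characteristic_S = \sum_{i} \characteristic_{S_i}$; then $\characteristic_S/\lceil \beta \rceil$ is literally a convex combination of points of $\cP$, and only convexity of $\cP$ is needed. You instead keep the overlapping union and invoke down-closedness of $\cP$ to pass from the dominating convex combination to $\characteristic_S/\lceil \beta \rceil$ itself. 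That fact is true, but your justification (``since $\cI$ is closed under taking subsets'') is not quite a proof: down-closedness of the convex hull of a down-closed $0/1$ family requires a short argument, e.g.\ via the rank-constraint description $\cP = \{\vz \in [0,1]^\cN \mid \inner{\characteristic_T}{\vz} \leq \rank_\cM(T) \ \forall T \subseteq \cN\}$ (used in the paper's proof of Lemma~\ref{lem:matroid_union_polytope}), under which coordinate-wise domination by a member of $\cP$ clearly preserves membership, or alternatively by splitting mass from $\characteristic_{T}$ to $\characteristic_{T - u}$ coordinate by coordinate. So your route is valid once that one line is supplied, while the paper's disjointification trick buys a proof that needs nothing beyond convexity.
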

\begin{proof}
By the definition of $\cM'$, the fact that $S$ belongs to $\cI'$ implies that there are sets $S_1, S_2, \dotsc,\allowbreak S_{\lceil \beta \rceil} \in \cI$ such that $S = \cup_{i = 1}^{\lceil \beta \rceil} S_i$. Since removing elements from an independent set of a matroid always preserves its independence, we may assume without loss of generality that the sets $S_1, S_2, \dotsc, S_{\lceil \beta \rceil}$ are disjoint, and thus,
\[
	\characteristic_S = \sum_{i = 1}^{\lceil \beta \rceil} \characteristic_{S_i}
	\enspace.
\]
Note also that for every $i \in [\lceil \beta \rceil]$, the vector $\characteristic_{S_i}$ belongs to $\cP$ because $S_i \in \cI$. Combining these observations, we get
\[
	\frac{\characteristic_S}{\lceil \beta \rceil}
	=
	\frac{\sum_{i = 1}^{\lceil \beta \rceil} \characteristic_{S_i}}{\lceil \beta \rceil}
	=
	\sum_{i = 1}^{\lceil \beta \rceil} \tfrac{1}{\lceil \beta \rceil} \cdot \characteristic_{S_i}
	\in
	\cP
	\enspace,
\]
where the inclusion holds since $\sum_{i = 1}^{\lceil \beta \rceil} \tfrac{1}{\lceil \beta \rceil} \cdot \characteristic_{S_i}$ is a convex combination of vectors in $\cP$.
\end{proof}
 
To complete the proof of Lemma~\ref{lem:rounding_matroid}, it only remains to prove Lemma~\ref{lem:matroid_union_polytope}.
\begin{proof}[Proof of Lemma~\ref{lem:matroid_union_polytope}]
The rank function of a matroid $\cM = (\cN, \cI)$ is a function $\rank_\cM \colon 2^\cN \to \bZ_{\geq 0}$ defined as follows. For every $S \subseteq \cN$,
$
	\rank\nolimits_\cM(S)
	\triangleq
	\max\{|T| \mid T \subseteq S, T \in \cI\}
$.
It is well-known (see, for example, \cite{schijver2003combinatorial}) that the matroid polytope $\cP$ of $\cM$ can be stated using the rank function of $\cM$ as follows.
\[
	\cP
	=
	\{\vz \in [0, 1]^\cN \mid \forall_{S \subseteq \cN}\; \inner{\characteristic_S}{\vz} \leq \rank\nolimits_\cM(S)\}
	\enspace.
\]
Thus, the fact that $\vx/\lceil \beta \rceil \in \cP$ implies that, for every set $S \subseteq \cN$,
\begin{equation} \label{eq:matroid_inclusion}
	\frac{\inner{\characteristic_S}{\vx}}{\lceil \beta \rceil}
	=
	\bigg\langle\characteristic_S, \frac{\vx}{\lceil \beta \rceil} \bigg\rangle
	\leq
	\rank\nolimits_\cM(S)
	\enspace.
\end{equation}
Analogously, to prove that $\vx \in \cP'$, we need to prove that $\inner{\characteristic_S}{\vx} \leq \rank_{\cM'}(S)$ for every set $S \subseteq \cN$. Assume towards a contradiction that this is not the case, i.e., that there exists a set $S \subseteq \cN$ such that $\inner{\characteristic_S}{\vx} > \rank_{\cM'}(S)$.

At this point we would like to use the last inequality and Inequality~\eqref{eq:matroid_inclusion} to derive a contradiction. However, these two inequalities use rank of functions of two different matroids. Thus, to combine the two inequalities, we need a way to express the rank function of one of these matroids in terms of the other. Such a way is given by the Matroid Union Theorem (Corollary 42.1a of~\cite{schijver2003combinatorial}), which proves that
\[
	\rank_{\cM'}(S)
	=
	\min_{T \subseteq S} \{|S \setminus T| + \lceil \beta \rceil \cdot \rank_\cM(T)\}
	\enspace.
\]
Thus, the inequality that we assumed implies
\begin{align*}
	\inner{\characteristic_S}{\vx}
	>{} &
	\rank_{\cM'}(S)
	=
	\min_{T \subseteq S} \{|S \setminus T| + \lceil \beta \rceil \cdot \rank_\cM(T)\}\\
	\geq{} &
	\min_{T \subseteq S} \bigg\{|S \setminus T| + \lceil \beta \rceil \cdot \frac{\inner{\characteristic_T}{\vx}}{\lceil \beta \rceil}\bigg\}
	=
	\min_{T \subseteq S} \{|S \setminus T| + \inner{\characteristic_T}{\vx}\}
	\enspace,
\end{align*}
where the second inequality follows from Inequality~\eqref{eq:matroid_inclusion}. Consider now the subset $T$ for which the minimum is obtained on the rightmost side of the last inequality. Then, we have $\inner{\characteristic_S}{\vx} > |S \setminus T| + \inner{\characteristic_T}{\vx}$, and by rearranging this inequality, we get $\inner{\characteristic_{S \setminus T}}{\vx} > |S \setminus T|$. However, this inequality leads to the promised contradiction because it is always true that
\[
	\inner{\characteristic_{S \setminus T}}{\vx}
	\leq
	\|\characteristic_{S \setminus T}\|_1
	=
	|S \setminus T|
	\enspace.
	\qedhere
\]
\end{proof}

\subsection{Pipage Rounding for Knapsack Constraints} \label{app:pipage_rounding_knapsack}

In this section, we present and analyze a version of Pipage Rounding for knapsack constraints, and show that it obeys the properties guaranteed by Lemma~\ref{lem:rounding_cardinality_knapsack}. Our version of Pipage Rounding is formally given as Algorithm~\ref{alg:pipage_rounding_knapsack}. It gets a vector $\vx \in [0, 1]^\cN$ to round and a vector $\vp$ specifying the costs of a knapsack constraint whose budget is $B$.

\SetKwIF{With}{OtherwiseWith}{Otherwise}{with}{do}{otherwise with}{otherwise}{endwith}
\begin{algorithm}
\DontPrintSemicolon
\caption{\textsc{Pipage Round for Knapsack Constraints}$(\vx, \vp)$} \label{alg:pipage_rounding_knapsack}
\While{$\vx$ contains a fractional entry corresponding to an element $u \in \cN$ with $p_u = 0$\label{line:loop_zero_cost}}
{
	\lWith{probability $x_u$}{Update $x_u \gets 1.$}
	\lOtherwise{Update $x_u \gets 0.$}
}
\While{$\vx$ contains at least two fractional entries\label{line:loop_pipage}}
{
	Let $u$ and $v$ be two elements of $\cN$ corresponding to fractional entries of $\vx$.\\
	Let $s_u \gets \min\{1 - x_u, p_v x_v / p_u\}$ and $s_v \gets \min\{1 - x_v, p_u x_u / p_v\}$.\\
	\lWith{probability $\frac{p_vs_v}{p_vs_v + p_us_u}$}{Increase $x_u$ by $s_u$ and decrease $x_v$ by $p_u s_u / p_v$.\label{line:increase_u}}
	\lOtherwise{Increase $x_v$ by $s_v$ and decrease $x_u$ by $p_v s_v / p_u$.\label{line:increase_v}}
}
\If{$\vx$ contains a fractional entry\label{line:single_rounding}}
{
	Let $u$ be the element corresponding to the fractional entry of $\vx$.\\
	\lWith{probability $x_u$}{Update $x_u \gets 1.$\label{line:single_rounding_up}}
	\lOtherwise{Update $x_u \gets 0.$}
}
\Return the set $\{u \in \cN \mid x_u = 1\}$.
\end{algorithm}

We begin the analysis of Algorithm~\ref{alg:pipage_rounding_knapsack} with the following observation.
\begin{observation}
Each iteration of the loops starting on Lines~\ref{line:loop_zero_cost} and~\ref{line:loop_pipage} of Algorithm~\ref{alg:pipage_rounding_knapsack} keeps $\vx$ as a vector in $[0, 1]^\cN$, but makes one more entry of $\vx$ integral. Thus, Algorithm~\ref{alg:pipage_rounding_knapsack} is guaranteed to run in polynomial time, and when it terminates the vector $\vx$ is integral. 
\end{observation}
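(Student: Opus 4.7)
The plan is to verify the two claims of the observation separately --- that each iteration of each loop keeps $\vx \in [0,1]^\cN$ and that it makes one additional coordinate integral --- and then to derive both the polynomial runtime and the final integrality as immediate consequences.

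For the loop on Line~\ref{line:loop_zero_cost} the verification is immediate: replacing a single fractional coordinate $x_u$ by $1$ (with probability $x_u$) or $0$ (otherwise) keeps $\vx$ in $[0,1]^\cN$ and makes the coordinate corresponding to $u$ integral without touching any other coordinate. For the pipage loop on Line~\ref{line:loop_pipage}, I would first observe that upon entering the loop body both $x_u$ and $x_v$ lie in $(0,1)$, and that after the first loop has completed every fractional coordinate corresponds to an element of strictly positive cost (an invariant preserved throughout the pipage loop since only $x_u$ and $x_v$ are modified), so $p_u, p_v > 0$. Hence $s_u$ and $s_v$ are minima of pairs of strictly positive numbers, and in particular are themselves strictly positive, so the step actually moves $\vx$.

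I would then analyze the two branches separately. On Line~\ref{line:increase_u}, the update $x_u \mapsto x_u + s_u$ stays in $[0,1]$ because $s_u \leq 1 - x_u$, while the update $x_v \mapsto x_v - p_u s_u / p_v$ stays in $[0,1]$ because the inequality $s_u \leq p_v x_v / p_u$ gives $p_u s_u / p_v \leq x_v$; moreover, one of the two arguments of the $\min$ defining $s_u$ is attained, so either $x_u + s_u = 1$ or $x_v - p_u s_u / p_v = 0$, turning at least one of the two coordinates integral. The argument for Line~\ref{line:increase_v} is symmetric, with the roles of $u$ and $v$ interchanged.

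Given these claims, the number of integral coordinates of $\vx$ is a non-negative integer bounded by $|\cN|$ that strictly increases with each iteration of either loop, so the two loops together perform at most $|\cN|$ iterations and the algorithm runs in polynomial time. Upon termination of the pipage loop $\vx$ has at most one fractional coordinate, which the conditional on Line~\ref{line:single_rounding} then rounds to $0$ or $1$, yielding a fully integral output. The main point that requires a little care is bounding $s_u$ and $s_v$ away from $0$ and verifying that the pipage step never reintroduces a fractional zero-cost coordinate or disturbs an already-integral one; both follow from the fact that the step touches only the two chosen (positive-cost, fractional) coordinates $x_u$ and $x_v$.
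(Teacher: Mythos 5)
Your proposal is correct and follows essentially the same route as the paper: the zero-cost loop is handled as immediate, and for the pipage step one argues from the two arguments of the $\min$ defining $s_u$ (resp.\ $s_v$) that the updated coordinates stay in $[0,1]$ while at least one of them hits $1$ or $0$, after which the polynomial iteration count and final integrality follow. The extra details you supply (the positive-cost invariant ensuring $p_u, p_v > 0$ and the strict positivity of $s_u, s_v$) are fine but not needed beyond what the paper's case analysis already establishes.
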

\begin{proof}
The observation is immediate for the loop starting on Line~\ref{line:loop_zero_cost}. Thus, we concentrate on an iteration of the loop starting on Line~\ref{line:loop_pipage} of Algorithm~\ref{alg:pipage_rounding_knapsack}. Assume without loss of generality that Line~\ref{line:increase_u} is executed on this iteration rather than Line~\ref{line:increase_v} (the other case is analogous).

We now need to consider two cases. The first case is that $s_u$ is set to $1 - x_u$. In this case, $x_u$ is increased to be $x_u + s_u = 1$ (i.e., integral), while $x_v$ is decreased to be $x_v - p_u s_u / p_v \geq x_v - p_u (p_v x_v / p_u) / p_v = 0$. The other case is that $x_u$ is set to $p_v x_v / p_u$. In this case, $x_u$ is increased to be $x_u + s_u \leq x_u + (1 - x_u) \leq 1$, while $x_v$ is decreased to be $x_v - p_u s_u / p_v = x_v - p_u (p_v x_v / p_u) / p_v = 0$ (i.e., integral).
\end{proof}

Next, we study the cost of the output set of Algorithm~\ref{alg:pipage_rounding_knapsack}.
\begin{lemma} \label{lem:cost_pipage}
Let $S$ be the output set of Algorithm~\ref{alg:pipage_rounding_knapsack}. Then, $\sum_{u \in S} p_u \leq \inner{\vp}{\vx} + B$. Furthermore, if the knapsack constraint is a cardinality constraint, then the last inequality improves to $|S| = \sum_{u \in S} p_u \leq \lceil \inner{\vp}{\vx}\rceil$.
\end{lemma}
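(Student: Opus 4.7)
The plan is to track the quantity $\inner{\vp}{\vx}$ through the three phases of Algorithm~\ref{alg:pipage_rounding_knapsack} and show that the first two phases preserve it exactly, while the last phase can increase it by at most $B$.

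First, observe that the loop on Line~\ref{line:loop_zero_cost} operates only on coordinates $u$ with $p_u = 0$. Rounding $x_u$ to $0$ or $1$ therefore leaves $\inner{\vp}{\vx}$ unchanged. Next, I would verify that each iteration of the loop on Line~\ref{line:loop_pipage} also preserves $\inner{\vp}{\vx}$ deterministically: in the branch of Line~\ref{line:increase_u}, the change equals $p_u \cdot s_u - p_v \cdot (p_u s_u / p_v) = 0$, and in the branch of Line~\ref{line:increase_v}, the change equals $p_v \cdot s_v - p_u \cdot (p_v s_v / p_u) = 0$. Thus, upon reaching Line~\ref{line:single_rounding}, the value of $\inner{\vp}{\vx}$ is the same as the input value.

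For the final rounding step on Line~\ref{line:single_rounding}, suppose that after Line~\ref{line:loop_pipage} exactly one fractional entry $x_u$ remains (the case of no fractional entry is trivial). Setting $x_u \gets 0$ decreases $\inner{\vp}{\vx}$ by $p_u x_u \leq 0$, while setting $x_u \gets 1$ increases $\inner{\vp}{\vx}$ by $p_u(1 - x_u) \leq p_u \leq B$, where the last inequality is our standing assumption that no single element has cost exceeding the budget. Since the output set $S$ satisfies $\sum_{u \in S} p_u = \inner{\vp}{\vx}$ after termination, we conclude $\sum_{u \in S} p_u \leq \inner{\vp}{\vx_0} + B$, where $\vx_0$ denotes the input vector.

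For the cardinality-constraint case, $p_u = 1$ for every $u \in \cN$, so there are no zero-cost elements and the loop on Line~\ref{line:loop_zero_cost} does nothing. After the loop on Line~\ref{line:loop_pipage}, let $u$ be the (at most one) remaining fractional coordinate, and let $I \triangleq \sum_{v \neq u} x_v$, which is a non-negative integer. Then $\inner{\vp}{\vx_0} = I + x_u$ and the output cardinality is either $I$ or $I + 1$, so $|S| \leq I + 1 = \lceil I + x_u \rceil = \lceil \inner{\vp}{\vx_0}\rceil$ whenever $x_u \in (0,1]$, and $|S| = I = \inner{\vp}{\vx_0}$ when no fractional entry remains. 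Either way, $|S| \leq \lceil \inner{\vp}{\vx_0}\rceil$. There is no substantive obstacle here; the argument is entirely a matter of verifying invariants phase by phase.
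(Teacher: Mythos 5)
Your proposal is correct and follows essentially the same argument as the paper's proof: both track $\inner{\vp}{\vx}$ phase by phase, observe that the zero-cost loop and the pipage loop preserve it exactly, and bound the final rounding step by $p_u \leq B$ (respectively by less than $1$ plus integrality in the cardinality case). The only nitpick is the phrase ``decreases $\inner{\vp}{\vx}$ by $p_u x_u \leq 0$,'' which should say that the decrease is $p_u x_u \geq 0$ (i.e., the change is $-p_u x_u \leq 0$); this does not affect the argument.
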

\begin{proof}
The loop starting on Line~\ref{line:loop_zero_cost} of Algorithm~\ref{alg:pipage_rounding_knapsack} does not affect $\inner{\vp}{\vx}$ because it only modifies coordinates of $\vx$ corresponding to elements of zero cost. Next, we would like to show that the loop starting on Line~\ref{line:loop_pipage} of Algorithm~\ref{alg:pipage_rounding_knapsack} does not affect $\inner{\vp}{\vx}$ either. Consider a particular iteration of this loop, and let us show that it does not affect $\inner{\vp}{\vx}$. If this iteration executes Line~\ref{line:increase_u}, then the change in $\inner{\vp}{\vx}$ during the iteration is $p_u s_u - p_v (p_u s_u / p_v) = 0$. Similarly, if this iteration executes Line~\ref{line:increase_v}, then the change in $\inner{\vp}{\vx}$ during the iteration is $p_v s_v - p_u (p_v s_v / p_u) = 0$. %Thus, in either case the iteration we consider of the loop starting on Line~\ref{line:loop_pipage} does not affect $\inner{\vp}{\vx}$.

Given the above, the value of $\inner{\vp}{\vx}$ when Algorithm~\ref{alg:pipage_rounding_knapsack} reaches Line~\ref{line:single_rounding} is equal to its original value. The only line that can increase $\inner{\vp}{\vx}$ after this point is Line~\ref{line:single_rounding_up}, which (if executed) increases $\inner{\vp}{\vx}$ by $p_u \leq B$ for some element $u$ (the inequality holds by our assumption that $\|\vp\|_\infty \leq B$). Thus, Algorithm~\ref{alg:pipage_rounding_knapsack} increase $\inner{\vp}{\vx}$ by at most $B$ during its excution. This implies the inequality $\sum_{u \in S} p_u \leq \inner{\vp}{\vx} + B$ since the definition of $S$ and the fact that $\vx$ is integral when Algorithm~\ref{alg:pipage_rounding_knapsack} terminates guarantee together that $\sum_{u \in S} p_u$ is equal to the final value of $\inner{\vp}{\vx}$.

In the special case of a cardinality constraint we know that the increase in Line~\ref{line:single_rounding_up} is by less than $1$, which by the above arguments implies that $|S| = \sum_{u \in S} p_u < \inner{\vp}{\vx} + 1$. Since the left hand side of this inequality is integral, it must also hold that $|S| = \sum_{u \in S} p_u \leq \lceil \inner{\vp}{\vx} \rceil$, which completes the proof of the lemma.
\end{proof}

It remains to analyze the effect of Algorithm~\ref{alg:pipage_rounding_knapsack} on the submodular objective function.
\begin{lemma} \label{lem:pipage_objective}
Let $S$ be the output set of Algorithm~\ref{alg:pipage_rounding_knapsack}. For every submodular function $f\colon 2^\cN \to \bR$ and its multilinear extension, it holds that $\bE[f(S)] \geq F(\vx)$. 
\end{lemma}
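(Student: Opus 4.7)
The plan is to decompose Algorithm~\ref{alg:pipage_rounding_knapsack} into its three stages and show that each stage weakly increases $\bE[F(\vx)]$; chaining the three inequalities via the law of total expectation yields the lemma since at termination $\vx = \characteristic_S$ and thus $F(\vx) = f(S)$.

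For the opening loop on Line~\ref{line:loop_zero_cost}, each iteration replaces some fractional $x_u$ by $1$ with probability $x_u$ and by $0$ otherwise. Because $F$ is multilinear, holding all other coordinates fixed expresses $F(\vx) = x_u \cdot A + (1-x_u) \cdot B$ for some scalars $A,B$, and the expected value after the random update equals $x_u \cdot A + (1-x_u) \cdot B = F(\vx)$ exactly. The same argument applies verbatim to the final single-coordinate rounding on Line~\ref{line:single_rounding}.

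The substantive step is the pipage loop on Line~\ref{line:loop_pipage}. By the time this loop runs, every remaining fractional coordinate corresponds to an element of strictly positive cost (the first loop eliminated all zero-cost fractional coordinates). Both candidate updates move $\vx$ along the single line $\{\vx + t\vd : t \in \bR\}$ with direction $\vd \triangleq \ve_u/p_u - \ve_v/p_v$: Update A corresponds to $t = p_u s_u$ with probability $\tfrac{p_v s_v}{p_u s_u + p_v s_v}$, while Update B corresponds to $t = -p_v s_v$ with the complementary probability. A direct computation gives $\bE[t] = 0$, and the choice of $s_u, s_v$ keeps $\vx$ inside $[0,1]^\cN$ throughout, so $F$ remains well-defined. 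The key analytical input is that $g(t) \triangleq F(\vx + t\vd)$ is convex on the feasible range: multilinearity of $F$ forces $\partial^2 F/\partial x_u^2 = \partial^2 F/\partial x_v^2 = 0$, and the standard consequence of submodularity is $\partial^2 F/\partial x_u \partial x_v \leq 0$ (since this mixed partial equals $\bE[f(\RSet \cup \{u,v\}) - f(\RSet \cup \{u\}) - f(\RSet \cup \{v\}) + f(\RSet)]$ for $\RSet$ a random subset of $\cN \setminus \{u,v\}$, and submodularity makes each realization nonpositive). Therefore $g''(t) = -\tfrac{2}{p_u p_v}\cdot \partial^2 F/\partial x_u \partial x_v \geq 0$, and Jensen's inequality yields $\bE[g(T)] \geq g(\bE[T]) = g(0) = F(\vx)$, so the pipage step weakly increases $\bE[F(\vx)]$.

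There is no genuine technical obstacle beyond the boundary bookkeeping: the choice of $s_u$ and $s_v$ is exactly what is needed to ensure that each update keeps $\vx \in [0,1]^\cN$ (so $F$ is defined) while guaranteeing that at least one fractional coordinate becomes integral, which was already used to establish polynomial termination. Combining the three stages gives $\bE[f(S)] = \bE[F(\characteristic_S)] \geq F(\vx)$.
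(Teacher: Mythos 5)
Your proposal is correct and follows essentially the same route as the paper: the same three-stage decomposition, with the zero-cost loop and the final single-coordinate rounding preserving $\bE[F(\vx)]$ exactly by multilinearity, and the pipage loop handled by noting that both random updates move along the direction $\characteristic_{\{u\}}/p_u - \characteristic_{\{v\}}/p_v$ with zero expected step. Your convexity-of-the-restriction-plus-Jensen phrasing of the pipage step is just a repackaging of the paper's argument, which bounds each branch by its linearization at $\vx$ using the monotonicity of partial derivatives implied by submodularity and then observes that the expected first-order change vanishes.
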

\begin{proof}
Observe that $f(S)$ is equal to $F(\vx)$ when $\vx$ is the value of this vector at the end of Algorithm~\ref{alg:pipage_rounding_knapsack}. Thus, to prove the lemma it suffices to show that $\bE[f(\vx)]$ does not decrease as the algorithm progresses. We begin by explaining why an iteration of the loop starting on Line~\ref{line:loop_zero_cost} does not change $\bE[f(\vx)]$. Let $\vx$ and $\vx'$ denote the values of $\vx$ before and after the iteration, respectively. Then, $\vx'$ is equal to $\vx \vee \characteristic_{\{u\}}$ with probability $x_u$, and is equal to $\vx \wedge \characteristic_{\cN - u}$ otherwise. By the law of total probability, this implies
\begin{align*}
	\bE[F(\vx')]
	={} &
	x_u \cdot F(\vx \vee \characteristic_{\{u\}}) + (1 - x_u) \cdot F(\vx \wedge \characteristic_{\cN - u})\\
	={} &
	x_u \cdot \bE[f(\RSet(\vx) + u)] + (1 - x_u) \cdot \bE[f(\RSet(\vx) - u)]
	=
	\bE[f(\RSet(\vx))]
	=
	F(\vx)
	\enspace.
\end{align*}
A similar argument shows that the body of the ``if'' statement on Line~\ref{line:single_rounding} of Algorithm~\ref{alg:pipage_rounding_knapsack} does not affect $\bE[f(\vx)]$ as well.

To complete the proof of the lemma, it remains to show that an iteration of the loop starting on Line~\ref{line:loop_pipage} of Algorithm~\ref{alg:pipage_rounding_knapsack} does not affect $\bE[f(\vx)]$ either. Let $\vx$ and $\vx'$ denote the values of $\vx$ before and after the iteration, respectively. Then, $\vx'$ is equal to $\vx + s_u \cdot \characteristic_{\{u\}} - \frac{p_u s_u}{p_v} \cdot \characteristic_{\{v\}}$ with probability $\frac{p_vs_v}{p_vs_v + p_us_u}$ and is equal to $\vx + s_v \cdot \characteristic_{\{v\}} - \frac{p_v s_v}{p_u} \cdot \characteristic_{\{v\}}$ with probability $1 - \frac{p_vs_v}{p_vs_v + p_us_u} = \frac{p_us_u}{p_vs_v + p_us_u}$. By the submodularity and multilinearity of $F$, we have
\begin{multline*}
	F\Big(\vx + s_u \cdot \characteristic_{\{u\}} - \frac{p_u s_u}{p_v} \cdot \characteristic_{\{v\}}\Big) - F(\vx)
	=
	s_u \cdot \frac{\partial F(\vx)}{\partial x_u} - \frac{p_u s_u}{p_v} \cdot \frac{\partial F(\vx + s_u \cdot \characteristic_{\{u\}})}{\partial x_v}\\
	\geq
	s_u \cdot \frac{\partial F(\vx)}{\partial x_u} - \frac{p_u s_u}{p_v} \cdot \frac{\partial F(\vx)}{\partial x_v}
	\enspace,
\end{multline*}
and
\begin{multline*}
	F\Big(\vx + s_v \cdot \characteristic_{\{v\}} - \frac{p_v s_v}{p_u} \cdot \characteristic_{\{u\}}\Big) - F(\vx)
	=
	s_v \cdot \frac{\partial F(\vx)}{\partial x_v} - \frac{p_v s_v}{p_u} \cdot \frac{\partial F(\vx + s_v \cdot \characteristic_{\{v\}})}{\partial v_u}\\
	\geq
	s_v \cdot \frac{\partial F(\vx)}{\partial x_v} - \frac{p_v s_v}{p_u} \cdot \frac{\partial F(\vx)}{\partial x_u}
	\enspace,
\end{multline*}
Thus, by the law of total probability,
\begin{align*}
	\bE[F(\vx')] - F(\vx)
	\geq{} &
	\frac{p_vs_v}{p_vs_v + p_us_u} \cdot \Big[s_u \cdot \frac{\partial F(\vx)}{\partial x_u} - \frac{p_u s_u}{p_v} \cdot \frac{\partial F(\vx)}{\partial x_v}\Big] \\&\mspace{100mu}+ \frac{p_us_u}{p_vs_v + p_us_u} \cdot \Big[s_v \cdot \frac{\partial F(\vx)}{\partial x_v} - \frac{p_v s_v}{p_u} \cdot \frac{\partial F(\vx)}{\partial x_u}\Big]%\\
	%={} &
	%\frac{p_vs_us_v \cdot \frac{\partial F(\vx)}{x_u} - p_u s_u s_v \cdot \frac{\partial F(\vx)}{x_v} + p_u s_u s_v \cdot \frac{\partial F(\vx)}{x_v} - p_vs_us_v \cdot \frac{\partial F(\vx)}{x_u}}{p_vs_v + p_us_u}
	=
	0
	\enspace.
	\qedhere
\end{align*}
\end{proof}

Lemma~\ref{lem:rounding_cardinality_knapsack} now follows by combining Lemmata~\ref{lem:cost_pipage} and~\ref{lem:pipage_objective}.
\section{More Details for the Proof of Proposition~\texorpdfstring{\ref{prop:measured_continuous_greedy_density}}{\ref*{prop:measured_continuous_greedy_density}}} \label{app:improved_analysis_unified}

The proof of Proposition~\ref{prop:measured_continuous_greedy_density} refers to a slightly improved version of Lemma~III.19 of \cite{feldman2011unified}. In this section, we prove this improved version. Throughout the section, we use the notation of \cite{feldman2011unified}.

The original proof of Lemma~III.19 of \cite{feldman2011unified} was based on Lemma~III.7 of the same paper, which proved that for every time $0 \leq t \leq T$ that is an integer multiple of $\delta$ and element $e \in \cN$
\[
	y_e(t) \leq 1 - e^{-I_e^t} + O(\delta) \cdot t
	\enspace.
\]
The proof of Lemma~III.19 then argues that due to this inequality, it holds that
\begin{equation} \label{eq:original}
	\sum_{e \in \cN'} a_e \cdot y_e(T)
	\leq
	\sum_{e \in \cN'} a_e \cdot (1 - e^{-I^T_e} + O(\delta) \cdot T)
	\leq
	\sum_{e \in \cN'} a_e(1 - e^{-I_e^T}) + O(\delta)Tb/d(P)
	\enspace,
\end{equation}
where the last term $O(\delta)Tb/d(P)$ ends up appearing in the guarantee of the lemma. To improve over this, we prove below the following lemma.
\begin{lemma} \label{lem:improved_inequality}
For every time $0 \leq t \leq T$ that is an integer multiple of $\delta$ and element $e \in \cN$
\[
	y_e(t) \leq 1 - e^{-I_e^t} + O(\delta) \cdot I_e^t
	\enspace.
\]
\end{lemma}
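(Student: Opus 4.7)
My plan is to strengthen the original proof of Lemma~III.7 of~\cite{feldman2011unified} by accounting for the fact that the accumulated rounding error in going from the discrete update of Measured Continuous Greedy to its continuous-time analogue is proportional to how much mass has actually been placed on the element $e$, rather than to the total elapsed time. The starting point is the standard identity that the discrete Measured Continuous Greedy update $y_e(\tau + \delta) = y_e(\tau) + \delta \cdot x_e(\tau) \cdot (1 - y_e(\tau))$ unrolls to
\[
	1 - y_e(t)
	=
	\prod_{k = 0}^{t/\delta - 1} \bigl(1 - \delta \cdot x_e(k\delta)\bigr) \enspace.
\]
Thus proving the claimed upper bound on $y_e(t)$ is equivalent to proving the lower bound $\prod_k (1 - \delta x_e(k\delta)) \geq e^{-I_e^t} - O(\delta) \cdot I_e^t$ on this product.

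The key idea is to sharpen the elementary inequality $1 - z \geq e^{-z}$ (which is what one would use to obtain only the weaker bound~$1 - e^{-I_e^t}$) to a second-order version. Since $x_e(k\delta) \in [0, 1]$ and $\delta$ is small, we have $\delta x_e(k\delta) \in [0, 1/2]$, so the Taylor expansion of $\ln(1-z)$ gives $\ln(1 - \delta x_e(k\delta)) \geq -\delta x_e(k\delta) - (\delta x_e(k\delta))^2$. Summing over all $k \in \{0, 1, \dotsc, t/\delta - 1\}$ and using $(\delta x_e(k\delta))^2 \leq \delta \cdot (\delta x_e(k\delta))$ yields
\[
	\ln \prod_{k} \bigl(1 - \delta x_e(k\delta)\bigr)
	\geq
	-\sum_{k} \delta x_e(k\delta) - \delta \sum_{k} \delta x_e(k\delta)
	=
	-(1 + \delta) I_e^t \enspace,
\]
i.e., $\prod_{k}(1 - \delta x_e(k\delta)) \geq e^{-(1 + \delta)I_e^t}$. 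Crucially, the extra factor in the exponent scales with $I_e^t$ rather than with $t$, which is exactly what is needed to replace the $O(\delta) \cdot T$ term in the original proof by $O(\delta) \cdot I_e^T$ in the application inside Proposition~\ref{prop:measured_continuous_greedy_density}.

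To finish, I plan to transform the exponent back into the desired additive form by a first-order expansion in $\delta$:
\[
	1 - e^{-(1+\delta)I_e^t}
	=
	1 - e^{-I_e^t} \cdot e^{-\delta I_e^t}
	\leq
	1 - e^{-I_e^t}(1 - \delta I_e^t)
	=
	(1 - e^{-I_e^t}) + \delta I_e^t \cdot e^{-I_e^t}
	\leq
	(1 - e^{-I_e^t}) + \delta I_e^t \enspace,
\]
which, combined with the previous displays, yields the claimed inequality with an explicit constant of $1$ hidden inside the $O(\delta)$. The only subtlety is justifying that $\delta x_e(k\delta)$ is indeed bounded away from $1$ so that the Taylor bound $\ln(1-z) \geq -z - z^2$ applies; this is immediate from $x_e \leq 1$ and the fact that the proof of Proposition~\ref{prop:measured_continuous_greedy_density} takes $\delta \leq |\cN|^{-5}$. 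I do not anticipate any substantive obstacle beyond this bookkeeping: the improvement is entirely a matter of replacing a per-step error of $O(\delta)$ with $O(\delta \cdot x_e(k\delta))$ and propagating this sharper bound through the rest of the argument of Lemma~III.19 of~\cite{feldman2011unified} unchanged, so that the term $O(\delta) \cdot T$ in~\eqref{eq:original} gets replaced by $O(\delta) \cdot \sum_{e \in \cN'} a_e I_e^T \leq O(\delta) \cdot cT \cdot b/d(P)$ (using the budget constraint on $\sum_e a_e I_e^T$ implicit in the analysis), which is what Proposition~\ref{prop:measured_continuous_greedy_density} requires.
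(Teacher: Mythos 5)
Your proof is correct, and it reaches the same conclusion by a slightly different mechanism than the paper. The paper proves the lemma by induction on $t$, maintaining the invariant $y_e(t) \leq 1 - e^{-I_e^t} + 0.5\delta I_e^t$ and using the per-step estimates $1 - \delta I_e(t) \geq e^{-\delta I_e(t)} - 0.5(\delta I_e(t))^2$ together with $(\delta I_e(t))^2 \leq \delta^2 I_e(t)$ (the latter being exactly your observation that the per-step error should be charged to $\delta I_e(t)$ rather than to $\delta$). You instead unroll the recursion into the closed form $1 - y_e(t) = \prod_k (1 - \delta I_e(k\delta))$, apply the second-order bound $\ln(1 - z) \geq -z - z^2$ valid for $z \in [0, 1/2]$, and obtain the intermediate multiplicative bound $y_e(t) \leq 1 - e^{-(1+\delta)I_e^t}$, which you then convert to the additive form via $e^{-\delta I_e^t} \geq 1 - \delta I_e^t$. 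All steps check out: the product identity follows from the update rule $y_e(t+\delta) = y_e(t) + \delta I_e(t)(1 - y_e(t))$ with $y_e(0)=0$, the Taylor bound applies since $\delta I_e(k\delta) \leq \delta \leq 1/2$, and the final comparison loses only a constant factor inside the $O(\delta)$ (you get constant $1$ where the paper's induction gives $0.5$, which is immaterial). Your closing remark on how the sharper lemma propagates through Lemma~III.19 also matches the paper's appendix, where the term $O(\delta) \cdot T$ is replaced by $O(\delta) \cdot I_e^T$ and then $\sum_{e} a_e I_e^T \leq Tb$ is used; your expression $O(\delta) \cdot cT \cdot b/d(P)$ is the same quantity since $c = d(P)$. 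The main stylistic difference is induction versus an explicit product formula; your route has the small advantage of exhibiting the clean multiplicative bound $e^{-(1+\delta)I_e^t}$ along the way, while the paper's induction stays entirely within the additive form it needs.
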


Using this lemma instead of Lemma~III.7 of \cite{feldman2011unified}, we get
\begin{align*}
	\sum_{e \in \cN'} a_e \cdot y_e(T)
	\leq{} &
	\sum_{e \in \cN'} a_e \cdot (1 - e^{-I^T_e} + O(\delta) \cdot TI^T_e)\\
	\leq{} &
	\sum_{e \in \cN'} a_e(1 - e^{-I_e^T}) + O(\delta) \cdot \sum_{e \in \cN'} a_e I_e^T
	\leq
	\sum_{e \in \cN'} a_e(1 - e^{-I_e^T}) + O(\delta) \cdot T b
	\enspace.
\end{align*}
Thus, the last term improves by a factor of $d(P)$ compared to Inequality~\eqref{eq:original}. This factor $d(P)$ is the density of the constraint, or $c$ in our notation, and thus, this is exactly the improvement we have assumed in the proof of Proposition~\ref{prop:measured_continuous_greedy_density}.

It remains to prove Lemma~\ref{lem:improved_inequality}.
\begin{proof}[Proof of Lemma~\ref{lem:improved_inequality}]
We prove by induction on $t$ that $y_e(t) \leq 1 - e^{-I_e^t} + 0.5\delta I_e^t$. For $t = 0$, this is the case since $y_e(0) = 0 = 1 - e^{-0} + 0.5\delta \cdot 0$. Assume now that the claim holds for some $t$, and let us prove it for $t + \delta$.
\begin{align*}
	y_e(t + \delta)
	={} &
	y_e(t) + \delta I_e(t) \cdot (1 - y_e(t))
	\leq
	(1 - e^{-I^t_e} + 0.5\delta I^t_e)(1 - \delta I_e(t)) + \delta I_e(t)\\
	\leq{} &
	1 - e^{-I^t_e} \cdot (1 - \delta I_e(t)) + 0.5\delta I^t_e
	\leq
	1 - e^{-I^t_e} \cdot [e^{-\delta I_e(t)} - 0.5(\delta I_e(t))^2] + 0.5\delta I_e^t\\
	\leq{} &
	1 - e^{-I^t_e - \delta I_e(t)} + 0.5\delta^2 I_e(t) + 0.5\delta I_e^t
	=
	1 - e^{-I_e^{t+\delta}} + 0.5\delta I_e^{t + \delta}
	\enspace,
\end{align*}
where the first inequality follows from the induction hypothesis.
\end{proof}

\section{Proof of Theorem~\texorpdfstring{\ref{thm:symmetric_knapsack_equality}}{\ref*{thm:symmetric_knapsack_equality}}} \label{app:equality}

In this section, we prove Theorem~\ref{thm:symmetric_knapsack_equality}, which we repeat here for convenience.

\thmSymmetricKnapsackEquality*

By Reduction~2 of \cite{feldman2017maximizing}, we may assume in the proof of Theorem~\ref{thm:symmetric_knapsack_equality} that $c \leq 1/2$ (and thus, $\min\{c, 1 - c\} = c$). Using the algorithm of Theorem~\ref{thm:symmetric_knapsack_fractional}, we can now get a vector $\vy \in [0, 1]^\cN$ such that
\begin{itemize}
	\item $\inner{\vp}{\vy} \leq B \cdot \frac{1 - (2\eps)^c}{2c}$,
	\item $F(\vy) \geq (\frac{1}{2} - \eps - o(1)) \cdot f(\opt)$ with probability $1 - o(1)$, and
	\item $\|\vy\|_\infty \leq 1/2$.
\end{itemize}

The last bullet is not stated in Theorem~\ref{thm:symmetric_knapsack_equality} itself, but follows from Observation~\ref{obs:y_properties}. The second bullet is stronger than the inequality $\bE[F(\vy)] \geq (\frac{1}{2} - \eps - o(1)) \cdot f(\opt)$ guaranteed by Theorem~\ref{thm:symmetric_knapsack_equality}. To see why it is true nevertheless, we recall where the expectation in the last inequality comes from. Corollary~\ref{cor:super_measured_value} gives a deterministic lower bound on the value of the output set of Algorithm~\ref{alg:super_measured_continuous_greedy}. Since we need to use an efficient variant of Algorithm~\ref{alg:super_measured_continuous_greedy}, this deterministic lower bounds holds only with high probability (and up to an error of $o(1) \cdot f(\opt)$). In Theorem~\ref{thm:symmetric_knapsack_fractional}, we have avoided the need to explicitly state that the inequality holds only with high probability by adding the expectation. In contrast, the above second bullet explicitly states the guarantee obtained.

We also need the inequality $F(\frac{1}{2} \cdot \characteristic_\cN) \geq \frac{1}{2}f(OPT)$, which follows from Theorem~2.1 of \cite{feige2011maximizing}. We now consider the vector
\[
	\vz
	=
	\vy + \max\bigg\{\frac{B - \inner{\vp}{\vy}}{B / (2c) - \inner{\vp}{\vy}}, 0\bigg\} \cdot \bigg(\frac{1}{2}\characteristic_\cN - \vy\bigg)
	\enspace.
\]
In the following, we show that $\vz$ obeys all the properties required from the output vector of the algorithm whose existence is guaranteed by Theorem~\ref{thm:symmetric_knapsack_equality}, and thus, the above described method to compute $\vz$ proves this theorem. We begin with the following observation.
\begin{observation}
It holds that $\vz \in [0, 1]^\cN$ and $B \leq \inner{\vp}{\vz} \leq B \cdot \max\big\{1, \frac{1 - (2\eps)^c}{2c}\big\}$.
\end{observation}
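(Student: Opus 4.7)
The plan is to split the analysis based on the sign of $B - \inner{\vp}{\vy}$, which controls whether the maximum in the definition of $\vz$ selects $0$ or the fraction. Let $\lambda \triangleq \max\left\{\frac{B - \inner{\vp}{\vy}}{B/(2c) - \inner{\vp}{\vy}},\; 0\right\}$, so that $\vz = \vy + \lambda \cdot (\tfrac{1}{2}\characteristic_\cN - \vy) = (1-\lambda)\vy + \lambda \cdot \tfrac{1}{2}\characteristic_\cN$. Before splitting into cases, I will record two useful inequalities derived from the established properties of $\vy$: since $c = B/\|\vp\|_1 \leq 1/2$, we have $\|\vp\|_1/2 = B/(2c) \geq B$; and since $\|\vy\|_\infty \leq 1/2$, the bound $\inner{\vp}{\vy} \leq \|\vp\|_1/2 = B/(2c)$ holds. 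Together these give $B/(2c) - \inner{\vp}{\vy} \geq 0$, so the denominator appearing in $\lambda$ is non-negative.

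In the first case, $\inner{\vp}{\vy} \geq B$. Then the fraction in the definition of $\lambda$ is non-positive, hence $\lambda = 0$ and $\vz = \vy$. The inclusion $\vz \in [0,1]^\cN$ is immediate, and $\inner{\vp}{\vz} = \inner{\vp}{\vy} \geq B$ handles the lower bound. For the upper bound, I observe that the hypothesis $\inner{\vp}{\vy} \geq B$ combined with $\inner{\vp}{\vy} \leq B \cdot \frac{1-(2\eps)^c}{2c}$ forces $\frac{1-(2\eps)^c}{2c} \geq 1$, so $\max\{1, \frac{1-(2\eps)^c}{2c}\} = \frac{1-(2\eps)^c}{2c}$ and the desired upper bound reduces to the assumed bound on $\inner{\vp}{\vy}$.

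In the second case, $\inner{\vp}{\vy} < B$, and $\lambda$ equals the fraction, which is strictly positive. To see $\lambda \leq 1$, note that $B/(2c) \geq B$ (shown above) gives $B/(2c) - \inner{\vp}{\vy} \geq B - \inner{\vp}{\vy} > 0$, whence the ratio defining $\lambda$ is at most $1$. Therefore $\vz$ is a convex combination of $\vy$ and $\tfrac{1}{2}\characteristic_\cN$, both of which lie in $[0,\tfrac{1}{2}]^\cN \subseteq [0,1]^\cN$, and so $\vz \in [0,1]^\cN$. The key computation is then
\[
\inner{\vp}{\vz} = \inner{\vp}{\vy} + \lambda \cdot \left(\frac{B}{2c} - \inner{\vp}{\vy}\right) = \inner{\vp}{\vy} + (B - \inner{\vp}{\vy}) = B,
\]
which simultaneously matches the lower bound and (trivially, since $\max\{1, \frac{1-(2\eps)^c}{2c}\} \geq 1$) lies below the claimed upper bound.

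No step is really difficult: the only mild subtlety is to remember that Case 1 can only occur when $\frac{1-(2\eps)^c}{2c} \geq 1$, which is precisely what makes the stated $\max$ in the upper bound behave correctly in that regime. Once that is observed, both cases close with one-line computations.
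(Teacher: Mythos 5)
Your proof is correct and follows essentially the same route as the paper's: establish $\vzero \le \vy \le \vz \le \tfrac{1}{2}\characteristic_\cN$, then split on whether $\inner{\vp}{\vy} \ge B$ (where $\vz = \vy$ and the assumed bound on $\inner{\vp}{\vy}$ gives both sides) or $\inner{\vp}{\vy} < B$ (where the direct computation, using $\tfrac{1}{2}\|\vp\|_1 = B/(2c)$, yields $\inner{\vp}{\vz} = B$). The only cosmetic point is that the denominator $B/(2c) - \inner{\vp}{\vy}$ is in fact \emph{strictly} positive because $\inner{\vp}{\vy} \le B\cdot\frac{1-(2\eps)^c}{2c} < \frac{B}{2c}$ (as the paper notes), which is cleaner than inferring mere non-negativity from $\|\vy\|_\infty \le 1/2$ and guarantees the fraction defining $\lambda$ is always well-defined.
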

\begin{proof}
Clearly, $\vz \geq \vy \geq \vzero$. Additionally, we notice that $\inner{\vp}{\vy} \leq B \cdot \frac{1 - (2\eps)^c}{2c} < \frac{B}{2c}$ and $\frac{1}{2}\characteristic_\cN - \vy \geq \vzero$ because $\|\vy\|_\infty \leq 1/2$. These observations together imply that
\[
	\vz
	\leq
	\vy + \bigg(\frac{1}{2} \cdot \characteristic_\cN - \vy\bigg)
	=
	\frac{1}{2} \cdot \characteristic_\cN
	\leq
	\characteristic_\cN
	\enspace,
\]
and thus, $\vz \in [0, 1]^\cN$.

Notice now that if $\inner{\vp}{\vy} \geq B$, then $\vz = \vy$, and therefore, $\inner{\vp}{\vz} = \inner{\vp}{\vy} \in [B, B \cdot \frac{1 - (2\eps)^c}{2c}]$. In contrast, if $\inner{\vp}{\vy} < B$, then we have
\begin{align*}
	\inner{\vp}{\vz}
	={} &
	\inner{\vp}{\vy} + \bigg\langle \vp, \frac{B - \inner{\vp}{\vy}}{B / (2c) - \inner{\vp}{\vy}} \cdot \bigg(\frac{1}{2}\characteristic_\cN - \vy\bigg)\bigg\rangle\\
	={} &
	\inner{\vp}{\vy} + \frac{B - \inner{\vp}{\vy}}{B / (2c) - \inner{\vp}{\vy}} \cdot \bigg(\frac{1}{2}\|\vp\|_1 - \inner{\vp}{\vy}\bigg)\\
	={} &
	\inner{\vp}{\vy} + \frac{B - \inner{\vp}{\vy}}{B / (2c) - \inner{\vp}{\vy}} \cdot \bigg(\frac{B}{2c} - \inner{\vp}{\vy}\bigg)
	=
	B
	\enspace.
	\qedhere
\end{align*}
%Thus, $\inner{\vp}{\vz}$ is always at least $B$ and no more than $B \cdot \frac{1 - (2\eps)^c}{2c}$.
\end{proof}

Next, we need to show that $F(\vz)$ is large with high probability.
\begin{lemma}
With high probability, $F(\vz) \geq (\frac{1}{2} - \eps - o(1)) \cdot f(\opt)$.
\end{lemma}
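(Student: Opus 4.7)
The plan is to split into two cases according to the sign of $B - \inner{\vp}{\vy}$, and in the non-trivial case use the fact that the multilinear extension $F$ is concave along any non-negative direction (a standard consequence of submodularity, since the off-diagonal entries of the Hessian are non-positive and the diagonal entries vanish).

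First I would dispose of the trivial case $\inner{\vp}{\vy} \geq B$. In this case the max in the definition of $\vz$ is $0$, so $\vz = \vy$, and the desired bound is exactly the second bullet (which holds with probability $1 - o(1)$).

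For the main case $\inner{\vp}{\vy} < B$, write $\lambda \triangleq \frac{B - \inner{\vp}{\vy}}{B/(2c) - \inner{\vp}{\vy}}$ and observe that $\lambda \in [0,1]$ (the numerator is positive and at most the denominator since $\inner{\vp}{\vy} \leq B \cdot \frac{1-(2\eps)^c}{2c} < B/(2c)$). Then
\[
\vz \;=\; \vy + \lambda \cdot \Big(\tfrac{1}{2}\characteristic_\cN - \vy\Big) \;=\; (1-\lambda)\vy + \lambda \cdot \tfrac{1}{2}\characteristic_\cN,
\]
and crucially the direction $\tfrac{1}{2}\characteristic_\cN - \vy$ is coordinate-wise non-negative, since $\|\vy\|_\infty \leq 1/2$. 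Hence by concavity of $F$ along this non-negative direction,
\[
F(\vz) \;\geq\; (1-\lambda) F(\vy) + \lambda \, F\Big(\tfrac{1}{2}\characteristic_\cN\Big).
\]
Plugging in the three pieces of information stated just before the lemma, namely $F(\vy) \geq (\tfrac{1}{2} - \eps - o(1)) f(\opt)$ (with probability $1-o(1)$) and $F(\tfrac{1}{2}\characteristic_\cN) \geq \tfrac{1}{2} f(\opt)$, gives
\[
F(\vz) \;\geq\; (1-\lambda)\Big(\tfrac{1}{2} - \eps - o(1)\Big) f(\opt) + \lambda \cdot \tfrac{1}{2} f(\opt) \;\geq\; \Big(\tfrac{1}{2} - \eps - o(1)\Big) f(\opt),
\]
where the last inequality uses $\lambda \in [0,1]$ and $\tfrac{1}{2} \geq \tfrac{1}{2} - \eps - o(1)$.

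There is no real obstacle here beyond being careful to invoke the correct concavity statement (namely, concavity of the multilinear extension along non-negative rays, which is immediate from submodularity) and to note that the high-probability event transfers directly from $F(\vy)$ to $F(\vz)$ since the bound on $F(\tfrac{1}{2}\characteristic_\cN)$ is deterministic and the convex-combination inequality is pointwise. This completes the proof of Theorem~\ref{thm:symmetric_knapsack_equality}.
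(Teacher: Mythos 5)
Your proposal is correct and follows essentially the same route as the paper: both arguments dispose of the case $\vz=\vy$, and otherwise use concavity of $F$ along the non-negative direction $\tfrac{1}{2}\characteristic_\cN - \vy$ (valid since $\|\vy\|_\infty \leq 1/2$) to compare $F(\vz)$ with the endpoint values $F(\vy)$ and $F(\tfrac{1}{2}\characteristic_\cN) \geq \tfrac{1}{2}f(\opt)$. The only cosmetic difference is that you bound $F(\vz)$ by the convex combination $(1-\lambda)F(\vy)+\lambda F(\tfrac{1}{2}\characteristic_\cN)$, whereas the paper bounds it by $\min\{F(\vy),F(\tfrac{1}{2}\characteristic_\cN)\}$; both follow from the same concavity fact and yield the same conclusion.
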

\begin{proof}
We prove the lemma by showing that $F(\vz) \geq (\frac{1}{2} - \eps - o(1)) \cdot f(\opt)$ whenever $F(\vy) \geq (\frac{1}{2} - \eps - o(1)) \cdot f(\opt)$. If $\vz = \vy$, then this is obvious. Otherwise,
\[
	\vz
	=
	g\bigg(\frac{B - \inner{\vp}{\vy}}{B / (2c) - \inner{\vp}{\vy}}\bigg)
	\enspace,
\]
where
\[
	g(x)
	\triangleq
	F\bigg(\vy + x \cdot \bigg(\frac{1}{2} \cdot \characteristic_{\cN} - \vy\bigg)\bigg)
	\enspace.
\]
Since $F$ is concave along positive directions (by an observation of \cite{calinescu2011maximizing}), and $\frac{1}{2} \cdot \characteristic_{\cN} - \vy \geq 0$, the function $g$ is concave. Thus,
\begin{align*}
	F(\vz)
	={} &
	g\bigg(\frac{B - \inner{\vp}{\vy}}{B / (2c) - \inner{\vp}{\vy}}\bigg)
	\geq
	\min\{g(0), g(1)\}\\
	={} &
	\min\bigg\{F(\vy), F\bigg(\frac{1}{2} \cdot \characteristic_{\cN}\bigg)\bigg\}
	\geq
	\min\bigg\{F(\vy), \frac{1}{2}f(\opt)\bigg\}
	\enspace.
\end{align*}
The last inequality implies that $F(\vz) \geq (\frac{1}{2} - \eps - o(1)) \cdot f(\opt)$ whenever $F(\vy) \geq (\frac{1}{2} - \eps - o(1)) \cdot f(\opt)$, which is what we have wanted to prove.
\end{proof}
\bibliographystyle{plainnat}
\bibliography{bicriteria,bicriteria-alan}

\end{document}